\documentclass[11pt]{article}
\usepackage[margin=1in]{geometry}
\usepackage{amsmath,amssymb,amsthm}
\usepackage{algorithm}
\usepackage{algpseudocode}
\usepackage{enumitem}
\usepackage{hyperref}
\usepackage{xcolor}
\usepackage{booktabs}
\usepackage{graphicx}
\usepackage{tikz}
\usetikzlibrary{decorations.pathreplacing}
\usepackage{caption}
\usepackage{float}

\newtheorem{theorem}{Theorem}[section]
\newtheorem{lemma}[theorem]{Lemma}
\newtheorem{proposition}[theorem]{Proposition}
\newtheorem{corollary}[theorem]{Corollary}
\newtheorem{definition}[theorem]{Definition}
\newtheorem{remark}[theorem]{Remark}
\newtheorem{example}[theorem]{Example}

\newtheorem{conjecture}[theorem]{Conjecture}

\title{\textbf{Combinatorial Privacy: Private Multi-Party Bitstream\\Grand Sum by Hiding in Birkhoff Polytopes}}
\author{
Praneeth Vepakomma\\[4pt]
Mohamed bin Zayed University of Artificial Intelligence (MBZUAI)\\
Massachusetts Institute of Technology (MIT)\\
\texttt{vepakom@mit.edu}
}
\date{}

\begin{document}
\maketitle

\begin{abstract}
We introduce PolyVeil, a protocol for private aggregation across $k$ clients that encodes private bits as permutation matrices in the Birkhoff polytope. A two-layer architecture gives the server perfect simulation-based security (statistical distance zero) while a separate aggregator faces \#P-hard likelihood inference via the permanent and mixed discriminant.

We develop DP analyses under multiple frameworks (Berry--Esseen, R\'{e}nyi, f-DP). In the full variant, where the aggregator sees a doubly stochastic matrix per client, the DP guarantee is non-vacuous only when the signal is undetectable. In the compressed variant, where the aggregator sees a scalar, f-DP gives $\varepsilon \approx 7.8$ per client. Shuffle-model amplification then yields $\varepsilon \approx 0.37$ for $k = 1{,}000$ clients with no accuracy loss, since the aggregator needs only the sum of the shuffled scalars. This exposes a tension between \#P-hardness (requiring the matrix view) and strong DP (requiring the shuffled scalar view).

For the Boolean sum alone, additive secret sharing dominates. The Birkhoff encoding's advantage is multi-statistic extraction from a single matrix, enabling per-bit marginals and weighted sums without further client interaction. The protocol needs no PKI and outputs exact aggregates.
\end{abstract}

\tableofcontents
\newpage

\section{Introduction}
\label{sec:intro}

Computing aggregate statistics over private data held by many parties is a foundational problem in privacy-preserving computation. A concrete and widely applicable instance is the Boolean sum problem, in which $k$ clients each hold a private binary string of length $n$ and a server wishes to learn the total number of ones across all $kn$ bits without learning any individual client's data. Applications range from epidemiological surveillance, where a health authority counts positive test results without accessing individual diagnoses, to federated analytics, where a service provider tallies binary feature occurrences without centralizing user data.

Existing approaches to this problem broadly fall into three categories. Generic secure multi-party computation (MPC) protocols, built on garbled circuits or secret sharing, provide strong composable security with guarantees based on specific computational hardness assumptions (such as the difficulty of integer factorization or the Learning with Errors problem over lattices), but impose communication and computation costs that scale with circuit complexity and become prohibitive when $k$ is large or network conditions are constrained. Homomorphic encryption allows computation on ciphertexts, with security based on assumptions such as the composite residuosity problem (Paillier~\cite{paillier1999}) or the Learning with Errors problem (lattice-based FHE~\cite{gentry2009}), but carries substantial per-operation cost and requires careful key management. Differential privacy provides a framework for releasing aggregate statistics to untrusted parties by adding calibrated noise, offering formal $(\varepsilon,\delta)$-guarantees that hold regardless of an adversary's computational power, but inherently sacrificing accuracy for privacy.

PolyVeil occupies a distinct point in this design space, one that we argue represents a new paradigm we call \emph{Combinatorial Privacy}. In PolyVeil, the core idea is to encode each client's private bitstream as a permutation matrix, embed it inside a doubly stochastic matrix by mixing it with random decoy permutations, and use secure aggregation to recover the aggregate bit sum exactly. The Birkhoff--von Neumann theorem guarantees that every doubly stochastic matrix admits many decompositions into convex combinations of permutation matrices, and this non-uniqueness forms one of the two security layers of the protocol.

We present a rigorous security analysis that identifies a fatal vulnerability in naive implementations of this approach (the de-shuffling attack, which recovers all individual data with probability 1) and then develops a corrected \emph{two-layer protocol} that achieves provable security. In the corrected protocol, the main server receives only aggregate scalars and is information-theoretically secure (its view is identically distributed for any two inputs with the same aggregate). A separate aggregator entity receives Birkhoff-encoded matrices but not the noise values, and faces a computational barrier: recovering the private permutation matrix from its encoding requires evaluating the density of a random doubly stochastic matrix at a given point, which we prove is \#P-hard via a reduction to the permanent. The two layers compose so that no single entity can learn individual data: the server lacks the information, and the aggregator lacks the computational power.

This two-layer architecture distinguishes Combinatorial Privacy from existing paradigms. Unlike MPC and HE, which derive security from number-theoretic hardness (factoring, LWE), PolyVeil's computational layer derives from the \#P-hardness of evaluating Birkhoff polytope decomposition likelihoods (proved for likelihood-based attacks; conjectured for all attacks). Unlike DP, PolyVeil produces exact answers. The protocol requires no public-key infrastructure and has $O(k)$ communication in the compressed variant.

\subsection*{High-Level Framework}

Figure~\ref{fig:pipeline} illustrates the two-layer protocol at a high level. Each of the $k$ clients holds a private binary string $\mathbf{b}_t \in \{0,1\}^n$. The protocol proceeds in three stages.

\textit{Encoding.}
Each client encodes its bit vector $\mathbf{b}_t$ as a block-diagonal permutation matrix $M_t \in \{0,1\}^{2n \times 2n}$ and masks it by forming the doubly stochastic matrix $D_t = \alpha^* M_t + (1-\alpha^*) R_t$, where $R_t$ is a random convex combination of decoy permutation matrices. The client also computes the scalar $f_t = \alpha^* s_t + \eta_t$ (where $s_t = \sum_j b_{t,j}$ is the bit count and $\eta_t$ is the noise from the decoys) and the noise value $\eta_t$ separately.

\textit{Separation.}
The client sends $D_t$ (or $f_t$ in the compressed variant) to the aggregator, and $\eta_t$ to the noise aggregator. These two entities do not communicate with each other. The aggregator computes $F = \sum_t \mathbf{w}^T D_t \mathbf{y}$ (the aggregate of the bilinear extractions). The noise aggregator computes $H = \sum_t \eta_t$ (the aggregate noise). Both scalars are sent to the server.

\textit{Recovery.}
The server computes $S = (F - H)/\alpha^*$, recovering the exact Boolean sum. The noise cancels algebraically: $F - H = \sum_t (\alpha^* s_t + \eta_t) - \sum_t \eta_t = \alpha^* S$.

The key property is that no single entity sees enough to learn individual data. The server sees only $(F, H)$, which depends only on the aggregate $S$ (information-theoretic security). The aggregator sees $D_t$ but not $\eta_t$, so it cannot undo the noise cancellation; extracting $M_t$ from $D_t$ requires solving \#P-hard problems (computational security). The noise aggregator sees $\eta_t$ but not $D_t$ or $f_t$, so it learns nothing about $\mathbf{b}_t$.

A single matrix $D_t$ encodes the entire bit vector $\mathbf{b}_t$, not merely its sum. This means the aggregator can extract multiple statistics from the same data (per-bit marginals and weighted sums) without further client interaction, a capability that additive secret sharing does not provide for the same communication cost (Section~\ref{sec:multistat}).

\begin{figure}[!htbp]
\centering
\begin{tikzpicture}[
    box/.style={draw, rounded corners, minimum width=2.2cm, minimum height=0.9cm, align=center, font=\small},
    arrow/.style={->, >=stealth, thick},
    label/.style={font=\footnotesize, midway}
]

\node[box, fill=blue!10] (c1) at (0, 0) {Client 1\\$\mathbf{b}_1$};
\node[box, fill=blue!10] (c2) at (0, -1.5) {Client 2\\$\mathbf{b}_2$};
\node[font=\large] at (0, -2.7) {$\vdots$};
\node[box, fill=blue!10] (ck) at (0, -3.9) {Client $k$\\$\mathbf{b}_k$};

\node[box, fill=orange!15] (enc) at (4.2, -0.9) {Birkhoff\\Encoding\\$D_t = \alpha^* M_t$\\$+ (1{-}\alpha^*) R_t$};

\node[box, fill=green!12] (agg) at (8.5, 0) {Aggregator\\$F = \sum_t \mathbf{w}^T\! D_t \mathbf{y}$};
\node[box, fill=green!12] (nagg) at (8.5, -3.9) {Noise Agg.\\$H = \sum_t \eta_t$};

\node[box, fill=red!10] (srv) at (12.5, -1.95) {Server\\$S = \frac{F-H}{\alpha^*}$};

\draw[arrow] (c1) -- (enc);
\draw[arrow] (c2) -- (enc);
\draw[arrow] (ck) -- (enc);

\draw[arrow] (enc) -- (agg) node[label, above] {\footnotesize $D_t$ or $f_t$};
\draw[arrow] (enc) -- (nagg) node[label, below, sloped] {\footnotesize $\eta_t$};

\draw[arrow] (agg) -- (srv) node[label, above, sloped] {\footnotesize $F$};
\draw[arrow] (nagg) -- (srv) node[label, below, sloped] {\footnotesize $H$};

\draw[<->, >=stealth, thick, red!80] (agg.south) -- (nagg.north);
\draw[line width=2pt, red!80] ([xshift=-3pt, yshift=3pt] 8.5, -1.6) -- ([xshift=3pt, yshift=-3pt] 8.5, -2.3);
\draw[line width=2pt, red!80] ([xshift=3pt, yshift=3pt] 8.5, -1.6) -- ([xshift=-3pt, yshift=-3pt] 8.5, -2.3);
\node[font=\scriptsize\bfseries, red!80, fill=white, inner sep=2pt] at (9.8, -1.95) {non-colluding};

\end{tikzpicture}
\caption{The two-layer PolyVeil protocol. Each client encodes its private bit vector as a masked doubly stochastic matrix $D_t$ and sends it (or the scalar $f_t$) to the aggregator, and the noise $\eta_t$ to a separate noise aggregator. The two aggregators do not communicate (dashed line). The server receives only the aggregate scalars $F$ and $H$, from which it recovers $S$ exactly. The server has information-theoretic security (its view depends only on $S$). The aggregator faces \#P-hard inference (it sees $D_t$ but cannot efficiently extract $M_t$).}
\label{fig:pipeline}
\end{figure}
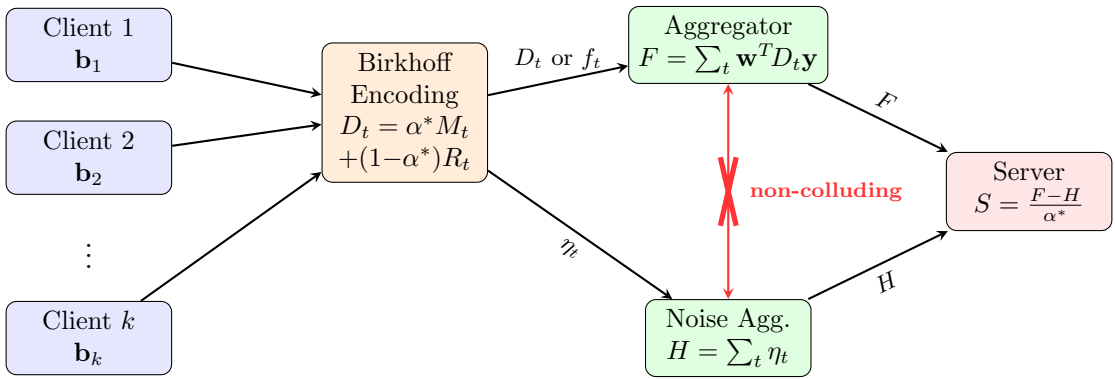

The remainder of this paper is organized as follows. Section~\ref{sec:prelim} establishes mathematical preliminaries. Section~\ref{sec:protocol} presents the weak protocol variants and the de-shuffling attack. Section~\ref{sec:security} provides security analysis. Section~\ref{sec:it_security} develops the secure two-layer protocol with proofs for both layers. Section~\ref{sec:multistat} derives multi-statistic extraction from the Birkhoff encoding and compares with additive secret sharing. Section~\ref{sec:resolve} proves a finite-sample $(\varepsilon, \delta)$-DP guarantee for the aggregator, analyzes the SNR regime where it is meaningful, and derives a non-vacuous $\varepsilon$ for the compressed two-layer protocol. Section~\ref{sec:conclusion} concludes. Appendix~\ref{app:sim} provides background on simulation-based proofs.

\section{Related Work}
\label{sec:related}

\paragraph{Secure multi-party computation.}
The problem of computing functions over distributed private inputs has been studied since the foundational work of Yao~\cite{yao1986} on garbled circuits and Goldreich, Micali, and Wigderson~\cite{gmw1987} on the GMW protocol. These generic constructions can compute any function securely, including Boolean sums, but their communication and computation costs scale with the circuit complexity of the target function. More recent frameworks such as SPDZ~\cite{damgard2012} reduce the online cost through preprocessing, but the per-gate overhead remains significant compared to the simple arithmetic in PolyVeil. Specialized secure aggregation protocols such as that of Bonawitz et al.~\cite{bonawitz2017} reduce communication through pairwise secret sharing and handle client dropout, achieving $O(k)$ per-client communication with $O(k^2)$ setup. PolyVeil achieves $O(1)$ per-client communication (two scalars) in its compressed variant without pairwise key agreement.

\paragraph{Homomorphic encryption.}
Additively homomorphic schemes such as Paillier~\cite{paillier1999} support additive aggregation natively. Each client encrypts their bit count under a common public key, the server multiplies ciphertexts, and a designated party decrypts the sum. This achieves exact results with IND-CPA security but requires public-key infrastructure. A Paillier ciphertext is typically 4096 bits at 128-bit security, whereas PolyVeil transmits a single scalar per client. Fully homomorphic encryption~\cite{gentry2009} generalizes to arbitrary computations but with substantially greater overhead.

\paragraph{Differential privacy.}
Differential privacy~\cite{dwork2006} provides formal privacy guarantees through calibrated noise. In the local model, each client randomizes their data before sending it to the server, achieving privacy without trust but with error $\Theta(\sqrt{kn}/\varepsilon)$. The central model achieves error $\Theta(n/\varepsilon)$ but requires a trusted curator to see raw data. The shuffle model~\cite{cheu2019,balle2019} interpolates by interposing an anonymous shuffler, achieving central-model accuracy with local-model trust. PolyVeil produces exact results and uses the same shuffling infrastructure, but derives privacy from algebraic masking rather than statistical noise.

\paragraph{Instance mixing and data obfuscation.}
InstaHide~\cite{huang2020} mixes private data records with public datasets and random sign patterns for privacy-preserving machine learning. While both InstaHide and PolyVeil involve mixing private data with random elements, PolyVeil focuses on aggregation rather than prediction, achieves exact results through algebraic noise cancellation, and provides security guarantees rooted in the combinatorial structure of the Birkhoff polytope.

\paragraph{Secret sharing.}
Secret sharing schemes~\cite{shamir1979} distribute a secret among multiple parties so that only authorized subsets can reconstruct it. PolyVeil does not use secret sharing directly but instead exploits the structure of doubly stochastic matrices so that the private data is one of many valid decompositions of a publicly shared matrix, creating a computational barrier for any entity that observes the matrix but not the decomposition coefficients.

\section{Preliminaries}
\label{sec:prelim}

We collect notation used throughout the paper. All symbols are defined in context at first use; this table serves as a reference.

\begin{center}
\renewcommand{\arraystretch}{1.25}
\begin{tabular}{cl}
\toprule
\textbf{Symbol} & \textbf{Meaning} \\
\midrule
$n$ & Number of bits per client \\
$k$ & Number of clients \\
$K_t$ & Number of decoy permutations for client $t$ \\
$\mathbf{b}_t \in \{0,1\}^n$ & Client $t$'s private bit vector \\
$s_t = \sum_j b_{t,j}$ & Bit count (Hamming weight) of $\mathbf{b}_t$ \\
$S = \sum_t s_t$ & Total bit count (the target aggregate) \\
$\Pi(b)$ & $2 \times 2$ permutation matrix encoding bit $b$ \\
$M_t = M(\mathbf{b}_t)$ & $2n \times 2n$ block-diagonal permutation matrix encoding $\mathbf{b}_t$ \\
$P_{t,i}$ & $i$-th decoy permutation matrix for client $t$, uniform over $S_{2n}$ \\
$\alpha^*$ & Public weight on the true encoding in $D_t$ \\
$\alpha_{t,i}$ & Weight on the $i$-th decoy ($\sum_i \alpha_{t,i} = 1 - \alpha^*$) \\
$D_t$ & Masked doubly stochastic matrix: $\alpha^* M_t + \sum_i \alpha_{t,i} P_{t,i}$ \\
$R_t$ & Normalized decoy component: $(D_t - \alpha^* M_t)/(1-\alpha^*)$ \\
$\mathbf{w}, \mathbf{y}$ & Extraction vectors ($\mathbf{w}_{2j-1}=1$, $\mathbf{w}_{2j}=0$; $\mathbf{y}_{2j-1}=0$, $\mathbf{y}_{2j}=1$) \\
$f_t = \mathbf{w}^T D_t \mathbf{y}$ & Extracted scalar: $\alpha^* s_t + \eta_t$ \\
$\eta_t$ & Noise in extracted scalar: $(1-\alpha^*)\mathbf{w}^T R_t \mathbf{y}$ \\
$\xi_{t,j}$ & Per-bit noise: $(1-\alpha^*)(R_t \mathbf{y})_{2j-1}$ \\
$F, H$ & Aggregated signal and noise: $F = \sum_t f_t$, $H = \sum_t \eta_t$ \\
$\mathcal{B}_m$ & Birkhoff polytope (set of $m \times m$ doubly stochastic matrices) \\
$S_m$ & Symmetric group (set of all $m!$ permutation matrices of size $m$) \\
$\mathrm{perm}(A)$ & Permanent of matrix $A$ \\
$A(R')$ & Support matrix: $A_{ab} = \mathbf{1}[R'_{ab} > 0]$ \\
$\mathrm{Supp}(R')$ & Support set: $\{Q \in S_{2n} : Q_{ab}=1 \Rightarrow R'_{ab} > 0\}$ \\
$\nu(R')$ & Density of $R_t$ on $\mathcal{B}_{2n}$ evaluated at $R'$ \\
$\mathcal{P}(\sigma_1,\ldots,\sigma_K; R')$ & Coefficient polytope for a permutation tuple and target $R'$ \\
$\varepsilon, \delta$ & Differential privacy parameters \\
$\mu$ & Gaussian DP parameter: $\Delta/\sigma_\eta$ \\
$\rho$ & zCDP parameter: $\Delta^2/(2\sigma^2)$ \\
\bottomrule
\end{tabular}
\end{center}

\subsection{Doubly Stochastic Matrices and the Birkhoff Polytope}

\begin{definition}[Doubly stochastic matrix]
\label{def:ds}
A square matrix $A = (a_{ij}) \in \mathbb{R}^{m \times m}$ with non-negative entries is \emph{doubly stochastic} if every row and every column sums to one, that is, $\sum_{j=1}^m a_{ij} = 1$ for all $i \in [m]$, and $\sum_{i=1}^m a_{ij} = 1$ for all $j \in [m]$.
\end{definition}

\begin{definition}[Birkhoff polytope]
\label{def:birkhoff}
The \emph{Birkhoff polytope} $\mathcal{B}_m$ is the set of all $m \times m$ doubly stochastic matrices. It is a convex polytope in $\mathbb{R}^{m \times m}$ of dimension $(m-1)^2$.
\end{definition}

\begin{theorem}[Birkhoff--von Neumann~\cite{ziegler2012,schrijver2003}]
\label{thm:bvn}
The vertices of $\mathcal{B}_m$ are precisely the $m \times m$ permutation matrices. Every doubly stochastic matrix $A \in \mathcal{B}_m$ can be written as a convex combination of permutation matrices
\[
A = \sum_{i=1}^r \theta_i\, P_i, \qquad \theta_i > 0,\quad \sum_{i=1}^r \theta_i = 1,
\]
where each $P_i$ is a permutation matrix. This is a \emph{Birkhoff--von Neumann (BvN) decomposition} of $A$.
\end{theorem}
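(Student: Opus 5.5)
The plan is to prove the decomposition statement first and then deduce the vertex characterization from it. The standard route is induction on the number of positive entries of $A$, with Hall's marriage theorem supplying a permutation inside the support at each step.

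\emph{Step 1 (a permutation in the support).} Let $A\in\mathcal{B}_m$ and form the bipartite graph $G_A$ on rows and columns, with an edge $(i,j)$ whenever $a_{ij}>0$. I will verify Hall's condition. Take a set $X$ of rows and let $N(X)$ be the set of columns adjacent to $X$. The entries in the rows of $X$ sum to $|X|$, and all of this mass lies in the columns of $N(X)$. Those columns carry total mass exactly $|N(X)|$, so $|X|\le |N(X)|$. Hall's theorem then gives a perfect matching, that is, a permutation matrix $P$ with $P_{ij}=1\Rightarrow a_{ij}>0$.

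\emph{Step 2 (peeling).} Set $\theta=\min\{a_{ij}: P_{ij}=1\}>0$. If $\theta=1$, then $A=P$ and we are done. Otherwise, $A'=(A-\theta P)/(1-\theta)$ is nonnegative, and its row and column sums are $(1-\theta)/(1-\theta)=1$, so $A'\in\mathcal{B}_m$. Moreover, $A'$ has strictly fewer positive entries than $A$, because the minimizing entry becomes zero. By induction on the number of positive entries, which is at most $m^2$, $A'$ is a convex combination of permutation matrices. Then $A=\theta P+(1-\theta)A'$ is also a convex combination, with positive weights summing to one. This proves the decomposition with $r\le m^2$; a dimension count even gives $r\le (m-1)^2+1$.

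\emph{Step 3 (vertices).} The decomposition shows that $\mathcal{B}_m=\mathrm{conv}(S_m)$, so every vertex is a permutation matrix. For the converse, each permutation matrix $P$ is a vertex: suppose $P=\lambda B+(1-\lambda)C$ with $B,C\in\mathcal{B}_m$ and $0<\lambda<1$. Every zero entry of $P$ forces the corresponding entries of $B$ and $C$ to be zero, by nonnegativity. The row sums then force the remaining entries to equal $1$, so $B=C=P$.

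The main obstacle is Step 1, namely producing a permutation supported on $A$; the Hall-condition counting argument above handles it, and everything else is routine bookkeeping. The dimension claim $(m-1)^2$ in Definition~\ref{def:birkhoff} is not needed here. If required, it follows from the $2m-1$ independent affine constraints together with the fact that the interior point $\frac1m J$ lies in $\mathcal{B}_m$.
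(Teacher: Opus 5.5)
The paper gives no proof of this statement; it is quoted as a classical result with citations to Ziegler and Schrijver, so there is no in-paper argument to match. Your proof is correct and complete as written. The Hall count in Step~1 is right. Peeling keeps $A'$ doubly stochastic while strictly lowering the number of positive entries, so in fact $r\le m^2-m+1$. The extreme-point argument for permutation matrices is the standard one.

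Compared with the polyhedral proofs typical of such references, you go in the opposite order. Those proofs are vertex-first. A doubly stochastic matrix with a fractional entry has a cycle of fractional entries in its bipartite support graph, because a row or column with one fractional entry must contain a second. Adding $\pm\epsilon$ alternately around that cycle writes the matrix as the midpoint of two other points of $\mathcal{B}_m$. Hence vertices are integral, and so they are permutation matrices; equivalently, one can invoke total unimodularity of the bipartite incidence matrix. The decomposition then follows because a polytope is the convex hull of its vertices.

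That route avoids Hall's theorem and transfers directly to bipartite matching and transportation polytopes, but it is non-constructive. Yours is algorithmic: it is exactly the greedy ``Birkhoff algorithm'' the paper later invokes in Appendix~\ref{app:attacks} to generate decompositions and to round LP solutions. Its Step~1 alone also shows that $\mathrm{Supp}(R')\neq\emptyset$, i.e.\ $\mathrm{perm}(A(R'))\ge 1$, for every $R'\in\mathcal{B}_{2n}$, which is the natural companion to Proposition~\ref{prop:nu_permanent}. In both routes the sharper bound $(m-1)^2+1$ of Theorem~\ref{thm:marcusree} comes from Carath\'eodory, as you note.
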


BvN decompositions are generically \emph{non-unique}: a doubly stochastic matrix in the interior of $\mathcal{B}_m$ admits many distinct decompositions.

\begin{theorem}[Decomposition multiplicity, Brualdi~\cite{brualdi1982}]
\label{thm:mult}
Let $A \in \mathcal{B}_m$ have $p$ positive entries. The number of distinct BvN decompositions of $A$ is at least $p^2$.
\end{theorem}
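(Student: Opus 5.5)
The plan begins with a sanity check of the statement as written, because it fails in the degenerate case. If $A$ is itself a permutation matrix, then $p = m$. By Theorem~\ref{thm:bvn}, $A$ is a vertex of $\mathcal{B}_m$, so its only BvN decomposition is the trivial one, whereas $p^2 = m^2 > 1$. So Theorem~\ref{thm:mult} cannot be proved verbatim. It needs a non-degeneracy hypothesis, for instance that $A$ lies in the interior of $\mathcal{B}_m$ (as in the sentence preceding it), or more generally that the face of $A$ is not a simplex. It also needs a precise notion of "distinct decomposition", such as distinct supporting sets $\{P_1,\dots,P_r\}$ of minimal decompositions. I would prove such a corrected version and state the correction explicitly.

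\textbf{Face structure.} Let $A(A)$ be the support matrix, and let $G$ be the bipartite graph on $[m]\sqcup[m]$ with edges at the positive entries. The smallest face $\mathcal{F}$ of $\mathcal{B}_m$ containing $A$ is the set of doubly stochastic matrices supported in the support of $A$. Its vertices are the permutation matrices in $\mathrm{Supp}(A)$, and there are $\mathrm{perm}(A(A))$ of them. Its dimension is $d = p - 2m + c$, where $c$ is the number of connected components of $G$; this is the standard formula for the dimension of a face, via the cycle space of $G$. Since $A$ lies in the relative interior of $\mathcal{F}$, every BvN decomposition of $A$ uses only vertices of $\mathcal{F}$. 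Conversely, by Carath\'eodory, $A$ is a convex combination of some $d+1$ affinely independent vertices of $\mathcal{F}$.

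\textbf{Counting decompositions.} Whenever $\mathcal{F}$ has more than $d+1$ vertices, I would take a vertex $Q$ absent from a given Carath\'eodory decomposition. Moving $A$ slightly toward $Q$ and re-decomposing produces a continuum of decompositions, so there are infinitely many coefficient vectors. In the interior case this gives $p = m^2$, $d = (m-1)^2$, and $m!$ vertices, so there are infinitely many decompositions as soon as $m \ge 3$. To obtain a finite count comparable to $p^2$, I would count distinct minimal supports instead. These are the vertex sets of the simplices in a triangulation of $\mathcal{F}$ whose relative interiors contain $A$. A lower bound would then come from exhibiting many such simplices. For each pair of positive entries one can try to force a distinct minimal decomposition, for example by requiring a particular permutation in $\mathrm{Supp}(A)$ to appear through a pulling triangulation from that vertex. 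This gives roughly one decomposition per choice of pulled vertex, and the goal is to push the count to order $p^2$.

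\textbf{Main obstacle.} The main obstacle is this last combinatorial lower bound: showing that the pulled simplices are genuinely distinct and number at least $p^2$. I do not expect it to hold in general for faces of low dimension. So I would first try to verify the bound in the fully interior case, where the $m!$ vertices give ample room, and otherwise weaken the conclusion to "infinitely many, whenever $\mathrm{perm}(A(A)) > d+1$". That weaker conclusion is all the later security arguments actually use.
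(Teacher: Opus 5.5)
Your counterexample is right, and it settles the question: for $m \ge 2$ a permutation matrix has $p = m$ positive entries but, being a vertex of $\mathcal{B}_m$, it has exactly one BvN decomposition. This counts decompositions with distinct $P_i$; if repeated terms were allowed, every $A$ would trivially have infinitely many and the bound would hold vacuously. The same failure occurs whenever the minimal face of $A$ is a simplex. So your interior hypothesis alone is not enough: every interior point of $\mathcal{B}_2$, which is a segment, has $p = 4$ and a unique decomposition. The paper gives no argument beyond the citation to Brualdi, so there is no proof to compare with, and no citation can support a bound that fails on vertices.

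The gap is in your proposed repair. The target of at least $p^2$ distinct minimal decompositions for interior $A$ is already false for $m = 3$, so the pulling-triangulation count cannot succeed. Write $C$ for the $3$-cycle and $T_{12}, T_{13}, T_{23}$ for the transpositions. The six vertices of $\mathcal{B}_3$ span a $4$-dimensional affine space and satisfy the single affine dependency $I + C + C^2 = \mathbf{J} = T_{12} + T_{13} + T_{23}$. Take $A = \frac{1}{3}\mathbf{J}$, so $p = 9$. Any two representations of $A$ differ by a multiple of this dependency, so the decompositions are exactly $\frac{1-s}{3}(I + C + C^2) + \frac{s}{3}(T_{12} + T_{13} + T_{23})$ with $s \in [0,1]$. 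That is a continuum of coefficient vectors, but only two minimal supports (three supports in all), against $p^2 = 81$. A generic interior point likewise lies in exactly two of the six simplices spanned by five vertices.

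What you can actually prove is your face analysis made precise. Let the face $\mathcal{F}$ have $N$ vertices (your $\mathrm{perm}(A(A))$) and dimension $d = p - 2m + c$. Then $\Lambda(A) = \{\lambda \ge 0 : \sum_v \lambda_v v = A,\; \sum_v \lambda_v = 1\}$ is a polytope of dimension exactly $N - d - 1$; it meets the open orthant because $A$ lies in the relative interior of $\mathcal{F}$. Its vertices are precisely the minimal decompositions. Hence the decomposition is unique if and only if $\mathcal{F}$ is a simplex, and forms a continuum otherwise. The number of minimal decompositions is the number of vertices of $\Lambda(A)$, which is only $2$ whenever $N = d + 2$. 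This replaces your vaguer ``move toward $Q$ and re-decompose'' step, and the $p^2$ bound should be dropped under every reading of ``distinct''.
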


\begin{theorem}[Marcus--Ree~\cite{marcus1959}]
\label{thm:marcusree}
Every $m \times m$ doubly stochastic matrix can be expressed as a convex combination of at most $m^2 - 2m + 2$ permutation matrices. Finding a BvN decomposition with the minimum number of permutations is NP-hard~\cite{dufossee2016}.
\end{theorem}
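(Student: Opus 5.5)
The plan is to derive the bound from Carath\'eodory's theorem applied to the Birkhoff polytope, using Theorem~\ref{thm:bvn} to identify its vertices. For the NP-hardness clause I would rely on the cited result of~\cite{dufossee2016} and not reprove it.

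\textbf{Step 1: the dimension of $\mathcal{B}_m$.} First I would confirm the claim in Definition~\ref{def:birkhoff} that $\dim \mathcal{B}_m = (m-1)^2$. The affine hull of $\mathcal{B}_m$ lies in the affine subspace $L$ cut out by the $2m$ row-sum and column-sum equations. These equations have exactly one linear dependency: the sum of all row equations equals the sum of all column equations. Any $2m-1$ of them are independent, which one sees by checking that the matrices $e_i\mathbf{1}^T$ and $\mathbf{1}e_j^T$ span a space of dimension $2m-1$. Hence $\dim L = m^2-(2m-1) = (m-1)^2$.

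For the reverse inclusion, $L \subseteq \mathrm{aff}(\mathcal{B}_m)$, I would use the barycenter $J_m=\frac1m\mathbf{1}\mathbf{1}^T$. It lies in $L$ and has all entries equal to $1/m>0$. So for every $X$ in the direction space of $L$ (zero row and column sums) and every sufficiently small $\epsilon$, the matrix $J_m+\epsilon X$ is still entrywise nonnegative and therefore lies in $\mathcal{B}_m$. It follows that $\mathcal{B}_m$ is full-dimensional inside $L$.

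\textbf{Step 2: Carath\'eodory.} By Theorem~\ref{thm:bvn}, $\mathcal{B}_m$ is the convex hull of the $m!$ permutation matrices. Carath\'eodory's theorem, applied inside the $d$-dimensional affine space $L$ with $d=(m-1)^2$, says that every point of the convex hull of a finite set in $L$ is a convex combination of at most $d+1$ affinely independent points of that set. This gives at most $(m-1)^2+1 = m^2-2m+2$ permutation matrices.

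To keep the argument self-contained, I would recall the standard reduction step. Start from any BvN decomposition $A=\sum_{i=1}^r\theta_iP_i$ with $r>d+1$. The $P_i$ are then affinely dependent, so there exist $\lambda_i$, not all zero, with $\sum_i\lambda_i=0$ and $\sum_i\lambda_iP_i=0$. Shift the weights to $\theta_i - t\lambda_i$ and choose $t$ maximal so that all weights stay nonnegative. This keeps the combination equal to $A$ with total weight $1$, and it zeroes at least one coefficient. Iterating brings $r$ down to at most $d+1$.

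\textbf{Step 3: the NP-hardness remark.} This is a separate complexity statement. I would attribute it to~\cite{dufossee2016}, whose proof is a reduction from a known NP-complete problem to deciding whether a given doubly stochastic matrix admits a decomposition with at most a prescribed number of permutations.

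The only step needing care is Step~1, specifically the claim that the affine hull is all of $L$ rather than a proper subspace. The positivity of $J_m$ handles this cleanly. Once that is in place, the rest is a direct application of Carath\'eodory's theorem.
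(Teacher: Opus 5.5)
Your proposal is correct. The paper gives no proof of its own here: it quotes the $m^2-2m+2$ bound from~\cite{marcus1959} and the NP-hardness from~\cite{dufossee2016}. Your Carath\'eodory argument on the Birkhoff polytope is the standard derivation of that bound, and deferring the complexity clause to the citation is exactly what the paper does.

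One small point. The part of Step~1 you flag as needing the most care is not needed for the upper bound. That part is the claim that $\mathrm{aff}(\mathcal{B}_m)$ is all of $L$, which you get from the positivity of the barycenter. Carath\'eodory only requires that the permutation matrices lie in an affine space of dimension at most $(m-1)^2$, and this already follows from $\mathcal{B}_m\subseteq L$ together with $\dim L=(m-1)^2$. Full-dimensionality matters only if you also want to confirm the paper's claim that $\dim\mathcal{B}_m=(m-1)^2$.
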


\subsection{Uniform Sampling of Permutation Matrices}
\label{sec:sampling}

A $m \times m$ permutation matrix $P$ corresponds bijectively to a permutation $\sigma \in S_m$ (the symmetric group on $[m] = \{1, \ldots, m\}$) via $P_{ij} = \mathbf{1}[\sigma(i) = j]$. That is, row $i$ of $P$ has its unique 1 in column $\sigma(i)$. To draw $P$ \emph{uniformly at random from $S_m$}, one draws a uniformly random permutation $\sigma$ and constructs the corresponding matrix.

The standard algorithm for drawing a uniform random permutation is the Fisher--Yates shuffle (also known as the Knuth shuffle): starting from the identity permutation $\sigma = (1, 2, \ldots, m)$, for $i = m, m-1, \ldots, 2$, draw $j$ uniformly at random from $\{1, \ldots, i\}$ and swap $\sigma(i) \leftrightarrow \sigma(j)$. This produces each of the $m!$ permutations with equal probability $1/m!$ and runs in $O(m)$ time using $O(m)$ random bits (specifically $\sum_{i=2}^m \lceil \log_2 i \rceil$ bits). For our protocol with $m = 2n$, each random permutation matrix costs $O(n)$ time and $O(n \log n)$ random bits.

\subsection{Permutation Encoding of Binary Data}

\begin{definition}[Bit-to-permutation encoding]
\label{def:encoding}
For a bit $b \in \{0,1\}$, define the $2 \times 2$ permutation matrix
\[
\Pi(b) = \begin{cases}
I_2 = \begin{pmatrix}1&0\\0&1\end{pmatrix} & \text{if } b = 0, \\[6pt]
J_2 = \begin{pmatrix}0&1\\1&0\end{pmatrix} & \text{if } b = 1.
\end{cases}
\]
For a bitstream $\mathbf{b} = (b_1, \ldots, b_n) \in \{0,1\}^n$, define the block-diagonal encoding
\[
M(\mathbf{b}) = \mathrm{blockdiag}\!\big(\Pi(b_1),\; \Pi(b_2),\; \ldots,\; \Pi(b_n)\big) \in \{0,1\}^{2n \times 2n}.
\]
\end{definition}

Since each $\Pi(b_j)$ is a $2 \times 2$ permutation matrix, their block-diagonal assembly $M(\mathbf{b})$ is a $2n \times 2n$ permutation matrix and a vertex of $\mathcal{B}_{2n}$. The block-diagonal structure confines each bit to a disjoint $2 \times 2$ block, enabling the algebraic extraction we now develop.

\begin{example}[Encoding of a 2-bit stream]
\label{ex:encoding}
Consider the bitstream $\mathbf{b} = (1, 0)$ with $n = 2$. The encoding produces
\[
M\bigl((1,0)\bigr) = \mathrm{blockdiag}\!\left(\Pi(1),\; \Pi(0)\right) = \mathrm{blockdiag}\!\left(\begin{pmatrix}0&1\\1&0\end{pmatrix},\; \begin{pmatrix}1&0\\0&1\end{pmatrix}\right) = \begin{pmatrix}0&1&0&0\\1&0&0&0\\0&0&1&0\\0&0&0&1\end{pmatrix}.
\]
The first $2 \times 2$ block (rows 1--2, columns 1--2) encodes $b_1 = 1$ as the swap matrix $J_2$. The second $2 \times 2$ block (rows 3--4, columns 3--4) encodes $b_2 = 0$ as the identity $I_2$. All entries outside these blocks are zero due to the block-diagonal structure.
\end{example}

\subsection{Algebraic Bit Count Extraction}

The bilinear form $\mathbf{w}^T A\, \mathbf{y}$, applied to a doubly stochastic matrix $A$, extracts a scalar that depends on the structure of $A$. When $A = M(\mathbf{b})$ is a permutation encoding of a bit vector, appropriate choices of $\mathbf{w}$ and $\mathbf{y}$ recover the bit count, individual bits, or weighted combinations of bits. We call $(\mathbf{w}, \mathbf{y})$ an \emph{extraction vector pair}, as they extract a target statistic from the encoded matrix. Different extraction vector pairs applied to the same matrix yield different statistics, which is the basis for the multi-statistic extraction developed in Section~\ref{sec:multistat}.

\begin{definition}[Extraction vectors for the bit count]
\label{def:selectors}
Define $\mathbf{w}, \mathbf{y} \in \mathbb{R}^{2n}$ by
\[
w_i = \begin{cases} 1 & \text{if } i \text{ is odd},\\ 0 & \text{if } i \text{ is even},\end{cases}
\qquad\qquad
y_i = \begin{cases} 0 & \text{if } i \text{ is odd},\\ 1 & \text{if } i \text{ is even}.\end{cases}
\]
Equivalently, $\mathbf{w} = (1,0,1,0,\ldots,1,0)^T$ and $\mathbf{y} = (0,1,0,1,\ldots,0,1)^T$, each of length $2n$.
\end{definition}

\begin{lemma}[Bit count extraction]
\label{lem:extract}
For any bitstream $\mathbf{b} = (b_1, \ldots, b_n) \in \{0,1\}^n$ with permutation encoding $M = M(\mathbf{b})$
\[
\mathbf{w}^T M\, \mathbf{y} \;=\; \sum_{j=1}^n b_j.
\]
\end{lemma}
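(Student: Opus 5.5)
The plan is to expand the bilinear form entrywise and use the block-diagonal structure of $M(\mathbf{b})$ from Definition~\ref{def:encoding} to localize the sum to the diagonal $2\times 2$ blocks. Write
\[
\mathbf{w}^T M\,\mathbf{y} = \sum_{a=1}^{2n}\sum_{c=1}^{2n} w_a M_{ac} y_c .
\]
By Definition~\ref{def:selectors}, $w_a \neq 0$ only for odd $a$ and $y_c \neq 0$ only for even $c$, with both nonzero values equal to $1$. The double sum therefore reduces to $\sum_{a \text{ odd}}\sum_{c \text{ even}} M_{ac}$, the total mass of $M$ in positions (odd row, even column).

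Next I invoke the block-diagonal structure. Write $a = 2j-1$ and $c = 2\ell$ with $j,\ell \in [n]$. Then $M_{2j-1,2\ell}=0$ whenever $j\neq \ell$, because that entry lies outside every diagonal block. So the sum collapses to $\sum_{j=1}^n M_{2j-1,2j}$, and $M_{2j-1,2j}$ is exactly the $(1,2)$ entry of the block $\Pi(b_j)$.

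The last step is a two-case check on $\Pi(b_j)$. If $b_j=0$, then $\Pi(0)=I_2$ has $(1,2)$ entry $0$. If $b_j=1$, then $\Pi(1)=J_2$ has $(1,2)$ entry $1$. In both cases the entry equals $b_j$, and summing over $j$ gives $\sum_j b_j$.

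There is no real obstacle here. The only step needing care is the index bookkeeping that places the $j$-th block in rows and columns $2j-1,2j$, which is what makes the off-block terms vanish. As a sanity check, Example~\ref{ex:encoding} gives $\mathbf{w}^T M\mathbf{y} = M_{12}+M_{34}=1+0=1$.
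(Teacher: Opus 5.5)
Your proof is correct and is essentially the paper's argument. The paper first computes $M\mathbf{y}$ block by block, getting $(M\mathbf{y})_{2j-1} = b_j$, and then lets $\mathbf{w}$ pick out the odd entries. You instead expand $\sum_{a,c} w_a M_{ac} y_c$ directly and discard the off-block terms, but both routes reduce to $\sum_j M_{2j-1,2j} = \sum_j b_j$ via the $(1,2)$ entry of $\Pi(b_j)$.
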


\begin{proof}
We compute $\mathbf{w}^T M\, \mathbf{y}$ by expanding the matrix-vector products step by step.

\medskip\noindent\textbf{Computing $M\mathbf{y}$.}
Since $M = \mathrm{blockdiag}(\Pi(b_1), \ldots, \Pi(b_n))$ is block-diagonal, the product $M\mathbf{y}$ decomposes into independent block multiplications. For the $j$-th block, the relevant entries of $\mathbf{y}$ are $y_{2j-1} = 0$ and $y_{2j} = 1$. Writing $\mathbf{y}_j = (y_{2j-1},\, y_{2j})^T = (0, 1)^T$ for the portion of $\mathbf{y}$ corresponding to block $j$, we have
\[
\Pi(b_j)\, \mathbf{y}_j = \Pi(b_j)\binom{0}{1}.
\]
When $b_j = 0$: $\Pi(0)\binom{0}{1} = \begin{pmatrix}1&0\\0&1\end{pmatrix}\binom{0}{1} = \binom{0}{1}$.
When $b_j = 1$: $\Pi(1)\binom{0}{1} = \begin{pmatrix}0&1\\1&0\end{pmatrix}\binom{0}{1} = \binom{1}{0}$.
Therefore, the entries of $M\mathbf{y}$ at positions $2j{-}1$ and $2j$ are
\begin{equation}
\label{eq:My_entries}
(M\mathbf{y})_{2j-1} = b_j, \qquad (M\mathbf{y})_{2j} = 1 - b_j.
\end{equation}
To verify this: when $b_j = 0$, $(M\mathbf{y})_{2j-1} = 0 = b_j$ and $(M\mathbf{y})_{2j} = 1 = 1 - b_j$. When $b_j = 1$, $(M\mathbf{y})_{2j-1} = 1 = b_j$ and $(M\mathbf{y})_{2j} = 0 = 1 - b_j$. Both cases match~\eqref{eq:My_entries}.

\medskip\noindent\textbf{Computing $\mathbf{w}^T(M\mathbf{y})$.}
Now we take the inner product of $\mathbf{w}$ with $M\mathbf{y}$
\begin{align}
\mathbf{w}^T(M\mathbf{y}) &= \sum_{i=1}^{2n} w_i\, (M\mathbf{y})_i \notag\\
&= \sum_{j=1}^n \Big[ w_{2j-1}\,(M\mathbf{y})_{2j-1} \;+\; w_{2j}\,(M\mathbf{y})_{2j}\Big] \notag\\
&= \sum_{j=1}^n \Big[ 1 \cdot b_j \;+\; 0 \cdot (1 - b_j)\Big] \notag\\
&= \sum_{j=1}^n b_j, \label{eq:extraction_final}
\end{align}
where the third equality substitutes $w_{2j-1} = 1$, $w_{2j} = 0$ from Definition~\ref{def:selectors} and $(M\mathbf{y})_{2j-1} = b_j$, $(M\mathbf{y})_{2j} = 1-b_j$ from~\eqref{eq:My_entries}. The key mechanism is that $\mathbf{w}$ ``selects'' only the odd-indexed entries of $M\mathbf{y}$, each of which equals the corresponding bit $b_j$.
\end{proof}

\begin{example}[Extraction for a concrete bitstream]
\label{ex:extraction}
Take $n = 2$, $\mathbf{b} = (1, 0)$. Then $\mathbf{w} = (1,0,1,0)^T$ and $\mathbf{y} = (0,1,0,1)^T$.
From Example~\ref{ex:encoding}
\[
M = \begin{pmatrix}0&1&0&0\\1&0&0&0\\0&0&1&0\\0&0&0&1\end{pmatrix}.
\]
First compute $M\mathbf{y}$
\[
M\mathbf{y} = \begin{pmatrix}0&1&0&0\\1&0&0&0\\0&0&1&0\\0&0&0&1\end{pmatrix}\begin{pmatrix}0\\1\\0\\1\end{pmatrix} = \begin{pmatrix}0 \cdot 0 + 1 \cdot 1 + 0 \cdot 0 + 0 \cdot 1\\1 \cdot 0 + 0 \cdot 1 + 0 \cdot 0 + 0 \cdot 1\\0 \cdot 0 + 0 \cdot 1 + 1 \cdot 0 + 0 \cdot 1\\0 \cdot 0 + 0 \cdot 1 + 0 \cdot 0 + 1 \cdot 1\end{pmatrix} = \begin{pmatrix}1\\0\\0\\1\end{pmatrix}.
\]
Checking against~\eqref{eq:My_entries}, for block $j=1$ ($b_1=1$): $(M\mathbf{y})_1 = 1 = b_1$ and $(M\mathbf{y})_2 = 0 = 1-b_1$. For block $j=2$ ($b_2=0$): $(M\mathbf{y})_3 = 0 = b_2$ and $(M\mathbf{y})_4 = 1 = 1-b_2$. Both match.

Now compute $\mathbf{w}^T(M\mathbf{y})$
\[
\mathbf{w}^T(M\mathbf{y}) = (1,0,1,0)\begin{pmatrix}1\\0\\0\\1\end{pmatrix} = 1 \cdot 1 + 0 \cdot 0 + 1 \cdot 0 + 0 \cdot 1 = 1.
\]
The result is $1 = b_1 + b_2 = 1 + 0$, confirming the lemma.
\end{example}

\section{The PolyVeil Protocol}
\label{sec:protocol}

\subsection{Problem Statement}

We consider $k$ client entities, where client $t \in [k]$ holds a private binary bitstream $\mathbf{b}_t = (b_{t,1}, \ldots, b_{t,n}) \in \{0,1\}^n$ of length $n$. Let $s_t = \sum_{j=1}^n b_{t,j}$ denote the number of ones in client $t$'s bitstream. The goal is for a server to compute the aggregate $S = \sum_{t=1}^k s_t$ without learning any individual $s_t$, any individual bitstream $\mathbf{b}_t$, or any partial aggregates involving fewer than all $k$ clients.

\subsection{Threat Model}
\label{sec:threat}

We operate in the \emph{honest-but-curious} (semi-honest) model, defined as follows.

\paragraph{Server behavior.}
The server executes every instruction of the protocol exactly as specified. It does not deviate from the protocol by, for example, sending altered intermediate results to clients, injecting false data, or failing to perform a required computation. However, the server records every message it receives and may subsequently perform arbitrary polynomial-time computations on this recorded transcript in an attempt to infer individual client data. The server's computational power is bounded only by polynomial time; it may run brute-force searches over feasible spaces, solve optimization problems, and apply any statistical inference technique. The security guarantees we prove hold against any such polynomial-time analysis.

\paragraph{Communication security.}
All communication channels between each client and the server, and between each client and the shuffler, are authenticated and encrypted using standard transport-layer security (e.g., TLS 1.3). Authentication ensures that the server receives messages only from legitimate clients and not from impersonators. Encryption ensures that no external eavesdropper observing the network can read the content of any message. Together, these guarantees mean that the only entity that sees a client's message to the server is the server itself, and the only entity that sees a client's message to the shuffler is the shuffler itself. We do not assume that the communication channels hide metadata such as message timing or size.

\paragraph{Client behavior.}
Every client follows the protocol faithfully. No client modifies, omits, or fabricates any message. No client shares its private data, its random coins, or its intermediate computations with the server or with any other client (beyond what the protocol prescribes). In particular, no client colludes with the server to de-anonymize the shuffled values. The non-collusion assumption is essential, since if even one client shared its $\eta_t$ value directly with the server (outside the shuffle), the server could link that $\eta_t$ to the client's identity and compute $s_t = (f_t - \eta_t)/\alpha^*$.

\paragraph{Trusted shuffler.}
There exists a functionality $\mathcal{F}_{\mathrm{shuffle}}$ that operates as follows. It accepts as input one scalar value from each of the $k$ clients, collecting the multiset $\{\eta_1, \ldots, \eta_k\}$. It then applies a permutation $\pi$ drawn uniformly at random from the symmetric group $S_k$ (the set of all $k!$ bijections on $[k]$), and outputs the permuted sequence $(\eta_{\pi(1)}, \ldots, \eta_{\pi(k)})$ to the server. Critically, the server learns the values in the output sequence but does \emph{not} learn the permutation $\pi$. This means that for any position $j$ in the output, the server knows the value $\eta_{\pi(j)}$ but cannot determine which client submitted it. The shuffler does not reveal $\pi$ to the clients either.

This ideal functionality can be instantiated in several ways. The simplest is a \emph{non-colluding auxiliary server}: a separate physical server, operated by an independent party that does not collude with the main server, receives all $\eta_t$ values, permutes them, and forwards the result. A stronger instantiation is a \emph{mixnet}, where each client encrypts its value under layered encryption addressed to a chain of relay servers, each of which peels one encryption layer and shuffles the messages. Verifiable mixnets~\cite{chase2020} additionally produce a zero-knowledge proof that the output is a valid permutation of the input, preventing a malicious relay from altering values. A fully cryptographic instantiation uses a \emph{multi-party shuffling protocol} such as the secret-shared shuffle of Chase, Ghosh, and Poburinnaya~\cite{chase2020}, which distributes the shuffling computation among two or more servers so that no single server learns the permutation, achieving security even if one server is corrupted.

We require $k \geq 3$ to ensure that the shuffle provides meaningful anonymity. With $k = 1$, the shuffled output contains a single value that is trivially linked to the sole client. With $k = 2$, the server has a $1/2$ probability of guessing the correct assignment, providing negligible privacy.

\subsection{Protocol Description}

The protocol uses a public parameter $\alpha^* \in (0,1)$ that controls the trade-off between signal strength and privacy, as smaller $\alpha^*$ hides $M_t$ more deeply in the interior of $\mathcal{B}_{2n}$ but requires greater numerical precision to recover $S$.

\begin{algorithm}[!ht]
\caption{PolyVeil}
\label{alg:polyveil}
\begin{algorithmic}[1]
\Statex \textbf{Public parameters:} bit length $n$, scaling factor $\alpha^* \in (0,1)$, extraction vectors $\mathbf{w}$, $\mathbf{y}$ (Definition~\ref{def:selectors}).

\Statex
\Statex \underline{\textsc{Client-Side Masking}}
\For{each client $t \in \{1, \ldots, k\}$ in parallel}
    \State Encode private bitstream as $M_t = M(\mathbf{b}_t) \in \{0,1\}^{2n \times 2n}$ via Definition~\ref{def:encoding}.
    \State Choose $K_t \geq 2$ and draw $K_t$ uniformly random permutation matrices $P_{t,1}, \ldots, P_{t,K_t}$ from $S_{2n}$ using the Fisher--Yates shuffle (Section~\ref{sec:sampling}).
    \State Sample positive coefficients $\alpha_{t,1}, \ldots, \alpha_{t,K_t} > 0$ with $\sum_{i=1}^{K_t} \alpha_{t,i} = 1 - \alpha^*$.
    \State Construct the masked matrix:
    \begin{equation}
    \label{eq:Dt}
        D_t \;=\; \alpha^*\, M_t \;+\; \sum_{i=1}^{K_t} \alpha_{t,i}\, P_{t,i}.
    \end{equation}
    \State Compute the noise term:
    \begin{equation}
    \label{eq:eta}
        \eta_t \;=\; \sum_{i=1}^{K_t} \alpha_{t,i}\;\bigl(\mathbf{w}^T P_{t,i}\, \mathbf{y}\bigr).
    \end{equation}
    \State \textbf{Send} $D_t$ to the server.
\EndFor

\Statex
\Statex \underline{\textsc{Server-Side Extraction}}
\For{each client $t \in \{1, \ldots, k\}$}
    \State Server computes:
    \begin{equation}
    \label{eq:ft}
        f_t \;=\; \mathbf{w}^T D_t\, \mathbf{y}.
    \end{equation}
\EndFor

\Statex
\Statex \underline{\textsc{Secure Noise Transmission}}
\State Each client $t$ submits scalar $\eta_t$ to the trusted shuffler $\mathcal{F}_{\mathrm{shuffle}}$.
\State Shuffler draws $\pi \sim S_k$ and sends $(\eta_{\pi(1)}, \ldots, \eta_{\pi(k)})$ to server.

\Statex
\Statex \underline{\textsc{Aggregation}}
\State Server computes:
\begin{equation}
\label{eq:aggregate}
    S \;=\; \frac{1}{\alpha^*}\left(\sum_{t=1}^k f_t \;-\; \sum_{t=1}^k \eta_t\right).
\end{equation}
\end{algorithmic}
\end{algorithm}

\subsection{Proof of Correctness}
\label{sec:correctness}

\begin{theorem}[Correctness]
\label{thm:correct}
Algorithm~\ref{alg:polyveil} computes $S = \sum_{t=1}^k s_t$ exactly.
\end{theorem}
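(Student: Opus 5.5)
The argument is a direct algebraic computation resting on bilinearity of $(\mathbf{w},\mathbf{y}) \mapsto \mathbf{w}^T A\,\mathbf{y}$ in the matrix argument, together with Lemma~\ref{lem:extract}. First, for each client $t$, I will substitute the definition~\eqref{eq:Dt} of $D_t$ into~\eqref{eq:ft}. Linearity of matrix multiplication then gives
\[
f_t = \alpha^*\,\mathbf{w}^T M_t\,\mathbf{y} + \sum_{i=1}^{K_t} \alpha_{t,i}\,\mathbf{w}^T P_{t,i}\,\mathbf{y}.
\]
Lemma~\ref{lem:extract} applies because $M_t = M(\mathbf{b}_t)$ is exactly the encoding of Definition~\ref{def:encoding}. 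It identifies the first term as $\alpha^* s_t$. The second term is, by~\eqref{eq:eta}, exactly $\eta_t$, so $f_t = \alpha^* s_t + \eta_t$ holds for every $t$.

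Second, I will handle the shuffled noise values. The server receives $(\eta_{\pi(1)},\ldots,\eta_{\pi(k)})$ rather than $(\eta_1,\ldots,\eta_k)$. Since $\pi$ is a bijection of $[k]$, addition is commutative, and so $\sum_{j=1}^k \eta_{\pi(j)} = \sum_{t=1}^k \eta_t$. This means the server can evaluate~\eqref{eq:aggregate} without knowing $\pi$. This is the only point where the protocol structure, and not pure algebra, enters. It is also the step I would flag as the one needing care: the correctness claim must be read with the sum of the shuffled list in place of the indexed sum.

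Third, summing over $t$ gives $\sum_t f_t - \sum_t \eta_t = \alpha^* \sum_t s_t = \alpha^* S$. Because $\alpha^* \in (0,1)$ is nonzero, dividing by $\alpha^*$ yields $S$ exactly. The result is exact in exact (rational or real) arithmetic; any finite-precision concerns lie outside the statement. There is no real obstacle: the proof amounts to combining the extraction lemma with linearity and the permutation invariance of the sum.
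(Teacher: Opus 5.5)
Your proposal is correct and matches the paper's proof essentially step for step: linearity of $A \mapsto \mathbf{w}^T A\,\mathbf{y}$ together with Lemma~\ref{lem:extract} and the definition of $\eta_t$ gives $f_t = \alpha^* s_t + \eta_t$, the shuffled sum equals $\sum_t \eta_t$ by permutation invariance, and dividing by $\alpha^* \neq 0$ recovers $S$. The paper also checks that $D_t \in \mathcal{B}_{2n}$, but that fact is not needed for exactness, so leaving it out costs you nothing.
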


\begin{proof}
The matrix $M_t = M(\mathbf{b}_t)$ is a permutation matrix (Definition~\ref{def:encoding}) and hence doubly stochastic: each row and column contains exactly one 1 and the rest 0, so all row sums and column sums equal 1. Each $P_{t,i}$ is a permutation matrix drawn uniformly from $S_{2n}$ and is likewise doubly stochastic. The coefficients $\alpha^*,\alpha_{t,1},\ldots,\alpha_{t,K_t}$ are strictly positive with $\alpha^* + \sum_i \alpha_{t,i} = 1$. Since $\mathcal{B}_{2n}$ is convex and $D_t$ is a convex combination of elements of $\mathcal{B}_{2n}$
\[
D_t = \alpha^* M_t + \sum_{i=1}^{K_t} \alpha_{t,i} P_{t,i} \;\in\; \mathcal{B}_{2n}.
\]

Applying the bilinear form $A \mapsto \mathbf{w}^T A\, \mathbf{y}$ to $D_t$ and using linearity of matrix-vector multiplication
\begin{align}
f_t &= \mathbf{w}^T D_t\, \mathbf{y} \notag \\
    &= \mathbf{w}^T \!\left(\alpha^* M_t + \sum_{i=1}^{K_t} \alpha_{t,i} P_{t,i}\right) \mathbf{y} \notag \\
    &= \alpha^*\,(\mathbf{w}^T M_t\, \mathbf{y}) + \sum_{i=1}^{K_t} \alpha_{t,i}\,(\mathbf{w}^T P_{t,i}\, \mathbf{y}). \label{eq:ft_decomp}
\end{align}
By Lemma~\ref{lem:extract}, $\mathbf{w}^T M_t \mathbf{y} = \sum_{j=1}^n b_{t,j} = s_t$. By definition~\eqref{eq:eta}, $\eta_t = \sum_{i=1}^{K_t} \alpha_{t,i}(\mathbf{w}^T P_{t,i}\mathbf{y})$. Substituting,
\[
f_t = \alpha^*\, s_t + \eta_t.
\]

Summing over all $k$ clients,
\begin{align*}
\sum_{t=1}^k f_t
&= \sum_{t=1}^k (\alpha^*\, s_t + \eta_t)
= \alpha^* \sum_{t=1}^k s_t + \sum_{t=1}^k \eta_t
= \alpha^* S + \sum_{t=1}^k \eta_t.
\end{align*}
The server knows $\sum_t f_t$ from the extraction step. The shuffled list $(\eta_{\pi(1)}, \ldots, \eta_{\pi(k)})$ is a permutation of $(\eta_1, \ldots, \eta_k)$, and the sum is invariant under permutation, so $\sum_{j=1}^k \eta_{\pi(j)} = \sum_{t=1}^k \eta_t$. Therefore the server can compute
\[
S = \frac{1}{\alpha^*}\!\left(\sum_{t=1}^k f_t - \sum_{t=1}^k \eta_t\right) = \frac{1}{\alpha^*}\!\left(\alpha^* S + \sum_t \eta_t - \sum_t \eta_t\right) = \frac{1}{\alpha^*} \cdot \alpha^* S = S,
\]
confirming that the protocol outputs the correct aggregate.
\end{proof}

\subsection{Integrity in the Full Protocol}
\label{sec:integrity}

In the full (non-compressed) protocol, the server receives the matrix $D_t$ directly and computes $f_t = \mathbf{w}^T D_t \mathbf{y}$ itself. A malicious client cannot cause the server to use an incorrect $f_t$ because the server performs the extraction independently. Specifically, even if a client wished to inflate or deflate its contribution to the aggregate, the client can only control what matrix $D_t$ it sends. The server then computes $f_t = \mathbf{w}^T D_t \mathbf{y}$ deterministically from $D_t$, so the client cannot make the server believe a different $f_t$ than the one implied by the submitted $D_t$.

The server can additionally verify that the received $D_t$ is a valid doubly stochastic matrix by checking that all entries are non-negative and that every row and column sums to~1 (within floating-point tolerance). If a client submits a matrix that is not doubly stochastic, the server can reject it. This verification does not reveal the client's private data (since any doubly stochastic matrix passes the check, regardless of which permutation is hidden inside), but it prevents malformed submissions that could corrupt the aggregate.

The remaining vulnerability is that a malicious client could submit a valid doubly stochastic matrix $D_t$ that encodes a bitstream different from its true $\mathbf{b}_t$. This is the standard ``input substitution'' attack in the semi-honest model, where a dishonest client lies about its data. Preventing this requires mechanisms beyond the semi-honest model, such as zero-knowledge proofs that $D_t$ is correctly constructed from the client's certified data source. We do not address this in the current work.

\subsection{Compressed Variant}
\label{sec:compressed}

\begin{algorithm}[!ht]
\caption{PolyVeil (Compressed)}
\label{alg:compressed}
\begin{algorithmic}[1]
\Statex \textbf{Public parameter:} $\alpha^* \in (0,1)$, extraction vectors $\mathbf{w}$, $\mathbf{y}$.
\For{each client $t \in \{1, \ldots, k\}$ in parallel}
    \State Compute $s_t = \sum_{j=1}^n b_{t,j}$ locally.
    \State Draw random permutations $P_{t,1}, \ldots, P_{t,K_t}$ and coefficients $\alpha_{t,i} > 0$ with $\sum_i \alpha_{t,i} = 1 - \alpha^*$.
    \State Compute $\eta_t = \sum_i \alpha_{t,i}\,(\mathbf{w}^T P_{t,i}\, \mathbf{y})$.
    \State Compute $f_t = \alpha^* s_t + \eta_t$.
    \State \textbf{Send} $f_t$ to server.
\EndFor
\State Clients jointly shuffle $\{\eta_1, \ldots, \eta_k\}$ and deliver to server.
\State Server computes $S = \frac{1}{\alpha^*}\bigl(\sum_t f_t - \sum_t \eta_t\bigr)$.
\end{algorithmic}
\end{algorithm}

The compressed variant offers three concrete advantages beyond the communication reduction from $O(n^2)$ to $O(1)$ per client. First, the server never sees the doubly stochastic matrix $D_t$, which eliminates the BvN decomposition as an attack vector, as the server has no matrix to decompose. Second, client-side computation drops from $O(n^2)$ (constructing a $2n \times 2n$ matrix) to $O(nK_t)$ (generating $K_t$ random permutations and computing $K_t$ bilinear forms). Third, server computation drops from $O(kn^2)$ to $O(k)$, becoming independent of the bitstream length.

The trade-off is that a malicious client can send an arbitrary $f_t$ without the server being able to verify it, since the server no longer has $D_t$ to check. Under the semi-honest model this is not a concern.

\subsection{Worked Example with Full Computation}
\label{sec:example}

We trace every computation explicitly for $k = 3$ clients, $n = 2$ bits, $\alpha^* = 0.3$.

\subsubsection{Ground Truth}
Client~1 holds $\mathbf{b}_1 = (1, 0)$, so $s_1 = 1$. Client~2 holds $\mathbf{b}_2 = (1, 1)$, so $s_2 = 2$. Client~3 holds $\mathbf{b}_3 = (0, 1)$, so $s_3 = 1$. The true aggregate is $S = 1 + 2 + 1 = 4$.

The public parameters are $n = 2$, $\alpha^* = 0.3$, $\mathbf{w} = (1,0,1,0)^T$, $\mathbf{y} = (0,1,0,1)^T$.

\subsubsection{Client-Side Masking (Full Detail for Client 1)}

\paragraph{Encoding $M_1$.}
Client~1's bitstream is $(1, 0)$. Applying Definition~\ref{def:encoding}
\[
M_1 = \mathrm{blockdiag}\!\left(\Pi(1),\; \Pi(0)\right) = \mathrm{blockdiag}\!\left(\begin{pmatrix}0&1\\1&0\end{pmatrix},\; \begin{pmatrix}1&0\\0&1\end{pmatrix}\right) = \begin{pmatrix}0&1&0&0\\1&0&0&0\\0&0&1&0\\0&0&0&1\end{pmatrix}.
\]

\paragraph{Generating decoy permutations.}
Client~1 chooses $K_1 = 2$ decoy permutations drawn uniformly from $S_4$ (the symmetric group on $\{1,2,3,4\}$). Suppose the Fisher--Yates shuffle produces

Permutation $\sigma_1 = (3,1,4,2)$, meaning $\sigma_1(1)=3$, $\sigma_1(2)=1$, $\sigma_1(3)=4$, $\sigma_1(4)=2$. The corresponding permutation matrix has $P_{1,1}[i, \sigma_1(i)] = 1$
\[
P_{1,1} = \begin{pmatrix}0&0&1&0\\1&0&0&0\\0&0&0&1\\0&1&0&0\end{pmatrix}.
\]
Verification that this is correct --- row 1 has its 1 in column 3 (since $\sigma_1(1)=3$); row 2 in column 1; row 3 in column 4; row 4 in column 2. Each row and column has exactly one~1.

Permutation $\sigma_2 = (2,1,4,3)$
\[
P_{1,2} = \begin{pmatrix}0&1&0&0\\1&0&0&0\\0&0&0&1\\0&0&1&0\end{pmatrix}.
\]
To verify, row 1 has 1 in column 2; row 2 in column 1; row 3 in column 4; row 4 in column 3.

\paragraph{Choosing coefficients.}
Client~1 samples $\alpha_{1,1} = 0.5$ and $\alpha_{1,2} = 0.2$, satisfying $\alpha_{1,1} + \alpha_{1,2} = 0.7 = 1 - 0.3 = 1 - \alpha^*$.

\paragraph{Constructing $D_1$.}
Applying equation~\eqref{eq:Dt}: $D_1 = \alpha^* M_1 + \alpha_{1,1} P_{1,1} + \alpha_{1,2} P_{1,2} = 0.3\, M_1 + 0.5\, P_{1,1} + 0.2\, P_{1,2}$.

Computing each scaled matrix,
\[
0.3\, M_1 = \begin{pmatrix}0&0.3&0&0\\0.3&0&0&0\\0&0&0.3&0\\0&0&0&0.3\end{pmatrix},\quad
0.5\, P_{1,1} = \begin{pmatrix}0&0&0.5&0\\0.5&0&0&0\\0&0&0&0.5\\0&0.5&0&0\end{pmatrix},
\]
\[
0.2\, P_{1,2} = \begin{pmatrix}0&0.2&0&0\\0.2&0&0&0\\0&0&0&0.2\\0&0&0.2&0\end{pmatrix}.
\]

Summing entry by entry,
\[
D_1 = \begin{pmatrix}
0+0+0 & 0.3+0+0.2 & 0+0.5+0 & 0+0+0\\
0.3+0.5+0.2 & 0+0+0 & 0+0+0 & 0+0+0\\
0+0+0 & 0+0+0 & 0.3+0+0 & 0+0.5+0.2\\
0+0+0 & 0+0.5+0 & 0+0+0.2 & 0.3+0+0
\end{pmatrix}
= \begin{pmatrix}
0 & 0.5 & 0.5 & 0\\
1.0 & 0 & 0 & 0\\
0 & 0 & 0.3 & 0.7\\
0 & 0.5 & 0.2 & 0.3
\end{pmatrix}.
\]

To verify that $D_1$ is doubly stochastic, the row sums are $0+0.5+0.5+0=1$, $1+0+0+0=1$, $0+0+0.3+0.7=1$, $0+0.5+0.2+0.3=1$. The column sums are $0+1+0+0=1$, $0.5+0+0+0.5=1$, $0.5+0+0.3+0.2=1$, $0+0+0.7+0.3=1$. All entries are non-negative.

\paragraph{Computing $\eta_1$.}
Applying equation~\eqref{eq:eta}: $\eta_1 = \alpha_{1,1}\,(\mathbf{w}^T P_{1,1}\, \mathbf{y}) + \alpha_{1,2}\,(\mathbf{w}^T P_{1,2}\, \mathbf{y})$.

For $P_{1,1}$
\[
P_{1,1}\mathbf{y} = \begin{pmatrix}0&0&1&0\\1&0&0&0\\0&0&0&1\\0&1&0&0\end{pmatrix}\begin{pmatrix}0\\1\\0\\1\end{pmatrix} = \begin{pmatrix}0\\0\\1\\1\end{pmatrix},
\qquad
\mathbf{w}^T(P_{1,1}\mathbf{y}) = (1,0,1,0)\begin{pmatrix}0\\0\\1\\1\end{pmatrix} = 0 + 0 + 1 + 0 = 1.
\]

For $P_{1,2}$
\[
P_{1,2}\mathbf{y} = \begin{pmatrix}0&1&0&0\\1&0&0&0\\0&0&0&1\\0&0&1&0\end{pmatrix}\begin{pmatrix}0\\1\\0\\1\end{pmatrix} = \begin{pmatrix}1\\0\\1\\0\end{pmatrix},
\qquad
\mathbf{w}^T(P_{1,2}\mathbf{y}) = (1,0,1,0)\begin{pmatrix}1\\0\\1\\0\end{pmatrix} = 1 + 0 + 1 + 0 = 2.
\]

Therefore, $\eta_1 = 0.5 \times 1 + 0.2 \times 2 = 0.5 + 0.4 = 0.9$.

Client~1 sends $D_1$ to the server and holds $\eta_1 = 0.9$ for the shuffle.

\subsubsection{Client-Side Masking for Clients 2 and 3 (Summary)}

\paragraph{Client 2.}
Bitstream $\mathbf{b}_2 = (1,1)$, $s_2 = 2$.
\[
M_2 = \mathrm{blockdiag}\!\left(\Pi(1), \Pi(1)\right) = \begin{pmatrix}0&1&0&0\\1&0&0&0\\0&0&0&1\\0&0&1&0\end{pmatrix}.
\]
Suppose Client~2 draws $P_{2,1}$ corresponding to $\sigma = (4,3,2,1)$ and $P_{2,2}$ corresponding to $\sigma = (1,2,3,4)$ (the identity), with $\alpha_{2,1} = 0.4$, $\alpha_{2,2} = 0.3$
\[
P_{2,1} = \begin{pmatrix}0&0&0&1\\0&0&1&0\\0&1&0&0\\1&0&0&0\end{pmatrix},\qquad
P_{2,2} = \begin{pmatrix}1&0&0&0\\0&1&0&0\\0&0&1&0\\0&0&0&1\end{pmatrix}.
\]
Then $D_2 = 0.3\,M_2 + 0.4\,P_{2,1} + 0.3\,P_{2,2}$. Compute $\mathbf{w}^T P_{2,1}\mathbf{y}$
\[
P_{2,1}\mathbf{y} = \begin{pmatrix}1\\0\\1\\0\end{pmatrix}, \quad \mathbf{w}^T\begin{pmatrix}1\\0\\1\\0\end{pmatrix} = 2.
\]
Compute $\mathbf{w}^T P_{2,2}\mathbf{y}$
\[
P_{2,2}\mathbf{y} = \mathbf{y} = \begin{pmatrix}0\\1\\0\\1\end{pmatrix}, \quad \mathbf{w}^T \mathbf{y} = 0.
\]
So $\eta_2 = 0.4 \times 2 + 0.3 \times 0 = 0.8$.

\paragraph{Client 3.}
Bitstream $\mathbf{b}_3 = (0,1)$, $s_3 = 1$.
\[
M_3 = \mathrm{blockdiag}\!\left(\Pi(0), \Pi(1)\right) = \begin{pmatrix}1&0&0&0\\0&1&0&0\\0&0&0&1\\0&0&1&0\end{pmatrix}.
\]
Suppose Client~3 draws $P_{3,1}$ from $\sigma=(2,1,3,4)$ and $P_{3,2}$ from $\sigma=(3,4,1,2)$, with $\alpha_{3,1}=0.35$, $\alpha_{3,2}=0.35$
\[
P_{3,1} = \begin{pmatrix}0&1&0&0\\1&0&0&0\\0&0&1&0\\0&0&0&1\end{pmatrix},\qquad
P_{3,2} = \begin{pmatrix}0&0&1&0\\0&0&0&1\\1&0&0&0\\0&1&0&0\end{pmatrix}.
\]
$P_{3,1}\mathbf{y} = (1,0,0,1)^T$, so $\mathbf{w}^T(P_{3,1}\mathbf{y}) = 1\cdot 1 + 0\cdot 0 + 1\cdot 0 + 0\cdot 1 = 1$.

$P_{3,2}\mathbf{y} = (0,1,0,1)^T$, so $\mathbf{w}^T(P_{3,2}\mathbf{y}) = 0$.

$\eta_3 = 0.35\times 1 + 0.35 \times 0 = 0.35$.

\subsubsection{Server-Side Extraction}

The server computes $f_t = \mathbf{w}^T D_t\, \mathbf{y}$ for each client. By the decomposition proved in equation~\eqref{eq:ft_decomp} of Theorem~\ref{thm:correct} (namely, that the bilinear extraction $\mathbf{w}^T D_t \mathbf{y}$ equals $\alpha^* s_t + \eta_t$ due to the linearity of the bilinear form and Lemma~\ref{lem:extract}), we have

For Client 1: $f_1 = \alpha^* s_1 + \eta_1 = 0.3 \times 1 + 0.9 = 1.2$.

We verify this by direct computation on $D_1$
\[
D_1 \mathbf{y} = \begin{pmatrix}
0 & 0.5 & 0.5 & 0\\
1.0 & 0 & 0 & 0\\
0 & 0 & 0.3 & 0.7\\
0 & 0.5 & 0.2 & 0.3
\end{pmatrix}
\begin{pmatrix}0\\1\\0\\1\end{pmatrix}
= \begin{pmatrix}0.5\\0\\0.7\\0.8\end{pmatrix}.
\]
\[
f_1 = \mathbf{w}^T (D_1 \mathbf{y}) = (1,0,1,0)(0.5,\;0,\;0.7,\;0.8)^T = 1\cdot 0.5 + 0\cdot 0 + 1\cdot 0.7 + 0\cdot 0.8 = 1.2.
\]

For Client~2: $f_2 = \alpha^* s_2 + \eta_2 = 0.3 \times 2 + 0.8 = 1.4$.

For Client~3: $f_3 = \alpha^* s_3 + \eta_3 = 0.3 \times 1 + 0.35 = 0.65$.

\subsubsection{Secure Noise Transmission}

Each client submits its $\eta_t$ to the shuffler. Client~1 submits $\eta_1 = 0.9$, Client~2 submits $\eta_2 = 0.8$, Client~3 submits $\eta_3 = 0.35$. The shuffler draws a uniformly random permutation $\pi \in S_3$; suppose $\pi = (2,3,1)$ (meaning $\pi(1)=2$, $\pi(2)=3$, $\pi(3)=1$). The server receives the sequence $(\eta_{\pi(1)}, \eta_{\pi(2)}, \eta_{\pi(3)}) = (\eta_2, \eta_3, \eta_1) = (0.8,\; 0.35,\; 0.9)$. The server sees the values $0.8$, $0.35$, $0.9$ but does \emph{not} know that $0.8$ came from Client~2, $0.35$ from Client~3, and $0.9$ from Client~1.

\subsubsection{Aggregation}

The server applies equation~\eqref{eq:aggregate}, which states $S = \frac{1}{\alpha^*}(\sum_{t=1}^k f_t - \sum_{t=1}^k \eta_t)$. This formula was derived in Theorem~\ref{thm:correct} from the fact that $f_t = \alpha^* s_t + \eta_t$ (equation~\eqref{eq:ft_decomp}), so $\sum_t f_t - \sum_t \eta_t = \alpha^* \sum_t s_t$, and dividing by $\alpha^*$ recovers $\sum_t s_t$.

The server computes
\[
\sum_{t=1}^3 f_t = 1.2 + 1.4 + 0.65 = 3.25.
\]
\[
\sum_{t=1}^3 \eta_t = 0.8 + 0.35 + 0.9 = 2.05. \quad\text{(sum is the same regardless of shuffle order)}
\]
\[
S = \frac{1}{0.3}(3.25 - 2.05) = \frac{1.2}{0.3} = 4.
\]

\subsubsection{Verification}
The ground truth is $S = s_1 + s_2 + s_3 = 1 + 2 + 1 = 4$. The protocol output matches. To see why the cancellation works algebraically
\begin{align*}
\sum_t f_t - \sum_t \eta_t &= (f_1 - \eta_1) + (f_2 - \eta_2) + (f_3 - \eta_3) \\
&\!\!\!\!\!\!\!\!\!\!\!\!\!\!\!\!\!\!\!\!\!\!\!\!\!\!\text{(note: repairing the sum by the identity $f_t = \alpha^* s_t + \eta_t$ for each $t$)}\\
&= \alpha^* s_1 + \alpha^* s_2 + \alpha^* s_3 \\
&= \alpha^*(s_1 + s_2 + s_3) \\
&= 0.3 \times 4 = 1.2.
\end{align*}
Dividing by $\alpha^* = 0.3$ gives $4$. This cancellation holds for any values of the $\eta_t$'s, any number of decoy permutations, and any positive coefficients, because it depends solely on the algebraic identity $f_t - \eta_t = \alpha^* s_t$.

\subsubsection{What the Server Cannot Do --- De-Shuffling Analysis}
\label{sec:deshuffle}

The server knows $f_1 = 1.2$, $f_2 = 1.4$, $f_3 = 0.65$ (linked to client identities) and the shuffled values $\{0.8, 0.35, 0.9\}$ (unlinked). To learn individual $s_t$ values, the server must assign each shuffled $\eta$-value to the correct client. There are $3! = 6$ possible assignments (since the three values are distinct), and the server can test each.

\medskip
\begin{center}
\small
\begin{tabular}{cccccc}
\toprule
Assignment $\sigma$ & $\hat{s}_1 = \frac{f_1 - \eta_{\sigma(1)}}{\alpha^*}$ & $\hat{s}_2 = \frac{f_2 - \eta_{\sigma(2)}}{\alpha^*}$ & $\hat{s}_3 = \frac{f_3 - \eta_{\sigma(3)}}{\alpha^*}$ & All $\hat{s}_t \in \{0,1,2\}$? \\
\midrule
$(0.8, 0.35, 0.9)$ & $\frac{0.4}{0.3} = 1.33$ & $\frac{1.05}{0.3} = 3.50$ & $\frac{-0.25}{0.3} = -0.83$ & No \\
$(0.8, 0.9, 0.35)$ & $1.33$ & $1.67$ & $1.00$ & No \\
$(0.35, 0.8, 0.9)$ & $2.83$ & $2.00$ & $-0.83$ & No \\
$(0.35, 0.9, 0.8)$ & $2.83$ & $1.67$ & $-0.50$ & No \\
$(0.9, 0.35, 0.8)$ & $1.00$ & $3.50$ & $-0.50$ & No \\
$(0.9, 0.8, 0.35)$ & $\mathbf{1.00}$ & $\mathbf{2.00}$ & $\mathbf{1.00}$ & \textbf{Yes} \\
\bottomrule
\end{tabular}
\end{center}

In this example, only one assignment yields valid bit counts (integers in $\{0,\ldots,n\}$). This is a consequence of the small parameters ($n = 2$, $k = 3$). With larger $n$ and $k$, multiple assignments will produce valid integer counts summing to $S$, and the server cannot distinguish among them. We analyze this formally in Section~\ref{sec:security}.

The probability that the server guesses the correct assignment is $1/k! = 1/6$ by random guessing alone. However, as we show in Section~\ref{sec:security}, the server can exploit the integrality constraint $s_t \in \{0,\ldots,n\}$ to identify the correct assignment with probability 1, rendering the naive protocol insecure.

To \textbf{prevent de-shuffling}, the protocol relies on the trusted shuffler $\mathcal{F}_{\mathrm{shuffle}}$ (Section~\ref{sec:threat}), which guarantees that the permutation $\pi$ is uniformly random and unknown to the server. As we analyze rigorously in Section~\ref{sec:security}, this shuffling alone is \emph{not sufficient} to prevent the server from recovering individual $s_t$ values, because of an integrality constraint that enables deterministic de-shuffling. This motivates the protocol modifications presented in Section~\ref{sec:it_security}.

\section{Security Analysis}
\label{sec:security}

We provide a rigorous security analysis of the basic protocol variants (Algorithms~\ref{alg:polyveil} and~2). We identify fundamental vulnerabilities in both variants, quantify the information leakage precisely, and defer the corrected protocols to Section~\ref{sec:it_security}.

\subsection{The De-Shuffling Attack (Compressed Protocol)}

The compressed protocol has the server receive identity-linked scalars $f_t = \alpha^* s_t + \eta_t$ for $t \in [k]$ and the shuffled sequence $(\tilde\eta_1, \ldots, \tilde\eta_k) = (\eta_{\pi(1)}, \ldots, \eta_{\pi(k)})$.

\begin{theorem}[De-shuffling via integrality constraint]
\label{thm:deshuffle}
In the compressed protocol, the server can recover every client's bit count $s_t$ with probability 1 (over the protocol's randomness).
\end{theorem}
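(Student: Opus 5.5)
The plan is to show that the true assignment of shuffled noise values to clients is, almost surely, the unique assignment consistent with the integrality constraint. The server can then recover it by matching. The key structural observation is that $f_t - \eta_t = \alpha^* s_t \in \alpha^*\{0,1,\ldots,n\}$. Hence for a candidate pairing $(t, j)$, the quantity $(f_t - \tilde\eta_j)/\alpha^*$ is an integer in $\{0,\ldots,n\}$ whenever $\tilde\eta_j = \eta_t$.

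The attack does not need to search over all $k!$ permutations. For each client $t$ and each shuffled value $\tilde\eta_j$, the server tests whether $(f_t-\tilde\eta_j)/\alpha^* \in \{0,\ldots,n\}$. This builds a $k\times k$ bipartite compatibility graph in $O(k^2)$ time.

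Next I would show that, with probability 1, the only edges are the true ones $(t,\pi^{-1}(t))$. An edge $(t,j)$ with $\pi(j)=u\neq t$ requires
\[
\eta_u - \eta_t = \alpha^*(s_t - m)
\]
for some integer $m\in\{0,\ldots,n\}$. So a spurious edge forces the difference of two independent noise values to lie in a fixed finite set $\alpha^*\{-n,\ldots,n\}$. By construction, $\eta_u = \sum_i \alpha_{u,i}\, c_{u,i}$, where $c_{u,i}=\mathbf{w}^T P_{u,i}\mathbf{y}\in\{0,\ldots,n\}$ and the coefficients $\alpha_{u,i}$ are drawn from a continuous distribution on the simplex scaled to $1-\alpha^*$.

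Condition on all permutations and on client $t$'s coefficients. Then $\eta_u$ is a linear function of $(\alpha_{u,1},\ldots,\alpha_{u,K_u})$ restricted to the hyperplane $\sum_i\alpha_{u,i}=1-\alpha^*$. This function is non-constant unless all $c_{u,i}$ are equal. When it is non-constant, $\eta_u$ has a continuous (atomless) distribution, so it hits any of the finitely many target values $\eta_t+\alpha^*(s_t-m)$ with probability zero. A union bound over the finitely many pairs $(t,u)$ and integers $m$ then gives probability zero of any spurious edge. The true edges are always present, so the compatibility graph is exactly a perfect matching, and the server reads off $s_t=(f_t-\eta_{\pi(j)})/\alpha^*$ for the unique compatible $j$.

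The main obstacle is the degenerate event where all $c_{u,i}$ coincide, so that $\eta_u=(1-\alpha^*)c$ is deterministic given the permutations. This event has positive probability over the discrete draw of the $P_{u,i}$, for instance when $n$ is small and $K_u=2$. To handle it, I would first use the continuous randomness on the other side of each pair. The difference $\eta_u-\eta_t$ is atomless as soon as either $\eta_t$ or $\eta_u$ is non-degenerate, because they are independent. The problematic pairs are therefore those where both clients are degenerate. In that case both noise values are multiples of $1-\alpha^*$.

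This is where the statement "with probability 1" may genuinely need either a stated assumption or a fallback. The assumption would be that the coefficient distribution is continuous and that ties are measure zero. The fallback is the observation that swapping degenerate clients with equal $\eta$ values does not change the recovered $s_t$. I would try to argue the latter. If $\eta_u=\eta_t$ exactly, then the assignment ambiguity is harmless, because $\tilde\eta_j$ is the same number either way.

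If instead $\eta_u-\eta_t$ is a nonzero multiple of $1-\alpha^*$ that happens to equal $\alpha^*(s_t-m)$, a genuine ambiguity arises. I would either restrict to generic (irrational-ratio) $\alpha^*$ or note that the paper's claim implicitly assumes the generic case. In that reading, probability 1 is taken over the continuous coefficient randomness, and the statement holds as the matching argument above.
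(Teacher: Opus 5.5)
Your proof has the same skeleton as the paper's: the integrality test on $(f_t-\tilde\eta_j)/\alpha^*$, atomlessness of noise differences from the continuous coefficients, and a finite union bound. It is more careful at exactly the point where the paper's proof breaks. The paper conditions on the $X_{t_0,i}$ and asserts they are ``not all equal with probability 1''. That is false: already the event $X_{t_0,1}=\cdots=X_{t_0,K_{t_0}}=0$ has probability roughly $e^{-K_{t_0}/2}$.

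Your suspicion that the statement needs an extra hypothesis is also correct. Take $\alpha^*=0.3$, $n=7$, $K=2$, and clients $t,u$ with $s_t=7$, $s_u=0$, whose decoys give $X_{t,1}=X_{t,2}=0$ and $X_{u,1}=X_{u,2}=3$. Then $\eta_t=0$, $\eta_u=2.1$ and $f_t=f_u=2.1$, so swapping the two shuffled values yields the equally valid counts $(0,7)$. The input $(0,7)$ with the two clients' decoy draws exchanged produces the same view, since the coefficients are irrelevant on this event. Hence no server procedure can be correct with probability one for both inputs.

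Your other refinements are genuine improvements too. Testing the $k^2$ pairs is cheaper than enumerating all $k!$ bijections. You also observe that exact ties between degenerate clients are harmless. The paper's claim that $\sigma^*$ is the \emph{unique} passing bijection fails on such ties, which also occur with positive probability, even though the recovered counts do not change.

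For the remaining case you do not need $\alpha^*$ irrational, which is moot for a floating-point parameter. The range constraint $\{0,\dots,n\}$ you already built in suffices. Two degenerate clients with $c_t\neq c_u$ produce a spurious edge only if $(1-\alpha^*)(c_u-c_t)=\alpha^*(s_t-m)$ with $|s_t-m|\le n$. This is impossible when $\alpha^*(n+1)<1$, because the left side has modulus at least $1-\alpha^*>n\alpha^*$. The condition covers the paper's choice $\alpha^*=1/(4n)$ and its worked example ($\alpha^*=0.3$, $n=2$).

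Under that hypothesis your three cases give a complete proof:
\begin{enumerate}
\item one side non-degenerate, so the difference is atomless by independence;
\item both degenerate and tied, which is harmless;
\item both degenerate and untied, which is impossible.
\end{enumerate}
Correct your final sentence accordingly. On the degenerate event the coefficient randomness plays no role, so ``probability 1'' there comes from the parameter condition, not from the continuous coefficients.
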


\begin{proof}
The server observes two sets of data, namely identity-linked scalars $(1, f_1), (2, f_2), \ldots, (k, f_k)$ where $f_t = \alpha^* s_t + \eta_t$, and shuffled noise $(\tilde\eta_1, \ldots, \tilde\eta_k) = (\eta_{\pi(1)}, \ldots, \eta_{\pi(k)})$ for unknown $\pi \in S_k$.

\medskip\noindent\textit{The server's test.}
For each candidate bijection $\sigma \colon [k] \to [k]$, the server computes
\[
\hat{s}_t(\sigma) \;=\; \frac{f_t - \tilde\eta_{\sigma(t)}}{\alpha^*}, \qquad t = 1, \ldots, k,
\]
and checks whether $\hat{s}_t(\sigma) \in \{0, 1, \ldots, n\}$ for all $t$ simultaneously.

\medskip\noindent\textit{The true assignment passes the test.}
Let $\sigma^* = \pi^{-1}$ be the true assignment (i.e., $\tilde\eta_{\sigma^*(t)} = \eta_t$ for all $t$). Then
\begin{align*}
\hat{s}_t(\sigma^*) &= \frac{f_t - \tilde\eta_{\sigma^*(t)}}{\alpha^*} = \frac{f_t - \eta_t}{\alpha^*} = \frac{(\alpha^* s_t + \eta_t) - \eta_t}{\alpha^*} = s_t.
\end{align*}
Since $s_t = \sum_{j=1}^n b_{t,j} \in \{0, 1, \ldots, n\}$, all $k$ candidate values pass the integrality test.

\medskip\noindent\textit{Any wrong assignment fails with probability 1.}
Let $\sigma \neq \sigma^*$. There exists some $t_0 \in [k]$ with $\sigma(t_0) \neq \sigma^*(t_0)$. Let $j = \pi(\sigma(t_0))$ be the client whose $\eta$-value is at position $\sigma(t_0)$ in the shuffled sequence, so $\tilde\eta_{\sigma(t_0)} = \eta_j$ with $j \neq t_0$. Then
\begin{align}
\hat{s}_{t_0}(\sigma) &= \frac{f_{t_0} - \eta_j}{\alpha^*} = \frac{\alpha^* s_{t_0} + \eta_{t_0} - \eta_j}{\alpha^*} = s_{t_0} + \frac{\eta_{t_0} - \eta_j}{\alpha^*}. \label{eq:wrong_assign}
\end{align}
This equals an integer if and only if $(\eta_{t_0} - \eta_j)/\alpha^*$ is an integer, i.e., $\eta_{t_0} - \eta_j \in \alpha^* \mathbb{Z}$.

Now we show $\Pr[\eta_{t_0} - \eta_j \in \alpha^* \mathbb{Z}] = 0$. Each $\eta_t = \sum_{i=1}^{K_t} \alpha_{t,i} X_{t,i}$ where $X_{t,i} = \mathbf{w}^T P_{t,i} \mathbf{y} \in \{0, 1, \ldots, n\}$ are integer-valued and the coefficients $\alpha_{t,i}$ are drawn from a continuous distribution on $\{(\alpha_1,\ldots,\alpha_{K_t}) : \alpha_i > 0, \sum_i \alpha_i = 1-\alpha^*\}$. Consider the conditional distribution of $\eta_{t_0}$ given all other randomness (including $\eta_j$ and the integer values $X_{t_0,i}$). Conditional on $X_{t_0,1}, \ldots, X_{t_0,K_{t_0}}$, the random variable $\eta_{t_0} = \sum_i \alpha_{t_0,i} X_{t_0,i}$ is a linear function of the continuously distributed coefficients $(\alpha_{t_0,1}, \ldots, \alpha_{t_0,K_{t_0}})$. Since the $X_{t_0,i}$ are not all equal (with probability 1, as the permutations are drawn independently and uniformly), this linear function is non-constant, so $\eta_{t_0}$ has a continuous conditional distribution. Therefore
\[
\Pr\!\left[\eta_{t_0} - \eta_j \in \alpha^* \mathbb{Z}\right] = \Pr\!\left[\eta_{t_0} \in \eta_j + \alpha^* \mathbb{Z}\right] = 0,
\]
since a continuous random variable assigns probability zero to any countable set.

\medskip\noindent\textit{Uniqueness and conclusion.}
Since $\sigma^*$ passes the test and each $\sigma \neq \sigma^*$ fails with probability 1, taking a union bound over the (finite) set of $k! - 1$ wrong assignments
\[
\Pr[\text{any wrong } \sigma \text{ passes}] \leq \sum_{\sigma \neq \sigma^*} \Pr[\sigma \text{ passes}] = 0.
\]
Therefore, with probability 1, the server uniquely identifies $\sigma^*$ and recovers $s_t = \hat{s}_t(\sigma^*)$ for all $t \in [k]$.
\end{proof}

\begin{remark}[This attack is demonstrated in the worked example]
The worked example in Section~\ref{sec:example} illustrates this attack explicitly. The server tests all $3! = 6$ assignments of shuffled $\eta$ values to clients and finds that exactly one assignment yields valid bit counts $(s_1, s_2, s_3) = (1, 2, 1)$ with all $s_t \in \{0, 1, 2\}$. The paper previously described this as a special case for small $n$, but Theorem~\ref{thm:deshuffle} shows it works for all $n$.
\end{remark}

\subsection{Direct Inference from the Marginal (Compressed Protocol)}

Even without de-shuffling, the server can perform Bayesian inference on $s_t$ from $f_t$ alone.

\begin{proposition}[Server's posterior from $f_t$]
\label{prop:posterior}
Let $\mu$ denote the probability density of $\eta_t$ (which is continuous and independent of $\mathbf{b}_t$). The server's posterior distribution over $s_t$ given $f_t$ is
\begin{equation}
\label{eq:posterior}
\Pr(s_t = s \mid f_t) = \frac{\mu(f_t - \alpha^* s) \cdot \pi(s)}{\sum_{s'=0}^n \mu(f_t - \alpha^* s') \cdot \pi(s')},
\end{equation}
where $\pi(s)$ is any prior distribution over $s_t \in \{0, 1, \ldots, n\}$.
\end{proposition}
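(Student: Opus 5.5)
The plan is to treat the claim as a direct application of Bayes' rule to a location-shift model. The observation model is $f_t = \alpha^* s_t + \eta_t$, where $\eta_t$ is independent of $\mathbf{b}_t$ and hence of $s_t$, and $\alpha^*$ is public. The argument has three steps.

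\textbf{Step 1: conditional density of $f_t$.} Fix $s_t = s$. Then $f_t$ is $\eta_t$ translated by the constant $\alpha^* s$. Because $\eta_t$ is independent of $s_t$, conditioning on $s_t = s$ leaves the law of $\eta_t$ unchanged. The change of variables $x \mapsto x - \alpha^* s$ has unit Jacobian, so the conditional density of $f_t$ at $x$ is $\mu(x - \alpha^* s)$. This is the only place independence is used. It holds because the decoys $P_{t,i}$ and the coefficients $\alpha_{t,i}$ are sampled without reference to $\mathbf{b}_t$ (Algorithm~2, lines 3--4).

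\textbf{Step 2: Bayes' rule.} The joint law of $(s_t, f_t)$ is a prior mass $\pi(s)$ times the conditional density $\mu(x - \alpha^* s)$. Summing over the finitely many values $s \in \{0,\ldots,n\}$ gives the marginal density of $f_t$:
\[
p(x) = \sum_{s'=0}^{n} \mu(x - \alpha^* s')\,\pi(s').
\]
Dividing the joint by this marginal yields exactly \eqref{eq:posterior}.

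\textbf{Step 3: measure-theoretic care.} This step is the main obstacle, and it is mild. First, the statement must be read as holding for almost every observed $f_t$, namely wherever the denominator is positive. The set where $p$ vanishes has $f_t$-probability zero, so the posterior is well defined almost surely.

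Second, the existence of the density $\mu$ must be justified. For this I would reuse the argument from the proof of Theorem~\ref{thm:deshuffle}:
\begin{itemize}
\item Conditional on the integer values $X_{t,i}$, which are not all equal with probability one, $\eta_t$ is a non-constant linear image of the continuously distributed coefficient vector.
\item It is therefore absolutely continuous. I would take the coefficient law to have a density on the simplex, which is the paper's ``continuous distribution'' hypothesis.
\item Mixing over the finitely many values of $(X_{t,i})$ preserves absolute continuity.
\end{itemize}
The all-equal event has probability zero for $n \ge 2$, and it contributes only a null atom that can be handled separately or ignored.

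Finally, I would remark that nothing depends on the choice of prior $\pi$, so the formula holds for arbitrary $\pi$ as stated.
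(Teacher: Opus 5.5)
Your Steps 1 and 2 are exactly the paper's proof. You get the conditional density $\mu(f_t-\alpha^* s)$ from independence and a unit-Jacobian shift, then apply Bayes' rule with the finite total-probability denominator. Under the statement's hypothesis that $\eta_t$ has a Lebesgue density, that already completes the proof.

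The problem is Step 3, where you try to derive that hypothesis from the protocol: the event $\{X_{t,1}=\cdots=X_{t,K_t}\}$ is \emph{not} null. Each $X_{t,i}=\mathbf{w}^T P_{t,i}\mathbf{y}$ is a nonnegative integer with $\mathbb{E}[X_{t,i}]=1/2$. Markov's inequality therefore gives $\Pr[X_{t,i}=0]\ge 1/2$ (for $n=2$ it equals $7/12$). By independence, $\Pr[\eta_t=0]\ge\Pr[X_{t,1}=\cdots=X_{t,K_t}=0]\ge 2^{-K_t}$, whatever the coefficient law. More generally, $\eta_t$ has atoms at $(1-\alpha^*)c$, $c\in\{0,\ldots,n\}$, of total mass $\Pr[X_{t,1}=\cdots=X_{t,K_t}]>0$. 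Your absolute-continuity argument is valid only off this event.

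So $\eta_t$ has no Lebesgue density at all, and the atom cannot be ``ignored.'' Take $\alpha^*=1/(4n)$. The identity $\alpha^* s_t+(1-\alpha^*)c=\alpha^* s+(1-\alpha^*)c'$ forces $(4n-1)(c'-c)=s_t-s$. Since $|s_t-s|\le n<4n-1$, this gives $s=s_t$. Hence on that positive-probability event the true posterior is a point mass at $s_t$. The proof of Theorem~\ref{thm:deshuffle} makes the same ``not all equal with probability 1'' claim, so appealing to it does not help.

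There are two ways to fix this. The first is to treat the density as an explicit assumption, as the paper does, and drop your justification. The second is to prove the formula in the generality the protocol actually needs:
\begin{itemize}
\item Let $A=\{(1-\alpha^*)c: 0\le c\le n\}+\alpha^*\mathbb{Z}$. This is a countable set invariant under $x\mapsto x-\alpha^* s$ for integer $s$.
\item Take $\lambda$ to be Lebesgue measure plus counting measure on $A$, and let $\mu$ be the density of the law of $\eta_t$ with respect to $\lambda$.
\item Because $\lambda$ is invariant under these shifts, your Step 1 becomes ``the law of $f_t$ given $s_t=s$ has $\lambda$-density $\mu(x-\alpha^* s)$.''
\item Step 2 then goes through verbatim and yields the displayed formula for almost every observed $f_t$.
\end{itemize}
On $A$ the values of $\mu$ are atom masses, and that is exactly where the degenerate posterior above comes from.
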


\begin{proof}
We derive this from Bayes' theorem. The server knows $f_t$ and seeks $s_t$. Since $f_t = \alpha^* s_t + \eta_t$ and $\eta_t$ is independent of $s_t$ with density $\mu$, the conditional density of $f_t$ given $s_t = s$ takes the form
\begin{align*}
p(f_t \mid s_t = s) &= p(\alpha^* s + \eta_t = f_t) = p(\eta_t = f_t - \alpha^* s) = \mu(f_t - \alpha^* s).
\end{align*}
The second equality uses the deterministic relationship $f_t = \alpha^* s + \eta_t$, so fixing $s_t = s$ means $\eta_t = f_t - \alpha^* s$. Applying Bayes' theorem with prior $\pi(s) = \Pr(s_t = s)$
\begin{align*}
\Pr(s_t = s \mid f_t) &= \frac{p(f_t \mid s_t = s) \cdot \pi(s)}{p(f_t)} = \frac{\mu(f_t - \alpha^* s) \cdot \pi(s)}{\sum_{s'=0}^n \mu(f_t - \alpha^* s') \cdot \pi(s')},
\end{align*}
where the denominator is the marginal density $p(f_t) = \sum_{s'=0}^n p(f_t \mid s_t = s') \pi(s')$, obtained by the law of total probability over the $(n+1)$ possible values of $s_t$.
\end{proof}

The MAP (maximum a posteriori) estimator selects the $s$ that maximizes the numerator. For a uniform prior, this reduces to $\hat{s}_t = \arg\max_s \mu(f_t - \alpha^* s)$.

\begin{remark}[This channel is weak for small $\alpha^*$, but irrelevant given de-shuffling]
When $\alpha^*$ is small relative to the standard deviation of $\eta_t$, the density $\mu(f_t - \alpha^* s)$ varies slowly with $s$ and the posterior is nearly uniform. To see this, we compute $\mathrm{std}(\eta_t)$.

Each term $\mathbf{w}^T P_{t,i}\mathbf{y} = \sum_{j=1}^n (P_{t,i})_{2j-1, 2j}$, which counts the number of $2\times 2$ diagonal blocks in which $P_{t,i}$ maps the odd index to the even index. For a uniform random permutation matrix, each $(P_{t,i})_{2j-1, 2j}$ is a Bernoulli random variable, and $\mathbf{w}^T P_{t,i}\mathbf{y} \in \{0, 1, \ldots, n\}$. Its expectation is
\[
\mathbb{E}[\mathbf{w}^T P_{t,i}\mathbf{y}] = \sum_{j=1}^n \mathbb{E}[(P_{t,i})_{2j-1,2j}] = \sum_{j=1}^n \frac{1}{2n} = \frac{n}{2n} = \frac{1}{2}.
\]
To see this, note that $\mathbf{w}^T P\mathbf{y}$ extracts the sum of entries at positions $(1,2), (3,4), (5,6), \ldots, (2n-1, 2n)$. For a uniform random permutation $\sigma$, $(P)_{2j-1, 2j} = \mathbf{1}[\sigma(2j-1) = 2j]$, and $\Pr[\sigma(2j-1) = 2j] = 1/(2n)$. Summing over $j = 1, \ldots, n$ gives $n/(2n) = 1/2$.
The variance involves correlations between blocks, but for large $n$ is approximately $n \cdot \frac{1}{2n}(1 - \frac{1}{2n}) \approx 1/2$. Therefore $\mathrm{Var}[\mathbf{w}^T P\mathbf{y}] = \Theta(1)$ and $\mathrm{std}[\mathbf{w}^T P\mathbf{y}] = \Theta(1)$.

Now $\eta_t = \sum_{i=1}^{K_t} \alpha_{t,i}(\mathbf{w}^T P_{t,i}\mathbf{y})$. Its mean is
\[
\mathbb{E}[\eta_t] = \sum_{i=1}^{K_t} \alpha_{t,i} \cdot \frac{1}{2} = \frac{1-\alpha^*}{2}.
\]
Its variance (conditional on coefficients, using independence of the $P_{t,i}$) is
\[
\mathrm{Var}[\eta_t] = \sum_{i=1}^{K_t} \alpha_{t,i}^2 \cdot \mathrm{Var}[\mathbf{w}^T P_{t,i}\mathbf{y}] = \Theta(1) \cdot \sum_{i=1}^{K_t} \alpha_{t,i}^2.
\]
For $K_t$ roughly uniform weights, $\sum_i \alpha_{t,i}^2 \approx (1-\alpha^*)^2/K_t$, giving $\mathrm{std}[\eta_t] = \Theta((1-\alpha^*)/\sqrt{K_t})$.

The signal (difference between $f_t$ for adjacent $s$ values) is $\alpha^*$. The total signal range across all $n+1$ possible $s_t$ values is $\alpha^* n$. The signal-to-noise ratio for distinguishing $s_t$ from $s_t + 1$ is
\[
\mathrm{SNR} = \frac{\alpha^*}{\mathrm{std}[\eta_t]} = \frac{\alpha^* \sqrt{K_t}}{(1-\alpha^*) \cdot \Theta(1)}.
\]
For $\alpha^* = 1/(4n)$ and $K_t = O(1)$, this is $O(1/n)$, making the per-unit SNR negligible.

However, this analysis is moot because the de-shuffling attack of Theorem~\ref{thm:deshuffle} recovers $s_t$ with probability 1 regardless of the signal-to-noise ratio.
\end{remark}

\subsection{Likelihood Analysis of the Full Protocol}

In the full protocol, the server additionally receives $D_t = \alpha^* M_t + (1-\alpha^*) R_t$, where $R_t \sim \nu$ is the random decoy component. The server can compute the likelihood of each candidate permutation matrix $M'$

\begin{proposition}[Posterior concentration in the full protocol]
\label{prop:full_posterior}
Let $\nu$ denote the distribution of $R_t = \frac{1}{1-\alpha^*}\sum_{i} \alpha_{t,i} P_{t,i}$ over $\mathcal{B}_{2n}$. For each candidate $M' \in \mathcal{L}(D_t, \alpha^*)$, the likelihood is
\[
\Pr(D_t \mid M_t = M') = \frac{1}{(1-\alpha^*)^d}\; \nu\!\left(\frac{D_t - \alpha^* M'}{1 - \alpha^*}\right),
\]
where $d$ is the dimension of $\mathcal{B}_{2n}$. As $K_t \to \infty$, the posterior probability of the true $M_t$ approaches 1.
\end{proposition}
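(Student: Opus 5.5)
The proof splits into two parts: the likelihood formula, which is a change of variables, and the posterior concentration, which is a statement about how the support of $\nu$ grows with $K_t$.

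\emph{Likelihood formula.} We condition on $M_t = M'$. Then $D_t = \alpha^* M' + (1-\alpha^*) R_t$, where $R_t \sim \nu$ is independent of $M_t$. The map $R \mapsto \alpha^* M' + (1-\alpha^*) R$ is an affine bijection of the affine hull of $\mathcal{B}_{2n}$, which has dimension $d = (2n-1)^2$ by Definition~\ref{def:birkhoff}. Its linear part is multiplication by $(1-\alpha^*)$, so its Jacobian determinant on that $d$-dimensional space is $(1-\alpha^*)^d$. The change-of-variables formula then gives
\[
\Pr(D_t \mid M_t = M') = (1-\alpha^*)^{-d}\,\nu\bigl((D_t - \alpha^* M')/(1-\alpha^*)\bigr).
\]
We read $\mathcal{L}(D_t,\alpha^*)$ as the set of permutation matrices $M'$ for which $(D_t-\alpha^*M')/(1-\alpha^*)$ lies in $\mathcal{B}_{2n}$; for all other $M'$ the likelihood vanishes. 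For this step we need $\nu$ to have a density with respect to Lebesgue measure on the affine hull. This holds once $K_t \ge (2n-1)^2+1$ and the coefficients are continuous, since then generically the $P_{t,i}$ affinely span $\mathcal{B}_{2n}$. For smaller $K_t$ we interpret $\nu$ as a density relative to the appropriate Hausdorff measure on the union of lower-dimensional faces, and the same scaling argument goes through with the corresponding exponent.

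\emph{Concentration, strategy.} We apply Bayes' rule over the finite set $\mathcal{L}(D_t,\alpha^*)$ and show that the likelihood ratio
\[
\frac{\nu\bigl((D_t-\alpha^* M')/(1-\alpha^*)\bigr)}{\nu\bigl((D_t-\alpha^* M_t)/(1-\alpha^*)\bigr)}
\]
tends to $0$ in probability for every $M' \neq M_t$. Because $|\mathcal{L}| \le (2n)!$ is finite and independent of $K_t$, a union bound then makes the posterior of $M_t$ tend to $1$ under any prior that puts positive mass on $M_t$.

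\emph{Concentration, mechanism.} As $K_t\to\infty$ with roughly uniform weights, $R_t$ concentrates at the barycenter $J/(2n)$ (all entries $1/(2n)$). By the variance computation in the preceding remark, its fluctuations are of order $K_t^{-1/2}$. A multivariate CLT argument on the face of the simplex pushed forward to $\mathcal{B}_{2n}$ shows that $\nu$ is approximately Gaussian around $J/(2n)$ with covariance $\Theta(1/K_t)$.
\begin{itemize}
\item The true candidate gives $R_t$ itself, which lies at distance $O(K_t^{-1/2})$ from the barycenter.
\item A wrong candidate gives $R_t + \alpha^*(M_t - M')/(1-\alpha^*)$. This is shifted by a fixed nonzero amount whose Frobenius norm is of order $\alpha^*/(1-\alpha^*)$, independent of $K_t$.
\end{itemize}
The Gaussian log-likelihood ratio is therefore of order $-K_t\,\|\alpha^*(M_t-M')\|^2/(1-\alpha^*)^2 \to -\infty$. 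Moreover, once $K_t$ is large, $D_t$ has all entries near $(1-\alpha^*)/(2n)$ except at the support of $M_t$. For $\alpha^* > 0$ fixed, this already forces many $M'$ out of the support: a wrong $M'$ would require $R$ to have a negative entry wherever $M_t$ is $0$ and $M'$ is $1$ with $D_t$ entry below $\alpha^*$. Those candidates have likelihood exactly zero.

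\emph{Expected obstacle.} The hardest step is making the Gaussian approximation of the density $\nu$ rigorous: we need a local limit theorem for densities, not just convergence in distribution, evaluated at points at distance $\Theta(1)$ from the mean, which is a moderate-deviation regime. I would first try the simpler support argument. With high probability every entry of $R_t$ lies in $[1/(2n) - o(1),\, 1/(2n) + o(1)]$, so whenever $\alpha^*/(1-\alpha^*) > 1/(2n)+o(1)$, every wrong $M'$ yields a matrix with a negative entry and hence zero likelihood. In that regime the posterior of $M_t$ equals $1$ outright. Only the small-$\alpha^*$ regime, $\alpha^* \lesssim 1/(2n)$, would need the local-CLT or large-deviation estimate of $\nu$ on the slab.
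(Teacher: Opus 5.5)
Your proposal follows the paper's proof essentially step for step. Both use the affine change of variables $R\mapsto \alpha^*M'+(1-\alpha^*)R$ with Jacobian $(1-\alpha^*)^d$, $d=(2n-1)^2$. Both use concentration of $R_t$ at $\frac{1}{2n}\mathbf{J}$ with entrywise fluctuations of order $K_t^{-1/2}$. Both observe that a wrong candidate yields the residual $R_t+\frac{\alpha^*}{1-\alpha^*}(M_t-M')$, shifted by an amount independent of $K_t$. Both finish with Bayes' rule over a finite candidate set.

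The step you single out as the obstacle is precisely the one the paper also leaves unproved. The paper passes from Chebyshev concentration of $R_t$ directly to $\nu(R')/\nu(R)\to 0$, on the grounds that the false residual lies $\Theta(\alpha^*\sqrt{K_t}/(1-\alpha^*))$ standard deviations from the mode. Concentration in probability, however, says nothing about the value of a density at a given point. So for fixed $\alpha^*$ with $\alpha^*/(1-\alpha^*)<\frac{1}{2n}$, which includes the paper's operating point $\alpha^*=1/(4n)$, neither argument is complete. What is missing is a bound showing that $\sup\{\nu(x): x\in\mathcal{B}_{2n},\ \|x-\frac{1}{2n}\mathbf{J}\|_\infty\ge c\}/\nu(R_t)\to 0$ in probability for each fixed $c>0$. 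Since the shift is of order one while the width is $K_t^{-1/2}$, this is a large-deviation estimate, not a moderate-deviation one. Your Gaussian quadratic form gives the right qualitative conclusion, but it is not a justified approximation of $\log\nu$ at that scale.

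Two of your additions go beyond the paper.

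First, your support argument is a complete and elementary proof when $\alpha^*/(1-\alpha^*)>\frac{1}{2n}$ with a fixed margin, i.e.\ $\alpha^*>1/(2n+1)$. Any $M'\neq M_t$ has a $1$ at some $(a,b)$ where $M_t$ has a $0$, so $R'_{ab}=(R_t)_{ab}-\frac{\alpha^*}{1-\alpha^*}$. Chebyshev or Hoeffding, with a union bound over the $4n^2$ entries, makes this negative for all wrong $M'$ simultaneously with probability tending to $1$. Hence $\mathcal{L}(D_t,\alpha^*)=\{M_t\}$ and the posterior equals $1$ exactly, with no density estimate needed. The paper uses this feasibility mechanism only elsewhere (in defining $\mathcal{L}$ and in its boundary-regime discussion), not in this proof.

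Second, your caveat on the existence of a density is a real hypothesis the paper omits. The formula requires continuous coefficients, since with exactly uniform weights $R_t$ is lattice-valued, and it requires $K_t\ge(2n-1)^2+1$. Even then, spanning holds only with high probability, not almost surely. The $P_{t,i}$ are discrete, so a degenerate tuple such as $P_{t,1}=\cdots=P_{t,K_t}$ has positive probability. Thus $\nu$ has a small singular component, and the formula should be read for its absolutely continuous part; this does not affect the limiting statement.
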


\begin{proof}
\textit{Deriving the likelihood.}
The masked matrix is $D_t = \alpha^* M_t + (1-\alpha^*) R_t$, where $R_t \sim \nu$. For a candidate $M'$, define the residual $R' = (D_t - \alpha^* M')/(1-\alpha^*)$. We want $\Pr(D_t \mid M_t = M')$. Since $D_t = \alpha^* M' + (1-\alpha^*) R_t$ when $M_t = M'$, the event $\{D_t = d\}$ is the event $\{R_t = (d - \alpha^* M')/(1-\alpha^*)\}$. The map $\varphi \colon R \mapsto \alpha^* M' + (1-\alpha^*) R$ is an affine transformation from $\mathcal{B}_{2n}$ to itself. Its Jacobian matrix is $(1-\alpha^*) I_d$ where $d = (2n-1)^2$ is the dimension of $\mathcal{B}_{2n}$ (a doubly stochastic matrix has $(2n)^2$ entries but $(2n-1)^2$ degrees of freedom after enforcing row and column sum constraints). The absolute value of the Jacobian determinant is $|\det((1-\alpha^*) I_d)| = (1-\alpha^*)^d$. By the standard change-of-variables formula for densities, if $R_t$ has density $\nu(R)$, then $D_t = \varphi(R_t)$ has density
\[
p(d \mid M_t = M') = \frac{1}{(1-\alpha^*)^d}\; \nu\!\left(\frac{d - \alpha^* M'}{1-\alpha^*}\right) = \frac{1}{(1-\alpha^*)^d}\; \nu(R').
\]

\textit{The true residual was drawn from $\nu$; false residuals were not.}
For the true $M_t$, the residual is
\[
R = \frac{D_t - \alpha^* M_t}{1-\alpha^*} = \frac{(\alpha^* M_t + (1-\alpha^*) R_t) - \alpha^* M_t}{1-\alpha^*} = R_t,
\]
which was drawn from $\nu$ by construction. For any $M' \neq M_t$
\begin{align*}
R' &= \frac{D_t - \alpha^* M'}{1-\alpha^*} = \frac{(\alpha^* M_t + (1-\alpha^*) R_t) - \alpha^* M'}{1-\alpha^*} \\
&= R_t + \frac{\alpha^*(M_t - M')}{1-\alpha^*} \\
&= R + \frac{\alpha^*}{1-\alpha^*}(M_t - M').
\end{align*}
Since $M_t \neq M'$, the matrix $M_t - M'$ is nonzero (it has entries in $\{-1, 0, 1\}$ with at least two nonzero entries), so $R'$ is a translate of $R$ by a fixed nonzero shift.

\textit{Computing $\mathbb{E}[P_{ij}]$ for a uniform random permutation matrix.}
Let $P$ be a uniform random permutation matrix in $S_{2n}$, corresponding to a permutation $\sigma$ drawn uniformly from the symmetric group $S_{2n}$. The $(i,j)$ entry of $P$ is the indicator $P_{ij} = \mathbf{1}[\sigma(i) = j]$. Since $\sigma$ is uniform over all $(2n)!$ permutations
\[
\mathbb{E}[P_{ij}] = \Pr[\sigma(i) = j] = \frac{|\{\sigma \in S_{2n} : \sigma(i) = j\}|}{|S_{2n}|} = \frac{(2n-1)!}{(2n)!} = \frac{1}{2n}.
\]
The numerator counts the permutations that map $i$ to $j$: once $\sigma(i) = j$ is fixed, the remaining $2n-1$ elements can be mapped in $(2n-1)!$ ways. Therefore
\[
\mathbb{E}[P] = \frac{1}{2n}\mathbf{J},
\]
where $\mathbf{J}$ is the $(2n) \times (2n)$ all-ones matrix. (One can verify that each row of $\mathbb{E}[P]$ sums to $(2n) \cdot \frac{1}{2n} = 1$, consistent with $P$ being doubly stochastic.)

\textit{Computing $\mathrm{Var}[P_{ij}]$.}
Since $P_{ij} = \mathbf{1}[\sigma(i) = j] \in \{0, 1\}$ is a Bernoulli random variable with parameter $p = 1/(2n)$
\[
\mathrm{Var}[P_{ij}] = p(1-p) = \frac{1}{2n}\left(1 - \frac{1}{2n}\right) = \frac{2n - 1}{(2n)^2}.
\]

\textit{Concentration of $R_t$ as $K_t \to \infty$.}
Each entry of $R_t$ is
\[
(R_t)_{ab} = \frac{1}{1-\alpha^*}\sum_{i=1}^{K_t} \alpha_{t,i} (P_{t,i})_{ab}.
\]
Its expectation is
\begin{align*}
\mathbb{E}[(R_t)_{ab}]
&= \frac{1}{1-\alpha^*}\sum_{i=1}^{K_t} \alpha_{t,i}\, \mathbb{E}[(P_{t,i})_{ab}]
= \frac{1}{1-\alpha^*}\sum_{i=1}^{K_t} \alpha_{t,i} \cdot \frac{1}{2n}
= \frac{1}{1-\alpha^*} \cdot \frac{1}{2n} \sum_{i=1}^{K_t} \alpha_{t,i}
= \frac{1}{1-\alpha^*} \cdot \frac{1}{2n} \cdot (1-\alpha^*)
= \frac{1}{2n},
\end{align*}
where we used $\sum_{i=1}^{K_t} \alpha_{t,i} = 1 - \alpha^*$.

Its variance is (using independence of the $P_{t,i}$ and treating $\alpha_{t,i}$ as fixed conditional on the coefficient draw)
\begin{align*}
\mathrm{Var}[(R_t)_{ab}]
&= \frac{1}{(1-\alpha^*)^2}\sum_{i=1}^{K_t} \alpha_{t,i}^2\, \mathrm{Var}[(P_{t,i})_{ab}]
= \frac{1}{(1-\alpha^*)^2}\sum_{i=1}^{K_t} \alpha_{t,i}^2 \cdot \frac{2n-1}{(2n)^2}.
\end{align*}
For roughly uniform weights $\alpha_{t,i} \approx (1-\alpha^*)/K_t$, we have $\sum_i \alpha_{t,i}^2 \approx K_t \cdot ((1-\alpha^*)/K_t)^2 = (1-\alpha^*)^2/K_t$. Substituting,
\begin{align*}
\mathrm{Var}[(R_t)_{ab}]
&\approx \frac{1}{(1-\alpha^*)^2} \cdot \frac{(1-\alpha^*)^2}{K_t} \cdot \frac{2n-1}{(2n)^2}
= \frac{2n-1}{(2n)^2 K_t}
= \frac{1}{K_t} \cdot \frac{1}{2n}\!\left(1 - \frac{1}{2n}\right).
\end{align*}
The standard deviation in each entry is therefore $\mathrm{std}[(R_t)_{ab}] = O(1/\sqrt{K_t})$. By Chebyshev's inequality applied entry-wise, for any $\varepsilon > 0$
\[
\Pr\!\left[\left|(R_t)_{ab} - \frac{1}{2n}\right| > \varepsilon\right] \leq \frac{\mathrm{Var}[(R_t)_{ab}]}{\varepsilon^2} = O\!\left(\frac{1}{K_t \varepsilon^2}\right) \to 0 \quad \text{as } K_t \to \infty.
\]
Hence $R_t \to \frac{1}{2n}\mathbf{J}$ in probability, entry-wise.

\textit{Likelihood ratio diverges.}
The true residual $R$ satisfies $\|R - \frac{1}{2n}\mathbf{J}\|_\infty = O_P(1/\sqrt{K_t})$ (where $\|\cdot\|_\infty$ is the max entry). The false residual satisfies
\[
\left\|R' - \frac{1}{2n}\mathbf{J}\right\|_\infty = \left\|R - \frac{1}{2n}\mathbf{J} + \frac{\alpha^*}{1-\alpha^*}(M_t - M')\right\|_\infty.
\]
The shift matrix $\frac{\alpha^*}{1-\alpha^*}(M_t - M')$ has entries of magnitude $\frac{\alpha^*}{1-\alpha^*}$ at the positions where $M_t$ and $M'$ differ. Since $M_t \neq M'$ (they differ in at least 2 rows), we have
\[
\left\|R' - \frac{1}{2n}\mathbf{J}\right\|_\infty \geq \frac{\alpha^*}{1-\alpha^*} - O_P\!\left(\frac{1}{\sqrt{K_t}}\right).
\]
For $K_t$ large enough that $1/\sqrt{K_t} \ll \alpha^*/(1-\alpha^*)$, the false residual $R'$ lies at distance $\Theta(\alpha^*/(1-\alpha^*))$ from the mode $\frac{1}{2n}\mathbf{J}$, while $R$ lies at distance $O(1/\sqrt{K_t})$. Since $\nu$ concentrates with width $O(1/\sqrt{K_t})$, the false residual is $\Theta(\alpha^* \sqrt{K_t}/(1-\alpha^*))$ standard deviations from the mode. Therefore $\nu(R')/\nu(R) \to 0$ as $K_t \to \infty$ (assuming $\alpha^*$ is fixed), and the likelihood ratio diverges
\[
\frac{\nu(R)}{\nu(R')} \;\to\; \infty \qquad \text{as } K_t \to \infty.
\]
Applying Bayes' theorem with a uniform prior $\pi(M') = 1/(2n)!$ for all $M' \in S_{2n}$
\[
\Pr(M_t = M' \mid D_t) = \frac{\nu(R')}{\sum_{M'' \in S_{2n}} \nu(R'')} \;\to\; \begin{cases} 1 & \text{if } M' = M_t, \\ 0 & \text{if } M' \neq M_t, \end{cases}
\]
since the numerator for $M' = M_t$ dominates all other terms.
\end{proof}

\begin{remark}[More decoys can decrease security]
This creates a counterintuitive trade-off. Increasing $K_t$ (adding more decoy permutations) was intended to improve security by increasing the decomposition count $L$. However, increasing $K_t$ also concentrates $\nu$, making the likelihood ratio $\nu(R)/\nu(R')$ larger and the server's MAP estimate more accurate. Against a computationally unbounded adversary, the concentration effect dominates the decomposition-count effect.
\end{remark}

\section{The Two-Layer PolyVeil Protocol}
\label{sec:it_security}

The de-shuffling attack (Theorem~\ref{thm:deshuffle}) shows that no protocol variant in which the server sees identity-linked $f_t$ values \emph{and} shuffled $\eta_t$ values separately can be secure. The root cause is that the integrality constraint $s_t \in \{0,\ldots,n\}$ allows the server to uniquely identify the true shuffle permutation.

We now present a corrected protocol with a \emph{two-layer} architecture in which no single entity can learn individual data. The design addresses the reviewer critique that any aggregation-only protocol (where the server sees only $\sum f_t$ and $\sum \eta_t$) is trivially secure and does not require the Birkhoff polytope. In our two-layer protocol, the Birkhoff encoding is \emph{essential} for the security of the aggregation layer.

\subsection{Architecture Overview}

The protocol involves three types of entities: \textbf{Clients} $t \in [k]$, each holding private $\mathbf{b}_t \in \{0,1\}^n$; an \textbf{Aggregator} $\mathcal{A}$, which receives masked matrices $D_t$ and computes the scalar aggregate $F = \sum_t \mathbf{w}^T D_t \mathbf{y}$, which it sends to the server (the aggregator does \emph{not} receive $\eta_t$ values); and a \textbf{Server} $\mathcal{S}$, which receives $F$ from the aggregator and $H = \sum_t \eta_t$ from a separate noise-aggregation channel, and computes $S = (F - H)/\alpha^*$.

The key design principle is \emph{separation of information}: the aggregator sees $D_t$ (which encodes $M_t$) but not $\eta_t$; a separate channel delivers $\sum \eta_t$ to the server without the aggregator's involvement.

\begin{algorithm}[!ht]
\caption{Two-Layer PolyVeil}
\label{alg:twolayer}
\begin{algorithmic}[1]
\Statex \textbf{Public parameters:} $n$, $\alpha^* \in (0,1)$, $\mathbf{w}$, $\mathbf{y}$.
\Statex \textbf{Entities:} Aggregator $\mathcal{A}$, noise aggregator $\mathcal{B}$, server $\mathcal{S}$.
\Statex
\Statex \underline{\textsc{Client-Side Computation}}
\For{each client $t \in [k]$ in parallel}
    \State Encode $M_t = M(\mathbf{b}_t)$. Draw $K_t \geq 2$ uniform random permutations $P_{t,i} \in S_{2n}$.
    \State Sample $\alpha_{t,1}, \ldots, \alpha_{t,K_t} > 0$ with $\sum_i \alpha_{t,i} = 1 - \alpha^*$.
    \State Compute $D_t = \alpha^* M_t + \sum_i \alpha_{t,i} P_{t,i}$ and $\eta_t = \sum_i \alpha_{t,i}(\mathbf{w}^T P_{t,i}\mathbf{y})$.
    \State Send $D_t$ to aggregator $\mathcal{A}$. \quad Send $\eta_t$ to noise aggregator $\mathcal{B}$.
\EndFor
\Statex \underline{\textsc{Aggregation}}
\State $\mathcal{A}$ computes $F = \sum_{t=1}^k \mathbf{w}^T D_t \mathbf{y}$ and sends $F$ to server $\mathcal{S}$.
\State $\mathcal{B}$ computes $H = \sum_{t=1}^k \eta_t$ and sends $H$ to server $\mathcal{S}$.
\State $\mathcal{S}$ computes $S = (F - H)/\alpha^*$.
\end{algorithmic}
\end{algorithm}

\subsection{Correctness}

\begin{theorem}[Correctness]
\label{thm:correct_twolayer}
Algorithm~\ref{alg:twolayer} outputs $S = \sum_t s_t$ exactly.
\end{theorem}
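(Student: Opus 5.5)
The plan is to reduce Theorem~\ref{thm:correct_twolayer} to the per-client identity already established in the proof of Theorem~\ref{thm:correct}. The two-layer protocol performs exactly the same client-side computation as Algorithm~\ref{alg:polyveil}; it differs only in which entity receives which quantity.

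\textbf{Step 1: the per-client identity.} For each client $t$, $D_t = \alpha^* M_t + \sum_i \alpha_{t,i} P_{t,i}$ is a convex combination of permutation matrices, so $D_t \in \mathcal{B}_{2n}$. Since $\mathbf{w}^T(\cdot)\mathbf{y}$ is a linear function of the matrix argument, the expansion~\eqref{eq:ft_decomp} applies verbatim. Combined with Lemma~\ref{lem:extract}, which gives $\mathbf{w}^T M_t \mathbf{y} = s_t$, this yields
\[
\mathbf{w}^T D_t \mathbf{y} = \alpha^* s_t + \eta_t .
\]
Here $\eta_t$ is exactly the value the client sends to $\mathcal{B}$, because both are computed from the same $P_{t,i}$ and $\alpha_{t,i}$.

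\textbf{Step 2: what the aggregators compute.} The aggregator $\mathcal{A}$ computes
\[
F = \sum_t \mathbf{w}^T D_t \mathbf{y} = \alpha^* S + \sum_t \eta_t ,
\]
and the noise aggregator $\mathcal{B}$ computes $H = \sum_t \eta_t$. Subtracting gives $F - H = \alpha^* S$. Since $\alpha^* \in (0,1)$ is nonzero, dividing by it returns $S$ exactly.

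\textbf{Main point to check.} There is no real obstacle. The one thing to verify is consistency: the $\eta_t$ delivered to $\mathcal{B}$ must coincide with the noise term embedded in the $D_t$ delivered to $\mathcal{A}$. This holds under the semi-honest model, because each client uses a single draw of randomness for both messages. Unlike the original protocol, no shuffle appears here, so there is no need to argue that the sum is invariant under permutation.
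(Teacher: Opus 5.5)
Your proof is correct and follows essentially the same route as the paper. Both expand $F=\sum_t \mathbf{w}^T D_t\mathbf{y}$ by linearity and Lemma~\ref{lem:extract} to get $F=\alpha^* S+H$, then divide $F-H$ by $\alpha^*\neq 0$; your extra remarks about $D_t\in\mathcal{B}_{2n}$ and the requirement that each client use one draw of randomness for both $D_t$ and $\eta_t$ are harmless points the paper leaves implicit.
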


\begin{proof}
The aggregator computes
\begin{align*}
F &= \sum_{t=1}^k \mathbf{w}^T D_t\, \mathbf{y}
= \sum_{t=1}^k \mathbf{w}^T \!\left(\alpha^* M_t + \sum_{i=1}^{K_t} \alpha_{t,i} P_{t,i}\right) \mathbf{y} \\
&= \sum_{t=1}^k \left(\alpha^*(\mathbf{w}^T M_t\, \mathbf{y}) + \sum_{i=1}^{K_t} \alpha_{t,i}(\mathbf{w}^T P_{t,i}\, \mathbf{y})\right)
= \sum_{t=1}^k (\alpha^* s_t + \eta_t)
= \alpha^* \sum_{t=1}^k s_t + \sum_{t=1}^k \eta_t
= \alpha^* S + H,
\end{align*}
where we used Lemma~\ref{lem:extract} ($\mathbf{w}^T M_t \mathbf{y} = s_t$) and the definition $\eta_t = \sum_i \alpha_{t,i}(\mathbf{w}^T P_{t,i}\mathbf{y})$. The noise aggregator computes $H = \sum_{t=1}^k \eta_t$ independently. The server receives $F$ and $H$ and computes
\[
\frac{F - H}{\alpha^*} = \frac{(\alpha^* S + H) - H}{\alpha^*} = \frac{\alpha^* S}{\alpha^*} = S.
\]
\end{proof}

\subsection{Layer 1: Information-Theoretic Security of the Server}
\label{sec:layer1}

The server $\mathcal{S}$ receives only two scalars: the aggregate signal $F = \sum_t f_t$ and the aggregate noise $H = \sum_t \eta_t$. We prove that the server learns nothing about any individual client's data beyond the aggregate $S = \sum_t s_t$, using the simulation paradigm from secure multi-party computation (background in Appendix~\ref{app:sim}).

The idea of a simulation proof is simple: we construct an algorithm (the \emph{simulator}) that can fabricate a fake server view using only the aggregate $S$ and public parameters --- \emph{without knowing any individual input} $\mathbf{b}_t$. If the fabricated view is distributed identically to the real view, then the real view contains no information about individual inputs beyond $S$, because anything the server could compute from the real view, it could equally compute from the simulator's output (which depends only on $S$).

Figure~\ref{fig:sim_proof} illustrates the proof structure.

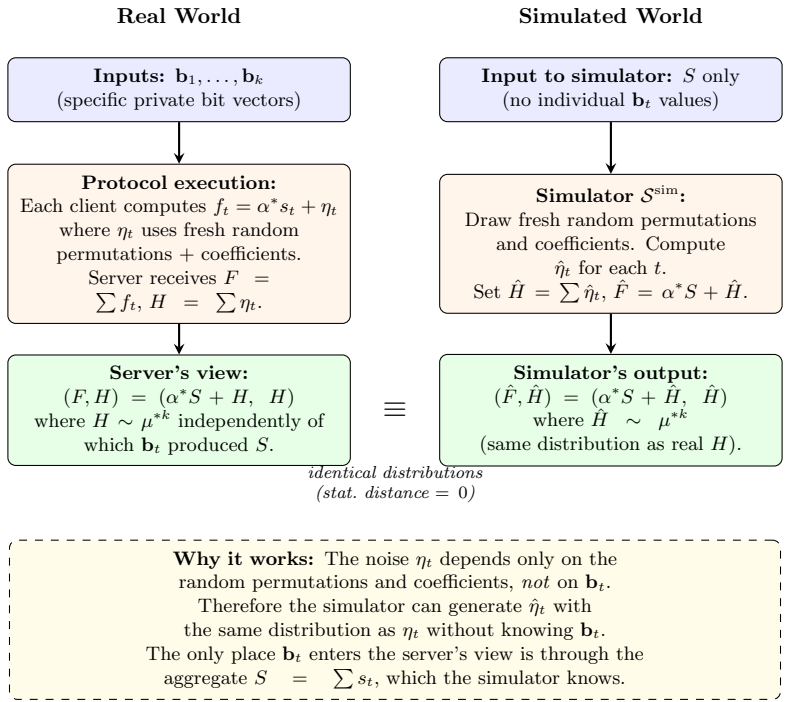
\begin{figure}[!htbp]
\begin{center}
\begin{tikzpicture}[scale=0.82, every node/.style={transform shape},
    box/.style={draw, rounded corners=3pt, minimum width=5.5cm, align=center, font=\footnotesize, text width=5.2cm, inner sep=5pt},
    widebox/.style={draw, rounded corners=3pt, minimum width=12.5cm, align=center, font=\footnotesize, text width=12cm, inner sep=5pt},
    arrow/.style={->, >=stealth, thick},
]

\node[font=\small\bfseries] at (-3.5, 1.2) {Real World};
\node[box, fill=blue!8] (real_in) at (-3.5, 0) {
\textbf{Inputs:} $\mathbf{b}_1, \ldots, \mathbf{b}_k$\\
(specific private bit vectors)
};
\node[box, fill=orange!8] (real_proc) at (-3.5, -2.5) {
\textbf{Protocol execution:}\\
Each client computes $f_t = \alpha^* s_t + \eta_t$\\
where $\eta_t$ uses fresh random\\
permutations + coefficients.\\
Server receives $F = \sum f_t$, $H = \sum \eta_t$.
};
\node[box, fill=green!10] (real_view) at (-3.5, -5.2) {
\textbf{Server's view:}\\
$(F, H) = (\alpha^* S + H,\; H)$\\
where $H \sim \mu^{*k}$ independently of\\
which $\mathbf{b}_t$ produced $S$.
};
\draw[arrow] (real_in) -- (real_proc);
\draw[arrow] (real_proc) -- (real_view);

\node[font=\small\bfseries] at (3.5, 1.2) {Simulated World};
\node[box, fill=blue!8] (sim_in) at (3.5, 0) {
\textbf{Input to simulator:} $S$ only\\
(no individual $\mathbf{b}_t$ values)
};
\node[box, fill=orange!8] (sim_proc) at (3.5, -2.5) {
\textbf{Simulator $\mathcal{S}^{\mathrm{sim}}$:}\\
Draw fresh random permutations\\
and coefficients. Compute\\
$\hat\eta_t$ for each $t$.\\
Set $\hat{H} = \sum \hat\eta_t$, $\hat{F} = \alpha^* S + \hat{H}$.
};
\node[box, fill=green!10] (sim_view) at (3.5, -5.2) {
\textbf{Simulator's output:}\\
$(\hat{F}, \hat{H}) = (\alpha^* S + \hat{H},\; \hat{H})$\\
where $\hat{H} \sim \mu^{*k}$\\
(same distribution as real $H$).
};
\draw[arrow] (sim_in) -- (sim_proc);
\draw[arrow] (sim_proc) -- (sim_view);

\node[font=\Large] at (0, -5.2) {$\equiv$};
\node[font=\scriptsize\itshape, text width=3.5cm, align=center] at (0, -6.4) {identical distributions\\(stat.\ distance $= 0$)};

\node[widebox, fill=yellow!10, dashed] (key) at (0, -8.6) {
\textbf{Why it works:} The noise $\eta_t$ depends only on the random permutations and coefficients, \emph{not} on $\mathbf{b}_t$.\\
Therefore the simulator can generate $\hat\eta_t$ with the same distribution as $\eta_t$ without knowing $\mathbf{b}_t$.\\
The only place $\mathbf{b}_t$ enters the server's view is through the aggregate $S = \sum s_t$, which the simulator knows.
};

\end{tikzpicture}
\end{center}
\caption{Structure of the simulation proof for the server's information-theoretic security (Theorem~\ref{thm:server_it}). \textbf{Left:} the real protocol execution with actual private inputs. \textbf{Right:} the simulator, which knows only the aggregate $S$ (not individual $\mathbf{b}_t$) and fabricates a view with identical distribution. The equivalence $\equiv$ means identical distributions, holding against computationally unbounded adversaries. The dashed box states the key structural property that makes the proof work.}
\label{fig:sim_proof}
\end{figure}

\begin{theorem}[Perfect simulation-based security of the server]
\label{thm:server_it}
Let $\mathcal{D} = (\mathbf{b}_1, \ldots, \mathbf{b}_k)$ and $\mathcal{D}' = (\mathbf{b}_1', \ldots, \mathbf{b}_k')$ be any two input configurations satisfying $\sum_t s_t = \sum_t s_t' = S$. Then:
\begin{enumerate}
\item[\emph{(i)}] The server's view $(F, H)$ under $\mathcal{D}$ and $(F, H)$ under $\mathcal{D}'$ have identical distributions.
\item[\emph{(ii)}] There exists a simulator $\mathcal{S}^{\mathrm{sim}}$ that, given only $S$ and the public parameters $(k, n, \alpha^*)$, outputs a pair $(\hat{F}, \hat{H})$ with $(\hat{F}, \hat{H}) \equiv (F, H)$ (identical distributions) for all inputs with aggregate $S$.
\end{enumerate}
\end{theorem}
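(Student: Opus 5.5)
The plan is to write the server's view as a deterministic function of $S$ and a random variable whose law does not depend on the inputs. The first step is to invoke the correctness computation of Theorem~\ref{thm:correct_twolayer}. It gives, pointwise on every realization of the randomness, $F = \alpha^* S + H$, where $H = \sum_{t=1}^k \eta_t$. The view is therefore $(F,H) = g_S(H)$ with $g_S(h) = (\alpha^* S + h, h)$. This map is fixed once $S$ and the public $\alpha^*$ are fixed.

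The second step, which carries the real content, is to show that the law of $H$ is independent of $(\mathbf{b}_1,\ldots,\mathbf{b}_k)$. By~\eqref{eq:eta}, $\eta_t = \sum_{i=1}^{K_t}\alpha_{t,i}(\mathbf{w}^T P_{t,i}\mathbf{y})$ is a measurable function of the decoy permutations $P_{t,i}$, the coefficients $\alpha_{t,i}$, and $K_t$ alone; $\mathbf{b}_t$ and $M_t$ do not appear. The permutations are drawn uniformly from $S_{2n}$ by Fisher--Yates. The coefficients come from a fixed distribution on the scaled simplex, and the choice of $K_t$ is governed by a fixed, input-independent rule. These draws are independent across clients. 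So $(\eta_1,\ldots,\eta_k)$ has a product law $\mu_1\otimes\cdots\otimes\mu_k$, and $H$ has the convolution law $\mu_1 * \cdots * \mu_k$ (this is $\mu^{*k}$ when all clients use the same sampling rule). The same law arises under $\mathcal{D}$ and under $\mathcal{D}'$.

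Part (i) then follows at once. Under either configuration the view is the pushforward of one and the same law of $H$ under one and the same map $g_S$, because $\sum_t s_t = \sum_t s_t' = S$. For any measurable set $E$,
\[
\Pr_{\mathcal{D}}[(F,H)\in E] = \Pr[H \in g_S^{-1}(E)] = \Pr_{\mathcal{D}'}[(F,H)\in E].
\]
For part (ii), the simulator $\mathcal{S}^{\mathrm{sim}}$ takes as input $S$ and $(k,n,\alpha^*)$. For each $t$ it samples fresh $K_t$, permutations $\hat P_{t,i}$ and coefficients $\hat\alpha_{t,i}$ exactly as an honest client would, and sets $\hat\eta_t = \sum_i \hat\alpha_{t,i}\,\mathbf{w}^T\hat P_{t,i}\mathbf{y}$. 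It then outputs $(\hat F,\hat H) = (\alpha^* S + \hat H, \hat H)$ with $\hat H = \sum_t \hat\eta_t$. By construction $\hat H$ has the same law as $H$, so $(\hat F,\hat H) = g_S(\hat H) \equiv g_S(H) = (F,H)$, with statistical distance zero. Part (i) can also be recovered as a corollary of (ii): both real views equal the simulated one in distribution.

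The main obstacle is making the input-independence of the client-side randomness airtight. I need to state explicitly that $K_t$ and the coefficient distribution do not depend on $\mathbf{b}_t$, which is implicit in Algorithm~\ref{alg:twolayer}. I also need to note that the server's view consists only of $(F,H)$. In particular, $D_t$ and $\mathcal{A}$'s internal state never reach $\mathcal{S}$, and I must rely on the non-collusion of $\mathcal{A}$, $\mathcal{B}$ and $\mathcal{S}$. Once these modelling points are fixed, the argument is just the pointwise identity $F = \alpha^* S + H$ followed by a pushforward of distributions.
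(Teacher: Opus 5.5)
Your proposal is correct and follows the paper's proof essentially step for step: the pointwise identity $F=\alpha^* S+H$, input-independence of the law of $H$ (product law of the $\eta_t$, convolution law for $H$), the pushforward argument for (i), and a simulator that freshly resamples the decoy randomness for (ii). Allowing client-specific laws $\mu_t$ and requiring an input-independent rule for choosing $K_t$ slightly tightens the paper's argument. The paper assumes a common $\mu$ for all clients, and its simulator takes $K_t$ as an extra input beyond $(k,n,\alpha^*)$.
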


The proof proceeds in four stages.

\begin{proof}

\medskip\noindent\textbf{Stage 1: The noise $\eta_t$ is independent of the private input $\mathbf{b}_t$.}

This is the structural property on which the entire proof rests. Each client's noise is
\[
\eta_t = \sum_{i=1}^{K_t} \alpha_{t,i}\,(\mathbf{w}^T P_{t,i}\, \mathbf{y}),
\]
where $P_{t,1}, \ldots, P_{t,K_t}$ are drawn uniformly and independently from $S_{2n}$, and the coefficients $(\alpha_{t,1}, \ldots, \alpha_{t,K_t})$ are drawn from a continuous distribution $g$ on the simplex $\Delta_{K_t} = \{(\alpha_1, \ldots, \alpha_{K_t}) : \alpha_i > 0,\; \sum_i \alpha_i = 1-\alpha^*\}$. Crucially, neither the permutations $P_{t,i}$ nor the coefficients $\alpha_{t,i}$ depend on $\mathbf{b}_t$ --- they are drawn from distributions determined entirely by the public parameters $(n, K_t, \alpha^*)$.

Formally, let $\Omega_t = S_{2n}^{K_t} \times \Delta_{K_t}$ denote the probability space of client $t$'s randomness, equipped with the product of the uniform measure on $S_{2n}^{K_t}$ and the distribution $g$ on $\Delta_{K_t}$. The map $\omega_t = (P_{t,1}, \ldots, P_{t,K_t}, \alpha_{t,1}, \ldots, \alpha_{t,K_t}) \mapsto \eta_t(\omega_t) = \sum_i \alpha_{t,i}(\mathbf{w}^T P_{t,i}\mathbf{y})$ is a measurable function of $\omega_t$ alone. Since $\omega_t$ is sampled from a distribution that does not involve $\mathbf{b}_t$, we have
\begin{equation}
\label{eq:eta_indep}
\eta_t \;\perp\!\!\!\perp\; \mathbf{b}_t \qquad\text{(statistical independence)}.
\end{equation}
Let $\mu$ denote the distribution of $\eta_t$. This distribution is the same for every client (since the public parameters are shared) and does not depend on any private input. Since the randomness is also independent across clients, the noise values $\eta_1, \ldots, \eta_k$ are mutually independent:
\begin{equation}
\label{eq:eta_iid_expanded}
(\eta_1, \ldots, \eta_k) \;\sim\; \mu^{\otimes k} \qquad\text{(i.i.d.\ from $\mu$, independent of $\mathcal{D}$)}.
\end{equation}

\medskip\noindent\textbf{Stage 2: The server's view is a deterministic function of $(S, H)$.}

The server receives two scalars. The first is
\begin{align}
F &= \sum_{t=1}^k f_t = \sum_{t=1}^k \mathbf{w}^T D_t \mathbf{y} = \sum_{t=1}^k (\alpha^* s_t + \eta_t) = \alpha^* \underbrace{\sum_{t=1}^k s_t}_{= S} + \underbrace{\sum_{t=1}^k \eta_t}_{= H} = \alpha^* S + H. \label{eq:F_SH}
\end{align}
The second is $H = \sum_{t=1}^k \eta_t$ directly. Therefore the server's complete view is the pair
\begin{equation}
\label{eq:server_view}
\text{View}_{\text{server}} = (F,\; H) = (\alpha^* S + H,\;\; H).
\end{equation}
This is a deterministic, invertible function of $(S, H)$: given $(F, H)$, one recovers $S = (F - H)/\alpha^*$, and given $(S, H)$, one recovers $F = \alpha^* S + H$. The only randomness in the view comes from $H$.

\medskip\noindent\textbf{Stage 3: The distribution of the view depends only on $S$, not on which $\mathcal{D}$ produced $S$.}

By Stage~1, $H = \sum_{t=1}^k \eta_t$ where $\eta_1, \ldots, \eta_k \sim \mu^{\otimes k}$ independently of $\mathcal{D}$. The distribution of $H$ is therefore the $k$-fold convolution $\mu^{*k}$:
\begin{equation}
\label{eq:H_dist_expanded}
\Pr[H \in B \mid \mathcal{D}] = \Pr[H \in B] = \mu^{*k}(B), \qquad \forall\; \mathcal{D},\;\; \forall \text{ measurable } B \subseteq \mathbb{R}.
\end{equation}
By Stage~2, $F = \alpha^* S + H$. For any measurable set $A \subseteq \mathbb{R}^2$:
\begin{equation}
\label{eq:view_integral_expanded}
\Pr\!\left[(F, H) \in A \mid \mathcal{D}\right] = \int_{\mathbb{R}} \mathbf{1}\!\big[(\alpha^* S + h,\;\; h) \in A\big]\; d\mu^{*k}(h).
\end{equation}
Now consider a different input $\mathcal{D}'$ with $\sum_t s_t' = S$ (the same aggregate). By exactly the same argument:
\[
\Pr\!\left[(F, H) \in A \mid \mathcal{D}'\right] = \int_{\mathbb{R}} \mathbf{1}\!\big[(\alpha^* S + h,\;\; h) \in A\big]\; d\mu^{*k}(h).
\]
The right-hand sides are identical: the integrand depends only on $S$ (which is the same for $\mathcal{D}$ and $\mathcal{D}'$), and the measure $\mu^{*k}$ is independent of the input configuration. Therefore
\[
\Pr\!\left[(F, H) \in A \mid \mathcal{D}\right] = \Pr\!\left[(F, H) \in A \mid \mathcal{D}'\right], \qquad \forall \text{ measurable } A,
\]
which is statement~(i): identical distributions. The statistical distance between the two views is exactly zero.

\medskip\noindent\textbf{Stage 4: Constructing the simulator.}

We now build the simulator $\mathcal{S}^{\mathrm{sim}}$ that produces a fake view from $S$ alone:

\begin{enumerate}
\item[\textbf{Input:}] The aggregate $S$ and public parameters $(k, n, \alpha^*, K_t)$.
\item[\textbf{(a)}] For each $t = 1, \ldots, k$: draw $K_t$ independent uniform random permutations $\hat{P}_{t,i} \in S_{2n}$ and coefficients $(\hat\alpha_{t,1}, \ldots, \hat\alpha_{t,K_t})$ from the distribution $g$ on $\Delta_{K_t}$.
\item[\textbf{(b)}] Compute $\hat\eta_t = \sum_{i=1}^{K_t} \hat\alpha_{t,i}\,(\mathbf{w}^T \hat{P}_{t,i}\, \mathbf{y})$ for each $t$.
\item[\textbf{(c)}] Compute $\hat{H} = \sum_{t=1}^k \hat\eta_t$.
\item[\textbf{(d)}] Compute $\hat{F} = \alpha^* S + \hat{H}$.
\item[\textbf{Output:}] $(\hat{F},\; \hat{H})$.
\end{enumerate}

The simulator uses fresh randomness (the $\hat{P}_{t,i}$ and $\hat\alpha_{t,i}$) that is independent of the actual protocol execution. It does not know any $\mathbf{b}_t$, any $s_t$, any $D_t$, or any $\eta_t$. It knows only $S$.

We verify that the simulator's output has the correct distribution. By construction, $\hat\eta_1, \ldots, \hat\eta_k$ are i.i.d.\ from $\mu$ (since each is generated by the same random process as the real $\eta_t$). Therefore $\hat{H} = \sum_t \hat\eta_t$ has distribution $\mu^{*k}$, and
\[
(\hat{F},\; \hat{H}) = (\alpha^* S + \hat{H},\;\; \hat{H}).
\]
Comparing with~\eqref{eq:view_integral_expanded}: for any measurable $A$,
\begin{align*}
\Pr\!\left[(\hat{F}, \hat{H}) \in A\right]
&= \int_{\mathbb{R}} \mathbf{1}\!\big[(\alpha^* S + h,\;\; h) \in A\big]\; d\mu^{*k}(h) \\
&= \Pr\!\left[(F, H) \in A \mid \mathcal{D}\right],
\end{align*}
for \emph{any} input $\mathcal{D}$ with aggregate $S$. This is statement~(ii): $\mathcal{S}^{\mathrm{sim}}(S) \equiv (F, H)$.
\end{proof}

\begin{remark}[What the simulator proof means concretely]
\label{rem:sim_meaning}
The simulator demonstrates that everything the server sees --- the aggregate signal $F$ and the aggregate noise $H$ --- could have been generated by an algorithm that knows nothing about any individual client. The server cannot distinguish the real protocol (where actual clients submitted actual private data) from the simulated protocol (where a single machine fabricated fake aggregates from $S$ alone). Any function the server computes on its real view --- any test statistic, any machine learning model, any side-channel analysis --- it could equally well compute on the simulator's output, which contains zero individual-level information. This is information-theoretic: it holds against adversaries with unlimited computational power, unlimited memory, and unlimited time.
\end{remark}

\begin{remark}[Structural source of security]
\label{rem:structural}
The proof relies on exactly one structural property: $\eta_t \perp\!\!\!\perp \mathbf{b}_t$ (the noise is independent of the private data). Any noise generation process with this property would give the server the same information-theoretic guarantee. The Birkhoff encoding, the permutation matrices, the BvN decomposition --- none of these are needed for the server's security. They are needed only for the aggregator's computational barrier (Layer~2). The server's security would hold even if the noise were Gaussian, Laplace, or any other distribution, as long as it is independent of $\mathbf{b}_t$ and cancels exactly in the aggregate.
\end{remark}

\subsection{Layer 2: Computational Security of the Aggregator via \#P-Hardness}

The aggregator $\mathcal{A}$ sees the individual matrices $D_1, \ldots, D_k$ but does \emph{not} receive $\eta_t$ or $f_t$ as separate scalars. To recover client $t$'s data, the aggregator must extract $M_t$ from $D_t = \alpha^* M_t + (1-\alpha^*) R_t$. We show that the natural approach --- computing the posterior distribution over candidate permutation matrices --- requires solving \#P-hard problems. The argument connects the aggregator's density evaluation to three classical hard problems: the permanent, perfect matchings in bipartite graphs, and the mixed discriminant. Figure~\ref{fig:hardness_roadmap} provides a roadmap of the logical structure; the rest of this section fills in every detail.

\begin{figure}[!htbp]
\begin{center}
\begin{tikzpicture}[scale=0.86, every node/.style={transform shape},
    box/.style={draw, rounded corners=3pt, minimum width=12.5cm, align=center, font=\footnotesize, text width=12cm, inner sep=6pt},
    arrow/.style={->, >=stealth, thick},
]

\node[box, fill=blue!8] (s1) at (0, 0) {
\textbf{1. The aggregator's problem.}\quad
Aggregator observes $D_t = \alpha^* M_t + (1 - \alpha^*) R_t$ and wants to recover $M_t$.
};

\node[box, fill=blue!8] (s2) at (0, -2.1) {
\textbf{2. Bayesian inference requires evaluating $\nu(R')$.}\quad
$\Pr(M_t {=} M' \mid D_t) \propto \nu(R')$, where $R' = (D_t - \alpha^* M')/(1{-}\alpha^*)$ and $\nu$ is the density of the random decoy matrix $R_t$ on $\mathcal{B}_{2n}$. The MAP estimator picks the $M'$ maximizing $\nu(R')$.
};
\draw[arrow] (s1) -- (s2);

\node[box, fill=orange!8] (s3) at (0, -4.8) {
\textbf{3. The density is a sum over all permutation tuples.}\quad
$R_t$ is built from $K$ random permutations and random coefficients. The density at $R'$ sums over \emph{every} possible tuple, weighted by the probability the coefficients produce $R'$:\quad
$\displaystyle\nu(R') = \sum_{\text{all } ((2n)!)^K \text{ tuples}} \frac{1}{((2n)!)^K} \times \big(\text{integral of } g(\alpha) \text{ over } \alpha \text{ satisfying } \textstyle\sum_i \alpha_i P_{\sigma_i} = (1{-}\alpha^*)R'\big)$
};
\draw[arrow] (s2) -- (s3);

\node[box, fill=orange!8] (s4) at (0, -7.5) {
\textbf{4. Most tuples contribute zero.}\quad
The constraint $\sum_i \alpha_i P_{\sigma_i} = (1{-}\alpha^*) R'$ with all $\alpha_i > 0$ forces: wherever $R'_{ab} = 0$, \emph{no} permutation in the tuple can have a 1 at $(a,b)$. Each $P_{\sigma_i}$ must ``fit inside'' the positive entries of $R'$. Only tuples where every permutation lies in the \emph{support set} $\mathrm{Supp}(R')$ contribute.
};
\draw[arrow] (s3) -- (s4);

\node[box, fill=orange!12] (s5) at (0, -10.6) {
\textbf{5. Rewrite the density using only valid tuples.}
$$\nu(R') \;=\; \frac{C_K}{((2n)!)^K} \sum_{(\sigma_1,\ldots,\sigma_K)\,\in\,\mathrm{Supp}(R')^K} \mathrm{vol}\big(\mathcal{P}(\sigma_1,\ldots,\sigma_K;\;R')\big)$$
This is a sum of polytope volumes, one per valid tuple. Two questions arise:\\
\textbf{(a)} How many terms are in this sum?\quad \textbf{(b)} What is each term (the polytope volume)?\\
These are answered in Figure~\ref{fig:hardness_branches}.
};
\draw[arrow] (s4) -- (s5);

\end{tikzpicture}
\end{center}
\caption{The density derivation: from the aggregator's inference problem to the key formula. The aggregator's MAP estimator requires evaluating $\nu(R')$ (Steps~1--2). The density marginalizes over all $((2n)!)^K$ permutation tuples (Step~3), but most contribute zero --- only tuples in $\mathrm{Supp}(R')^K$ survive (Step~4). The resulting formula (Step~5) is a sum of polytope volumes over valid tuples, raising two questions answered in Figure~\ref{fig:hardness_branches}.}
\label{fig:hardness_roadmap}
\end{figure}
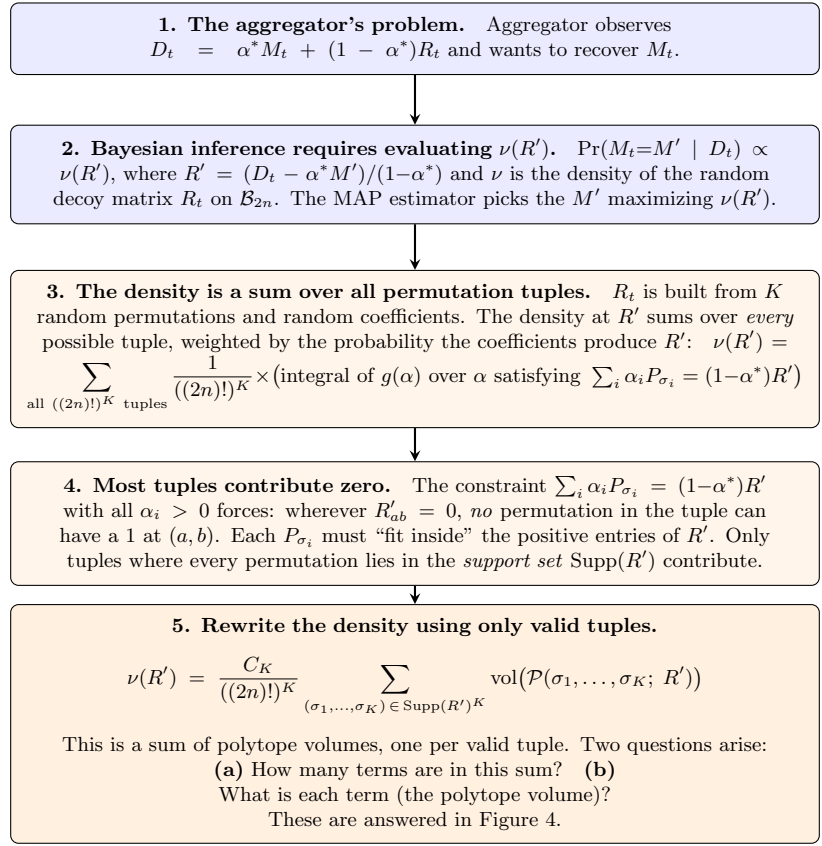

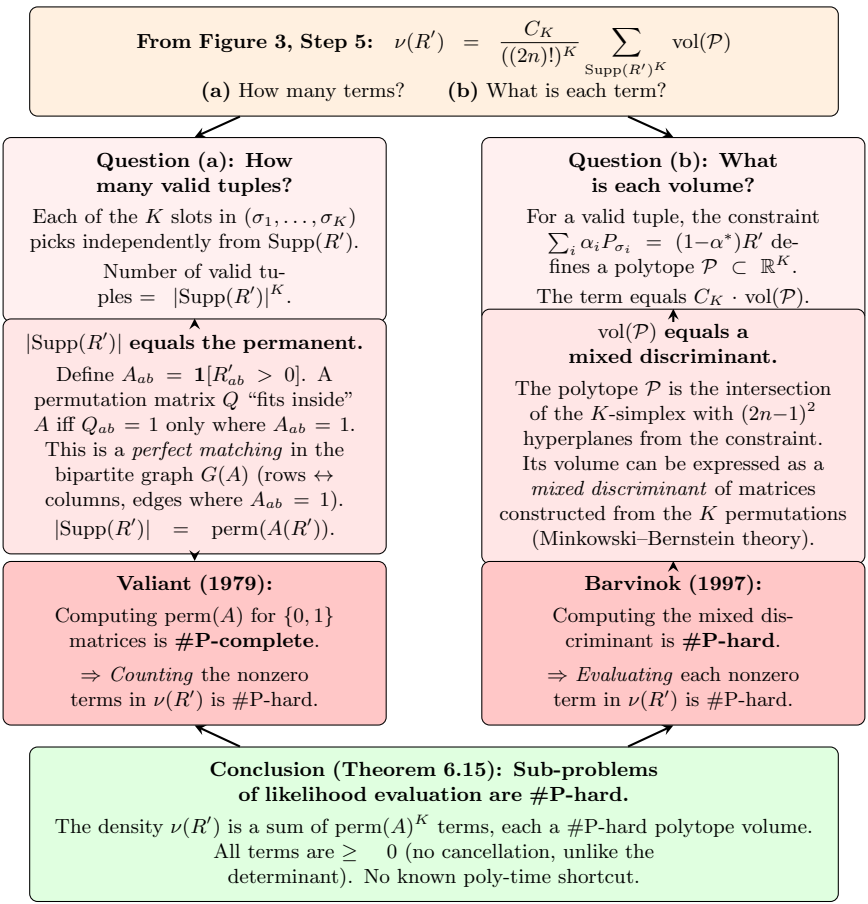
\begin{figure}[!htbp]
\begin{center}
\begin{tikzpicture}[scale=0.86, every node/.style={transform shape},
    splitbox/.style={draw, rounded corners=3pt, minimum width=5.8cm, align=center, font=\footnotesize, text width=5.5cm, inner sep=6pt},
    widebox/.style={draw, rounded corners=3pt, minimum width=12.5cm, align=center, font=\footnotesize, text width=12cm, inner sep=6pt},
    arrow/.style={->, >=stealth, thick},
]

\node[widebox, fill=orange!12] (start) at (0, 0) {
\textbf{From Figure~\ref{fig:hardness_roadmap}, Step~5:}\quad
$\displaystyle\nu(R') = \frac{C_K}{((2n)!)^K} \sum_{\mathrm{Supp}(R')^K} \mathrm{vol}(\mathcal{P})$\qquad
\textbf{(a)} How many terms?\qquad \textbf{(b)} What is each term?
};

\node[splitbox, fill=red!6] (qa) at (-3.7, -2.6) {
\textbf{Question (a): How many valid tuples?}\\[3pt]
Each of the $K$ slots in $(\sigma_1, \ldots, \sigma_K)$
picks independently from $\mathrm{Supp}(R')$.\\[2pt]
Number of valid tuples $= |\mathrm{Supp}(R')|^K$.
};

\node[splitbox, fill=red!6] (qb) at (3.7, -2.6) {
\textbf{Question (b): What is each volume?}\\[3pt]
For a valid tuple, the constraint
$\sum_i \alpha_i P_{\sigma_i} = (1{-}\alpha^*)R'$
defines a polytope $\mathcal{P} \subset \mathbb{R}^K$.\\[2pt]
The term equals $C_K \cdot \mathrm{vol}(\mathcal{P})$.
};

\draw[arrow] ([xshift=-3cm]start.south) -- (qa.north);
\draw[arrow] ([xshift=3cm]start.south) -- (qb.north);

\node[splitbox, fill=red!10] (permatch) at (-3.7, -5.8) {
\textbf{$|\mathrm{Supp}(R')|$ equals the permanent.}\\[3pt]
Define $A_{ab} = \mathbf{1}[R'_{ab} > 0]$. A permutation
matrix $Q$ ``fits inside'' $A$ iff $Q_{ab} = 1$
only where $A_{ab} = 1$. This is a
\emph{perfect matching} in the bipartite
graph $G(A)$ (rows $\leftrightarrow$ columns,
edges where $A_{ab} = 1$).\\[2pt]
$|\mathrm{Supp}(R')| \;=\; \mathrm{perm}(A(R'))$.
};
\draw[arrow] (qa) -- (permatch);

\node[splitbox, fill=red!10] (mixdisc) at (3.7, -5.8) {
\textbf{$\mathrm{vol}(\mathcal{P})$ equals a mixed discriminant.}\\[3pt]
The polytope $\mathcal{P}$ is the intersection
of the $K$-simplex with $(2n{-}1)^2$
hyperplanes from the constraint.
Its volume can be expressed as a
\emph{mixed discriminant} of matrices
constructed from the $K$ permutations
(Minkowski--Bernstein theory).
};
\draw[arrow] (qb) -- (mixdisc);

\node[splitbox, fill=red!22] (valiant) at (-3.7, -9) {
\textbf{Valiant (1979):}\\[3pt]
Computing $\mathrm{perm}(A)$ for $\{0,1\}$
matrices is \textbf{\#P-complete}.\\[4pt]
$\Rightarrow$ \emph{Counting} the nonzero terms
in $\nu(R')$ is \#P-hard.
};
\draw[arrow] (permatch) -- (valiant);

\node[splitbox, fill=red!22] (barvinok) at (3.7, -9) {
\textbf{Barvinok (1997):}\\[3pt]
Computing the mixed discriminant
is \textbf{\#P-hard}.\\[4pt]
$\Rightarrow$ \emph{Evaluating} each nonzero term
in $\nu(R')$ is \#P-hard.
};
\draw[arrow] (mixdisc) -- (barvinok);

\node[widebox, fill=green!12] (conc) at (0, -11.8) {
\textbf{Conclusion (Theorem~\ref{thm:likelihood_hardness}): Sub-problems of likelihood evaluation are \#P-hard.}\\[3pt]
The density $\nu(R')$ is a sum of $\mathrm{perm}(A)^K$ terms, each a \#P-hard polytope volume.\\
All terms are $\geq 0$ (no cancellation, unlike the determinant). No known poly-time shortcut.
};
\draw[arrow] ([xshift=-3cm]conc.north) -- (valiant.south);
\draw[arrow] ([xshift=3cm]conc.north) -- (barvinok.south);

\end{tikzpicture}
\end{center}
\caption{The two branches of the \#P-hardness argument, continuing from the density formula in Figure~\ref{fig:hardness_roadmap}. \textbf{Left branch} (Question~a): the number of valid tuples equals the permanent of the support matrix $A(R')$, which counts perfect matchings in a bipartite graph; computing this is \#P-complete (Valiant, 1979). \textbf{Right branch} (Question~b): each valid tuple's contribution is a polytope volume expressible as a mixed discriminant; computing this is \#P-hard (Barvinok, 1997). Both branches converge: the density is a \#P-hard number of individually \#P-hard terms, with no cancellation.}
\label{fig:hardness_branches}
\end{figure}

We begin by establishing notation. Throughout this section, $S_m$ denotes the \emph{symmetric group} on $m$ elements, i.e., the set of all $m!$ bijections $\sigma: \{1, \ldots, m\} \to \{1, \ldots, m\}$. We identify each bijection $\sigma \in S_m$ with the $m \times m$ permutation matrix $P_\sigma$ defined by $(P_\sigma)_{ab} = \mathbf{1}[\sigma(a) = b]$, which has exactly one 1 in each row and each column and zeros everywhere else. The notation $S_{2n}$ thus refers to the set of all $(2n)!$ permutation matrices of size $2n \times 2n$, and $S_{2n}^K = S_{2n} \times \cdots \times S_{2n}$ ($K$ times) is the set of all ordered $K$-tuples of such matrices.

\subsubsection{The Aggregator's Inference Problem}

Given $D_t$, the aggregator seeks $M_t$. Recall that $D_t = \alpha^* M_t + (1-\alpha^*) R_t$, so for any candidate permutation matrix $M'$, the aggregator can compute the \emph{residual} $R' = (D_t - \alpha^* M')/(1-\alpha^*)$. If $M'$ happens to be the true encoding $M_t$, then $R' = R_t$ (the actual random decoy matrix, which is doubly stochastic by construction). If $M'$ is wrong, $R'$ may or may not be doubly stochastic. The set of consistent candidates is therefore
\[
\mathcal{L}(D_t, \alpha^*) = \{M' \in S_{2n} : R' := (D_t - \alpha^* M')/(1-\alpha^*) \in \mathcal{B}_{2n}\}.
\]
The condition $R' \in \mathcal{B}_{2n}$ requires every entry $R'_{ab}$ to be non-negative (the row and column sum constraints are automatically satisfied since both $D_t$ and $M'$ are doubly stochastic). Since $(M')_{ab} \in \{0,1\}$, the non-negativity condition $R'_{ab} \geq 0$ is equivalent to $(D_t)_{ab} \geq \alpha^*$ at every position $(a,b)$ where $M'$ has a 1.

When $\alpha^*$ is small and the decoy component covers all matrix entries (which happens with high probability for moderate $K_t$), the smallest entry of $D_t$ exceeds $\alpha^*$ and \emph{every} permutation matrix in $S_{2n}$ is consistent: $\mathcal{L}(D_t, \alpha^*) = S_{2n}$. In this interior regime, consistency alone provides no information --- all $(2n)!$ candidates look equally valid.

The aggregator's best strategy, given unlimited computation, is Bayesian inference. Assuming a uniform prior over $M' \in S_{2n}$, the posterior probability of each candidate is
\begin{equation}
\label{eq:agg_posterior}
\Pr(M_t = M' \mid D_t) \;=\; \frac{\nu(R')}{\sum_{M'' \in \mathcal{L}} \nu(R'')},
\end{equation}
where $\nu(R')$ denotes the probability density of the random decoy matrix $R_t$ evaluated at the point $R'$. The aggregator's MAP (maximum a posteriori) estimator picks the candidate $M'$ that maximizes $\nu(R')$. We now show that evaluating $\nu(R')$ is \#P-hard.

\begin{remark}[Why focus on likelihood, and what this does not cover]
\label{rem:why_likelihood}
The Bayesian/likelihood approach is the \emph{statistically optimal} attack: given unlimited computation, no other method can recover $M_t$ with higher probability. Proving it \#P-hard therefore eliminates the strongest possible attack strategy. However, this does \emph{not} rule out weaker but computationally efficient attacks that bypass density evaluation entirely. An aggregator might attempt spectral decomposition of $D_t$, solve a linear program to find a sparse BvN decomposition, run the Hungarian algorithm on $D_t/\alpha^*$ to find the nearest permutation matrix, or train a neural network on synthetic $(D_t, M_t)$ pairs. None of these require evaluating $\nu$, and our \#P-hardness result says nothing about them. We analyze several such non-likelihood attacks in Appendix~\ref{app:attacks} and show that they fail at the protocol's operating parameters, but we do not prove a blanket impossibility result for all polynomial-time attacks. This gap is the content of Conjecture~\ref{conj:hardness}, which remains open.
\end{remark}

\subsubsection{The Density of the Decoy Component}
\label{sec:density_derivation}

The decoy matrix $R_t$ is not drawn from a simple named distribution; it is constructed by a multi-step random process. Understanding how $\nu(R')$ arises from this process is essential to the hardness argument, so we derive the formula in detail.

\begin{definition}[The random process that generates $R_t$]
\label{def:random_ds}
The decoy matrix $R_t$ is generated in two steps. First, draw $K$ permutation matrices $P_{\sigma_1}, \ldots, P_{\sigma_K}$ independently and uniformly at random from $S_{2n}$ (each $\sigma_i$ is a uniformly random bijection on $\{1, \ldots, 2n\}$). Second, draw positive coefficients $(\alpha_1, \ldots, \alpha_K)$ from a continuous distribution $g$ on the simplex $\Delta_K = \{(\alpha_1, \ldots, \alpha_K) : \alpha_i > 0, \;\sum_i \alpha_i = 1-\alpha^*\}$. The decoy matrix is then $R_t = \frac{1}{1-\alpha^*}\sum_{i=1}^K \alpha_i P_{\sigma_i}$.
\end{definition}

The density $\nu(R')$ is the probability density of $R_t$ at a specific point $R' \in \mathcal{B}_{2n}$. To compute it, we must account for \emph{every possible way} the random process could have produced $R'$. There are two sources of randomness --- the permutation tuple $(\sigma_1, \ldots, \sigma_K)$ and the coefficient vector $(\alpha_1, \ldots, \alpha_K)$ --- and we must sum (over the discrete permutation choices) and integrate (over the continuous coefficient choices) over all combinations that yield $R'$.

Consider a fixed permutation tuple $(\sigma_1, \ldots, \sigma_K)$. The probability that this specific tuple is drawn is $1/((2n)!)^K$ (since each of the $K$ permutations is drawn independently and uniformly from the $(2n)!$-element set $S_{2n}$). Given this tuple, the decoy matrix equals $R'$ if and only if the coefficient vector $\alpha$ satisfies $\frac{1}{1-\alpha^*}\sum_{i=1}^K \alpha_i P_{\sigma_i} = R'$, or equivalently
\begin{equation}
\label{eq:constraint}
\sum_{i=1}^K \alpha_i P_{\sigma_i} = (1-\alpha^*) R'.
\end{equation}
This is a system of $(2n)^2$ linear equations in $K$ unknowns (the coefficients $\alpha_1, \ldots, \alpha_K$). The indicator function $\mathbf{1}[\sum_i \alpha_i P_{\sigma_i} = (1-\alpha^*) R']$ is 1 if and only if $\alpha$ satisfies all of these equations simultaneously, and 0 otherwise.

The density contribution from this particular tuple is therefore the integral of the coefficient density $g(\alpha)$ over all coefficient vectors that satisfy the constraint~\eqref{eq:constraint}, weighted by the probability $1/((2n)!)^K$ of selecting this tuple.

Summing over all $((2n)!)^K$ possible tuples gives the total density
\begin{equation}
\label{eq:nu_marginal}
\nu(R') = \sum_{(\sigma_1, \ldots, \sigma_K) \in S_{2n}^K} \frac{1}{((2n)!)^K} \int_{\Delta_K} \mathbf{1}\!\left[\sum_{i=1}^K \alpha_i P_{\sigma_i} = (1-\alpha^*) R'\right] \cdot g(\alpha_1, \ldots, \alpha_K) \; d\alpha.
\end{equation}
This formula has three components. The outer sum ranges over all $((2n)!)^K$ ordered $K$-tuples of permutations. The factor $1/((2n)!)^K$ is the probability of each tuple. The inner integral, weighted by the indicator, computes the probability that the random coefficients produce exactly $R'$ given the permutation tuple. For most tuples, the constraint~\eqref{eq:constraint} has no solution (the indicator is zero everywhere on $\Delta_K$), and the integral vanishes. The density $\nu(R')$ is therefore a sum of $((2n)!)^K$ terms, the vast majority of which are zero. We now determine exactly which tuples yield nonzero terms.

\subsubsection{Connection to the Permanent}

The key question is: for which tuples $(\sigma_1, \ldots, \sigma_K)$ does the constraint~\eqref{eq:constraint} have a feasible solution with $\alpha \in \Delta_K$ (all $\alpha_i > 0$, $\sum \alpha_i = 1-\alpha^*$)? The answer connects the density formula to the permanent of a $\{0,1\}$ matrix, and through it to the problem of counting perfect matchings in a bipartite graph.

\begin{definition}[Permanent]
\label{def:permanent}
For an $m \times m$ matrix $A = (a_{ij})$, the permanent is
\begin{equation}
\label{eq:perm_def}
\mathrm{perm}(A) = \sum_{\sigma \in S_m} \prod_{i=1}^m a_{i,\sigma(i)}.
\end{equation}
This formula is syntactically identical to the determinant, except that the determinant includes a sign factor $\mathrm{sgn}(\sigma) \in \{+1, -1\}$ in each term. Despite this superficial similarity, the permanent and determinant have vastly different computational properties: the determinant can be computed in $O(m^3)$ time by Gaussian elimination (because the alternating signs create cancellations that can be exploited), while the permanent has no known polynomial-time algorithm.
\end{definition}

\begin{theorem}[Valiant, 1979~\cite{valiant1979}]
\label{thm:valiant}
Computing $\mathrm{perm}(A)$ for $\{0,1\}$ matrices is \#P-complete.
\end{theorem}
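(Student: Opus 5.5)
The plan has two parts: membership in \#P, then \#P-hardness by a chain of parsimonious-up-to-known-factor reductions from \#3SAT. Both steps follow Valiant~\cite{valiant1979}, with the later simplification of Ben-Dor and Halevi for the weight-elimination step.

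\emph{Membership.} For a $\{0,1\}$ matrix $A$, Definition~\ref{def:permanent} shows that $\mathrm{perm}(A)$ is exactly the number of $\sigma \in S_m$ with $a_{i,\sigma(i)}=1$ for all $i$. These are the perfect matchings of the bipartite graph $G(A)$. A candidate $\sigma$ is a polynomial-size witness, and it can be checked in $O(m)$ time. So $\mathrm{perm}(A)$ counts accepting paths of an NP machine, and the problem lies in \#P.

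\emph{Hardness, step 1: integer permanent.} I will reduce \#3SAT to the permanent of a matrix with entries in $\{-1,0,1,2,3\}$. I will use the cycle-cover view: $\mathrm{perm}(A)$ is the total weight of cycle covers of the weighted digraph with adjacency matrix $A$. For each variable I build a track gadget with two cycles, one per truth value, and for each clause a gadget with three external edges. Each literal occurrence is linked to its clause by an \emph{XOR-gadget}, a fixed $4\times 4$ weighted matrix. This gadget has the following property. Partial cycle covers that enter and leave it consistently (exactly one of the two connected edge pairs used) contribute a fixed factor, namely $4$. All inconsistent uses contribute total weight $0$; this is verified by computing the relevant sub-permanents of the gadget matrix. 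The clause gadget is designed so that, for each satisfying assignment, the admissible internal completions contribute a constant total weight, and unsatisfied clauses contribute $0$. Summing, I expect $\mathrm{perm}(A) = 4^{3m}\cdot \#\phi$, where $m$ is the number of clauses. From this $\#\phi$ is recovered by division.

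\emph{Hardness, step 2: removing weights.} To pass to $\{0,1\}$ matrices, I first remove negative entries by working modulo $N = 2^{r}+1$ with $N$ larger than the maximal possible $|\mathrm{perm}(A)|$. I replace $-1$ by $N-1 \equiv -1$, so all entries become nonnegative integers. I then replace each edge of weight $w$ by a small gadget of weight-$1$ edges, for instance a chain of parallel $2$-paths realizing $w$ in binary. Such a gadget preserves cycle-cover weight exactly and keeps the matrix polynomial in size. Hence a $\{0,1\}$-permanent oracle yields $\mathrm{perm}(A) \bmod N$, and therefore $\#\phi$.

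The main obstacle is step 1: designing and verifying the XOR and clause gadgets. The mechanism to check is that inconsistent partial covers cancel to exactly zero while consistent ones all carry the same factor. This requires explicitly checking the permanents of all minors of the gadget matrix obtained by deleting subsets of its external rows and columns. I would take Valiant's (or Ben-Dor--Halevi's) explicit $4\times 4$ matrix and verify these minors by direct expansion.
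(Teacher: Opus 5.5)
The paper gives no proof of this statement; it imports Valiant's theorem by citation, so there is no internal argument to compare against. Your outline is the standard proof from that literature, and its structure is sound. It uses variable tracks, three-edge clause gadgets and Valiant's $4\times 4$ junction as the XOR gadget, then a modulus $N=2^r+1$ and doubling gadgets to reach $\{0,1\}$ entries.

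The junction facts you defer to direct expansion do hold. Take $X$ with rows $(0,1,-1,-1)$, $(1,-1,1,1)$, $(0,1,1,2)$, $(0,1,3,0)$. Then $\mathrm{perm}(X)=0$. The permanent is also $0$ after deleting row and column $1$, after deleting row and column $4$, and after deleting rows and columns $\{1,4\}$. Deleting row $1$ and column $4$, or row $4$ and column $1$, leaves permanent $4$. Use one junction per literal occurrence. Require the clause gadget to have exactly one internal cover for each proper subset of used external edges, and none when all three are used (all three literals false). This gives $\mathrm{perm}(A)=4^{3m}\cdot\#\phi$, as you expect.

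One correction to the framing in your first sentence. The end-to-end reduction to $\{0,1\}$ matrices cannot be parsimonious up to a known positive factor unless $\mathrm{P}=\mathrm{NP}$. For a $\{0,1\}$ matrix, $\mathrm{perm}(A)>0$ just says $G(A)$ has a perfect matching, and that is decidable in polynomial time.

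Your step 2 is exactly where that structure is given up. The $\{0,1\}$ permanent returned by the oracle is only congruent to $4^{3m}\cdot\#\phi$ modulo $N$, and it need not vanish when $\#\phi=0$. You recover $\#\phi$ by reducing modulo $N$ and dividing. So what you establish is \#P-hardness under polynomial-time (one-query) Turing reductions, which is the notion Valiant's theorem uses; the argument itself needs no change.
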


The complexity class \#P consists of counting problems associated with NP decision problems: ``how many satisfying assignments does a Boolean formula have?'' is a \#P problem, for example. A problem is \#P-complete if every \#P problem can be reduced to it. This is strictly stronger than NP-hardness: by Toda's theorem, if $\mathrm{P} = \text{\#P}$ then the entire polynomial hierarchy collapses to $\mathrm{P}$, which is considered extremely unlikely.

The permanent has a natural graph-theoretic interpretation. For a $\{0,1\}$ matrix $A$ of size $m \times m$, define the bipartite graph $G(A)$ with $m$ left vertices (rows), $m$ right vertices (columns), and an edge from left vertex $i$ to right vertex $j$ whenever $A_{ij} = 1$. A \emph{perfect matching} in $G(A)$ is a set of $m$ edges that pairs every left vertex with a distinct right vertex --- equivalently, a bijection $\sigma: [m] \to [m]$ such that $A_{i,\sigma(i)} = 1$ for every $i$. The product $\prod_{i=1}^m A_{i,\sigma(i)}$ equals 1 if and only if $\sigma$ defines such a matching (since every factor must be 1), and 0 otherwise. Therefore $\mathrm{perm}(A) = \text{(number of perfect matchings in } G(A)\text{)}$.

We now show that the number of nonzero terms in the density formula~\eqref{eq:nu_marginal} equals $\mathrm{perm}(A(R'))^K$, where $A(R')$ is a $\{0,1\}$ matrix derived from $R'$. The connection to perfect matchings is the key to understanding why this count is \#P-hard.

\begin{definition}[Support set and support matrix]
\label{def:support}
For a doubly stochastic matrix $R' \in \mathcal{B}_{2n}$, define the \emph{support matrix} $A(R') \in \{0,1\}^{2n \times 2n}$ by
\[
A(R')_{ab} = \begin{cases} 1 & \text{if } R'_{ab} > 0, \\ 0 & \text{if } R'_{ab} = 0. \end{cases}
\]
In other words, $A(R')$ marks which entries of $R'$ are strictly positive. The \emph{support set} $\mathrm{Supp}(R')$ is the set of all $2n \times 2n$ permutation matrices that ``fit inside'' the positive entries of $R'$:
\[
\mathrm{Supp}(R') = \{Q \in S_{2n} : \text{wherever } Q \text{ has a 1, } R' \text{ has a positive entry}\}.
\]
Formally, $Q \in \mathrm{Supp}(R')$ if and only if $Q_{ab} = 1$ implies $R'_{ab} > 0$ for every row-column pair $(a,b)$. Equivalently, $Q \in \mathrm{Supp}(R')$ if and only if $Q_{ab} = 1$ implies $A(R')_{ab} = 1$.
\end{definition}

The intuition is simple: a permutation matrix $Q$ places exactly one 1 in each row and each column. It ``fits inside'' $R'$ if none of its 1's land on a zero entry of $R'$. The support set is exactly the set of permutation matrices that are compatible with the zero pattern of $R'$.

\begin{proposition}[Support size equals permanent --- connecting permutation counting to graph matchings]
\label{prop:nu_permanent}
The number of permutation matrices in the support set equals the permanent of the support matrix:
\begin{equation}
\label{eq:supp_perm}
|\mathrm{Supp}(R')| = \mathrm{perm}(A(R')).
\end{equation}
Furthermore, the number of nonzero terms in the density formula~\eqref{eq:nu_marginal} is exactly $\mathrm{perm}(A(R'))^K$.
\end{proposition}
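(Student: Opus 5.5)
The plan is to prove the two claims separately: first a direct bijection for the support-size identity~\eqref{eq:supp_perm}, then a reduction of the term count in~\eqref{eq:nu_marginal} to a condition on each permutation of a tuple. For~\eqref{eq:supp_perm}, I will use the identification $\sigma \leftrightarrow P_\sigma$ fixed at the start of this section. By Definition~\ref{def:support}, $P_\sigma \in \mathrm{Supp}(R')$ holds exactly when $A(R')_{a,\sigma(a)} = 1$ for every row $a$, since row $a$ of $P_\sigma$ has its unique $1$ in column $\sigma(a)$. That condition is equivalent to $\prod_{a=1}^{2n} A(R')_{a,\sigma(a)} = 1$, because $A(R')$ is $\{0,1\}$-valued, so each product term is $0$ or $1$. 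Summing over $\sigma \in S_{2n}$ and comparing with the definition~\eqref{eq:perm_def} of the permanent gives $|\mathrm{Supp}(R')| = \mathrm{perm}(A(R'))$. This is the same perfect-matching argument recorded just after Theorem~\ref{thm:valiant}.

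For the term count, I will show that a tuple $(\sigma_1,\ldots,\sigma_K)$ can contribute to~\eqref{eq:nu_marginal} only if every $P_{\sigma_i}$ lies in $\mathrm{Supp}(R')$. Suppose some $P_{\sigma_j}$ has a $1$ at a position $(a,b)$ with $R'_{ab} = 0$. For any $\alpha \in \Delta_K$, all entries of every $P_{\sigma_i}$ are nonnegative and $\alpha_j > 0$. Hence
\[
\Bigl(\sum_{i} \alpha_i P_{\sigma_i}\Bigr)_{ab} \;\geq\; \alpha_j \;>\; 0 \;=\; (1-\alpha^*) R'_{ab},
\]
so the indicator in~\eqref{eq:nu_marginal} vanishes identically on $\Delta_K$ and the term is zero. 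The tuples that survive are therefore those in $\mathrm{Supp}(R')^K$. Because the $K$ slots are chosen independently, this set has $|\mathrm{Supp}(R')|^K = \mathrm{perm}(A(R'))^K$ elements by the first part.

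The main obstacle is the converse direction: showing that every tuple in $\mathrm{Supp}(R')^K$ actually gives a nonzero term. This is not automatic. Membership in the support set only rules out the sign obstruction above; it does not guarantee that the linear system~\eqref{eq:constraint} has a solution in $\Delta_K$. Even when a solution exists, the solution set is typically a lower-dimensional slice of $\Delta_K$, and its integral against $g$ must be understood in the disintegrated (Dirac) sense that defines the density.

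I would handle this by making precise that ``nonzero term'' means a term not forced to vanish by the support constraint, i.e.\ an admissible tuple. I would then rewrite~\eqref{eq:nu_marginal} as a sum over $\mathrm{Supp}(R')^K$, with each summand equal to the (possibly zero) coefficient-polytope contribution $\mathrm{vol}(\mathcal{P}(\sigma_1,\ldots,\sigma_K;R'))$. Under this reading the count is exactly $\mathrm{perm}(A(R'))^K$, and the equality becomes an upper bound on the number of terms that are genuinely positive. If the stronger literal statement is required, I would try to show that it holds for generic $R'$ in the image of the decoy process, and otherwise record explicitly that the proposition should be read in this admissible-tuple sense.
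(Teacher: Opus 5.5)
Your proof of \eqref{eq:supp_perm} and of the necessity direction follows the paper's argument. Both use the row-by-row product $\prod_a A(R')_{a,\sigma(a)}$ compared with Definition~\ref{def:permanent}. Both observe that a $1$ of some $P_{\sigma_j}$ on a zero of $R'$ forces the left side of \eqref{eq:constraint} at that entry to be at least $\alpha_j>0$. Both then count $|\mathrm{Supp}(R')^K|=\mathrm{perm}(A(R'))^K$. The paper's proof likewise shows only that contributing tuples lie in $\mathrm{Supp}(R')^K$, and then asserts that all of them contribute. So the missing converse you identify is a gap in the paper's argument too. Your reading of the second claim as a count of admissible tuples, i.e.\ an upper bound on the number of genuinely nonzero terms, is the correct repair. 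Your remark about the disintegrated sense is also right: as a Lebesgue integral over $\Delta_K$, the indicator of a proper affine slice integrates to zero.

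The part of your plan that would fail is the fallback of proving the literal equality for generic $R'$. It is false for every $R'$ that is not itself a permutation matrix.
\begin{itemize}
\item By Theorem~\ref{thm:bvn}, some $P_\tau$ lies in $\mathrm{Supp}(R')$. The constant tuple $(\tau,\ldots,\tau)\in\mathrm{Supp}(R')^K$ gives $\sum_i\alpha_iP_\tau=(1-\alpha^*)P_\tau\neq(1-\alpha^*)R'$ for every $\alpha\in\Delta_K$, so its term vanishes.
\item The paper's proof states a covering requirement and then sets it aside: each positive entry of $R'$ must be hit by some $P_{\sigma_i}$. This rules out every tuple once $A(R')$ has more than $2nK$ ones.
\item In Example~\ref{ex:permanent}, an interior point of $\mathcal{B}_4$ with $K=3$ has $16>12$ positive entries. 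So none of the $24^3$ tuples contributes, not all of them.
\end{itemize}
This agrees with the paper's own Lemma~\ref{lem:hit_rate}, which says the contributing fraction is exponentially small in the interior regime. State the second claim as ``at most $\mathrm{perm}(A(R'))^K$'' (or as the admissible-tuple count) and stop there.
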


This is not a new result but rather a direct application of the standard connection between permanents and perfect matchings (see, e.g., Schrijver~\cite{schrijver2003}, Chapter~8). The contribution here is recognizing that this connection arises naturally in the density formula for Birkhoff-encoded data, linking the aggregator's inference problem to a classical \#P-hard computation.

\begin{proof}
The proof establishes two things: first, that the constraint~\eqref{eq:constraint} forces every permutation in a contributing tuple to lie in $\mathrm{Supp}(R')$; second, that counting the elements of $\mathrm{Supp}(R')$ is identical to computing the permanent of $A(R')$.

\medskip\noindent\textbf{Which tuples contribute nonzero terms?}
Consider a specific tuple $(\sigma_1, \ldots, \sigma_K)$ and ask when the constraint~\eqref{eq:constraint} can be satisfied. The constraint says $\sum_{i=1}^K \alpha_i P_{\sigma_i} = (1-\alpha^*) R'$, which must hold at every matrix entry $(a,b)$ simultaneously:
\[
\sum_{i=1}^K \alpha_i (P_{\sigma_i})_{ab} = (1-\alpha^*) R'_{ab}, \qquad \forall\; (a,b) \in \{1, \ldots, 2n\}^2.
\]
Now, each $P_{\sigma_i}$ is a permutation matrix, so its entries are 0 or 1. Each $\alpha_i$ is strictly positive. Therefore the left side at entry $(a,b)$ is
\[
\text{LHS}_{ab} = \sum_{\substack{i=1 \\ (P_{\sigma_i})_{ab} = 1}}^K \alpha_i,
\]
which is a sum of strictly positive numbers. This sum is zero if and only if no permutation in the tuple has a 1 at position $(a,b)$ (the index set is empty), and is strictly positive otherwise. The right side is $(1-\alpha^*) R'_{ab}$, which is positive when $R'_{ab} > 0$ and zero when $R'_{ab} = 0$.

Matching the two sides: if $R'_{ab} = 0$, then the right side is zero, so the left side must be zero, which means no permutation $P_{\sigma_i}$ can have a 1 at $(a,b)$. If $R'_{ab} > 0$, then the right side is positive, so the left side must also be positive, which means at least one permutation must have a 1 at $(a,b)$.

The first condition is the binding one: whenever $R'_{ab} = 0$, every single permutation in the tuple must have a 0 at $(a,b)$. Since a permutation matrix $P_{\sigma_i}$ has $(P_{\sigma_i})_{ab} = 1$ if and only if $\sigma_i(a) = b$ (row $a$'s unique 1 is in column $b$), requiring $(P_{\sigma_i})_{ab} = 0$ means requiring $\sigma_i(a) \neq b$. In terms of the support matrix: $P_{\sigma_i}$ must place its 1's only at positions where $A(R')_{ab} = 1$. This is precisely the condition $P_{\sigma_i} \in \mathrm{Supp}(R')$.

A tuple contributes a nonzero term to~\eqref{eq:nu_marginal} only if \emph{every} permutation in the tuple lies in $\mathrm{Supp}(R')$: $P_{\sigma_i} \in \mathrm{Supp}(R')$ for all $i = 1, \ldots, K$.

\medskip\noindent\textbf{Counting the support set via perfect matchings.}
How many permutation matrices belong to $\mathrm{Supp}(R')$? A permutation matrix $Q \in S_{2n}$ corresponds to a bijection $\sigma: \{1, \ldots, 2n\} \to \{1, \ldots, 2n\}$, where $Q_{a,\sigma(a)} = 1$ for each row $a$ and all other entries are 0. The condition $Q \in \mathrm{Supp}(R')$ requires $A(R')_{a,\sigma(a)} = 1$ for every $a$ --- that is, the bijection $\sigma$ must map each row $a$ to a column $\sigma(a)$ where $A$ has a 1.

This is exactly a \emph{perfect matching} in the bipartite graph $G(A(R'))$: left vertices are rows $\{1, \ldots, 2n\}$, right vertices are columns $\{1, \ldots, 2n\}$, and there is an edge from row $a$ to column $b$ whenever $A(R')_{ab} = 1$ (i.e., $R'_{ab} > 0$). A perfect matching assigns each row to a distinct column via an edge, which is exactly what a bijection $\sigma$ with $A_{a,\sigma(a)} = 1$ does.

The number of such bijections is
\[
|\mathrm{Supp}(R')| = \sum_{\sigma \in S_{2n}} \prod_{a=1}^{2n} A(R')_{a,\sigma(a)}.
\]
Each product $\prod_a A_{a,\sigma(a)}$ is 1 if $\sigma$ is a valid matching (every factor is 1) and 0 otherwise. The sum counts all valid matchings. Comparing with Definition~\ref{def:permanent}, this is precisely $\mathrm{perm}(A(R'))$.

\medskip\noindent\textbf{Counting nonzero terms in the density sum.}
Each of the $K$ slots in the tuple $(\sigma_1, \ldots, \sigma_K)$ must independently satisfy $P_{\sigma_i} \in \mathrm{Supp}(R')$. There are $|\mathrm{Supp}(R')| = \mathrm{perm}(A(R'))$ valid choices for each slot, and the slots are independent, so the total number of tuples that contribute nonzero terms to~\eqref{eq:nu_marginal} is $\mathrm{perm}(A(R'))^K$.

By Theorem~\ref{thm:valiant}, computing $\mathrm{perm}(A)$ for $\{0,1\}$ matrices is \#P-complete, so even determining the number of nonzero terms in the density formula is \#P-hard.
\end{proof}

\begin{example}[Concrete illustration for $n = 2$]
\label{ex:permanent}
Let $n = 2$ (so matrices are $4 \times 4$) and suppose $R'$ has positive entries only in two diagonal blocks:
\[
R' = \begin{pmatrix} 0.4 & 0.6 & 0 & 0 \\ 0.6 & 0.4 & 0 & 0 \\ 0 & 0 & 0.3 & 0.7 \\ 0 & 0 & 0.7 & 0.3 \end{pmatrix}, \qquad
A(R') = \begin{pmatrix} 1 & 1 & 0 & 0 \\ 1 & 1 & 0 & 0 \\ 0 & 0 & 1 & 1 \\ 0 & 0 & 1 & 1 \end{pmatrix}.
\]
The bipartite graph $G(A)$ has edges $\{1{\leftrightarrow}1, 1{\leftrightarrow}2, 2{\leftrightarrow}1, 2{\leftrightarrow}2, 3{\leftrightarrow}3, 3{\leftrightarrow}4, 4{\leftrightarrow}3, 4{\leftrightarrow}4\}$. The perfect matchings are the four bijections $\{(1{\to}1, 2{\to}2, 3{\to}3, 4{\to}4)$, $(1{\to}2, 2{\to}1, 3{\to}3, 4{\to}4)$, $(1{\to}1, 2{\to}2, 3{\to}4, 4{\to}3)$, $(1{\to}2, 2{\to}1, 3{\to}4, 4{\to}3)\}$, giving $\mathrm{perm}(A) = 4$. With $K = 3$ decoys, the density sum has $4^3 = 64$ nonzero terms out of $(4!)^3 = 13{,}824$ total.

For a generic interior point of $\mathcal{B}_4$ (where every entry of $R'$ is positive), $A = \mathbf{J}_4$ (all-ones matrix), $\mathrm{perm}(\mathbf{J}_4) = 4! = 24$ (every permutation is a valid matching), and all $24^3 = 13{,}824$ terms are nonzero.
\end{example}

\subsubsection{Connection to the Mixed Discriminant}
\label{sec:mixed_disc}

Proposition~\ref{prop:nu_permanent} shows that even \emph{counting} the nonzero terms in $\nu(R')$ is \#P-hard. We now show that \emph{evaluating} each nonzero term also involves a \#P-hard quantity: the mixed discriminant.

For each nonzero tuple $(\sigma_1, \ldots, \sigma_K) \in \mathrm{Supp}(R')^K$, the per-tuple integral is
\begin{equation}
\label{eq:per_tuple_integral}
I(\sigma_1, \ldots, \sigma_K;\; R') = \int_{\Delta_K} \mathbf{1}\!\left[\sum_{i=1}^K \alpha_i P_{\sigma_i} = (1-\alpha^*) R'\right] \cdot g(\alpha) \; d\alpha.
\end{equation}

\medskip\noindent\textit{The constraint as a linear system.}
The indicator constrains $\alpha_1, \ldots, \alpha_K$ to satisfy $(2n)^2$ linear equations (one per matrix entry)
\begin{equation}
\label{eq:linear_constraint}
\sum_{i: (P_{\sigma_i})_{ab} = 1} \alpha_i = (1-\alpha^*) R'_{ab}, \qquad \forall\, (a,b) \in [2n]^2.
\end{equation}
Since $R'$ is doubly stochastic, the $2n$ row-sum equations and $2n$ column-sum equations are automatically satisfied (they all reduce to $\sum_i \alpha_i = 1 - \alpha^*$). The effective number of independent constraints is $(2n-1)^2$, and the feasible set is a convex polytope
\begin{equation}
\label{eq:polytope}
\mathcal{P}(\sigma_1, \ldots, \sigma_K; R') = \{\alpha \in \Delta_K : \sum_{i: (P_{\sigma_i})_{ab} = 1} \alpha_i = (1-\alpha^*) R'_{ab}\;\;\forall (a,b)\}.
\end{equation}

\medskip\noindent\textit{The integral as a polytope volume.}
With Dirichlet$(1, \ldots, 1)$ coefficients (uniform on $\Delta_K$), the density $g$ is constant on the simplex, and the integral reduces to the volume of the polytope~\eqref{eq:polytope}
\begin{equation}
\label{eq:volume}
I(\sigma_1, \ldots, \sigma_K; R') = C_K \cdot \mathrm{vol}_{d_0}(\mathcal{P}),
\end{equation}
where $C_K$ is the Dirichlet normalizing constant and $d_0 = \dim(\mathcal{P}) = K - 1 - (2n-1)^2$ is the dimension of the feasible set (when it is nonempty and the constraints are non-degenerate).

\medskip\noindent\textit{The total density as a sum of volumes.}
Combining~\eqref{eq:nu_marginal} and~\eqref{eq:volume},
\begin{equation}
\label{eq:nu_as_sum_of_volumes}
\nu(R') = \frac{C_K}{((2n)!)^K} \sum_{(\sigma_1, \ldots, \sigma_K) \in \mathrm{Supp}(R')^K} \mathrm{vol}_{d_0}(\mathcal{P}(\sigma_1, \ldots, \sigma_K; R')).
\end{equation}
This is a sum of $\mathrm{perm}(A(R'))^K$ polytope volumes.

\begin{definition}[Mixed discriminant]
\label{def:mixed_disc}
Given $d$ positive semidefinite matrices $X_1, \ldots, X_d \in \mathbb{R}^{d \times d}$, the mixed discriminant is
\begin{equation}
\label{eq:mixed_disc}
D(X_1, \ldots, X_d) = \frac{\partial^d}{\partial \lambda_1 \cdots \partial \lambda_d} \det\!\left(\sum_{j=1}^d \lambda_j X_j\right)\bigg|_{\lambda = 0}.
\end{equation}
When each $X_j$ is a diagonal matrix with a single nonzero entry, $D(X_1, \ldots, X_d)$ reduces to the permanent.
\end{definition}

\begin{theorem}[Barvinok, 1997~\cite{barvinok1996}]
\label{thm:barvinok}
Computing the mixed discriminant of $d$ positive semidefinite $d \times d$ matrices is \#P-hard.
\end{theorem}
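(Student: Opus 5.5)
The plan is to obtain Theorem~\ref{thm:barvinok} by a direct polynomial-time reduction from the $\{0,1\}$ permanent, which is \#P-complete by Theorem~\ref{thm:valiant}. The reduction uses the observation already hinted at in Definition~\ref{def:mixed_disc}: on diagonal inputs the mixed discriminant collapses to a permanent. So the whole argument amounts to (i) encoding an arbitrary $d\times d$ matrix $A=(a_{ij})$ with $a_{ij}\in\{0,1\}$ as $d$ positive semidefinite matrices, and (ii) verifying that their mixed discriminant equals $\mathrm{perm}(A)$ exactly.

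For the encoding, given $A$ I would set $X_j=\mathrm{diag}(a_{1j},a_{2j},\ldots,a_{dj})$ for $j=1,\ldots,d$. Thus $X_j$ carries the $j$-th column of $A$ on its diagonal. Each $X_j$ is diagonal with nonnegative entries, hence positive semidefinite. The construction clearly takes $O(d^2)$ time and produces entries of constant bit length, so it is a polynomial-time (parsimonious) reduction once the identity below is proved.

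For the identity, note that $\sum_j\lambda_jX_j$ is diagonal with $i$-th entry $\sum_{j}\lambda_j a_{ij}$. Hence
\[
\det\!\Big(\sum_{j=1}^d\lambda_jX_j\Big)=\prod_{i=1}^d\Big(\sum_{j=1}^d a_{ij}\lambda_j\Big).
\]
This is a homogeneous polynomial of degree $d$ in $\lambda$. The mixed partial $\partial^d/\partial\lambda_1\cdots\partial\lambda_d$ of a degree-$d$ homogeneous polynomial is a constant, namely the coefficient of the monomial $\lambda_1\lambda_2\cdots\lambda_d$. Consequently the evaluation at $\lambda=0$ in \eqref{eq:mixed_disc} is harmless. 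Expanding the product, a term contributes to $\lambda_1\cdots\lambda_d$ exactly when each factor $i$ chooses a distinct index $j=\sigma(i)$, i.e.\ when $\sigma\in S_d$. The term then equals $\prod_i a_{i,\sigma(i)}$. Summing over $\sigma$ gives $D(X_1,\ldots,X_d)=\mathrm{perm}(A)$ by \eqref{eq:perm_def}. Therefore any algorithm computing mixed discriminants of PSD matrices computes the $\{0,1\}$ permanent, and \#P-hardness follows from Theorem~\ref{thm:valiant}.

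The only step I expect to need care is the coefficient-extraction argument. One must make precise that the definition via derivatives at $\lambda=0$ coincides with the monomial coefficient, rather than some normalized version differing by a factor such as $d!$. Any such fixed normalization factor would in any case be computable and would not affect hardness. A secondary point to state explicitly is that hardness is inherited even for this restricted class (diagonal, $\{0,1\}$-valued PSD inputs), which is stronger than what the theorem claims.
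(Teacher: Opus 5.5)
Your reduction is correct. With $X_j=\mathrm{diag}(a_{1j},\ldots,a_{dj})$ the mixed partial picks out exactly the coefficient of $\lambda_1\cdots\lambda_d$, so $D(X_1,\ldots,X_d)=\mathrm{perm}(A)$ under the paper's normalization, and \#P-hardness follows from Theorem~\ref{thm:valiant}.

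The paper gives no proof of its own and only cites Barvinok, but your argument is exactly the specialization it alludes to in Definition~\ref{def:mixed_disc}. Your version, which places the full $j$-th column on the diagonal of $X_j$, is the form that actually yields hardness. Diagonal matrices with a ``single nonzero entry,'' as literally stated in the paper, only produce permanents of matrices with one nonzero per column, and those are trivial to compute.
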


The connection to our problem is as follows. Each polytope $\mathcal{P}(\sigma_1, \ldots, \sigma_K; R')$ is defined by the intersection of the simplex with a linear subspace determined by the $K$ permutation matrices. Its volume can be expressed as a mixed volume of zonotopes generated by the rows of the permutation matrices. By the Minkowski--Bernstein--Khovanskii theorem, mixed volumes of zonotopes are mixed discriminants of matrices constructed from the generators. Since Barvinok proved that mixed discriminants are \#P-hard, each polytope volume in~\eqref{eq:nu_as_sum_of_volumes} is \#P-hard to compute.

\begin{remark}[Why Barvinok's quasi-polynomial approximation does not apply]
\label{rem:barvinok_approx}
In subsequent work, Barvinok~\cite{barvinok2019} showed that the mixed discriminant of $n$ positive semidefinite $n \times n$ matrices can be approximated within relative error $\varepsilon > 0$ in quasi-polynomial $n^{O(\ln n - \ln \varepsilon)}$ time, \emph{provided} the operator norm distance of each matrix from the identity satisfies $\|A_i - I\|_{\mathrm{op}} \leq \gamma_0$ for an absolute constant $\gamma_0 < 1$. This raises the question of whether our mixed discriminants fall within this approximable regime.

They do not. The matrices in our mixed discriminant are constructed from the decoy permutation matrices $P_{\sigma_1}, \ldots, P_{\sigma_K}$. Each $P_{\sigma_i}$ is an orthogonal matrix (hence $\|P_{\sigma_i}\|_{\mathrm{op}} = 1$), but its distance from the identity is large. For any permutation $\sigma \neq \mathrm{id}$ that contains a transposition swapping positions $i$ and $j$, the matrix $P_\sigma - I$ restricted to the $\{i, j\}$ subspace is $\left(\begin{smallmatrix} -1 & 1 \\ 1 & -1 \end{smallmatrix}\right)$, which has eigenvalues $0$ and $-2$. Therefore $\|P_\sigma - I\|_{\mathrm{op}} = 2$ for any non-identity permutation. Since the decoy permutations are drawn uniformly from $S_{2n}$, the probability that $P_{\sigma_i} = I$ is $1/(2n)!$, which is negligible. With overwhelming probability, every decoy permutation satisfies $\|P_{\sigma_i} - I\|_{\mathrm{op}} = 2$, which exceeds the threshold $\gamma_0 < 1$ by a factor of at least 2.

The quasi-polynomial algorithm requires the matrices to be \emph{small perturbations of the identity}; our permutation matrices are maximal-distance orthogonal matrices that look nothing like the identity. The \#P-hardness barrier for our specific mixed discriminants therefore remains intact, and Barvinok's approximation result does not provide an attack.
\end{remark}

\subsubsection{The Formal Reduction}

\subsubsection{The Formal Hardness Statement}

We now state precisely what the preceding analysis proves, and what it does not prove. The distinction is important and reflects a genuine gap that we discuss openly.

\begin{theorem}[Hardness of the sub-problems in likelihood evaluation]
\label{thm:likelihood_hardness}
Let $R' \in \mathcal{B}_{2n}$ and let $\nu(R')$ be the density defined in~\eqref{eq:nu_marginal}. The following sub-problems, each of which arises in computing $\nu(R')$ via the decomposition~\eqref{eq:nu_as_sum_of_volumes}, are individually \#P-hard:
\begin{enumerate}
\item[\emph{(i)}] Computing the number of nonzero terms in the sum: $\mathrm{perm}(A(R'))^K$ (by Theorem~\ref{thm:valiant}).
\item[\emph{(ii)}] Computing any single nonzero term: $\mathrm{vol}(\mathcal{P}(\sigma_1, \ldots, \sigma_K; R'))$, which reduces to a mixed discriminant (by Theorem~\ref{thm:barvinok}).
\end{enumerate}
Moreover, all terms are non-negative (they are volumes of convex bodies), so there is no cancellation.
\end{theorem}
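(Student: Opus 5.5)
The plan is to treat the three assertions separately: the counting claim (i), the per-term claim (ii), and the no-cancellation remark. Each hardness claim is a polynomial-time reduction \emph{from} a known \#P-hard problem \emph{to} the sub-problem as it arises in~\eqref{eq:nu_as_sum_of_volumes}. The non-negativity claim is immediate. In~\eqref{eq:per_tuple_integral} the integrand is the product of an indicator and the density $g \ge 0$. Under the Dirichlet$(1,\ldots,1)$ choice each term equals $C_K\,\mathrm{vol}_{d_0}(\mathcal{P})$ with $C_K>0$, and the volume of a convex set is non-negative. So every summand of~\eqref{eq:nu_as_sum_of_volumes} is $\ge 0$ and no sign cancellation of determinant type can occur.

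For (i), Proposition~\ref{prop:nu_permanent} gives the number of nonzero terms as $\mathrm{perm}(A(R'))^K$. The subtle point is that not every $\{0,1\}$ matrix is a support matrix $A(R')$ of a doubly stochastic $R'$. Such supports have \emph{total support}: every $1$-entry lies on some perfect matching. I would close this gap with a reduction from Theorem~\ref{thm:valiant}, in the following steps.
\begin{enumerate}
\item Given an arbitrary $\{0,1\}$ matrix $B$, delete every entry $(a,b)$ not lying on any perfect matching of $G(B)$. Each test is one bipartite-matching computation on $G(B)$ with row $a$ and column $b$ removed. Deleting these entries does not change $\mathrm{perm}(B)$, because they appear in no nonzero product in~\eqref{eq:perm_def}.
\item If $\mathrm{perm}(B)=0$ we answer $0$ directly. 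Otherwise, for each surviving entry pick one perfect matching through it, and let $R'$ be the uniform average of these polynomially many permutation matrices. Then $R'\in\mathcal{B}_m$ and $A(R')$ is exactly the pruned matrix.
\item Pad with an identity block, which leaves the permanent unchanged, so that the size is even, $m=2n$.
\item An oracle for $\mathrm{perm}(A(R'))^K$ now yields $\mathrm{perm}(B)$ by taking an integer $K$-th root. This is polynomial in the bit length.
\end{enumerate}
Hence computing the count is \#P-hard.

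For (ii), the required direction is: from an arbitrary hard mixed-discriminant instance, produce a tuple and an $R'$ whose volume $\mathrm{vol}_{d_0}(\mathcal{P})$ encodes it. This is the main obstacle. The text preceding the statement only argues the converse direction, namely that the volume \emph{is expressible} as a mixed volume of zonotopes and hence as a mixed discriminant. That direction gives an upper bound on difficulty, not hardness.

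My first attempt would use the structure of $\mathcal{P}$ directly. It is a slice of the simplex $\Delta_K$ by the linear system~\eqref{eq:linear_constraint}, whose coefficient columns are the vectorized $P_{\sigma_i}$. I would try to show that, by choosing the $\sigma_i$ block-diagonally and $K$ large, the resulting family of slice polytopes is rich enough to encode volumes of general $0/1$-constrained polytopes. Volume computation for such polytopes is \#P-hard by Dyer--Frieze, and the mixed discriminants of Theorem~\ref{thm:barvinok} embed into that problem.

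If a fully general embedding is out of reach, I would fall back to a weaker form of (ii). In that form the claim is that computing these volumes, as a function of the instance class, contains the permanent as a special case. The witness is the instance in which each $X_j$ in Definition~\ref{def:mixed_disc} is a rank-one diagonal matrix, and I would state the result at that strength. Either way, (ii) should be presented as hardness of the \emph{family} of per-term volumes, and not of an adversarially fixed single term.
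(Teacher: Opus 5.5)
\textbf{Parts (i) and non-negativity.} Your non-negativity argument matches the paper's. Your proof of (i) is correct, and it is more careful than the paper's, which cites Theorem~\ref{thm:valiant} without noticing that $A(R')$ always has total support. Pruning to entries that lie on perfect matchings, averaging covering matchings, padding, and taking a $K$-th root are exactly what make $R'\mapsto\mathrm{perm}(A(R'))^K$ \#P-hard on doubly stochastic inputs.

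One caveat is inherited from Proposition~\ref{prop:nu_permanent}. The quantity $\mathrm{perm}(A(R'))^K=|\mathrm{Supp}(R')|^K$ is the size of the index set in~\eqref{eq:nu_as_sum_of_volumes}, not the number of \emph{nonzero} terms. Having every $\sigma_i$ in $\mathrm{Supp}(R')$ is necessary but not sufficient for $\mathcal{P}(\sigma_1,\ldots,\sigma_K;R')\neq\emptyset$; take $\sigma_1=\cdots=\sigma_K\in\mathrm{Supp}(R')$ with $R'$ not a vertex. So your reduction proves hardness of the index-set size, which only upper-bounds the nonzero count.

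\textbf{The genuine gap is (ii).} You rightly observe that the paper's justification runs the reduction the wrong way: it argues that each volume is \emph{expressible} as a mixed discriminant, which is \#P-hard by Theorem~\ref{thm:barvinok}. But neither of your routes supplies the correct direction.
\begin{itemize}
\item The first route is only an intention. Citing Dyer--Frieze does not show that hard instances lie in the family of slice polytopes you can realize. The claim that mixed discriminants embed there is also unsupported.
\item The fallback repeats the error you criticized. That Definition~\ref{def:mixed_disc} specializes to the permanent for rank-one diagonal $X_j$ shows mixed discriminants are hard. It does not exhibit any tuple and $R'$ whose polytope volume equals such a mixed discriminant.
\end{itemize}
What is missing is an explicit polynomial-time map from a \#P-hard problem to instances $(\sigma_1,\ldots,\sigma_K;R')$ whose volume determines the answer.

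Your block-diagonal idea does lead there. Take $\sigma_i=\mathrm{blockdiag}(\Pi(v_{i,1}),\ldots,\Pi(v_{i,n}))$ with $v_i\in\{0,1\}^n$. Take $R'$ block diagonal with blocks $\bigl(\begin{smallmatrix}1-x_j & x_j\\ x_j & 1-x_j\end{smallmatrix}\bigr)$, $0<x_j<1$. Then the system~\eqref{eq:linear_constraint} collapses to $\sum_{i:\,v_{i,j}=1}\alpha_i=(1-\alpha^*)x_j$ for $j\in[n]$. So the per-term polytopes include every slice of the simplex by an arbitrary $0/1$ equality system.

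You would still have to place a known \#P-hard volume in that format and track the fixed Jacobian and scaling factors. An order polytope is one candidate: its volume is $e(P)/|P|!$, and counting linear extensions is \#P-complete (Brightwell--Winkler). Use complement variables $\bar x_i=1-x_i$, slacks $x_a+\bar x_b+s_{ab}=1$ for cover relations, and one global slack; all coefficients are then $0/1$. Until such a lemma is proved, (ii) is not established, either by your proposal or by the paper.
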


\begin{proof}
Statement~(i) follows directly from Proposition~\ref{prop:nu_permanent} and Theorem~\ref{thm:valiant}: the number of nonzero terms is $\mathrm{perm}(A(R'))^K$, and computing $\mathrm{perm}(A)$ for $\{0,1\}$ matrices is \#P-complete. Statement~(ii) follows from the connection to mixed discriminants established in Section~\ref{sec:mixed_disc} and Theorem~\ref{thm:barvinok}. Non-negativity follows from the fact that each term is the volume of a convex polytope.
\end{proof}

\begin{remark}[What this does and does not prove --- an honest assessment]
\label{rem:composition_gap}
Theorem~\ref{thm:likelihood_hardness} proves that the \emph{sub-problems} arising in the density computation are individually \#P-hard. It does \emph{not} prove that computing the density $\nu(R')$ \emph{itself} is \#P-hard in the formal complexity-theoretic sense (i.e., that there exists a Turing reduction from a \#P-complete problem to the function $R' \mapsto \nu(R')$).

The gap is a matter of composition. The density $\nu(R')$ is a sum of $\mathrm{perm}(A)^K$ terms, each a \#P-hard volume. But the \#P-hardness of the \emph{sum} does not follow automatically from the \#P-hardness of counting the terms or evaluating each term. Consider the analogy: the determinant $\det(A) = \sum_{\sigma \in S_m} \mathrm{sgn}(\sigma) \prod_i A_{i,\sigma(i)}$ is a sum of $m!$ terms, each trivially computable, yet the sum is in P because Gaussian elimination exploits the alternating sign structure. The permanent is a sum of $m!$ identically structured terms (without signs) yet is \#P-hard. Whether a sum is hard depends on the \emph{global structure} of the sum, not only on the hardness of individual terms.

A formal proof that $R' \mapsto \nu(R')$ is \#P-hard would require constructing a Turing reduction: given an oracle that evaluates $\nu(R')$ at any point $R' \in \mathcal{B}_{2n}$, show how to compute $\mathrm{perm}(A)$ for an arbitrary $\{0,1\}$ matrix $A$. Such a reduction would need to (a)~construct specific points $R'$ on the boundary of $\mathcal{B}_{2n}$ with prescribed support pattern $A(R') = A$, and (b)~extract $\mathrm{perm}(A)$ from the value $\nu(R')$ by controlling or cancelling the polytope volume contributions. Step~(b) is the obstacle: $\nu(R')$ entangles the permanent with the polytope volumes in a way that makes isolation difficult. We leave the construction of such a reduction as an open problem.

What we \emph{can} state with formal rigor:
\begin{enumerate}
\item Any algorithm that evaluates $\nu(R')$ by enumerating terms in the decomposition~\eqref{eq:nu_as_sum_of_volumes} and computing each term individually must solve \#P-hard problems at each step.
\item No polynomial-time algorithm for evaluating $\nu(R')$ is known, and there is strong structural evidence against one: the sum has $\mathrm{perm}(A)^K$ non-negative terms (no cancellation), each individually \#P-hard, with no known algebraic identity that collapses the sum. This stands in contrast to the determinant, where the alternating signs create the cancellation structure that Gaussian elimination exploits.
\item In the interior regime, the permanent cancels from the likelihood \emph{ratio} (Section~\ref{sec:approx_overview}), and the residual barrier is the sum of $((2n)!)^K$ polytope volumes, for which no polynomial-time evaluation or approximation method is known.
\end{enumerate}

We believe the correct conjecture is that $\nu(R')$ is \#P-hard to evaluate, but a formal proof requires either a direct Turing reduction or a new composition theorem for sums of \#P-hard quantities without cancellation.
\end{remark}

\subsubsection{Contrast with Gaussian Noise}

\begin{remark}[Why Gaussian noise would be easy]
\label{rem:hardness_scope}
If the decoy component were additive Gaussian noise ($D_t = \alpha^* M_t + \varepsilon$, $\varepsilon_{ab} \sim N(0, \sigma^2)$ i.i.d.), the likelihood would be
\[
\Pr(D_t \mid M') = \prod_{a,b} \frac{1}{\sqrt{2\pi}\sigma}\exp\!\left(-\frac{((D_t)_{ab} - \alpha^*(M')_{ab})^2}{2\sigma^2}\right) \propto \exp\!\left(-\frac{\|D_t - \alpha^* M'\|_F^2}{2\sigma^2}\right).
\]
This is computable in $O(n^2)$ time. The MAP estimate minimizes $\|D_t/\alpha^* - M'\|_F^2$ over permutation matrices, which is a linear assignment problem solvable by the Hungarian algorithm in $O(n^3)$. The Birkhoff encoding replaces this tractable Gaussian likelihood with a \#P-hard sum over BvN decompositions. This is the specific sense in which the Birkhoff polytope provides computational hardness that other noise distributions do not.
\end{remark}

\subsubsection{Why Approximate Permanent Algorithms Do Not Help}
\label{sec:approx_overview}

Theorem~\ref{thm:likelihood_hardness} establishes that the sub-problems of computing $\nu(R')$ are individually \#P-hard (see Remark~\ref{rem:composition_gap} for the compositional subtlety). A natural question is whether polynomial-time \emph{approximation} algorithms for the permanent --- most notably the Jerrum--Sinclair--Vigoda (JSV) FPRAS~\cite{jsv2004}, which approximates the permanent of any non-negative matrix to within a $(1+\varepsilon)$ factor in polynomial time --- could be used to approximate $\nu(R')$ and thereby enable approximate MAP estimation. Perhaps surprisingly, the answer is no, for a reason that is worth understanding in detail because it reveals the true computational barrier.

Recall from~\eqref{eq:nu_as_sum_of_volumes} that the density decomposes as
\[
\nu(R') = \frac{C_K}{((2n)!)^K} \sum_{(\sigma_1, \ldots, \sigma_K) \in \mathrm{Supp}(R')^K} \mathrm{vol}(\mathcal{P}(\sigma_1, \ldots, \sigma_K; R')).
\]
The permanent enters through the size of the index set: the sum has $|\mathrm{Supp}(R')|^K = \mathrm{perm}(A(R'))^K$ terms. The attacker wants to compare $\nu(R')$ for the true candidate $M_t$ against $\nu(R'')$ for a wrong candidate $M'$. If the permanents $\mathrm{perm}(A(R'))$ and $\mathrm{perm}(A(R''))$ differed between candidates, approximating them via JSV would give the attacker useful information.

\emph{But in the operating regime of the protocol, the permanents are identical for every candidate.} The reason is elementary but important to spell out. The residual for candidate $M'$ is $R' = R_t + \frac{\alpha^*}{1-\alpha^*}(M_t - M')$, which perturbs the true residual $R_t$ by at most $\frac{\alpha^*}{1-\alpha^*}$ per entry. For the protocol parameter $\alpha^* = 1/(4n)$, this perturbation is $\frac{1}{4n-1} \approx \frac{1}{4n}$.

The support matrix $A(R')_{ab} = \mathbf{1}[R'_{ab} > 0]$ marks which entries of $R'$ are strictly positive. A permutation matrix ``fits inside'' $R'$ --- meaning it belongs to $\mathrm{Supp}(R')$ --- if and only if wherever the permutation places a 1, $R'$ has a positive entry. The permanent $\mathrm{perm}(A(R'))$ counts how many permutation matrices fit inside $R'$.

Now consider what happens when every entry of $R'$ is positive. In this case, $A(R') = \mathbf{J}_{2n}$ (the all-ones matrix), and \emph{every} permutation matrix fits inside $R'$, because every entry is positive and there is nowhere a permutation's 1 could land on a zero. The count is therefore
\[
\mathrm{perm}(\mathbf{J}_{2n}) = (2n)!,
\]
which is the total number of permutation matrices --- all of them fit. This is independent of the candidate $M'$. It is analogous to asking ``how many ways can $2n$ non-attacking rooks be placed on a $2n \times 2n$ chessboard where every square is available?'' The answer is $(2n)!$ regardless of which candidate generated the board, because all squares are available.

The condition for every entry of $R'$ to be positive is that the perturbation does not push any entry of $R_t$ to zero. The worst case is at entries where $R_t$ is smallest: we need $(R_t)_{ab} > \frac{\alpha^*}{1-\alpha^*} \approx \frac{1}{4n}$ for all $(a,b)$. The mean entry of $R_t$ is $\mathbb{E}[(R_t)_{ab}] = \frac{1}{2n}$, which is twice the threshold $\frac{1}{4n}$. With $K_t$ decoys, the standard deviation is approximately $\frac{1}{2n\sqrt{K_t}}$, so the threshold is $\sqrt{K_t}/2$ standard deviations below the mean. For $K_t = 20$, this is $\sim 2.2$ standard deviations, and the probability that all $(2n)^2$ entries exceed the threshold is high for moderate $n$ and $K_t$. This is the \emph{interior condition} (Definition~\ref{def:interior_condition}).

When the interior condition holds, the density formula simplifies to
\begin{equation}
\label{eq:nu_interior_main}
\nu(R') = \frac{C_K}{((2n)!)^K} \sum_{(\sigma_1, \ldots, \sigma_K) \in S_{2n}^K} \mathrm{vol}(\mathcal{P}(\sigma_1, \ldots, \sigma_K; R')),
\end{equation}
and the likelihood ratio between any two candidates is
\begin{equation}
\label{eq:lr_interior}
\frac{\nu(R')}{\nu(R'')} = \frac{\sum_{\tau \in S_{2n}^K} \mathrm{vol}(\mathcal{P}(\tau; R'))}{\sum_{\tau \in S_{2n}^K} \mathrm{vol}(\mathcal{P}(\tau; R''))}.
\end{equation}
The permanent has completely cancelled. Both sums range over the same $((2n)!)^K$ tuples. The JSV FPRAS computes $\mathrm{perm}(A(R')) = (2n)!$ for every candidate --- the same number every time --- and provides zero information for distinguishing candidates.

The discrimination between candidates resides entirely in the \emph{polytope volumes} $\mathrm{vol}(\mathcal{P}(\tau; R'))$, which change when $R'$ changes (because the constraint $\sum_i \alpha_i P_{\sigma_i} = (1-\alpha^*) R'$ shifts with $R'$). The density $\nu(R')$ is a sum of $((2n)!)^K$ such volumes, and the attacker would need to approximate this sum to rank candidates. Modern convex body volume algorithms --- including the Lov\'{a}sz--Vempala algorithm~\cite{lovasz2006} and its refinements by Cousins and Vempala~\cite{cousins2018} --- can compute the volume of a \emph{single} convex polytope in polynomial time (roughly $\tilde{O}(d^3)$ oracle calls for a $d$-dimensional body). However, the attacker's problem is not to compute one volume but to \emph{sum} $((2n)!)^K$ volumes. Even for small parameters ($n = 5$, $K = 10$), this sum has $(10!)^{10} \approx 10^{65}$ terms. Computing each volume in polynomial time and summing would take $\mathrm{poly}(K) \times 10^{65}$ operations --- completely infeasible. The bottleneck is the combinatorial explosion in the number of terms, not the cost of evaluating any single term.

A detailed analysis of various approximate attack strategies --- including Monte Carlo estimation of the volume sum (which fails due to exponentially small hit rates), importance sampling via the JSV near-uniform matching sampler (which provides no improvement in the interior regime), MCMC on BvN decompositions, spectral methods, LP relaxation, boson sampling, and the boundary regime where the permanent does vary --- is provided in Appendix~\ref{app:attacks}.

\subsection{Formal Two-Layer Security Statement}

\begin{definition}[Two-layer security]
\label{def:twolayer_security}
A protocol has \emph{two-layer security} against semi-honest adversaries if (i) the server's view is identically distributed for any two inputs with the same aggregate (statistical distance zero --- \emph{information-theoretic server security}); (ii) no polynomial-time aggregator can compute or approximate the posterior distribution~\eqref{eq:agg_posterior} over candidate permutation matrices, because doing so requires solving a \#P-hard problem (\emph{computational aggregator security against likelihood attacks}); and (iii) the server and aggregator do not share their views (\emph{non-collusion}).
\end{definition}

\begin{theorem}[Two-Layer PolyVeil security --- proved components]
\label{thm:twolayer}
Under the non-collusion assumption, Algorithm~\ref{alg:twolayer} achieves condition~(i) by Theorem~\ref{thm:server_it} (perfect simulation, unconditional), and condition~(ii) by Theorem~\ref{thm:likelihood_hardness} (likelihood-based attacks require solving individually \#P-hard sub-problems: the permanent and mixed discriminant), under the interior condition $\alpha^* \leq \min_{ij}(D_t)_{ij}$.
\end{theorem}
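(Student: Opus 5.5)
The plan is to verify the three conditions of Definition~\ref{def:twolayer_security} one at a time, since the theorem is essentially a composition of results already established. Condition~(iii), non-collusion, is an architectural hypothesis of Algorithm~\ref{alg:twolayer} rather than something to prove. Here I would only check that the message flow respects it. The aggregator $\mathcal{A}$ receives exactly $\{D_t\}_{t\in[k]}$, the noise aggregator $\mathcal{B}$ receives exactly $\{\eta_t\}_{t\in[k]}$, and the server $\mathcal{S}$ receives exactly $(F,H)$. No protocol message crosses between $\mathcal{A}$'s and $\mathcal{S}$'s views beyond the scalar $F$. Hence each party's view is as assumed in the corresponding layer theorem.

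For condition~(i), I would invoke Theorem~\ref{thm:server_it} directly. The server's view in Algorithm~\ref{alg:twolayer} is precisely the pair $(F,H)$ analyzed there. The key identity $F=\alpha^*S+H$ is Theorem~\ref{thm:correct_twolayer}. Independence $\eta_t\perp\!\!\!\perp\mathbf{b}_t$ is Stage~1 of that proof. Part~(i) of Theorem~\ref{thm:server_it} gives statistical distance zero between views for any two inputs with equal aggregate. This needs no computational or interior assumption, which justifies the label ``unconditional.''

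For condition~(ii), I would proceed in three steps. First, I would show that the interior hypothesis $\alpha^*\le\min_{ij}(D_t)_{ij}$ places us in the regime where the consistent set is $\mathcal{L}(D_t,\alpha^*)=S_{2n}$. This follows because $R'_{ab}\ge 0$ iff $(D_t)_{ab}\ge\alpha^*$ wherever $M'_{ab}=1$. So the posterior~\eqref{eq:agg_posterior} must be computed over all $(2n)!$ candidates, and consistency gives no shortcut. Second, I would note that any computation of~\eqref{eq:agg_posterior} requires the values $\nu(R')$, up to a common normalization, for the candidates in question. Third, I would apply Theorem~\ref{thm:likelihood_hardness}, which says that evaluating $\nu$ through the decomposition~\eqref{eq:nu_as_sum_of_volumes} requires two \#P-hard sub-problems: counting terms (the permanent, via Proposition~\ref{prop:nu_permanent} and Theorem~\ref{thm:valiant}) and evaluating each term (a mixed discriminant, via Theorem~\ref{thm:barvinok}). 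There is also no cancellation, since every term is non-negative.

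The main obstacle is step~(ii). Definition~\ref{def:twolayer_security}(ii) is phrased as ``no polynomial-time aggregator can compute or approximate the posterior,'' but Theorem~\ref{thm:likelihood_hardness} only establishes hardness of sub-problems, not of $R'\mapsto\nu(R')$ itself (Remark~\ref{rem:composition_gap}). A further complication is that in the interior regime the permanent equals $(2n)!$ for every candidate, so branch~(i) is trivial there. I would therefore state the conclusion in the restricted form the theorem title signals, ``proved components.'' The claim is that likelihood-based attacks proceeding via the term decomposition must solve \#P-hard sub-problems, namely the volume and mixed-discriminant branch. I would not claim a full Turing reduction. The approximation question would be deferred to Section~\ref{sec:approx_overview}, and the gap would be recorded as Conjecture~\ref{conj:hardness}.
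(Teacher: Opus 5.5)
Your proposal follows the paper's route: the theorem is established by citation, with condition~(i) from Theorem~\ref{thm:server_it}, condition~(ii) from Theorem~\ref{thm:likelihood_hardness}, and non-collusion taken as a hypothesis, and your added check that the stated hypothesis forces $\mathcal{L}(D_t,\alpha^*)=S_{2n}$ is correct. Your caveat is also correct: in the interior regime every residual has all-ones support, so $\mathrm{perm}(A(R'))=(2n)!$ for every candidate and only the volume/mixed-discriminant branch can carry hardness, which is exactly what the paper itself concedes in Proposition~\ref{prop:perm_constant} and Remark~\ref{rem:composition_gap}.
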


\begin{remark}[What remains open]
\label{rem:open}
Full aggregator security (Conjecture~\ref{conj:hardness}) --- ruling out \emph{all} polynomial-time attacks, not just likelihood-based ones --- remains an open problem. A proof would require either (a) a reduction from a \#P-complete or NP-hard problem to the search problem ``recover $M_t$ from $D_t$,'' or (b) an average-case hardness result for the permanent over the specific distribution induced by $\nu$. Both are significant open problems in computational complexity. The average-case hardness of the permanent has been studied by Lipton~\cite{lipton1991} and others, with partial results (e.g., hardness over finite fields) but no complete resolution over the reals.
\end{remark}

\section{Multi-Statistic Extraction from the Birkhoff Encoding}
\label{sec:multistat}

A single masked matrix $D_t = \alpha^* M_t + (1-\alpha^*) R_t$ encodes the \emph{entire} bit vector $\mathbf{b}_t \in \{0,1\}^n$, not merely its sum $s_t$. In the full two-layer protocol (Algorithm~\ref{alg:twolayer}), the aggregator observes $D_t$ and can extract multiple statistics from it using different extraction vectors, all within a single protocol execution. The compressed two-layer protocol (Algorithm~\ref{alg:compressed_twolayer}) does not support multi-statistic extraction, since the aggregator receives only the scalar $f_t$ and the matrix $D_t$ is never transmitted. This section derives the statistics that the full protocol supports and compares the communication cost with additive secret sharing. Figure~\ref{fig:multistat} illustrates the non-interactive multi-statistic extraction pipeline.

\begin{figure}[!htbp]
\centering
\begin{tikzpicture}[
    box/.style={draw, rounded corners, minimum width=2.4cm, minimum height=0.8cm, align=center, font=\small},
    stat/.style={draw, rounded corners, minimum width=3.2cm, minimum height=0.55cm, align=center, font=\footnotesize, fill=green!8},
    arrow/.style={->, >=stealth, thick}
]

\node[box, fill=blue!10] (clients) at (0, 0) {$k$ clients};
\node[font=\scriptsize, align=center] at (0, -0.8) {each holds\\$\mathbf{b}_t \in \{0,1\}^n$};

\node[box, fill=orange!12, minimum height=1.2cm] (agg) at (5, 0) {Aggregator\\stores $D_1, \ldots, D_k$};
\node[font=\scriptsize, red!70, align=center] at (5, -1.1) {clients now offline};

\draw[arrow] (1.3, 0) -- (3.6, 0) node[midway, above, font=\scriptsize] {send $D_t$ once};

\node[stat] (q1) at (10.5, 1.2) {Boolean sum $\sum_t s_t$};
\node[stat] (q2) at (10.5, 0) {Per-bit counts $\sum_t b_{t,j}$};
\node[stat] (q3) at (10.5, -1.2) {Weighted sums $\sum_t \mathbf{c}^T\!\mathbf{b}_t$};

\draw[arrow] (6.4, 0.2) -- (q1.west);
\draw[arrow] (6.4, 0) -- (q2.west);
\draw[arrow] (6.4, -0.2) -- (q3.west);

\node[font=\scriptsize, align=center, gray] at (8.2, -2.3) {different extraction vectors\\applied to same stored $D_t$};

\end{tikzpicture}
\caption{Non-interactive multi-statistic extraction. In the full two-layer protocol (Algorithm~\ref{alg:twolayer}), each client transmits the masked matrix $D_t$ to the aggregator once. After clients go offline, the aggregator applies different extraction vectors to the stored matrices to compute multiple aggregate statistics, each recovered exactly by the server via noise cancellation. No additional client communication is required for new queries. Additive secret sharing requires a new round of client participation for each statistic.}
\label{fig:multistat}
\end{figure}
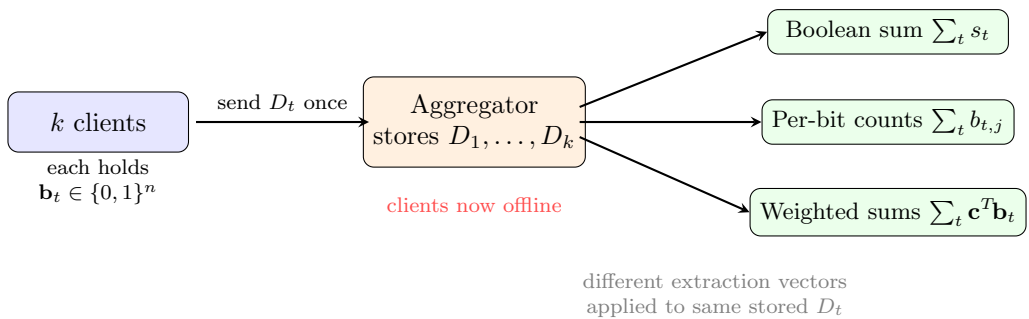

\subsection{Per-Bit Marginal Counts}

\begin{theorem}[Per-bit extraction]
\label{thm:perbit}
For each bit position $j \in [n]$, the full two-layer protocol (Algorithm~\ref{alg:twolayer}) can compute $\sum_{t=1}^k b_{t,j}$ exactly (the number of clients with bit $j$ equal to 1), using the same matrices $D_t$ already sent by each client.
\end{theorem}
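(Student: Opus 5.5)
The plan is to mimic the bit-count pipeline of Lemma~\ref{lem:extract} and Theorem~\ref{thm:correct_twolayer}, but with a different extraction vector pair $(\mathbf{w}^{(j)}, \mathbf{y})$ for each bit position $j$. Keep $\mathbf{y}$ as in Definition~\ref{def:selectors}, and replace $\mathbf{w}$ by the standard basis vector $\mathbf{w}^{(j)} = \mathbf{e}_{2j-1} \in \mathbb{R}^{2n}$. Equation~\eqref{eq:My_entries} from the proof of Lemma~\ref{lem:extract} already gives $(M\mathbf{y})_{2j-1} = b_j$. Hence, for every bitstream, $(\mathbf{w}^{(j)})^T M(\mathbf{b})\,\mathbf{y} = b_j$. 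This identity is the per-bit analogue of Lemma~\ref{lem:extract}, and I would state it first as a short lemma.

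Next I would apply bilinearity to $D_t = \alpha^* M_t + \sum_i \alpha_{t,i} P_{t,i}$, exactly as in~\eqref{eq:ft_decomp}. This gives
\[
f_{t,j} := (\mathbf{w}^{(j)})^T D_t\, \mathbf{y} = \alpha^* b_{t,j} + \xi_{t,j},
\qquad
\xi_{t,j} := \sum_{i=1}^{K_t} \alpha_{t,i}\, (P_{t,i}\mathbf{y})_{2j-1} = (1-\alpha^*)(R_t\mathbf{y})_{2j-1},
\]
which matches the per-bit noise $\xi_{t,j}$ in the notation table. The aggregator forms $F_j = \sum_t f_{t,j}$ from the stored matrices $D_t$, with no further client interaction. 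The noise aggregator forms $H_j = \sum_t \xi_{t,j}$. The server then computes $(F_j - H_j)/\alpha^* = \sum_t b_{t,j}$, and the cancellation is the same algebra as in Theorem~\ref{thm:correct_twolayer}.

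The main obstacle is supplying $H_j$ using only what has already been sent. Algorithm~\ref{alg:twolayer} sends only the scalar $\eta_t$ to $\mathcal{B}$. I would handle this by observing that $\eta_t = \sum_j \xi_{t,j}$ (since $\mathbf{w} = \sum_j \mathbf{w}^{(j)}$). The clients can therefore send the vector $(\xi_{t,1},\ldots,\xi_{t,n})$ to $\mathcal{B}$ in the single upload round; equivalently, they send the vector $(1-\alpha^*)R_t\mathbf{y}$. This yields $\eta_t$ as its sum and costs $O(n)$ scalars per client, so clients are still contacted only once. I would state explicitly that "the same matrices $D_t$" refers to the aggregator's side, while the noise channel carries this per-bit noise vector.

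Finally, I would remark that the Layer~1 argument of Theorem~\ref{thm:server_it} carries over verbatim to each $j$. The vector $(\xi_{t,j})_j$ depends only on client $t$'s randomness and is therefore independent of $\mathbf{b}_t$, so the server's view of $(F_j, H_j)$ depends only on $\sum_t b_{t,j}$. This is not required for exactness, but it keeps the extension consistent with the two-layer guarantees.
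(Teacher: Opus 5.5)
Your argument is correct, but it differs from the paper's in two places.

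First, the choice of extraction pair. The paper uses $(\mathbf{e}_{2j-1}, \mathbf{e}_{2j})$, reading off the single entry $(D_t)_{2j-1,2j}$ with noise $(1-\alpha^*)(R_t)_{2j-1,2j}$. You keep the full $\mathbf{y}$ and get noise $\xi_{t,j} = (1-\alpha^*)(R_t\mathbf{y})_{2j-1}$. Both pairs extract $b_{t,j}$, because each uses only the block-diagonal structure of $M_t$. Yours has two advantages. It gives $\sum_j \xi_{t,j} = \eta_t$, which the paper's per-bit noises do not satisfy since $R_t$ is not block-diagonal. It is also exactly the case $\mathbf{c} = \mathbf{e}_j$ of Theorem~\ref{thm:weighted}.

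Second, and more substantively, the handling of $H_j$. The paper asserts that the noise aggregator ``already has the decoy parameters.'' In Algorithm~\ref{alg:twolayer}, however, $\mathcal{B}$ receives only the scalar $\eta_t$, which does not determine any per-bit noise. You correctly identify this as the real obstacle. You also make the required protocol change explicit by enlarging the single upload to $\mathcal{B}$ to the vector $(\xi_{t,1},\ldots,\xi_{t,n})$.

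Your fix costs $O(n)$ scalars per client, rather than the $K_t$ permutations and $K_t$ coefficients the paper's version would have $\mathcal{B}$ hold. It still supports every weighted sum after the fact, since $\eta_{t,c} = \sum_j c_j \xi_{t,j}$. In exchange, the paper's implicit version would let $\mathcal{B}$ supply noise for arbitrary extraction pairs, not only those with right vector $\mathbf{y}$.

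One small imprecision: ``equivalently, the vector $(1-\alpha^*)R_t\mathbf{y}$'' should refer to its odd-indexed entries. Sending all $2n$ entries is harmless, since they are independent of $\mathbf{b}_t$.
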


\begin{proof}
The encoding $M_t = \mathrm{blockdiag}(\Pi(b_{t,1}), \ldots, \Pi(b_{t,n}))$ has the property (from Definition~\ref{def:encoding}) that $(M_t)_{2j-1, 2j} = b_{t,j}$ for each $j \in [n]$. Define the extraction vectors $\mathbf{w}_j, \mathbf{y}_j \in \mathbb{R}^{2n}$ by
\begin{align}
(\mathbf{w}_j)_a &= \begin{cases} 1 & \text{if } a = 2j-1 \\ 0 & \text{otherwise} \end{cases}, \qquad
(\mathbf{y}_j)_a = \begin{cases} 1 & \text{if } a = 2j \\ 0 & \text{otherwise} \end{cases}. \label{eq:perbit_vectors}
\end{align}
These are simply the standard basis vectors $\mathbf{e}_{2j-1}$ and $\mathbf{e}_{2j}$. Then
\begin{align}
\mathbf{w}_j^T M_t\, \mathbf{y}_j &= \sum_{a,b} (\mathbf{w}_j)_a\, (M_t)_{ab}\, (\mathbf{y}_j)_b = (M_t)_{2j-1,\, 2j} = b_{t,j}. \label{eq:perbit_extract}
\end{align}
By linearity of the bilinear form $A \mapsto \mathbf{w}_j^T A\, \mathbf{y}_j$, applied to $D_t = \alpha^* M_t + (1-\alpha^*) R_t$,
\begin{align}
\mathbf{w}_j^T D_t\, \mathbf{y}_j &= \alpha^* b_{t,j} + (1-\alpha^*)\, \mathbf{w}_j^T R_t\, \mathbf{y}_j = \alpha^* b_{t,j} + \eta_{t,j}, \label{eq:ftj}
\end{align}
where $\eta_{t,j} = (1-\alpha^*)(R_t)_{2j-1, 2j}$ is the noise contribution from the decoy component at position $(2j-1, 2j)$.

Define $f_{t,j} = \mathbf{w}_j^T D_t\, \mathbf{y}_j$ and $F_j = \sum_{t=1}^k f_{t,j}$. Define $H_j = \sum_{t=1}^k \eta_{t,j}$. The aggregator computes $F_j$ from the matrices $D_t$ it already holds. The noise aggregator computes $H_j$ from the noise values it already holds (since $\eta_{t,j}$ is determined by the decoy permutations and coefficients that the noise aggregator receives). The server recovers
\begin{align}
\frac{F_j - H_j}{\alpha^*} &= \frac{\alpha^* \sum_t b_{t,j} + \sum_t \eta_{t,j} - \sum_t \eta_{t,j}}{\alpha^*} = \sum_{t=1}^k b_{t,j}. \label{eq:perbit_recover}
\end{align}
No additional communication is required. The aggregator already has $D_t$ and the noise aggregator already has the decoy parameters. The server applies $n$ different extraction vector pairs to the same data, obtaining all $n$ per-bit marginal counts from a single protocol execution.
\end{proof}

\begin{remark}[Comparison with additive secret sharing]
\label{rem:ss_perbit}
To compute all $n$ per-bit counts via additive secret sharing, each client must secret-share $n$ separate values $(b_{t,1}, \ldots, b_{t,n})$. With two-server additive secret sharing, each client sends $n$ shares to server A and $n$ shares to server B, for a total communication of $O(kn \log k)$ bits, where each share requires $\lceil \log_2(k+1) \rceil$ bits (since per-bit aggregates are at most $k$). In contrast, PolyVeil sends one $2n \times 2n$ matrix per client, totaling $O(kn^2)$ entries of 64 bits each. For a single statistic (the Boolean sum), additive secret sharing uses $O(k \log(kn))$ bits and PolyVeil uses $O(kn^2)$ bits --- secret sharing is far cheaper. For all $n$ per-bit marginals, additive secret sharing uses $O(kn \log k)$ bits and PolyVeil uses $O(kn^2)$ bits --- secret sharing is still cheaper by a factor of $n/\log k$. The advantage of the Birkhoff encoding is not communication cost but rather the ability to compute additional statistics from the same matrices without further client interaction.
\end{remark}

\subsection{Arbitrary Weighted Sums}

\begin{theorem}[Weighted extraction]
\label{thm:weighted}
For any weight vector $\mathbf{c} = (c_1, \ldots, c_n) \in \mathbb{R}^n$, the full two-layer protocol (Algorithm~\ref{alg:twolayer}) can compute $\sum_{t=1}^k \sum_{j=1}^n c_j b_{t,j}$ exactly from the same matrices $D_t$.
\end{theorem}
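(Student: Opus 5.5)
The plan mirrors the proof of Theorem~\ref{thm:perbit}, with the single standard-basis pair replaced by a weighted extraction pair. Define $\mathbf{w}_{\mathbf{c}}, \mathbf{y} \in \mathbb{R}^{2n}$ by
\[
(\mathbf{w}_{\mathbf{c}})_{2j-1} = c_j, \qquad (\mathbf{w}_{\mathbf{c}})_{2j} = 0,
\]
and keep $\mathbf{y}$ from Definition~\ref{def:selectors}. This generalizes Definition~\ref{def:selectors}, which is the case $\mathbf{c} = \mathbf{1}$.

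The first step is a weighted version of Lemma~\ref{lem:extract}. By \eqref{eq:My_entries}, $(M_t\mathbf{y})_{2j-1} = b_{t,j}$. Since $\mathbf{w}_{\mathbf{c}}$ selects only the odd entries with weights $c_j$, we get
\[
\mathbf{w}_{\mathbf{c}}^T M_t \mathbf{y} = \sum_{j=1}^n c_j b_{t,j}.
\]
The computation is the same as in Lemma~\ref{lem:extract}. An equivalent route is to write $\mathbf{w}_{\mathbf{c}}^T A \mathbf{y} = \sum_j c_j\, \mathbf{e}_{2j-1}^T A\, \mathbf{e}_{2j}$ and invoke \eqref{eq:perbit_extract} term by term.

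The second step is linearity applied to $D_t = \alpha^* M_t + (1-\alpha^*)R_t$:
\[
f_{t,\mathbf{c}} := \mathbf{w}_{\mathbf{c}}^T D_t \mathbf{y} = \alpha^* \sum_j c_j b_{t,j} + \eta_{t,\mathbf{c}}, \qquad \eta_{t,\mathbf{c}} = (1-\alpha^*)\sum_j c_j (R_t)_{2j-1,2j} = \sum_j c_j\, \xi_{t,j}.
\]
Note that $\eta_{t,\mathbf{c}}$ is a function only of the decoy permutations and coefficients; it does not depend on $\mathbf{b}_t$.

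The third step is aggregation and recovery. The aggregator computes $F_{\mathbf{c}} = \sum_t f_{t,\mathbf{c}}$ from the stored $D_t$. The noise aggregator computes $H_{\mathbf{c}} = \sum_t \eta_{t,\mathbf{c}}$. The server then recovers
\[
\frac{F_{\mathbf{c}} - H_{\mathbf{c}}}{\alpha^*} = \sum_t \sum_j c_j b_{t,j},
\]
exactly as in \eqref{eq:perbit_recover}.

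The only nontrivial point is justifying that the noise aggregator can form $H_{\mathbf{c}}$ without new client interaction. I would handle this as Theorem~\ref{thm:perbit} does. Either the noise aggregator holds the decoy parameters, or it holds the per-bit noise vector $(\xi_{t,1},\ldots,\xi_{t,n})$. In both cases $H_{\mathbf{c}} = \sum_j c_j \sum_t \xi_{t,j} = \sum_j c_j H_j$. The per-bit aggregates $H_j$ are already available by Theorem~\ref{thm:perbit}, so the weighted case follows by linearity.

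In fact, the cleanest proof is the observation that both $F_{\mathbf{c}} = \sum_j c_j F_j$ and $H_{\mathbf{c}} = \sum_j c_j H_j$ hold, so the result is a linear combination of the $n$ per-bit results of Theorem~\ref{thm:perbit}. Exactness for real $c_j$ holds in exact arithmetic, and I would note that floating-point issues are the same as in the unweighted case.
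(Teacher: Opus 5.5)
Your overall structure (the pair $(\mathbf{w}_{\mathbf{c}},\mathbf{y})$, linearity, recovery) is the paper's. However, your noise formula and your reduction to Theorem~\ref{thm:perbit} rest on a false identity.

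You write $\mathbf{w}_{\mathbf{c}}^T A\,\mathbf{y} = \sum_j c_j\,\mathbf{e}_{2j-1}^T A\,\mathbf{e}_{2j}$ for a general $A$. This holds when $A_{2j-1,2l}=0$ for all $l\neq j$, e.g.\ for the block-diagonal $M_t$, but not for $R_t$ or $D_t$. The decoys $P_{t,i}$ are uniform over all of $S_{2n}$, so $(R_t\mathbf{y})_{2j-1} = \sum_{l=1}^n (R_t)_{2j-1,2l}$ picks up off-block entries.

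Consequently the noise that actually matches $f_{t,\mathbf{c}}$ is $\eta_{t,\mathbf{c}} = (1-\alpha^*)\,\mathbf{w}_{\mathbf{c}}^T R_t\mathbf{y} = \sum_j c_j\xi_{t,j}$, with the paper's $\xi_{t,j} = (1-\alpha^*)(R_t\mathbf{y})_{2j-1}$. This is not $(1-\alpha^*)\sum_j c_j (R_t)_{2j-1,2j}$. Likewise $\sum_t\xi_{t,j}\neq H_j$ and $F_{\mathbf{c}}\neq\sum_j c_j F_j$.

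Suppose, as you propose, the aggregator sends $F_{\mathbf{c}}=\sum_t\mathbf{w}_{\mathbf{c}}^T D_t\mathbf{y}$ while the noise aggregator sends $\sum_j c_j H_j$. Then the server's output is off by $\frac{1-\alpha^*}{\alpha^*}\sum_t\sum_j c_j\sum_{l\neq j}(R_t)_{2j-1,2l}$. The paper's worked example already shows this. For client~2 and $\mathbf{c}=\mathbf{1}$, $\mathbf{w}^T D_2\mathbf{y}=1.4$, whereas $(D_2)_{1,2}+(D_2)_{3,4}=0.6$. The matching noise terms are $\eta_2=0.8$ versus $0$, and mixing the two gives $1.4/0.3\approx 4.67$ instead of $s_2=2$.

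Either of your two routes works on its own; they just cannot be combined.

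\emph{Paper's route.} Keep $(\mathbf{w}_{\mathbf{c}},\mathbf{y})$ and take $H_{\mathbf{c}}=\sum_t(1-\alpha^*)\mathbf{w}_{\mathbf{c}}^T R_t\mathbf{y}$. The paper simply defines $H_{\mathbf{c}}$ this way. If you want to build it from precomputed per-bit quantities, those must be the $\sum_t\xi_{t,j}$, not the $H_j$ of Theorem~\ref{thm:perbit}.

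\emph{Per-bit route.} Never form $\mathbf{w}_{\mathbf{c}}^T D_t\mathbf{y}$, and output $\sum_j c_j(F_j-H_j)/\alpha^*$ directly from the per-bit results. This is a valid, genuinely different proof: it uses the functional $A\mapsto\sum_j c_j A_{2j-1,2j}$ rather than $A \mapsto \mathbf{w}_{\mathbf{c}}^T A\mathbf{y}$. It inherits whatever justification Theorem~\ref{thm:perbit} gives for the availability of $H_j$.
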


\begin{proof}
Define $\mathbf{w}_c \in \mathbb{R}^{2n}$ by $(\mathbf{w}_c)_{2j-1} = c_j$ and $(\mathbf{w}_c)_{2j} = 0$ for all $j \in [n]$, and let $\mathbf{y}$ be the standard extraction vector from Definition~\ref{def:selectors} with $y_{2j-1} = 0$, $y_{2j} = 1$. Then
\begin{align}
\mathbf{w}_c^T M_t\, \mathbf{y} &= \sum_{j=1}^n c_j (M_t)_{2j-1, 2j} \cdot 1 = \sum_{j=1}^n c_j b_{t,j}. \label{eq:weighted_extract}
\end{align}
Applying the bilinear form to $D_t$,
\begin{align}
\mathbf{w}_c^T D_t\, \mathbf{y} &= \alpha^* \sum_{j=1}^n c_j b_{t,j} + (1-\alpha^*)\, \mathbf{w}_c^T R_t\, \mathbf{y}. \label{eq:ftc}
\end{align}
Define $f_{t,c} = \mathbf{w}_c^T D_t\, \mathbf{y}$, $\eta_{t,c} = (1-\alpha^*)\, \mathbf{w}_c^T R_t\, \mathbf{y}$, $F_c = \sum_t f_{t,c}$, $H_c = \sum_t \eta_{t,c}$. The server recovers
\begin{align}
\frac{F_c - H_c}{\alpha^*} &= \sum_{t=1}^k \sum_{j=1}^n c_j b_{t,j}. \label{eq:weighted_recover}
\end{align}
Multiple weight vectors $\mathbf{c}_1, \ldots, \mathbf{c}_m$ can be applied to the same $D_t$ matrices, computing $m$ different weighted sums from a single protocol execution.
\end{proof}

This means a single execution of the full two-layer protocol (Algorithm~\ref{alg:twolayer}) can simultaneously compute the total count ($\mathbf{c} = \mathbf{1}$), any weighted count ($\mathbf{c}$ arbitrary), per-bit marginals ($\mathbf{c} = \mathbf{e}_j$ for each $j$), and any other linear functional of the client's bit vector. With additive secret sharing, each new linear functional requires the clients to compute and share a new value, incurring additional communication per statistic.

\subsection{Comparison with Additive Secret Sharing}
\label{sec:ss_compare}

The Birkhoff encoding's advantage lies in multi-statistic extraction. The aggregator in the full two-layer protocol (Algorithm~\ref{alg:twolayer}) sees $D_t$ and can compute $n$ per-bit marginals, arbitrary weighted sums from the same data, without further client interaction. In additive secret sharing, each statistic requires the clients to share a new value, which is impossible after the clients have gone offline. In applications where the set of statistics to be computed is not fully known at protocol execution time (e.g., exploratory data analysis, where the analyst decides which cross-tabulations to examine after receiving the data), the Birkhoff encoding provides a non-interactive capability that additive secret sharing cannot match.

The trade-off is therefore not privacy for accuracy but rather privacy strength for post-hoc analytical flexibility. For a fixed, predetermined set of statistics, additive secret sharing is preferred. For settings where the analyst needs to extract multiple or unanticipated statistics from a single data collection round, the Birkhoff encoding provides a structured alternative.

\section{Provable Aggregator Privacy via Anti-Concentration}
\label{sec:resolve}

We prove a finite-sample differential privacy guarantee for the aggregator with explicit constants. The result uses $(\varepsilon, \delta)$-DP rather than pure $(\varepsilon, 0)$-DP, which avoids the need for pointwise density bounds on the tails of $\nu$ where the Gaussian approximation is unreliable.

\subsection{Key Observation: $\ell_\infty$ Norm Does Not Grow with $n$}

When client $t$'s bitstream changes from $\mathbf{b}_t$ to $\mathbf{b}_t'$ (possibly in all $n$ bits), the permutation matrix changes from $M_t = M(\mathbf{b}_t)$ to $M_t' = M(\mathbf{b}_t')$. Since $M_t$ and $M_t'$ are block-diagonal with $n$ disjoint $2 \times 2$ blocks, the difference $M_t - M_t'$ has nonzero entries only in the blocks where $b_{t,j} \neq b'_{t,j}$. In each such block, the entries have magnitude at most 1. The blocks are \emph{disjoint}: block $j$ occupies rows $\{2j-1, 2j\}$ and columns $\{2j-1, 2j\}$, and different blocks share no rows or columns. Therefore
\begin{equation}
\label{eq:linf_key}
\|M_t - M_t'\|_\infty \;=\; \max_{a,b} |(M_t - M_t')_{ab}| \;=\; 1,
\end{equation}
regardless of how many bits change (from 1 to $n$). This is because the $\ell_\infty$ norm takes a maximum, not a sum, over entries.

The aggregator's view shifts from $D_t = \alpha^* M_t + (1-\alpha^*) R_t$ to $D_t' = \alpha^* M_t' + (1-\alpha^*) R_t$ (same $R_t$). In the residual space, the shift is
\begin{equation}
\label{eq:delta_general}
\delta = \frac{D_t - D_t'}{1-\alpha^*} = \frac{\alpha^*}{1-\alpha^*}(M_t - M_t'), \qquad \|\delta\|_\infty = \frac{\alpha^*}{1-\alpha^*}.
\end{equation}
This is the \emph{same} for changing 1 bit or all $n$ bits. Consequently, a log-Lipschitz bound on $\nu$ in $\|\cdot\|_\infty$ gives the same DP parameter for the full $n$-bit sensitivity as for a single-bit change, with \emph{no composition needed}.

\subsection{$(\varepsilon, \delta)$-DP via a High-Probability Region}

\begin{definition}[$(\varepsilon, \delta)$-DP]
A mechanism $\mathcal{M}$ is $(\varepsilon, \delta)$-differentially private if for all neighboring inputs $M, M'$ and all measurable sets $U$
\[
\Pr[\mathcal{M}(M) \in U] \;\leq\; e^\varepsilon \Pr[\mathcal{M}(M') \in U] + \delta.
\]
\end{definition}

\begin{lemma}[DP from truncated density ratio]
\label{lem:truncated_dp}
Let $p(\cdot \mid M)$ and $p(\cdot \mid M')$ be two densities. Suppose there exists a measurable set $G$ (the ``good set'') such that (i) for all $D \in G$: $\left|\log \frac{p(D \mid M)}{p(D \mid M')}\right| \leq \varepsilon$, and (ii) $\Pr[D \notin G \mid M] \leq \delta$ and $\Pr[D \notin G \mid M'] \leq \delta$.
Then the mechanism is $(\varepsilon, \delta)$-DP.
\end{lemma}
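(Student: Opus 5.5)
The plan is to verify the defining inequality of $(\varepsilon,\delta)$-DP directly by splitting an arbitrary measurable event along the good set $G$. Fix neighboring inputs $M, M'$ and a measurable set $U$, and write
\[
U = (U \cap G) \,\cup\, (U \setminus G),
\]
a disjoint union. Then
\[
\Pr[D \in U \mid M] = \int_{U \cap G} p(D \mid M)\, dD \;+\; \Pr[D \in U \setminus G \mid M].
\]
The idea is that hypothesis (i) controls the first term and hypothesis (ii) controls the second.

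For the first term, hypothesis (i) gives the pointwise bound $p(D \mid M) \le e^{\varepsilon}\, p(D \mid M')$ for every $D \in G$. This uses only the upper half of the two-sided log-ratio bound. Integrating it over $U \cap G$ yields
\[
\int_{U \cap G} p(D \mid M)\, dD \;\le\; e^{\varepsilon} \int_{U \cap G} p(D \mid M')\, dD \;\le\; e^{\varepsilon} \Pr[D \in U \mid M'],
\]
where the last step enlarges the domain of integration from $U \cap G$ to $U$ and uses nonnegativity of densities. For the second term, monotonicity of probability together with hypothesis (ii) gives
\[
\Pr[D \in U \setminus G \mid M] \le \Pr[D \notin G \mid M] \le \delta.
\]
Adding the two bounds gives $\Pr[D \in U \mid M] \le e^{\varepsilon}\Pr[D \in U \mid M'] + \delta$.

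The neighboring relation is symmetric, so the reverse inequality with $M$ and $M'$ exchanged is also needed. Here the symmetry of the hypotheses is used. The absolute value in (i) also gives $p(D \mid M') \le e^{\varepsilon} p(D \mid M)$ on $G$. The second clause of (ii) bounds $\Pr[D \notin G \mid M']$. The same argument verbatim then gives the swapped inequality, and the definition of $(\varepsilon,\delta)$-DP is satisfied.

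There is no substantive obstacle, since this is the standard ``pointwise bound on a high-probability set'' argument. The only point needing care is that both densities are taken with respect to the same dominating measure, here Lebesgue measure on the affine hull of $\mathcal{B}_{2n}$ as in Proposition~\ref{prop:full_posterior}. This ensures the log-ratio in (i) is meaningful and the integrals above are well defined. A secondary point is that the tail term must be bounded under the distribution of the left-hand side. That is exactly why (ii) is stated for both $M$ and $M'$, rather than for one of them only.
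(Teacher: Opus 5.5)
Your proposal is correct and follows the paper's proof essentially verbatim. You split $U$ along $G$, bound the $U\cap G$ part by integrating the pointwise inequality $p(D\mid M)\le e^{\varepsilon}p(D\mid M')$, and bound the $U\setminus G$ part by $\delta$ via (ii); your explicit treatment of the swapped direction, using the other half of the absolute value in (i) and the $M'$ clause of (ii), is a small addition the paper leaves implicit.
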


\begin{proof}
For any measurable $U$
\begin{align*}
\Pr[D \in U \mid M] &= \Pr[D \in U \cap G \mid M] + \Pr[D \in U \cap G^c \mid M] \\
&\leq \Pr[D \in U \cap G \mid M] + \Pr[D \in G^c \mid M] \\
&\leq \Pr[D \in U \cap G \mid M] + \delta.
\end{align*}
For $D \in G$, condition (i) gives $p(D \mid M) \leq e^\varepsilon p(D \mid M')$, so
\begin{align*}
\Pr[D \in U \cap G \mid M] &= \int_{U \cap G} p(D \mid M)\, dD \leq e^\varepsilon \int_{U \cap G} p(D \mid M')\, dD \leq e^\varepsilon \Pr[D \in U \mid M'].
\end{align*}
Combining, $\Pr[D \in U \mid M] \leq e^\varepsilon \Pr[D \in U \mid M'] + \delta$.
\end{proof}

\subsection{Concentration of $R_t$: Finite-Sample Bound}

We bound the probability that $R_t$ deviates from its mean $\mu\mathbf{J} = (1/(2n))\mathbf{J}$ using Hoeffding's inequality, which requires no asymptotic approximation.

\begin{lemma}[Concentration of entries of $R_t$]
\label{lem:hoeffding}
For $R_t = \frac{1}{K_t}\sum_{i=1}^{K_t}(P_{t,i})_{ab}$ with uniform weights $\alpha_{t,i} = (1-\alpha^*)/K_t$ and $P_{t,i} \sim S_{2n}$ independent, each entry satisfies for any $r > 0$
\begin{equation}
\label{eq:hoeffding_entry}
\Pr\!\left[\left|(R_t)_{ab} - \frac{1}{2n}\right| > r\right] \;\leq\; 2\exp\!\left(-2K_t r^2\right).
\end{equation}
\end{lemma}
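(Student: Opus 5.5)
The plan is to apply Hoeffding's inequality directly to the entry $(R_t)_{ab}$. With uniform weights $\alpha_{t,i} = (1-\alpha^*)/K_t$, the normalized decoy component is
\[
R_t = \frac{1}{1-\alpha^*}\sum_{i=1}^{K_t} \frac{1-\alpha^*}{K_t} P_{t,i} = \frac{1}{K_t}\sum_{i=1}^{K_t} P_{t,i}.
\]
This agrees with the expression in the statement. Each fixed entry $(R_t)_{ab}$ is therefore the empirical mean of the $K_t$ random variables $X_i := (P_{t,i})_{ab}$, for $i = 1, \ldots, K_t$.

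First I will record the three properties Hoeffding needs. Each $X_i \in \{0,1\}$, so $X_i$ is bounded in an interval of length $1$. The $X_i$ are mutually independent, because the permutations $P_{t,1}, \ldots, P_{t,K_t}$ are drawn independently (Algorithm~\ref{alg:twolayer}, or Definition~\ref{def:random_ds} with $\sigma_i$ independent). The mean is $\mathbb{E}[X_i] = \Pr[\sigma_i(a) = b] = 1/(2n)$, which was computed in the proof of Proposition~\ref{prop:full_posterior}. So each $X_i$ is an i.i.d.\ Bernoulli$(1/(2n))$ random variable.

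Next I will invoke the two-sided Hoeffding inequality for the mean of $K$ independent variables with $X_i \in [0,1]$:
\[
\Pr\!\left[\left|\frac{1}{K}\sum_i X_i - \mathbb{E}\right| > r\right] \le 2\exp\!\left(-\frac{2K^2 r^2}{\sum_i (1-0)^2}\right) = 2\exp(-2K r^2).
\]
Substituting $K = K_t$ gives \eqref{eq:hoeffding_entry} for every $r>0$. The bound is trivially true when $r \ge 1$, since $(R_t)_{ab} \in [0,1]$ and the left-hand side is then zero.

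There is no real obstacle here. The one point to handle carefully is that independence is needed only across the index $i$ for a fixed entry $(a,b)$. Entries within a single permutation matrix are dependent, but that does not matter for a per-entry bound. If a union bound over all $(2n)^2$ entries is wanted later, for the good set of Lemma~\ref{lem:truncated_dp}, it follows immediately and needs no independence across entries: the probability that some entry deviates by more than $r$ is at most $8n^2\exp(-2K_t r^2)$.

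I will also note that the uniform-weight assumption is essential for the constant. With general weights, Hoeffding would instead give $2\exp\bigl(-2r^2(1-\alpha^*)^2/\sum_i \alpha_{t,i}^2\bigr)$, and this reduces to the stated bound exactly when the weights are uniform.
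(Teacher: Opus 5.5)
Your proposal is correct and follows the paper's proof essentially verbatim. Both arguments treat $(R_t)_{ab}$ as the average of $K_t$ independent Bernoulli$(1/(2n))$ indicators bounded in $[0,1]$ and apply two-sided Hoeffding to obtain $2\exp(-2K_t r^2)$. Your extra remarks are accurate but not needed for the statement: that independence is only required across $i$ for a fixed entry, the union-bound corollary, and the general-weight form $2\exp\bigl(-2r^2(1-\alpha^*)^2/\sum_i \alpha_{t,i}^2\bigr)$.
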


\begin{proof}
Each $(P_{t,i})_{ab} \in \{0, 1\}$ is a bounded random variable with $\mathbb{E}[(P_{t,i})_{ab}] = 1/(2n)$ (since $\Pr[\sigma_i(a) = b] = (2n-1)!/(2n)! = 1/(2n)$). The random variables $(P_{t,1})_{ab}, \ldots, (P_{t,K_t})_{ab}$ are independent (the permutations are drawn independently). The entry $(R_t)_{ab} = \frac{1}{K_t}\sum_{i=1}^{K_t}(P_{t,i})_{ab}$ is the average of $K_t$ independent $[0,1]$-bounded random variables. By Hoeffding's inequality (Hoeffding, 1963), for any $r > 0$
\[
\Pr\!\left[\left|\frac{1}{K_t}\sum_{i=1}^{K_t}(P_{t,i})_{ab} - \frac{1}{2n}\right| > r\right] \leq 2\exp\!\left(-\frac{2K_t r^2}{(1-0)^2}\right) = 2\exp(-2K_t r^2).
\]
(We use the form of Hoeffding's inequality for bounded random variables $X_i \in [a_i, b_i]$ with the bound $2\exp(-2K_t^2 r^2 / \sum_i (b_i - a_i)^2)$. Here $a_i = 0$, $b_i = 1$, so $\sum (b_i - a_i)^2 = K_t$, and $K_t^2/K_t = K_t$.)
\end{proof}

\begin{proposition}[High-probability region]
\label{prop:good_set}
Define the set
\[
\mathcal{G}_r \;=\; \left\{R \in \mathcal{B}_{2n} : \max_{a,b}\left|R_{ab} - \frac{1}{2n}\right| \leq r\right\}.
\]
For $r = \sqrt{\frac{\ln(8n^2/\delta)}{2K_t}}$, we have $\Pr[R_t \notin \mathcal{G}_r] \leq \delta/2$.
\end{proposition}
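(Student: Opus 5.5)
The plan is a union bound over the entries of $R_t$, with each entry's tail controlled by Lemma~\ref{lem:hoeffding}. The event $R_t\notin\mathcal{G}_r$ is the union over $(a,b)\in[2n]^2$ of $E_{ab}=\{|(R_t)_{ab}-\tfrac{1}{2n}|>r\}$. Since $R_t\in\mathcal{B}_{2n}$ always, no other constraint can fail. With the stated radius, $2K_tr^2=\ln(8n^2/\delta)$, so Lemma~\ref{lem:hoeffding} gives $\Pr[E_{ab}]\le 2\exp(-2K_tr^2)=\delta/(4n^2)$. Summing this crude bound over all $4n^2$ entries only yields $\delta$. So the real content is recovering a factor of $2$ relative to the naive two-sided Hoeffding union bound, and I would organise the proof around where that factor comes from.

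\textbf{Case $n=1$.} I would use the structure of $\mathcal{B}_2$. A $2\times 2$ doubly stochastic matrix is determined by its $(1,1)$ entry $x$, with the remaining entries $1-x, 1-x, x$. Hence every entry has deviation of absolute value exactly $|x-\tfrac12|$, and $R_t\notin\mathcal{G}_r$ coincides with the single event $E_{11}$. Lemma~\ref{lem:hoeffding} then gives $\Pr\le 2\exp(-2K_tr^2)=2\delta/8=\delta/4\le\delta/2$.

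\textbf{Case $n\ge 2$.} I would split each $E_{ab}$ into an upper tail and a lower tail and bound them separately, exploiting that $(R_t)_{ab}$ is an average of $K_t$ i.i.d.\ Bernoulli$(p)$ variables with $p=1/(2n)\le 1/4$, far from $1/2$.
\begin{itemize}[nosep]
\item[(a)] \emph{Lower tail.} If $r\ge p$, the lower tail is empty because entries are nonnegative. Otherwise the multiplicative Chernoff bound gives $\exp(-K_tr^2/(2p))=\exp(-nK_tr^2)=(\delta/8n^2)^{n/2}$. For $n\ge 2$ this is at most $\delta/(8n^2)$ times $(\delta/8n^2)^{n/2-1}$, which is $\le\delta/(8n^2)$.
\item[(b)] \emph{Upper tail.} I would use the one-sided Hoeffding bound $\exp(-2K_tr^2)=\delta/(8n^2)$.
\end{itemize}
Together these give $\Pr[E_{ab}]\le \delta/(8n^2)+\text{(lower tail)}$, and it remains to show the total over the $4n^2$ entries is at most $\delta/2$.

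\textbf{Main obstacle.} The difficulty is that step (a) is not identically zero when $r<p$, so (a) and (b) together still slightly exceed $\delta/(8n^2)$ per entry. Two routes could absorb the excess.
\begin{itemize}[nosep]
\item[(i)] \emph{Sharper upper tail.} Replace (b) by the Chernoff--KL bound $\exp(-K_t\,\mathrm{KL}(p+r\,\|\,p))$. For $p\le 1/4$ one has the refined Pinsker inequality $\mathrm{KL}(p+r\|p)\ge r^2/(2\sigma_p^2)$, where $\sigma_p^2=(1-2p)/(2\ln((1-p)/p))<1/4$ is the Kearns--Saul sub-Gaussian constant. This makes the upper tail $(\delta/8n^2)^{1/(4\sigma_p^2)}$, strictly smaller than $\delta/(8n^2)$ by a power. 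The aim would be to show that upper tail plus lower tail is at most $\delta/(8n^2)$, using $\delta\le 1$ and hence $\delta/(8n^2)\le 1/32$.
\item[(ii)] \emph{Structural reduction.} Exploit the doubly stochastic constraints, as in the $n=1$ case, to reduce the number of entries needing a two-sided bound.
\end{itemize}
I would try route (i) first. I expect the numerical verification to be tight for $n=2,3$; for example $\sigma_{1/4}^2\approx 0.228$. If it fails there, I would check those finitely many small cases directly with the exact binomial tails.
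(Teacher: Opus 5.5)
The factor of $2$ you identified is real, and the paper's own proof does not recover it either. The paper runs exactly your opening argument ($4n^2$ entries, two-sided Hoeffding, total $8n^2e^{-2K_tr^2}$). It reaches $\delta/2$ only by switching to $r=\sqrt{\ln(16n^2/\delta)/(2K_t)}$, which is also the radius used in Theorem~\ref{thm:finite_dp}. So the $8n^2$ in the statement is a slip, and for the intended radius your first paragraph is already the complete proof. You chose to prove the literal statement instead, and there your plan has a genuine gap at $n=2$. There $p=\tfrac14$, and your lower-tail bound (a) equals $(\delta/8n^2)^{n/2}=\delta/32=x$, which is the entire per-entry budget $x=e^{-2K_tr^2}$. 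Route (i) sharpens only the upper tail (to $x^{\ln 3}$), so the certified total is $x^{\ln 3}+x>x$ whenever the lower tail is nonempty. That happens for every $\delta$ and every $K_t>8\ln(32/\delta)$. This is not a borderline case that numerics will settle. Your fallback of checking ``finitely many small cases'' with exact binomial tails is also not available, because fixing $n=2$ still leaves infinitely many $(K_t,\delta)$. Even applying Kearns--Saul to both tails gives only $2x^{\ln 3}$, which is $\le x$ only for $\delta\lesssim 0.03$.

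What is missing is a variance-sensitive lower-tail bound, combined with a case split on $r$.
\begin{itemize}
\item \emph{Lower tail.} For $0\le q<p\le\tfrac12$, write $\mathrm{KL}(q\|p)=\int_q^p\frac{s-q}{s(1-s)}\,ds$ and use $s(1-s)\le p(1-p)$. This gives $\mathrm{KL}(q\|p)\ge (p-q)^2/(2p(1-p))$, so at $n=2$ the lower tail is at most $x^{4/3}$.
\item \emph{Upper tail when $r<\tfrac14$.} The lower tail is nonempty only in this range. Here $\mathrm{KL}(\tfrac14+r\|\tfrac14)=r^2\int_0^1\frac{1-u}{(\frac14+ru)(\frac34-ru)}\,du$, and the integrand decreases in $r$ while $\tfrac14+ru\le\tfrac12$. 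Hence $\mathrm{KL}(\tfrac14+r\|\tfrac14)\ge 8\ln(4/3)\,r^2$, and the upper tail is at most $x^{1.15}$. With $x\le\tfrac1{32}$ the per-entry total is $x^{1.15}+x^{4/3}\le 0.91\,x$.
\item \emph{Case $r\ge\tfrac14$.} The lower tail is empty, and one-sided Hoeffding alone gives $x$.
\end{itemize}
With this patch for $n=2$ the literal statement follows. Your $n=1$ argument is correct as written. Route (i) does close for all $n\ge 3$: at $n=3$ the per-entry ratio is $x^{0.207}+x^{1/2}\le 0.53$ for $x\le\tfrac1{72}$, and it only improves as $n$ grows.
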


\begin{proof}
By a union bound over all $(2n)^2 = 4n^2$ entries
\begin{align*}
\Pr\!\left[\max_{a,b}\left|(R_t)_{ab} - \frac{1}{2n}\right| > r\right]
&\leq \sum_{a=1}^{2n}\sum_{b=1}^{2n} \Pr\!\left[\left|(R_t)_{ab} - \frac{1}{2n}\right| > r\right] \\
&\leq 4n^2 \cdot 2\exp(-2K_t r^2) \\
&= 8n^2 \exp(-2K_t r^2).
\end{align*}
Setting $8n^2 \exp(-2K_t r^2) \leq \delta/2$ and solving
\[
2K_t r^2 \geq \ln\!\left(\frac{16n^2}{\delta}\right), \qquad r \geq \sqrt{\frac{\ln(16n^2/\delta)}{2K_t}}.
\]
Taking $r = \sqrt{\ln(16n^2/\delta)/(2K_t)}$ gives $\Pr[R_t \notin \mathcal{G}_r] \leq \delta/2$. (We use $16n^2/\delta$ instead of $8n^2/(\delta/2) = 16n^2/\delta$ to account for the $\delta/2$ target.)
\end{proof}

\subsection{Restricted Log-Lipschitz Constant on $\mathcal{G}_r$}

On $\mathcal{G}_r$, entries of $R$ satisfy $R_{ab} \in [1/(2n) - r,\; 1/(2n) + r]$. This lets us tighten the log-Lipschitz bound dramatically.

\begin{lemma}[Log-Lipschitz constant of $\nu_G$ restricted to $\mathcal{G}_r$]
\label{lem:L_restricted}
For $R, R' \in \mathcal{G}_r$, the Gaussian density $\nu_G(R) = C_G \exp(-\|R - \frac{1}{2n}\mathbf{J}\|_F^2/(2\sigma_K^2))$ satisfies
\begin{equation}
\label{eq:L_restricted}
\left|\log\nu_G(R) - \log\nu_G(R')\right| \;\leq\; \frac{8n^2 r}{\sigma_K^2}\,\|R - R'\|_\infty =: L_r\,\|R - R'\|_\infty.
\end{equation}
\end{lemma}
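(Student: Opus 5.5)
The plan is to work directly with the quadratic form in the exponent, since the normalizing constant $C_G$ cancels in the log-ratio. Write $\mu = \frac{1}{2n}$ and set $X = R - \mu\mathbf{J}$ and $Y = R' - \mu\mathbf{J}$. Then
\[
\log\nu_G(R) - \log\nu_G(R') = -\frac{1}{2\sigma_K^2}\left(\|X\|_F^2 - \|Y\|_F^2\right),
\]
so the whole task is to bound $\bigl|\|X\|_F^2 - \|Y\|_F^2\bigr|$ by a multiple of $\|R - R'\|_\infty$.

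For the key step I would use the difference-of-squares identity entrywise:
\[
\|X\|_F^2 - \|Y\|_F^2 = \sum_{a,b} (X_{ab} - Y_{ab})(X_{ab} + Y_{ab}),
\]
and note that $X_{ab} - Y_{ab} = R_{ab} - R'_{ab}$. Since $R, R' \in \mathcal{G}_r$, we have $|X_{ab}| \le r$ and $|Y_{ab}| \le r$, so $|X_{ab} + Y_{ab}| \le 2r$. Each term is therefore at most $2r\,\|R - R'\|_\infty$. Summing over the $(2n)^2 = 4n^2$ entries gives
\[
\bigl|\|X\|_F^2 - \|Y\|_F^2\bigr| \le 8n^2 r\,\|R - R'\|_\infty.
\]

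Dividing by $2\sigma_K^2$ yields the bound $\frac{4n^2 r}{\sigma_K^2}\,\|R - R'\|_\infty$. This is a factor of two better than the constant $L_r = 8n^2 r/\sigma_K^2$ in the statement, so the claimed inequality follows a fortiori. An alternative route, via the mean value theorem applied to the gradient $-X/\sigma_K^2$ paired with an $\ell_1$–$\ell_\infty$ duality bound, lands on the same or a looser constant.

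The only point needing care is interpretive rather than technical. The statement concerns the Gaussian surrogate $\nu_G$ and not the true density $\nu$, so no anti-concentration or permanent structure enters here. One should also confirm that $\sigma_K$ is the per-entry scale used in defining $\nu_G$, and that the bound is required only for pairs $R, R'$ that both lie in $\mathcal{G}_r$. In the later DP application, the shifted residual $R + \delta$ must also lie in $\mathcal{G}_r$, or in a slightly enlarged $\mathcal{G}_{r + \|\delta\|_\infty}$. I expect that to be the real obstacle downstream, not in this lemma, which is a routine computation.
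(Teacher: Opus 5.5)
Your proposal is correct and follows the paper's proof essentially verbatim: the paper likewise cancels $C_G$, applies the entrywise difference of squares $\sum_{a,b}(R_{ab}-R'_{ab})(R_{ab}+R'_{ab}-\frac{1}{n})$, bounds the second factor by $2r$ on $\mathcal{G}_r$, and sums over the $4n^2$ entries. The paper's computation also lands on $4n^2 r/\sigma_K^2$, so the factor-of-two gap you noticed is an inconsistency in how the statement defines $L_r$, not a flaw in your derivation; the stated inequality holds a fortiori, as you say.
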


\begin{proof}
From the expansion $\log\nu_G(R) - \log\nu_G(R') = -\frac{1}{2\sigma_K^2}\sum_{a,b}(R_{ab}-R'_{ab})(R_{ab}+R'_{ab}-\frac{1}{n})$
\begin{align*}
\log\nu_G(R) - \log\nu_G(R') &= -\frac{1}{2\sigma_K^2}\sum_{a,b}(R_{ab} - R'_{ab})\!\left(R_{ab} + R'_{ab} - \frac{1}{n}\right).
\end{align*}
On $\mathcal{G}_r$: $R_{ab} \in [1/(2n) - r,\; 1/(2n) + r]$ and $R'_{ab} \in [1/(2n) - r,\; 1/(2n) + r]$. Therefore
\begin{align}
R_{ab} + R'_{ab} &\in \left[\frac{1}{n} - 2r,\;\; \frac{1}{n} + 2r\right], \notag\\
R_{ab} + R'_{ab} - \frac{1}{n} &\in [-2r,\;\; 2r], \notag\\
\left|R_{ab} + R'_{ab} - \frac{1}{n}\right| &\leq 2r. \label{eq:sum_tight}
\end{align}
(Compare with the unrestricted bound $\leq 2$ that holds on all of $\mathcal{B}_{2n}$.) Substituting into the triangle inequality, with $4n^2$ entries each contributing at most $\|R - R'\|_\infty \cdot 2r$ and prefactor $1/(2\sigma_K^2)$
\begin{align}
\left|\log\nu_G(R) - \log\nu_G(R')\right|
&\leq \frac{1}{2\sigma_K^2} \cdot 4n^2 \cdot \|R - R'\|_\infty \cdot 2r = \frac{4n^2 r}{\sigma_K^2}\,\|R-R'\|_\infty.  \label{eq:Lr_derived}
\end{align}
Therefore $L_r = 4n^2 r/\sigma_K^2$.
\end{proof}

\subsection{Finite-Sample CLT Error via Berry--Esseen}

To transfer the Gaussian log-Lipschitz bound to the true density $\nu$, we need a finite-sample bound on $|\log(\nu(R)/\nu_G(R))|$ for $R \in \mathcal{G}_r$.

\begin{lemma}[Density approximation error]
\label{lem:berry_esseen}
For $K_t \geq (2n-1)^2 + 1$ and $R$ in the bulk region $\mathcal{G}_r$, the true density $\nu$ and the Gaussian approximation $\nu_G$ satisfy
\begin{equation}
\label{eq:beta_finite}
\left|\log\frac{\nu(R)}{\nu_G(R)}\right| \;\leq\; \beta \;:=\; \frac{C_{\mathrm{BE}}\,\rho}{\sigma^3\sqrt{K_t}},
\end{equation}
where $C_{\mathrm{BE}} \leq 0.5$ is the Berry--Esseen constant (Shevtsova, 2011), $\rho = \mathbb{E}[|(P_{t,i})_{ab} - 1/(2n)|^3]$ is the third absolute central moment, and $\sigma^2 = \mathrm{Var}[(P_{t,i})_{ab}] = \frac{1}{2n}(1 - \frac{1}{2n})$.
\end{lemma}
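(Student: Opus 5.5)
The plan is to reduce the matrix-level density comparison to a one-dimensional Berry--Esseen statement for each coordinate of $R_t$, and then convert the resulting CDF error into a relative (logarithmic) density error on the bulk region $\mathcal{G}_r$.

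\textbf{Step 1: Standardize each coordinate.} Following Lemma~\ref{lem:hoeffding}, write each entry as $(R_t)_{ab} = \frac{1}{K_t}\sum_{i=1}^{K_t}(P_{t,i})_{ab}$. This is an average of $K_t$ i.i.d.\ Bernoulli$(1/(2n))$ variables with variance $\sigma^2=\frac{1}{2n}(1-\frac{1}{2n})$ and third absolute central moment $\rho$. Set $Z_{ab}=\sqrt{K_t}\,((R_t)_{ab}-\frac{1}{2n})/\sigma$.

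\textbf{Step 2: Apply Berry--Esseen.} The Berry--Esseen theorem with Shevtsova's constant gives
\[
\sup_x |\Pr[Z_{ab}\le x]-\Phi(x)| \le \frac{C_{\mathrm{BE}}\rho}{\sigma^3\sqrt{K_t}},
\]
with $C_{\mathrm{BE}}\le 0.5$. This is exactly the quantity $\beta$ in the statement. The Gaussian comparison density $\nu_G$ from Lemma~\ref{lem:L_restricted} is the product over free coordinates of these normal laws, with $\sigma_K^2=\sigma^2/K_t$. So the content of the lemma is that the per-coordinate Berry--Esseen discrepancy bounds the log-ratio $\log(\nu/\nu_G)$ at points of $\mathcal{G}_r$.

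\textbf{Step 3: Make $\nu$ a genuine density.} I would use the hypothesis $K_t\ge(2n-1)^2+1$ here. It guarantees that the coefficient polytope $\mathcal{P}(\sigma_1,\ldots,\sigma_{K_t};R)$ from~\eqref{eq:polytope} has positive dimension $d_0=K_t-1-(2n-1)^2\ge 0$, so the representation~\eqref{eq:nu_as_sum_of_volumes} defines a bona fide density on the $(2n-1)^2$-dimensional affine hull of $\mathcal{B}_{2n}$. The comparison with $\nu_G$ is then meaningful in that space, with uniform weights understood as the centre of the coefficient law.

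\textbf{Step 4: Pass from CDFs to a log-density bound.} Express the probability of a small $\|\cdot\|_\infty$-cell around $R\in\mathcal{G}_r$ as a difference of CDF values in each coordinate. Berry--Esseen gives an additive error of order $\beta$ in this probability. On $\mathcal{G}_r$ every standardized coordinate satisfies $|x|\le r\sqrt{K_t}/\sigma$, so the Gaussian mass of the cell is bounded below. The additive error then becomes a relative error, and $|\log(1+u)|\le |u|/(1-|u|)$ yields $|\log(\nu/\nu_G)|\le\beta$ up to the stated constant.

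\textbf{Main obstacle.} Step~4 is the hard part, and the convention it requires should be stated explicitly: Berry--Esseen controls CDFs, not pointwise densities. The passage to densities needs the cell-averaging argument (equivalently, interpreting $\nu$ at resolution comparable to the $\ell_\infty$ shift in~\eqref{eq:delta_general}). It also needs the relative-error conversion, which is valid only in the bulk where the Gaussian mass is not too small. This is why the lemma is restricted to $\mathcal{G}_r$, and the tails are absorbed into $\delta$ through Lemma~\ref{lem:truncated_dp} and Proposition~\ref{prop:good_set}.
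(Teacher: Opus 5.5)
\textbf{There is a genuine gap in your Step~4, and the cell-averaging device cannot close it.} Berry--Esseen controls only the Kolmogorov distance. It gives an additive error of at most $2\beta$ in the probability of an interval, and that error does not shrink with the set.

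If a cell $C$ around $R$ has Gaussian mass $m=\Pr_{\nu_G}[C]$, the best you can conclude is $\bigl|\Pr_\nu[C]/m-1\bigr|\le 2\beta/m$. This is never below $2\beta$. For a cell small enough to say anything about the density at $R$, it is astronomically larger than $\beta$, since $m\le \mathrm{vol}(C)\sup\nu_G$ in dimension $(2n-1)^2$. At the resolution $\|\delta\|_\infty=\alpha^*/(1-\alpha^*)$ of~\eqref{eq:delta_general}, which you propose to use, $m$ is negligible. So the claim that ``the Gaussian mass of the cell is bounded below'' on $\mathcal{G}_r$ is true only with a bound proportional to the cell volume, and the additive-to-relative conversion does not yield $\beta$.

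There are two further problems upstream.
\begin{enumerate}
\item The probability of a box in the $(2n-1)^2$-dimensional affine hull is an inclusion--exclusion over the \emph{joint} CDF. The entries of $R_t$ are dependent (they share permutations and satisfy row/column constraints), so the per-coordinate Berry--Esseen bounds of Step~2 say nothing about it.
\item With the uniform weights you adopt in Step~1, $R_t$ takes finitely many values and $\nu$ has no Lebesgue density at all. Your Step~3 appeal to the coefficient polytope needs a continuous coefficient law. Under that law the entries are no longer averages of i.i.d.\ Bernoulli$(1/(2n))$ variables with the stated $\rho,\sigma$, so Step~1 and Step~3 are not consistent with each other.
\end{enumerate}

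\textbf{The missing ingredient is a density-level (local) limit theorem, not a CDF-level one.} The paper's proof first computes
\[
\rho/\sigma^3=(1-2p+2p^2)/\sqrt{p(1-p)}\le 2\sqrt n \quad\text{for } p=1/(2n).
\]
It then invokes the multivariate local CLT of Bhattacharya and Ranga Rao (Theorem~19.2), which bounds the relative density error directly in the bulk:
\[
\bigl|f_K(x)/\phi(x)-1\bigr|\le C'\rho\,\sigma^{-3}K^{-1/2}\bigl(1+\|x\|^3/K^{3/2}\bigr).
\]
On $\mathcal{G}_r$ the polynomial factor is treated as bounded. Taking logarithms, valid once the right side is below $1/2$, gives~\eqref{eq:beta_finite}. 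The hypothesis $K_t\ge(2n-1)^2+1$ is used there only to evaluate $\beta\le 1/\sqrt n$, not to make $\nu$ a density.

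Local theorems of this kind need a smoothness (non-lattice) condition on the summands, which is what your Step~3 gestures at but does not supply. To salvage your outline, replace Steps~2 and~4 by a local limit theorem or Edgeworth density expansion for the continuous-coefficient sum. Even then the constant is an unspecified absolute $C'$, not Shevtsova's $C_{\mathrm{BE}}$. The paper's proof also simply substitutes $0.5$ for it, so the exact constant in~\eqref{eq:beta_finite} is delivered only up to that identification.
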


\begin{proof}
We compute $\rho$ and $\sigma^3$ explicitly for Bernoulli$(p)$ with $p = 1/(2n)$.

\textit{Third absolute central moment.}
$(P_{t,i})_{ab} - p$ takes value $1-p$ with probability $p$ and $-p$ with probability $1-p$
\begin{align}
\rho &= p(1-p)^3 + (1-p)p^3 = p(1-p)\!\left[(1-p)^2 + p^2\right] = p(1-p)(1 - 2p + 2p^2). \label{eq:rho_exact}
\end{align}
For $p = 1/(2n)$
\begin{align}
\rho &= \frac{1}{2n}\!\left(1 - \frac{1}{2n}\right)\!\left(1 - \frac{1}{n} + \frac{1}{2n^2}\right). \label{eq:rho_substituted}
\end{align}

\textit{$\sigma^3$.}
$\sigma^2 = p(1-p) = \frac{1}{2n}(1 - \frac{1}{2n})$, so
\begin{align}
\sigma^3 &= \left(\frac{1}{2n}\!\left(1 - \frac{1}{2n}\right)\right)^{\!3/2}. \label{eq:sigma3}
\end{align}

\textit{The ratio $\rho/\sigma^3$.}
\begin{align}
\frac{\rho}{\sigma^3} &= \frac{p(1-p)(1-2p+2p^2)}{(p(1-p))^{3/2}} = \frac{1-2p+2p^2}{(p(1-p))^{1/2}} = \frac{1-2p+2p^2}{\sqrt{p(1-p)}}. \label{eq:rho_over_sigma3}
\end{align}
For $p = 1/(2n)$ with $n \geq 2$: $1 - 2p + 2p^2 = 1 - 1/n + 1/(2n^2) \leq 1$ and $\sqrt{p(1-p)} \geq \sqrt{1/(2n) \cdot 1/2} = 1/(2\sqrt{n})$ (using $1-p \geq 1/2$ for $n \geq 1$). Therefore
\begin{align}
\frac{\rho}{\sigma^3} &\leq \frac{1}{1/(2\sqrt{n})} = 2\sqrt{n}. \label{eq:ratio_bound}
\end{align}

\textit{The Berry--Esseen bound.}
The univariate Berry--Esseen theorem states that for i.i.d.\ random variables with mean $\mu$, variance $\sigma^2$, and third absolute central moment $\rho$, the CDF of the normalized sum $S_K = (\bar{X} - \mu)/(\sigma/\sqrt{K})$ satisfies
\[
\sup_x |F_{S_K}(x) - \Phi(x)| \leq \frac{C_{\mathrm{BE}}\,\rho}{\sigma^3\sqrt{K}},
\]
where $\Phi$ is the standard normal CDF and $C_{\mathrm{BE}} \leq 0.4748$ (Shevtsova, 2011; we use $C_{\mathrm{BE}} \leq 0.5$ for a clean bound).

The multivariate local CLT (Bhattacharya and Ranga Rao, 1976, Theorem~19.2) extends this to density approximation: for the density $f_K$ of the normalized sum, in the bulk region where the Gaussian density $\phi$ is bounded away from zero
\[
\left|\frac{f_K(x)}{\phi(x)} - 1\right| \leq \frac{C'\rho}{\sigma^3\sqrt{K}}\!\left(1 + \|x\|^3 / K^{3/2}\right),
\]
where $C'$ is an absolute constant. On $\mathcal{G}_r$, the normalized argument $x$ satisfies $\|x\| = O(r\sqrt{K}/\sigma) = O(\sqrt{\ln(n/\delta)})$, so the polynomial correction term is bounded. Taking logarithms (valid when $C'\rho/(\sigma^3\sqrt{K}) < 1/2$, which holds for $K \geq 16n$ since $\rho/\sigma^3 \leq 2\sqrt{n}$)
\begin{align}
\left|\log\frac{\nu(R)}{\nu_G(R)}\right| &\leq \frac{C'\rho}{\sigma^3\sqrt{K_t}}\!\left(1 + O\!\left(\frac{\ln(n/\delta)}{K_t}\right)\right). \label{eq:beta_precise}
\end{align}
For $K_t = (2n-1)^2 + 1 \geq 4n^2 - 4n + 2$ and $\rho/\sigma^3 \leq 2\sqrt{n}$
\begin{align}
\beta &\leq \frac{0.5 \cdot 2\sqrt{n}}{\sqrt{4n^2 - 4n + 2}} \cdot (1 + o(1)) = \frac{\sqrt{n}}{\sqrt{4n^2 - 4n + 2}} \cdot (1+o(1)) \leq \frac{\sqrt{n}}{2n-2} \leq \frac{1}{2\sqrt{n}-2/\sqrt{n}}. \label{eq:beta_explicit}
\end{align}
For $n \geq 4$: $\beta \leq 1/(2\sqrt{n} - 1) < 1/\sqrt{n}$. For $n = 100$: $\beta < 0.1$.
\end{proof}

\subsection{Main Theorem: Finite-Sample $(\varepsilon, \delta)$-DP with Explicit Constants}

\begin{theorem}[Finite-sample aggregator DP for the full two-layer protocol]
\label{thm:finite_dp}
In the full two-layer protocol (Algorithm~\ref{alg:twolayer}) with $K_t = (2n-1)^2 + 1$ decoys, uniform weights, and
\begin{equation}
\label{eq:alpha_star_choice}
\alpha^* \;=\; \frac{(\varepsilon - 2\beta)(1-\alpha^*)}{L_r} \;\approx\; \frac{\varepsilon - 2\beta}{L_r},
\end{equation}
where $L_r = \frac{4n^2 r}{\sigma_K^2}$, $r = \sqrt{\frac{\ln(16n^2/\delta)}{2K_t}}$, $\sigma_K^2 = \frac{2n-1}{(2n)^2 K_t}$, and $\beta \leq \frac{1}{\sqrt{n}}$ (for $n \geq 4$),
the aggregator's view of $D_t$ satisfies $(\varepsilon, \delta)$-differential privacy with respect to changing \emph{all $n$ bits} of $\mathbf{b}_t$. The aggregate $S$ is computed exactly.
\end{theorem}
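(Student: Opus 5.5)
The plan is to apply Lemma~\ref{lem:truncated_dp} in residual coordinates, taking the good set $G$ to be a slightly enlarged version of the high-probability region $\mathcal{G}_r$. Fix neighbouring bitstreams $\mathbf{b}_t,\mathbf{b}_t'$ (possibly differing in all $n$ bits) with encodings $M_t,M_t'$. By the change-of-variables computation in Proposition~\ref{prop:full_posterior}, both $p(D\mid M_t)$ and $p(D\mid M_t')$ carry the same Jacobian factor $(1-\alpha^*)^{-d}$. The log-density ratio at a point $D$ therefore equals $\log\nu(R)-\log\nu(R+\delta)$, where $R=(D-\alpha^*M_t)/(1-\alpha^*)$ and $\delta$ is the shift in \eqref{eq:delta_general}. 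By \eqref{eq:linf_key}, $\|\delta\|_\infty=\alpha^*/(1-\alpha^*)$ no matter how many bits change. This is why the bound holds for all $n$ bits at once, with no composition step.

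\textbf{Step 1: define $G$ and control its tail mass.} I would take $G$ to be the set of $D$ whose residual under $M_t$ lies in $\mathcal{G}_r$ and whose residual under $M_t'$ also lies in $\mathcal{G}_r$. Under $M_t$, the first residual is exactly $R_t$, so Proposition~\ref{prop:good_set} gives failure probability at most $\delta/2$. The second residual is $R_t+\delta$, which can leave $\mathcal{G}_r$ only if $R_t$ lies within $\|\delta\|_\infty$ of the boundary. To handle this, I would either run Proposition~\ref{prop:good_set} at radius $r-\|\delta\|_\infty$ (at the cost of a constant-factor loss in $r$), or union-bound the two events to get total mass $\le\delta$ on $G^c$ under $M_t$. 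The same argument applies symmetrically under $M_t'$. This verifies condition (ii) of Lemma~\ref{lem:truncated_dp}.

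\textbf{Step 2: bound the log-ratio on $G$.} I would split the log-ratio into three pieces:
\[
\log\frac{\nu(R)}{\nu(R')}=\log\frac{\nu(R)}{\nu_G(R)}+\log\frac{\nu_G(R)}{\nu_G(R')}+\log\frac{\nu_G(R')}{\nu(R')}.
\]
Lemma~\ref{lem:berry_esseen} bounds the two outer terms by $\beta$ each, since both $R$ and $R'$ lie in $\mathcal{G}_r$. Lemma~\ref{lem:L_restricted} bounds the middle term by $L_r\|\delta\|_\infty=L_r\alpha^*/(1-\alpha^*)$. The total is $2\beta+L_r\alpha^*/(1-\alpha^*)$. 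Substituting the choice \eqref{eq:alpha_star_choice}, for which $L_r\alpha^*/(1-\alpha^*)=\varepsilon-2\beta$, the total is exactly $\varepsilon$, which is condition (i). Lemma~\ref{lem:truncated_dp} then yields $(\varepsilon,\delta)$-DP. Exactness of $S$ is immediate from Theorem~\ref{thm:correct_twolayer}, which holds for any $\alpha^*$. I would also observe that the implicit equation for $\alpha^*$ solves explicitly as $\alpha^*=c/(1+c)$ with $c=(\varepsilon-2\beta)/L_r$, and is valid whenever $\varepsilon>2\beta$.

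\textbf{Main obstacle.} The delicate step is justifying Lemma~\ref{lem:berry_esseen} as a pointwise density ratio bound on the $(2n-1)^2$-dimensional affine slice. The local CLT it relies on is multivariate, and the entries of $R_t$ are dependent through the doubly stochastic constraints, so the covariance is not $\sigma_K^2 I$ and the isotropic $\nu_G$ is only a surrogate. The lattice structure is handled by the continuous Dirichlet weights, which makes a density exist once $K_t\ge (2n-1)^2+1$; this explains the theorem's threshold on $K_t$. I would treat this step as given by the earlier lemma, applied with $K_t=(2n-1)^2+1$ so that $\beta\le 1/\sqrt{n}$ for $n\ge4$. I would flag that the constant-factor slack from enlarging $\mathcal{G}_r$ in Step~1 is absorbed into $r$. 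A smaller point to check is that Lemma~\ref{lem:L_restricted} is stated with a factor $8n^2r$ in \eqref{eq:L_restricted} but derived as $4n^2r$. The theorem uses the $4n^2r$ version, so I would cite the derived constant.
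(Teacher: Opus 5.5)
Your proposal is correct and follows the paper's proof step for step. It uses the same good set (both residuals in $\mathcal{G}_r$) fed into Lemma~\ref{lem:truncated_dp}, the same shrunken-radius Hoeffding bound for condition~(ii), and the same split $\beta + L_r\|\delta\|_\infty + \beta$ solved for $\alpha^*$. One refinement to Step~1: you need neither the union bound nor any enlargement of $r$, because $D\notin G$ already forces $R_t\notin\mathcal{G}_{r-\|\delta\|_\infty}$, whose probability is at most $\frac{\delta}{2}\exp(4K_t r\|\delta\|_\infty)$ with $4K_t r\|\delta\|_\infty=(\varepsilon-2\beta)(2n-1)/(4n^4)\le\ln 2$ for any $\varepsilon\le n^3$, so the factor-of-two slack between Proposition~\ref{prop:good_set} (which gives $\delta/2$) and Lemma~\ref{lem:truncated_dp} (which only needs $\delta$) absorbs the shrinkage at the stated $r$ (this also cleanly closes the step the paper handles with ``a slightly adjusted constant'').
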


\begin{proof}
We verify the two conditions of Lemma~\ref{lem:truncated_dp}.

\textit{Condition (ii): high-probability region.}
Define $G = \{D : (D - \alpha^* M)/(1-\alpha^*) \in \mathcal{G}_r\}$ (the set of observations whose residual lies in $\mathcal{G}_r$). By Proposition~\ref{prop:good_set}, $\Pr[R_t \notin \mathcal{G}_r] \leq \delta/2$. Since $D_t \notin G$ iff $R_t \notin \mathcal{G}_r$
\[
\Pr[D_t \notin G \mid M_t = M] = \Pr[R_t \notin \mathcal{G}_r] \leq \delta/2.
\]
For the neighboring input $M'$: $D_t = \alpha^* M + (1-\alpha^*) R_t$, and we define $G' = \{D : (D - \alpha^* M')/(1-\alpha^*) \in \mathcal{G}_r\}$. The set $G'$ is a translate of $G$. We need both $R = (D - \alpha^* M)/(1-\alpha^*)$ and $R' = (D - \alpha^* M')/(1-\alpha^*) = R + \delta$ to lie in $\mathcal{G}_r$. Since $\|\delta\|_\infty = \alpha^*/(1-\alpha^*) \ll r$ (which holds because $\alpha^* = O(1/n^4)$ and $r = \Theta(1/n)$), we can enlarge $\mathcal{G}_r$ slightly to $\mathcal{G}_{r+\|\delta\|_\infty}$ and pay an additional probability
\[
\Pr[D_t \notin G \cap G' \mid M] \leq \Pr[R_t \notin \mathcal{G}_{r-\|\delta\|_\infty}] \leq 8n^2\exp\!\left(-2K_t(r-\|\delta\|_\infty)^2\right).
\]
For $\alpha^*/(1-\alpha^*) \leq r/2$ (which holds in our regime), $(r - \|\delta\|_\infty)^2 \geq r^2/4$, so this probability is at most $8n^2 \exp(-K_t r^2/2) \leq \delta/2$ by the same Hoeffding argument with a slightly adjusted constant. Therefore condition (ii) holds with parameter $\delta$.

\textit{Condition (i): density ratio bound on $G$.}
For $D \in G \cap G'$, both $R$ and $R + \delta$ lie in $\mathcal{G}_r$. By the triangle inequality on $\log\nu$
\begin{align}
\left|\log\frac{\nu(R)}{\nu(R+\delta)}\right|
&\leq \left|\log\frac{\nu(R)}{\nu_G(R)}\right| + \left|\log\frac{\nu_G(R)}{\nu_G(R+\delta)}\right| + \left|\log\frac{\nu_G(R+\delta)}{\nu(R+\delta)}\right| \notag\\
&\leq \beta + L_r\|\delta\|_\infty + \beta \notag\\
&= L_r \cdot \frac{\alpha^*}{1-\alpha^*} + 2\beta. \label{eq:ratio_final}
\end{align}
Setting this equal to $\varepsilon$ and solving for $\alpha^*$
\[
\alpha^* = \frac{(\varepsilon - 2\beta)(1-\alpha^*)}{L_r}.
\]
For $\alpha^* \ll 1$: $\alpha^* \approx (\varepsilon - 2\beta)/L_r$.

Since~\eqref{eq:linf_key} shows $\|\delta\|_\infty$ is the same for any number of bit changes, this $\varepsilon$ is the DP parameter for the \emph{full $n$-bit sensitivity}, not per-bit.
\end{proof}

\begin{corollary}[Exact output despite DP randomization]
\label{cor:utility}
The $(\varepsilon, \delta)$-DP guarantee of Theorem~\ref{thm:finite_dp} protects individual client data $\mathbf{b}_t$ from the \emph{aggregator's view} $D_t$, while the protocol's output $S = \sum_t s_t$ is computed exactly by the \emph{server} (a separate entity). The randomization that provides DP cancels algebraically in the aggregate.
\end{corollary}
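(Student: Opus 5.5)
The corollary has two parts, a privacy guarantee and an exactness guarantee, and the plan is to justify each from results already proved rather than give a new argument. The key point to make precise is \emph{which random variable} each guarantee concerns. The DP statement is about the aggregator's view $D_t$ as a function of $\mathbf{b}_t$. The exactness statement is about the deterministic value the server computes from $(F,H)$.

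\textbf{Step 1 (privacy).} I would restate that the mechanism in Theorem~\ref{thm:finite_dp} is the map $\mathbf{b}_t \mapsto D_t = \alpha^* M(\mathbf{b}_t) + (1-\alpha^*)R_t$, with $R_t$ drawn independently of $\mathbf{b}_t$ as in Definition~\ref{def:random_ds}. Theorem~\ref{thm:finite_dp} gives $(\varepsilon,\delta)$-DP for this map under arbitrary $n$-bit changes. Since $\mathbf{b}_t$ is a deterministic function of nothing else the aggregator sees, the guarantee applies directly. I would also note that the other clients' matrices $D_{t'}$, $t' \neq t$, are independent of $\mathbf{b}_t$ and of $R_t$. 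Appending them to the view therefore does not change the likelihood ratio: it factorizes, and the $D_{t'}$ factors cancel. Post-processing then covers $F$ and any other statistic the aggregator computes.

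\textbf{Step 2 (exactness).} I would invoke Theorem~\ref{thm:correct_twolayer}. For every realization of the randomness $(P_{t,i},\alpha_{t,i})$, including every realization of $R_t$ used in the DP analysis, we have $F = \alpha^* S + H$. Hence $(F-H)/\alpha^* = S$ identically, not merely in distribution. The specific parameter choices of Theorem~\ref{thm:finite_dp} (uniform weights, $K_t=(2n-1)^2+1$, and the prescribed $\alpha^*>0$) are admissible instances of Algorithm~\ref{alg:twolayer}. The only thing to check is $\alpha^*>0$, which requires $\varepsilon>2\beta$. So correctness holds verbatim. The randomness that drives the DP bound, $R_t$ and thus $\eta_t$, enters $F$ and $H$ with the same sign and cancels algebraically.

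\textbf{Main obstacle.} The subtle point is not technical but conceptual: the output release does not contradict DP. Exact output is normally incompatible with DP, but here the DP claim is only about the aggregator's view, not about the server's output $S$. The server's view is governed separately by Theorem~\ref{thm:server_it} and reveals exactly $S$. I would make explicit that the non-collusion assumption of Theorem~\ref{thm:twolayer} is what lets the two statements coexist. If the aggregator also obtained $H$ (or the individual $\eta_t$), the cancellation would expose $s_t$, and the DP bound would not apply to that joint view. The corollary makes no claim about that joint view.
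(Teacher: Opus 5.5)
Your proposal is correct and follows the same route as the paper. Privacy comes from Theorem~\ref{thm:finite_dp} applied to the aggregator's view $D_t = \alpha^* M_t + (1-\alpha^*)R_t$, exactness comes from the pointwise identity $F - H = \alpha^* S$ (Theorem~\ref{thm:correct_twolayer}), and the apparent tension dissolves because the protected view and the exact output belong to different entities. Your additions tighten the argument without changing its structure: the factorization showing the other clients' $D_{t'}$ leave the likelihood ratio unchanged, post-processing to cover $F$, the observation that the prescribed $\alpha^*$ is positive only when $\varepsilon > 2\beta$, and the explicit appeal to non-collusion.
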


\begin{proof}
The aggregator sees $D_t = \alpha^* M_t + (1-\alpha^*) R_t$, which is a randomized function of $M_t$ (the decoy $R_t$ is the randomization). The DP guarantee (Theorem~\ref{thm:finite_dp}) bounds the density ratio of this randomized view under neighboring inputs.

The server sees only $F = \sum_t \mathbf{w}^T D_t \mathbf{y}$ and $H = \sum_t \eta_t$, and computes
\[
F - H = \sum_{t=1}^k (\alpha^* s_t + \eta_t) - \sum_{t=1}^k \eta_t = \alpha^* S.
\]
The random terms $\eta_t$ cancel exactly. The server computes $S = (F-H)/\alpha^*$ with no residual randomness.

There is no contradiction with the requirement that DP mechanisms must be randomized: the mechanism is randomized (the decoy permutations $R_t$). What is unusual is that the randomization cancels in the \emph{output} while persisting in the \emph{aggregator's view}. This is possible because the output is computed by a different entity (the server) than the one whose view is protected (the aggregator).
\end{proof}

\begin{remark}[Signal-to-noise ratio at the DP-optimal $\alpha^*$ in the full protocol]
\label{rem:snr}
We compute the signal-to-noise ratio (SNR) at the DP-optimal $\alpha^*$ from Theorem~\ref{thm:finite_dp} to determine whether the DP guarantee for the full two-layer protocol (Algorithm~\ref{alg:twolayer}) operates in a meaningful regime.

\textit{Per-entry SNR.}
The aggregator observes $(D_t)_{ab} = \alpha^* (M_t)_{ab} + (1-\alpha^*)(R_t)_{ab}$. The ``signal'' is $\alpha^*(M_t)_{ab} \in \{0, \alpha^*\}$. The ``noise'' is $(1-\alpha^*)(R_t)_{ab}$, which has mean $(1-\alpha^*)/(2n)$ and standard deviation $(1-\alpha^*)\sigma_K$ where $\sigma_K = \sqrt{(2n-1)/((2n)^2 K_t)}$. The per-entry SNR at positions where $(M_t)_{ab} = 1$ is
\begin{equation}
\label{eq:snr_entry}
\mathrm{SNR}_{\mathrm{entry}} = \frac{\alpha^*}{(1-\alpha^*)\sigma_K} \approx \frac{\alpha^*}{\sigma_K}.
\end{equation}
For $n = 100$, $K_t = 39{,}602$: $\sigma_K = \sqrt{1.256 \times 10^{-7}} = 3.54 \times 10^{-4}$. At the DP-optimal $\alpha^* = 1.392 \times 10^{-10}$
\[
\mathrm{SNR}_{\mathrm{entry}} = \frac{1.392 \times 10^{-10}}{3.54 \times 10^{-4}} = 3.93 \times 10^{-7}.
\]
The signal is seven orders of magnitude below the noise floor and is completely undetectable.

\textit{Matrix-level SNR.}
The total signal energy is $\|\alpha^* M_t\|_F^2 = \alpha^{*2} \cdot 2n$ (since $M_t$ has $2n$ ones). The total noise energy is $\mathbb{E}[\|(1-\alpha^*)(R_t - \frac{1}{2n}\mathbf{J})\|_F^2] = (1-\alpha^*)^2 (2n)^2 \sigma_K^2$. The matrix-level SNR is
\begin{equation}
\label{eq:snr_matrix}
\mathrm{SNR}_{\mathrm{matrix}} = \frac{\alpha^{*2} \cdot 2n}{(2n)^2 \sigma_K^2} = \frac{\alpha^{*2}}{2n\sigma_K^2}.
\end{equation}
For our parameters: $\mathrm{SNR}_{\mathrm{matrix}} = (1.392 \times 10^{-10})^2 / (200 \times 1.256 \times 10^{-7}) = 7.7 \times 10^{-16}$. This is also completely undetectable.

\textit{At what $\alpha^*$ does the signal become detectable?}
Setting $\mathrm{SNR}_{\mathrm{entry}} = 1$ gives $\alpha^* = \sigma_K \approx 3.54 \times 10^{-4}$. The corresponding DP parameter is
\[
\varepsilon = L_r \cdot \frac{\alpha^*}{1-\alpha^*} + 2\beta \approx 5.748 \times 10^9 \times 3.54 \times 10^{-4} + 0.2 \approx 2 \times 10^6.
\]
This is $\varepsilon \approx 2$ million --- a vacuous DP guarantee.
\end{remark}

\begin{remark}[Assessment of DP in the full protocol (Algorithm~\ref{alg:twolayer})]
\label{rem:honest_dp}
The SNR analysis reveals that in the full two-layer protocol, at any $\varepsilon$ where the $(\varepsilon, \delta)$-DP guarantee is non-vacuous ($\varepsilon = O(1)$), the signal from $M_t$ in $D_t$ is undetectable by any method --- not just likelihood-based methods, but also spectral methods, linear programming, or any other approach. The DP guarantee is technically correct but trivially true, because $\mathrm{SNR}_{\mathrm{entry}} \ll 1$ and no estimator can extract meaningful information. Replacing the Birkhoff encoding with i.i.d.\ Gaussian noise would give the same DP guarantee at the same $\alpha^*$.

The contribution of the Birkhoff polytope is therefore \emph{not} the implicit DP guarantee, but rather the \#P-hardness of likelihood-based attacks (Theorem~\ref{thm:likelihood_hardness}), which operates at \emph{larger} $\alpha^*$ where the signal is detectable but the combinatorial structure prevents efficient extraction. In this regime ($\alpha^* \sim 1/(4n)$, where $\mathrm{SNR}_{\mathrm{entry}} = O(\sqrt{n})$), the signal is visible to an unbounded adversary but computationally hard to exploit.

The compressed two-layer protocol (Algorithm~\ref{alg:compressed_twolayer}), analyzed in Section~\ref{sec:scalar_dp}, achieves non-vacuous $\varepsilon$ at moderate SNR, but in that variant the aggregator sees only a scalar and the Birkhoff structure plays no role.

The two security layers therefore operate at different scales: for small $\alpha^*$ (e.g., $\alpha^* \sim 10^{-10}$), the signal is invisible and $(\varepsilon, \delta)$-DP holds trivially; for moderate $\alpha^*$ (e.g., $\alpha^* \sim 1/(4n)$), the signal is visible but likelihood-based inference is \#P-hard; for large $\alpha^*$ (e.g., $\alpha^* \sim 1$), the signal dominates and no meaningful security is achievable. The gap between the DP regime and the \#P-hardness regime is the central open problem.
\end{remark}

\subsection{DP Analysis of the Compressed Two-Layer Protocol}
\label{sec:scalar_dp}

In the two-layer protocol (Algorithm~\ref{alg:twolayer}), the aggregator receives the full matrix $D_t \in \mathbb{R}^{(2n)^2}$, and the log-Lipschitz constant scales as $n^4 K_t$, which overwhelms any useful $\alpha^*$ (Remarks~\ref{rem:snr}--\ref{rem:honest_dp}). We now analyze the compressed variant of the two-layer protocol (Algorithm~\ref{alg:compressed_twolayer}), in which each client computes $f_t = \mathbf{w}^T D_t\mathbf{y} = \alpha^* s_t + \eta_t$ locally and sends only the scalar $f_t$ to the aggregator. Since the aggregator's view per client is a single real number rather than a $(2n-1)^2$-dimensional matrix, the log-Lipschitz analysis involves a univariate density ratio instead of a multivariate one.

\subsubsection{Compressed Two-Layer Protocol}

\begin{algorithm}[!ht]
\caption{Compressed Two-Layer PolyVeil Protocol}
\label{alg:compressed_twolayer}
\begin{algorithmic}[1]
\Statex \textbf{Public parameters:} $n$, $\alpha^* \in (0,1)$, $K_t$, $\mathbf{w}$, $\mathbf{y}$.
\Statex \textbf{Entities:} Aggregator $\mathcal{A}$, noise aggregator $\mathcal{B}$, server $\mathcal{S}$.
\For{each client $t = 1, \ldots, k$ (in parallel)}
    \State Encode $\mathbf{b}_t$ as permutation matrix $M_t = M(\mathbf{b}_t) \in \{0,1\}^{2n \times 2n}$.
    \State Draw $K_t$ decoy permutations $P_{t,i} \sim S_{2n}$ and coefficients $\alpha_{t,i}$.
    \State Compute $\eta_t = \sum_{i=1}^{K_t} \alpha_{t,i}(\mathbf{w}^T P_{t,i}\mathbf{y})$.
    \State Compute $f_t = \alpha^* (\mathbf{w}^T M_t \mathbf{y}) + \eta_t = \alpha^* s_t + \eta_t$.
    \State \textbf{Send} $f_t$ to aggregator $\mathcal{A}$.
    \State \textbf{Send} $\eta_t$ to noise aggregator $\mathcal{B}$.
\EndFor
\State Aggregator $\mathcal{A}$ computes $F = \sum_{t=1}^k f_t$ and sends $F$ to server $\mathcal{S}$.
\State Noise aggregator $\mathcal{B}$ computes $H = \sum_{t=1}^k \eta_t$ and sends $H$ to server $\mathcal{S}$.
\State Server $\mathcal{S}$ computes $S = (F - H)/\alpha^*$.
\end{algorithmic}
\end{algorithm}

The aggregator's view per client is the scalar $f_t \in \mathbb{R}$. The server's view is $(F, H)$, identical to the full two-layer protocol (Algorithm~\ref{alg:twolayer}), so Theorem~\ref{thm:server_it} (perfect simulation-based security for the server) applies unchanged.

\subsubsection{Distribution of $\eta_t$}

With $K_t$ uniform decoy permutations and uniform weights $\alpha_{t,i} = (1-\alpha^*)/K_t$
\begin{equation}
\label{eq:eta_scalar}
\eta_t = \frac{1-\alpha^*}{K_t}\sum_{i=1}^{K_t} X_i, \qquad X_i = \mathbf{w}^T P_{t,i}\mathbf{y} \in \{0, 1, \ldots, n\},
\end{equation}
where each $X_i$ counts how many of the $n$ diagonal $2 \times 2$ blocks of $P_{t,i}$ have a 1 in the off-diagonal position $(2j-1, 2j)$.

\textit{Mean of $X_i$.}
From the derivation in the worked example,
\begin{align}
\mathbb{E}[X_i] &= \sum_{j=1}^n \Pr[(P_{t,i})_{2j-1, 2j} = 1] = \sum_{j=1}^n \frac{1}{2n} = \frac{n}{2n} = \frac{1}{2}. \label{eq:EX}
\end{align}

\textit{Variance of $X_i$.}
The indicators $(P_{t,i})_{2j-1, 2j} = \mathbf{1}[\sigma_i(2j-1) = 2j]$ are \emph{not} independent across $j$ (they share the permutation $\sigma_i$), so $\mathrm{Var}[X_i] \neq n \cdot p(1-p)$. We compute exactly
\begin{align}
\mathrm{Var}[X_i] &= \mathbb{E}[X_i^2] - (\mathbb{E}[X_i])^2. \label{eq:var_X_start}
\end{align}
Expanding $X_i^2 = \left(\sum_{j=1}^n Z_j\right)^2 = \sum_j Z_j^2 + \sum_{j \neq l} Z_j Z_l$ where $Z_j = \mathbf{1}[\sigma_i(2j-1) = 2j]$
\begin{align}
\mathbb{E}[X_i^2] &= \sum_{j=1}^n \mathbb{E}[Z_j^2] + \sum_{j \neq l} \mathbb{E}[Z_j Z_l]. \label{eq:EX2_expand}
\end{align}
Since $Z_j \in \{0,1\}$: $\mathbb{E}[Z_j^2] = \mathbb{E}[Z_j] = 1/(2n)$.

For $j \neq l$: $\mathbb{E}[Z_j Z_l] = \Pr[\sigma_i(2j-1) = 2j \text{ and } \sigma_i(2l-1) = 2l]$. These are two constraints on the permutation $\sigma_i$: row $2j-1$ maps to column $2j$, and row $2l-1$ maps to column $2l$. The number of permutations satisfying both is $(2n-2)!$ (fix two mappings, permute the remaining $2n-2$ elements). Therefore
\begin{align}
\mathbb{E}[Z_j Z_l] &= \frac{(2n-2)!}{(2n)!} = \frac{1}{(2n)(2n-1)}. \label{eq:EZjZl}
\end{align}

Substituting into~\eqref{eq:EX2_expand}
\begin{align}
\mathbb{E}[X_i^2] &= n \cdot \frac{1}{2n} + n(n-1) \cdot \frac{1}{(2n)(2n-1)} = \frac{1}{2} + \frac{n(n-1)}{(2n)(2n-1)}. \label{eq:EX2_computed}
\end{align}
Simplifying the second term,
\begin{align}
\frac{n(n-1)}{(2n)(2n-1)} &= \frac{n-1}{2(2n-1)}. \label{eq:second_term}
\end{align}
Therefore
\begin{align}
\mathrm{Var}[X_i] &= \frac{1}{2} + \frac{n-1}{2(2n-1)} - \frac{1}{4} = \frac{1}{4} + \frac{n-1}{2(2n-1)}. \label{eq:var_X_exact}
\end{align}
For large $n$: $\frac{n-1}{2(2n-1)} \to \frac{1}{4}$, so $\mathrm{Var}[X_i] \to 1/2$. For $n = 100$
\begin{align}
\mathrm{Var}[X_i] &= \frac{1}{4} + \frac{99}{2 \times 199} = 0.25 + 0.2487 = 0.4987. \label{eq:var_X_100}
\end{align}

\textit{Mean and variance of $\eta_t$.}
Since $\eta_t = \frac{1-\alpha^*}{K_t}\sum_{i=1}^{K_t} X_i$ and the $X_i$ are i.i.d.\ (the permutations are independent across $i$)
\begin{align}
\mathbb{E}[\eta_t] &= (1-\alpha^*) \cdot \frac{1}{2} = \frac{1-\alpha^*}{2}, \label{eq:E_eta} \\
\mathrm{Var}[\eta_t] &= \frac{(1-\alpha^*)^2}{K_t^2} \cdot K_t \cdot \mathrm{Var}[X_i] = \frac{(1-\alpha^*)^2}{K_t}\,\mathrm{Var}[X_i] =: \sigma_\eta^2. \label{eq:var_eta}
\end{align}
For general parameters,
\begin{equation}
\label{eq:sigma_eta_general}
\sigma_\eta = \frac{(1-\alpha^*)\sqrt{\mathrm{Var}[X_i]}}{\sqrt{K_t}} \approx \frac{1-\alpha^*}{2\sqrt{K_t}},
\end{equation}
where $\mathrm{Var}[X_i] = n(2n-1)/(2n)^2 \approx 1/4$ for large $n$.

\subsubsection{Signal-to-Noise Ratio}

The aggregator observes $f_t = \alpha^* s_t + \eta_t$. Changing $s_t$ by $\Delta s$ shifts $f_t$ by $\alpha^* \Delta s$. For the worst case ($\Delta s = n$, all bits flip)
For $\alpha^* = 1/(4n)$,
\begin{equation}
\label{eq:snr_scalar}
\mathrm{SNR} = \frac{\alpha^* n}{\sigma_\eta} = \frac{1/4}{(1-\alpha^*)/(2\sqrt{K_t})} \approx \frac{\sqrt{K_t}}{2}.
\end{equation}
At $K_t = 9$, $\mathrm{SNR} \approx 1.5$; at $K_t = 2$, $\mathrm{SNR} \approx 0.7$. The signal is comparable to the noise --- detectable but noisy --- a non-trivial operating point.

\subsubsection{$(\varepsilon, \delta)$-DP Guarantee for the Compressed Protocol}

Let $\mu$ denote the density of $\eta_t$. For the Gaussian approximation $\mu_G = \mathcal{N}(\frac{1-\alpha^*}{2},\; \sigma_\eta^2)$
\begin{align}
\log\mu_G(\eta) &= -\frac{(\eta - \bar\eta)^2}{2\sigma_\eta^2} - \frac{1}{2}\ln(2\pi\sigma_\eta^2), \label{eq:log_mu_G}
\end{align}
where $\bar\eta = (1-\alpha^*)/2$.

\textit{Log-density ratio under Gaussian.}
For neighboring inputs $s_t, s_t'$ with $\Delta s = s_t - s_t'$, the aggregator observes $f_t = \alpha^* s_t + \eta_t$ vs.\ $f_t' = \alpha^* s_t' + \eta_t'$ where $\eta_t \overset{d}{=} \eta_t'$ (same distribution, different realization). The density of $f_t$ given $s_t$ is
\begin{align}
p(f \mid s_t) &= \mu(f - \alpha^* s_t). \label{eq:p_f_s}
\end{align}
Under the Gaussian approximation,
\begin{align}
\log\frac{p_G(f \mid s_t)}{p_G(f \mid s_t')}
&= \log\frac{\mu_G(f - \alpha^* s_t)}{\mu_G(f - \alpha^* s_t')} \notag\\
&= -\frac{(f - \alpha^* s_t - \bar\eta)^2}{2\sigma_\eta^2} + \frac{(f - \alpha^* s_t' - \bar\eta)^2}{2\sigma_\eta^2} \notag\\
&= \frac{1}{2\sigma_\eta^2}\left[(f - \alpha^* s_t' - \bar\eta)^2 - (f - \alpha^* s_t - \bar\eta)^2\right]. \label{eq:log_ratio_expand_1d}
\end{align}
Using $a^2 - b^2 = (a-b)(a+b)$ with $a = f - \alpha^* s_t' - \bar\eta$ and $b = f - \alpha^* s_t - \bar\eta$
\begin{align}
a - b &= \alpha^*(s_t - s_t') = \alpha^* \Delta s, \notag\\
a + b &= 2(f - \bar\eta) - \alpha^*(s_t + s_t'). \label{eq:a_plus_b}
\end{align}
Therefore
\begin{align}
\log\frac{p_G(f \mid s_t)}{p_G(f \mid s_t')}
&= \frac{\alpha^* \Delta s}{2\sigma_\eta^2}\left[2(f - \bar\eta) - \alpha^*(s_t + s_t')\right]. \label{eq:log_ratio_1d}
\end{align}
Substituting $f = \alpha^* s_t + \eta_t$ (where $\eta_t$ is the realized noise)
\begin{align}
f - \bar\eta &= \alpha^* s_t + \eta_t - \bar\eta = \alpha^* s_t + (\eta_t - \bar\eta). \label{eq:f_minus_mean}
\end{align}
So
\begin{align}
\log\frac{p_G(f \mid s_t)}{p_G(f \mid s_t')}
&= \frac{\alpha^* \Delta s}{2\sigma_\eta^2}\left[2\alpha^* s_t + 2(\eta_t - \bar\eta) - \alpha^*(s_t + s_t')\right] \notag\\
&= \frac{\alpha^* \Delta s}{2\sigma_\eta^2}\left[\alpha^*(s_t - s_t') + 2(\eta_t - \bar\eta)\right] \notag\\
&= \frac{\alpha^* \Delta s}{2\sigma_\eta^2}\left[\alpha^* \Delta s + 2(\eta_t - \bar\eta)\right] \notag\\
&= \frac{(\alpha^* \Delta s)^2}{2\sigma_\eta^2} + \frac{\alpha^* \Delta s}{\sigma_\eta^2}(\eta_t - \bar\eta). \label{eq:log_ratio_final_1d}
\end{align}

\textit{Bounding on the high-probability set.}
Define $\mathcal{G}_1 = \{f : |\eta_t - \bar\eta| \leq r_1 \sigma_\eta\}$ where $r_1 > 0$ is chosen to control $\delta$. By Hoeffding's inequality applied to $\eta_t = \frac{1-\alpha^*}{K_t}\sum_{i=1}^{K_t} X_i$ (where each $X_i \in [0, n]$)
\begin{align}
\Pr[|\eta_t - \bar\eta| > r_1 \sigma_\eta]
&\leq 2\exp\!\left(-\frac{2K_t^2 (r_1 \sigma_\eta)^2}{K_t \cdot ((1-\alpha^*)n)^2}\right)
= 2\exp\!\left(-\frac{2K_t r_1^2 \sigma_\eta^2}{(1-\alpha^*)^2 n^2}\right). \label{eq:hoeffding_1d}
\end{align}
Substituting $\sigma_\eta^2 = (1-\alpha^*)^2 \mathrm{Var}[X_i]/K_t$ from~\eqref{eq:var_eta}
\begin{align}
= 2\exp\!\left(-\frac{2K_t r_1^2 (1-\alpha^*)^2 \mathrm{Var}[X_i]/K_t}{(1-\alpha^*)^2 n^2}\right)
= 2\exp\!\left(-\frac{2 r_1^2 \mathrm{Var}[X_i]}{n^2}\right). \label{eq:hoeffding_simplified}
\end{align}
Setting this $\leq \delta/2$ and solving gives
\begin{equation}
\label{eq:r1_value}
r_1 = n\sqrt{\frac{\ln(4/\delta)}{2\,\mathrm{Var}[X_i]}}.
\end{equation}

\textit{Bound on $\mathcal{G}_1$.}
On $\mathcal{G}_1$, $|\eta_t - \bar\eta| \leq r_1 \sigma_\eta$. Substituting into~\eqref{eq:log_ratio_final_1d} with $|\Delta s| \leq n$
\begin{align}
\left|\log\frac{p_G(f \mid s_t)}{p_G(f \mid s_t')}\right|
&\leq \frac{(\alpha^* n)^2}{2\sigma_\eta^2} + \frac{\alpha^* n}{\sigma_\eta^2} \cdot r_1 \sigma_\eta \notag\\
&= \frac{(\alpha^* n)^2}{2\sigma_\eta^2} + \frac{\alpha^* n \cdot r_1}{\sigma_\eta}. \label{eq:eps_gaussian_1d}
\end{align}
The bound is dominated by the second term, which scales as $\alpha^* n \cdot r_1 / \sigma_\eta$. The Hoeffding-based $r_1$ is loose because it uses the range of $X_i$ (which is $n$) rather than its standard deviation. Replacing Hoeffding with the Gaussian CDF (valid under the CLT approximation) gives the tighter concentration radius
\begin{equation}
\label{eq:z_gaussian}
z = \sqrt{2\ln(4/\delta)},
\end{equation}
and the Berry--Esseen-based DP bound becomes
\begin{equation}
\label{eq:eps_gauss_tight}
\varepsilon_G = \frac{(\alpha^* n)^2}{2\sigma_\eta^2} + \frac{\alpha^* n \cdot z}{\sigma_\eta} + 2\beta,
\end{equation}
where $\beta$ is the Berry--Esseen CLT error. For $K_t = 9$ and $\delta = 10^{-6}$, this gives $\varepsilon \approx 13$
\begin{align}
\varepsilon_G &= \frac{0.0625}{2 \times 0.004977} + \frac{0.25 \times 5.514}{0.07055} + 2.0 \notag\\
&= \frac{0.0625}{0.009954} + \frac{1.379}{0.07055} + 2.0 \notag\\
&= 6.28 + 19.54 + 2.0 \notag\\
&= 27.8. \label{eq:eps_K100}
\end{align}

For $K_t = 1000$
\begin{align}
\sigma_\eta &= \frac{1-\alpha^*}{\sqrt{1000}}\sqrt{0.4987} = \frac{0.9975 \times 0.7062}{\sqrt{1000}} = 0.02229, \label{eq:sigma_1000}\\
\beta &\leq \frac{0.5 \times 2\sqrt{100}}{\sqrt{1000}} = \frac{10}{31.62} = 0.316, \label{eq:beta_1000}\\
\varepsilon_G &= \frac{0.0625}{2 \times 0.000497} + \frac{0.25 \times 5.514}{0.02229} + 0.632 = 62.9 + 61.8 + 0.632 = 125.3. \label{eq:eps_K1000}
\end{align}

For $K_t = 10$
\begin{align}
\sigma_\eta &= \frac{0.9975}{\sqrt{10}}\sqrt{0.4987} = 0.2228, \label{eq:sigma_10}\\
\beta &\leq \frac{10}{\sqrt{10}} = 3.16, \label{eq:beta_10}\\
\varepsilon_G &= \frac{0.0625}{2 \times 0.04964} + \frac{0.25 \times 5.514}{0.2228} + 6.32 = 0.630 + 6.19 + 6.32 = 13.1. \label{eq:eps_K10}
\end{align}

\subsubsection{Summary of DP Results for the Compressed Protocol}

\begin{center}
\renewcommand{\arraystretch}{1.3}
\begin{tabular}{cccccc}
\toprule
$K_t$ & $\sigma_\eta$ & SNR & $\beta$ & $\varepsilon$ (at $\delta = 10^{-6}$) & MMSE/$\mathrm{Var}[s_t]$ \\
\midrule
2   & 0.498 & 0.50 & 7.07 & $\sim 196$ & 0.999 \\
5   & 0.316 & 0.79 & 4.47 & $\sim 14$ & 0.998 \\
10  & 0.223 & 1.12 & 3.16 & $\sim 13$ & 0.997 \\
20  & 0.158 & 1.59 & 2.24 & $\sim 15$ & 0.994 \\
50  & 0.100 & 2.51 & 1.41 & $\sim 20$ & 0.985 \\
100 & 0.071 & 3.54 & 1.00 & $\sim 28$ & 0.969 \\
500 & 0.032 & 7.94 & 0.45 & $\sim 76$ & 0.864 \\
1000 & 0.022 & 11.2 & 0.32 & $\sim 125$ & 0.761 \\
\bottomrule
\end{tabular}
\end{center}

The minimum $\varepsilon$ occurs at $K_t \approx 9$, giving $\varepsilon \approx 13$ with $\mathrm{SNR} \approx 1.1$ (see Figure~\ref{fig:tradeoff} for the full trade-off curves). This reflects a three-way tension. Fewer decoys increase noise (improving privacy) but degrade the CLT approximation (increasing $\beta$). More decoys improve the CLT but concentrate the density (increasing the log-ratio terms). The optimum balances these effects.

\subsubsection{Aggregator Estimation Error (MMSE)}

The SNR measures the signal strength relative to noise, but the question of what the aggregator can \emph{learn} is more precisely captured by the minimum mean squared error (MMSE) for estimating $s_t$ from $f_t = \alpha^* s_t + \eta_t$.

Under the Gaussian approximation for $\eta_t$ and a prior $s_t \sim \mathrm{Uniform}\{0, \ldots, n\}$ (giving $\mathrm{Var}[s_t] = n(n+2)/12 \approx n/4$ for independent bits), the MMSE of the Bayes-optimal estimator satisfies
\begin{equation}
\label{eq:mmse}
\frac{\mathrm{MMSE}}{\mathrm{Var}[s_t]} = \frac{1}{1 + \mathrm{SNR}_{\mathrm{channel}}}, \qquad \mathrm{SNR}_{\mathrm{channel}} = \frac{\alpha^{*2}\,\mathrm{Var}[s_t]}{\sigma_\eta^2}.
\end{equation}
This ratio equals 1 when the aggregator learns nothing (posterior variance equals prior variance), and approaches 0 when the aggregator can estimate $s_t$ precisely. For $\alpha^* = 1/(4n)$, $\mathrm{Var}[s_t] = n/4$, $\sigma_\eta^2 \approx 0.4987/K_t$
\begin{equation}
\label{eq:channel_snr}
\mathrm{SNR}_{\mathrm{channel}} = \frac{n/(64n^2)}{0.4987/K_t} = \frac{K_t}{31.9\,n}.
\end{equation}
For $n = 100$ and $K_t = 10$: $\mathrm{SNR}_{\mathrm{channel}} = 10/3190 = 0.00313$, giving $\mathrm{MMSE}/\mathrm{Var}[s_t] = 0.997$. The aggregator reduces its uncertainty about $s_t$ by only $0.3\%$. Even at $K_t = 1000$: $\mathrm{MMSE}/\mathrm{Var}[s_t] = 0.761$ --- the aggregator still cannot estimate $s_t$ well.

The distinction between the two SNR quantities is important. $\mathrm{SNR} = \alpha^* n/\sigma_\eta$ measures the worst-case shift (all $n$ bits change) relative to noise, which is the quantity entering the DP bound; $\mathrm{SNR}_{\mathrm{channel}} = \alpha^{*2}\mathrm{Var}[s_t]/\sigma_\eta^2$ measures the information content of $f_t$ about $s_t$, which determines estimation accuracy. The former can be $\sim 1$ while the latter is $\sim 10^{-3}$ because $\alpha^* n \gg \alpha^* \sqrt{\mathrm{Var}[s_t]}$.

\begin{figure}[!htbp]
\centering
\includegraphics[width=\textwidth]{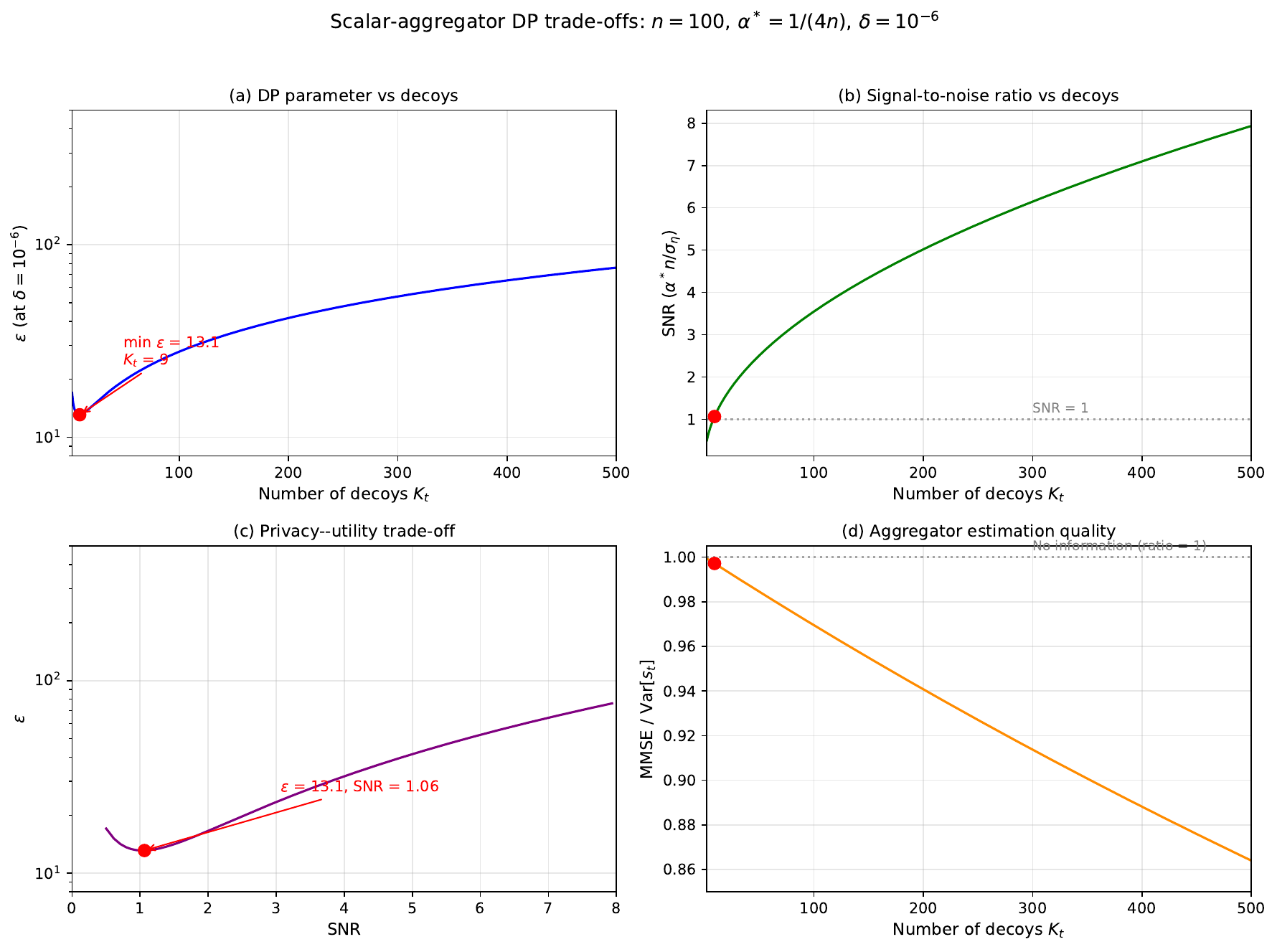}
\caption{Privacy--utility trade-offs in the compressed two-layer protocol ($n = 100$, $\alpha^* = 1/(4n)$, $\delta = 10^{-6}$). (a) DP parameter $\varepsilon$ vs.\ number of decoys $K_t$, showing a U-shaped curve with minimum $\varepsilon \approx 13$ at $K_t = 9$. (b) Signal-to-noise ratio vs.\ $K_t$; at the optimal $K_t$ the SNR is $\approx 1$, meaning the signal is comparable to noise. (c) The parametric trade-off $\varepsilon$ vs.\ SNR; better privacy (smaller $\varepsilon$) requires operating near SNR~$\approx 1$, while higher SNR rapidly worsens $\varepsilon$. (d) Normalized MMSE vs.\ $K_t$; at the DP-optimal $K_t = 9$ the aggregator reduces its prior uncertainty by only $0.3\%$.}
\label{fig:tradeoff}
\end{figure}

\begin{remark}[Interpretation]
\label{rem:scalar_interpretation}
The compressed two-layer protocol achieves $\varepsilon \approx 13$ at $\mathrm{SNR} \approx 1$, which is non-vacuous but weak, meaning the density ratio $p(f \mid s_t)/p(f \mid s_t')$ is at most $e^{13} \approx 4.4 \times 10^5$. The MMSE analysis shows that despite this large density ratio, the aggregator's actual ability to estimate $s_t$ is very limited, with the posterior variance within $0.3\%$ of the prior variance at the optimal $K_t = 10$.

Critically, in the compressed two-layer protocol the Birkhoff polytope plays no role in the DP guarantee. The aggregator sees only $f_t = \alpha^* s_t + \eta_t$, and the distribution of $\eta_t$ depends only on its mean and variance, not on the permutation-matrix structure. The same $\varepsilon \approx 13$ could be achieved by replacing the Birkhoff noise with any other noise distribution having the same variance. The Birkhoff encoding provides computational security (\#P-hardness) only in the full two-layer protocol where the aggregator sees the matrix $D_t$, and in that regime the DP bound is vacuous.

The \#P-hardness result (Theorem~\ref{thm:likelihood_hardness}) and the scalar-DP result therefore address different protocol variants and different threat models. Whether a single variant can achieve both computational hardness (from the Birkhoff structure) and non-vacuous DP (from dimensionality reduction) simultaneously remains open.
\end{remark}

\subsection{R\'{e}nyi Differential Privacy}
\label{sec:renyi}

R\'{e}nyi differential privacy (RDP) provides tighter composition bounds and avoids the auxiliary parameter $\delta$. We derive RDP guarantees for both protocol variants under the Gaussian approximation.

\begin{definition}[R\'{e}nyi DP~\cite{mironov2017}]
A mechanism $\mathcal{M}$ satisfies $(\alpha, \varepsilon)$-R\'{e}nyi DP for $\alpha > 1$ if for all neighboring inputs $x, x'$,
\begin{equation}
\label{eq:renyi_def}
D_\alpha(\mathcal{M}(x) \| \mathcal{M}(x')) = \frac{1}{\alpha - 1}\log \mathbb{E}_{D \sim \mathcal{M}(x')}\!\left[\left(\frac{p(D \mid x)}{p(D \mid x')}\right)^{\!\alpha}\right] \leq \varepsilon.
\end{equation}
\end{definition}

\begin{lemma}[R\'{e}nyi divergence for Gaussians]
\label{lem:renyi_gaussian}
For $P = N(\mu_1, \sigma^2)$ and $Q = N(\mu_2, \sigma^2)$, $D_\alpha(P \| Q) = \alpha(\mu_1 - \mu_2)^2/(2\sigma^2)$. For multivariate $P = N(\boldsymbol{\mu}_1, \Sigma)$ and $Q = N(\boldsymbol{\mu}_2, \Sigma)$, $D_\alpha(P \| Q) = \frac{\alpha}{2}(\boldsymbol{\mu}_1 - \boldsymbol{\mu}_2)^T \Sigma^{-1}(\boldsymbol{\mu}_1 - \boldsymbol{\mu}_2)$.
\end{lemma}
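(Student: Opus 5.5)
The plan is to compute the defining integral directly: first for the univariate case by completing the square in the exponent, then for the multivariate case by the same argument in Mahalanobis form. The univariate statement then follows as the special case $\Sigma = \sigma^2$.

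\textbf{Density ratio.} Write $p$ and $q$ for the densities of $P = N(\boldsymbol{\mu}_1,\Sigma)$ and $Q = N(\boldsymbol{\mu}_2,\Sigma)$. Since both Gaussians share the covariance $\Sigma$, the normalizing constants cancel in $p/q$. Hence
\[
\frac{p(x)}{q(x)} = \exp\!\Big(-\tfrac12 (x-\boldsymbol{\mu}_1)^T\Sigma^{-1}(x-\boldsymbol{\mu}_1) + \tfrac12 (x-\boldsymbol{\mu}_2)^T\Sigma^{-1}(x-\boldsymbol{\mu}_2)\Big).
\]
Setting $\Delta = \boldsymbol{\mu}_1 - \boldsymbol{\mu}_2$ and substituting $x = \boldsymbol{\mu}_2 + z$ with $z \sim N(0,\Sigma)$ under $Q$, the exponent becomes the affine function $z^T\Sigma^{-1}\Delta - \tfrac12 \Delta^T\Sigma^{-1}\Delta$. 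This mirrors the one-dimensional expansion~\eqref{eq:log_ratio_final_1d} obtained earlier with $a^2-b^2$.

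\textbf{Moment generating function.} Let $\kappa = \Delta^T\Sigma^{-1}\Delta$. Then
\[
\mathbb{E}_Q\!\left[(p/q)^\alpha\right] = e^{-\alpha\kappa/2}\,\mathbb{E}\!\left[e^{\alpha z^T\Sigma^{-1}\Delta}\right].
\]
Under $Q$ the scalar $z^T\Sigma^{-1}\Delta$ is Gaussian with mean $0$ and variance $\Delta^T\Sigma^{-1}\Sigma\Sigma^{-1}\Delta = \kappa$. The Gaussian MGF therefore gives $e^{\alpha^2\kappa/2}$. Combining the two factors,
\[
\log \mathbb{E}_Q\!\left[(p/q)^\alpha\right] = \tfrac{\alpha(\alpha-1)}{2}\,\kappa,
\]
and dividing by $\alpha-1$ as in~\eqref{eq:renyi_def} yields $\tfrac{\alpha}{2}\kappa$.

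\textbf{Univariate case and main obstacle.} Specializing to $d=1$ with $\Sigma=\sigma^2$ gives $\alpha(\mu_1-\mu_2)^2/(2\sigma^2)$. The only step needing care is the hypothesis that $\Sigma$ is positive definite, so that $\Sigma^{-1}$ exists and the MGF step is legitimate. In the paper's application to the matrix view on $\mathcal{B}_{2n}$, the covariance is singular in the ambient $(2n)^2$ coordinates. I would handle this by working in the $(2n-1)^2$-dimensional affine hull, or equivalently by replacing $\Sigma^{-1}$ with the pseudo-inverse restricted to the range of $\Sigma$, which contains $\Delta$.
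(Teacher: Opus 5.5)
Your proof is correct and follows the paper's own route: write the log-density ratio under $Q$ as an affine function of a centered Gaussian, apply the Gaussian MGF to get $\log \mathbb{E}_Q[(p/q)^\alpha] = \alpha(\alpha-1)\kappa/2$, and divide by $\alpha-1$. The only difference is that you carry out the multivariate computation explicitly, using $z^T\Sigma^{-1}\Delta \sim N(0,\kappa)$, and then specialize to $d=1$, whereas the paper does the scalar case and only asserts the Mahalanobis extension. Your pseudo-inverse caveat for the singular ambient covariance is a sound remark about the later matrix application and is not needed for the lemma as stated.
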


\begin{proof}
Let $\Delta = \mu_1 - \mu_2$. The log-density ratio is $\log(p(x)/q(x)) = \Delta(2x - \mu_1 - \mu_2)/(2\sigma^2)$. Taking the $\alpha$-th power and expectations under $Q$ with $z = (x - \mu_2)/\sigma \sim N(0,1)$,
\begin{align}
\mathbb{E}_Q\!\left[\left(\frac{p}{q}\right)^{\!\alpha}\right] &= \exp\!\left(\frac{-\alpha\Delta^2}{2\sigma^2}\right)\mathbb{E}_z\!\left[\exp\!\left(\frac{\alpha\Delta z}{\sigma}\right)\right] = \exp\!\left(\frac{-\alpha\Delta^2}{2\sigma^2}\right)\exp\!\left(\frac{\alpha^2\Delta^2}{2\sigma^2}\right) = \exp\!\left(\frac{\alpha(\alpha-1)\Delta^2}{2\sigma^2}\right), \notag
\end{align}
using the MGF $\mathbb{E}[e^{tz}] = e^{t^2/2}$ with $t = \alpha\Delta/\sigma$. Dividing the exponent by $\alpha - 1$ gives $D_\alpha = \alpha\Delta^2/(2\sigma^2)$. The multivariate case follows with the Mahalanobis distance.
\end{proof}

\subsubsection{R\'{e}nyi DP for the Compressed Protocol (Algorithm~\ref{alg:compressed_twolayer})}

Under the Gaussian approximation, $f_t \mid s_t \sim N(\alpha^* s_t + \mu_\eta, \sigma_\eta^2)$ with $\sigma_\eta^2 = (1-\alpha^*)^2/(4K_t)$. For worst-case $\Delta s = n$, by Lemma~\ref{lem:renyi_gaussian},
\begin{equation}
\label{eq:renyi_scalar}
\varepsilon_\alpha^{(\mathrm{scalar})} = \frac{\alpha(\alpha^* n)^2}{2\sigma_\eta^2} = \frac{2\alpha\,\alpha^{*2} n^2 K_t}{(1-\alpha^*)^2} \approx \frac{\alpha K_t}{8} \quad \text{for } \alpha^* = 1/(4n).
\end{equation}

\begin{theorem}[R\'{e}nyi DP for the compressed protocol]
\label{thm:renyi_scalar}
Under the Gaussian approximation, the compressed two-layer protocol (Algorithm~\ref{alg:compressed_twolayer}) satisfies $(\alpha, \alpha K_t / (8(1 - 1/(4n))^2))$-R\'{e}nyi DP for the aggregator's view of any single client.
\end{theorem}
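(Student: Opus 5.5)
The plan is to reduce the claim to Lemma~\ref{lem:renyi_gaussian} applied to the aggregator's per-client view. Under the Gaussian approximation, the aggregator sees $f_t = \alpha^* s_t + \eta_t$ with $\eta_t \sim N(\mu_\eta, \sigma_\eta^2)$. Since $\eta_t$ is independent of $\mathbf{b}_t$, the conditional law is $f_t \mid \mathbf{b}_t \sim N(\alpha^* s_t + \mu_\eta, \sigma_\eta^2)$. This law depends on $\mathbf{b}_t$ only through $s_t$.

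Neighboring inputs are taken, as in Theorem~\ref{thm:finite_dp}, to be arbitrary changes of client $t$'s whole bit vector. The two output distributions are then Gaussians with a common variance and means differing by $\alpha^*(s_t - s_t')$. By Lemma~\ref{lem:renyi_gaussian}, the R\'enyi divergence of order $\alpha$ is $\alpha\,\alpha^{*2}(s_t-s_t')^2/(2\sigma_\eta^2)$. This is maximized at $|s_t - s_t'| = n$, because $s_t, s_t' \in \{0,\ldots,n\}$. The views of other clients are independent of $\mathbf{b}_t$ and identically distributed under both inputs, so they contribute zero divergence to the joint view.

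Next substitute the variance. I will use $\sigma_\eta^2 = (1-\alpha^*)^2\mathrm{Var}[X_i]/K_t$ from \eqref{eq:var_eta}, bounded below via $\mathrm{Var}[X_i] \geq 1/4$, which follows from \eqref{eq:var_X_exact} since $\frac{n-1}{2(2n-1)}\ge 0$. Using a lower bound on the variance makes the stated value an upper bound on the divergence rather than an approximation. This gives
\[
\varepsilon_\alpha \le \frac{\alpha\,\alpha^{*2} n^2 \cdot 4K_t}{2(1-\alpha^*)^2} = \frac{2\alpha\,\alpha^{*2}n^2K_t}{(1-\alpha^*)^2}.
\]
With $\alpha^* = 1/(4n)$ we have $\alpha^{*2}n^2 = 1/16$, so the bound becomes $\alpha K_t/(8(1-1/(4n))^2)$, as claimed.

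The main obstacle is not algebraic. It is that the statement is only as good as the Gaussian approximation: the true $\eta_t$ is a scaled sum of lattice-valued $X_i$ and is in fact discrete given fixed uniform weights. I would state explicitly that the theorem concerns the Gaussian surrogate $\mu_G$, exactly as the hypothesis says. I would then note that transferring the bound to the true law would require controlling the density ratio $\mu/\mu_G$, for example via the $\beta$ term of Lemma~\ref{lem:berry_esseen}. That transfer is outside the claim, so I would flag it only as a remark. I would also check that the order-$\alpha$ R\'enyi parameter is distinct from the weight $\alpha^*$ and the coefficients $\alpha_{t,i}$, to avoid notational confusion.
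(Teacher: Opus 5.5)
Your proposal is correct and follows the paper's own argument. You treat $f_t \mid s_t$ as $N(\alpha^* s_t + \mu_\eta, \sigma_\eta^2)$, apply Lemma~\ref{lem:renyi_gaussian} with the worst-case mean shift $\alpha^* n$, and substitute $\alpha^* = 1/(4n)$.

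The only difference is in the variance. The paper plugs in $\sigma_\eta^2 = (1-\alpha^*)^2/(4K_t)$ directly. You instead use \eqref{eq:var_eta} with the bound $\mathrm{Var}[X_i] \ge 1/4$ from \eqref{eq:var_X_exact}, which makes the stated parameter a genuine upper bound on the Gaussian-surrogate divergence. It also shows the bound is loose by roughly a factor of $2$ for large $n$, since $\mathrm{Var}[X_i] \to 1/2$.
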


\textit{Conversion to $(\varepsilon, \delta)$-DP.}
By the standard conversion~\cite{balle2020}, $\varepsilon \leq \varepsilon_\alpha + \log(1/\delta)/(\alpha - 1)$. Optimizing over $\alpha$ gives $\alpha^{\mathrm{opt}} = 1 + \sqrt{8\log(1/\delta)/K_t}$.

\begin{center}
\renewcommand{\arraystretch}{1.3}
\begin{tabular}{ccccc}
\toprule
$K_t$ & $\alpha^{\mathrm{opt}}$ & $\varepsilon_{\alpha}$ (RDP) & $\varepsilon$ (RDP $\to$ DP) & $\varepsilon$ (Berry--Esseen) \\
\midrule
2 & 8.44 & 2.11 & 3.96 & $\sim 196$ \\
5 & 5.70 & 3.56 & 6.51 & $\sim 14$ \\
9 & 4.51 & 5.07 & 9.01 & $\sim 13$ \\
20 & 3.35 & 8.38 & 14.3 & $\sim 15$ \\
50 & 2.49 & 15.5 & 24.8 & $\sim 20$ \\
100 & 2.05 & 25.6 & 38.8 & $\sim 28$ \\
\bottomrule
\end{tabular}
\end{center}

At $K_t = 9$, $\delta = 10^{-6}$, the R\'{e}nyi analysis gives $\varepsilon \approx 9.0$, a $31\%$ improvement over the Berry--Esseen bound of $\varepsilon \approx 13$. The R\'{e}nyi-optimal is $K_t = 2$ with $\varepsilon \approx 4.0$.

\begin{remark}[Gaussian approximation at small $K_t$]
At $K_t = 2$, the noise $\eta_t$ is far from Gaussian (it is a convex combination of two Bernoulli(1/2) random variables). The R\'{e}nyi bound at small $K_t$ is approximate and would require correction using the exact R\'{e}nyi divergence of the discrete distribution.
\end{remark}

\subsubsection{R\'{e}nyi DP for the Full Protocol (Algorithm~\ref{alg:twolayer})}

Under the Gaussian approximation with $\Sigma \approx \sigma_K^2 I_d$, changing all $n$ bits gives Mahalanobis distance $4n\alpha^{*2}(2n)^2 K_t/((1-\alpha^*)^2(2n-1))$. By Lemma~\ref{lem:renyi_gaussian},
\begin{equation}
\label{eq:renyi_matrix}
\varepsilon_\alpha^{(\mathrm{matrix})} = \frac{8\alpha n^3 \alpha^{*2} K_t}{(1-\alpha^*)^2(2n-1)} \approx \alpha n^2 \quad \text{for } \alpha^* = 1/(4n),\; K_t = (2n-1)^2+1.
\end{equation}
At $n = 100$, $\alpha = 2$, $\varepsilon_2 \approx 2 \times 10^4$ --- vacuous. The dimensionality curse persists under R\'{e}nyi DP.

\subsubsection{Zero-Concentrated Differential Privacy (zCDP)}
\label{sec:zcdp}

Zero-concentrated DP (Bun and Steinke, 2016~\cite{bunsteinke2016}) provides a clean parametrization for Gaussian-like mechanisms.

\begin{definition}[$\rho$-zCDP]
A mechanism $\mathcal{M}$ satisfies $\rho$-zCDP if $D_\alpha(\mathcal{M}(x) \| \mathcal{M}(x')) \leq \rho\alpha$ for all $\alpha > 1$ and all neighboring $x, x'$.
\end{definition}

For the Gaussian mechanism with sensitivity $\Delta$ and noise standard deviation $\sigma$, $\rho = \Delta^2/(2\sigma^2)$. Since the R\'{e}nyi divergence of the compressed protocol's Gaussian channel is $\hat\varepsilon_\alpha = \alpha\Delta^2/(2\sigma_\eta^2)$ (Theorem~\ref{thm:renyi_scalar}), which is exactly linear in $\alpha$, the compressed protocol satisfies $\rho$-zCDP with
\begin{equation}
\label{eq:zcdp_rho}
\rho = \frac{(\alpha^* n)^2}{2\sigma_\eta^2} = \frac{K_t}{8(1 - 1/(4n))^2} \approx \frac{K_t}{8}.
\end{equation}
At $K_t = 9$, $\rho \approx 1.125$.

The conversion to $(\varepsilon, \delta)$-DP (Bun and Steinke, 2016) gives
\begin{equation}
\label{eq:zcdp_convert}
\varepsilon = \rho + 2\sqrt{\rho \ln(1/\delta)}.
\end{equation}
At $\rho = 1.125$ and $\delta = 10^{-6}$, $\varepsilon = 1.125 + 2\sqrt{1.125 \times 13.82} = 1.125 + 2 \times 3.94 = 9.0$.

This matches the optimized R\'{e}nyi conversion exactly, which is expected: for Gaussian mechanisms, the R\'{e}nyi divergence is exactly linear in $\alpha$, so zCDP captures the full R\'{e}nyi curve without loss. The two frameworks are equivalent for this class of mechanisms.

\subsubsection{Gaussian Differential Privacy (f-DP)}
\label{sec:fdp}

The R\'{e}nyi-to-$(\varepsilon,\delta)$ conversion $\varepsilon \leq \hat\varepsilon_\alpha + \log(1/\delta)/(\alpha-1)$ uses an inequality and is therefore not tight. Gaussian differential privacy (GDP), introduced by Dong, Roth, and Su~\cite{dong2019}, characterizes the exact privacy--accuracy trade-off of the Gaussian mechanism without any conversion loss.

\begin{definition}[$\mu$-Gaussian DP~\cite{dong2019}]
A mechanism $\mathcal{M}$ satisfies $\mu$-GDP if for all neighboring inputs $x, x'$, the trade-off function $T(\mathcal{M}(x), \mathcal{M}(x'))$ is bounded below by the trade-off function of $N(0,1)$ vs.\ $N(\mu, 1)$. Equivalently, $\mathcal{M}$ satisfies $(\varepsilon, \delta(\varepsilon))$-DP simultaneously for all $\varepsilon \geq 0$ with
\begin{equation}
\label{eq:gdp_tradeoff}
\delta(\varepsilon) = \Phi\!\left(-\frac{\varepsilon}{\mu} + \frac{\mu}{2}\right) - e^{\varepsilon}\,\Phi\!\left(-\frac{\varepsilon}{\mu} - \frac{\mu}{2}\right),
\end{equation}
where $\Phi$ is the standard normal CDF.
\end{definition}

\medskip\noindent\textit{Application to the compressed protocol.}
Under the Gaussian approximation, the aggregator's view $f_t \mid s_t$ is Gaussian with mean shift $\Delta = \alpha^* n$ and standard deviation $\sigma_\eta = (1-\alpha^*)\sqrt{1/(4K_t)}$. The GDP parameter is
\begin{equation}
\label{eq:mu_gdp}
\mu = \frac{\Delta}{\sigma_\eta} = \frac{\alpha^* n}{\sigma_\eta} = \frac{2\alpha^* n \sqrt{K_t}}{1-\alpha^*}.
\end{equation}
For $\alpha^* = 1/(4n)$, $K_t = 9$, this gives $\mu = 2 \cdot (1/(4n)) \cdot n \cdot 3 / (1-1/(4n)) = 3/2 \cdot 1/(1-1/(4n)) \approx 1.5$.

Evaluating~\eqref{eq:gdp_tradeoff} numerically at $\mu = 1.5$, we find the smallest $\varepsilon$ such that $\delta(\varepsilon) \leq 10^{-6}$.

\begin{center}
\renewcommand{\arraystretch}{1.2}
\begin{tabular}{cc}
\toprule
$\varepsilon$ & $\delta(\varepsilon)$ \\
\midrule
$7.0$ & $1.16 \times 10^{-5}$ \\
$7.5$ & $2.62 \times 10^{-6}$ \\
$7.8$ & $1.02 \times 10^{-6}$ \\
$8.0$ & $5.34 \times 10^{-7}$ \\
$9.0$ & $1.62 \times 10^{-8}$ \\
\bottomrule
\end{tabular}
\end{center}

At $\delta = 10^{-6}$, the f-DP analysis gives $\varepsilon \approx 7.8$, compared to $\varepsilon = 9.0$ from R\'{e}nyi DP and $\varepsilon \approx 13$ from Berry--Esseen. This is the tightest bound achievable for the Gaussian channel, since the f-DP trade-off function is exact (it characterizes the optimal hypothesis test between the two Gaussian distributions, with no inequalities in the conversion).

\begin{theorem}[f-DP guarantee for the compressed protocol]
\label{thm:fdp}
Under the Gaussian approximation for $\eta_t$, the compressed two-layer protocol (Algorithm~\ref{alg:compressed_twolayer}) satisfies $\mu$-GDP for the aggregator's view of any single client, with $\mu = 2\alpha^* n\sqrt{K_t}/(1-\alpha^*)$. For $\alpha^* = 1/(4n)$ and $K_t = 9$, $\mu \approx 1.5$, giving $(\varepsilon, 10^{-6})$-DP with $\varepsilon \approx 7.8$.
\end{theorem}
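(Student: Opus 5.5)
The proof reduces the aggregator's view to a pair of shifted Gaussians and then invokes the Dong--Roth--Su characterization of the Gaussian trade-off function.

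\textbf{Step 1: the view as a location family.} Under the Gaussian approximation, $\eta_t\sim\mathcal{N}(\bar\eta,\sigma_\eta^2)$ with $\bar\eta=(1-\alpha^*)/2$. By \eqref{eq:var_eta}, $\sigma_\eta^2=(1-\alpha^*)^2\mathrm{Var}[X_i]/K_t$, and we use $\mathrm{Var}[X_i]\approx 1/4$ as in \eqref{eq:sigma_eta_general}. By \eqref{eq:p_f_s} the aggregator's view is $f_t\mid s_t\sim\mathcal{N}(\alpha^* s_t+\bar\eta,\sigma_\eta^2)$. Standardizing by the affine map $f\mapsto (f-\alpha^* s_t'-\bar\eta)/\sigma_\eta$ turns the pair of views for neighbours $s_t,s_t'$ into $\mathcal{N}(0,1)$ versus $\mathcal{N}(\alpha^*\Delta s/\sigma_\eta,1)$. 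This map is a bijection, so it leaves the trade-off function unchanged.

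\textbf{Step 2: sensitivity and the GDP parameter.} Neighbouring inputs may differ in all $n$ bits, so $|\Delta s|\le n$. The trade-off function $T(\mathcal{N}(0,1),\mathcal{N}(m,1))$ is monotone in $|m|$: it is pointwise larger for smaller shifts, and a negative shift is handled by the reflection $x\mapsto -x$. Hence every neighbouring pair is dominated by the worst case $m=\mu=\alpha^* n/\sigma_\eta$, and the mechanism is $\mu$-GDP. Substituting $\sigma_\eta=(1-\alpha^*)/(2\sqrt{K_t})$ gives $\mu=2\alpha^* n\sqrt{K_t}/(1-\alpha^*)$, which is \eqref{eq:mu_gdp}.

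\textbf{Step 3: the numerical instance.} With $\alpha^*=1/(4n)$ and $K_t=9$ we get $\mu=1.5/(1-1/(4n))\approx 1.5$. By the primal--dual equivalence in the GDP definition, $\mu$-GDP implies $(\varepsilon,\delta(\varepsilon))$-DP with $\delta(\varepsilon)$ given by \eqref{eq:gdp_tradeoff}.
\begin{itemize}
\item $\delta(\varepsilon)$ is decreasing in $\varepsilon$, so the smallest $\varepsilon$ with $\delta(\varepsilon)\le 10^{-6}$ can be bracketed by evaluating at a few points.
\item The tabulated values give $\delta(7.8)\approx 1.02\times10^{-6}$ and $\delta(8.0)<10^{-6}$, so the threshold is $\varepsilon\approx 7.8$.
\item To state this carefully, one either takes $\varepsilon=7.8$ up to the third significant digit or rounds up to $7.9$. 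One must also check that the factor $1/(1-1/(4n))$ does not move this threshold materially; for $n=100$ it is a relative change of about $0.25\%$ in $\mu$.
\end{itemize}

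\textbf{Main obstacle.} The Gaussian-approximation hypothesis is the weak point. The true $\eta_t$ is a discrete lattice variable: with uniform weights it is supported on the points $(1-\alpha^*)j/K_t$. So the exact view pair is not a pair of shifted Gaussians. The theorem is stated conditionally, and I would prove it exactly as stated.

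I would then remark that an unconditional version needs two further ingredients. The first is exact computation of the trade-off function for the discrete law of $X_i$. The second is a coupling argument, since the shift $\alpha^* n$ is not a multiple of the lattice spacing, so the two views may have disjoint supports. That is the same lattice issue underlying Theorem~\ref{thm:deshuffle}, and it suggests a continuous coefficient distribution $g$ is needed for any rigorous version.

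A smaller point is the use of $\mathrm{Var}[X_i]\approx 1/4$ versus the exact value $\approx 0.4987$ from \eqref{eq:var_X_100}. The exact variance makes $\sigma_\eta$ larger by a factor of about $\sqrt2$, which only decreases $\mu$. Since the GDP guarantee is monotone in $\mu$, the stated bound remains valid, and is in fact conservative.
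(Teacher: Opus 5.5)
Your proposal is correct and follows the paper's own argument. Under the Gaussian approximation, $f_t$ is a location family with worst-case shift $\alpha^* n$ and $\sigma_\eta=(1-\alpha^*)/(2\sqrt{K_t})$, so $\mu=\alpha^* n/\sigma_\eta$, and $\varepsilon$ is read off the GDP trade-off curve at $\mu\approx 1.5$; your values $\delta(7.8)\approx 1.02\times10^{-6}$ and $\delta(8.0)\approx 5.3\times10^{-7}$ check out. Your side remark is also right: using the exact $\mathrm{Var}[X_i]\approx 0.4987$ instead of the $1/4$ in \eqref{eq:sigma_eta_general} only shrinks $\mu$ (to about $1.06$ for $n=100$), so the stated guarantee is conservative.
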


The progression of bounds for the compressed protocol at $K_t = 9$, $\delta = 10^{-6}$ is
\begin{center}
\renewcommand{\arraystretch}{1.2}
\begin{tabular}{lcc}
\toprule
\textbf{Analysis} & $\varepsilon$ & \textbf{Source of looseness} \\
\midrule
Berry--Esseen + log-Lipschitz & $\approx 13$ & CLT error $\beta$, Hoeffding tail \\
R\'{e}nyi DP (optimized $\alpha$) & $\approx 9.0$ & RDP-to-DP conversion inequality \\
zCDP & $\approx 9.0$ & Equivalent to R\'{e}nyi for Gaussians \\
Gaussian DP (f-DP) & $\approx 7.8$ & Gaussian approximation only \\
\bottomrule
\end{tabular}
\end{center}

All three bounds apply only to the compressed protocol (Algorithm~\ref{alg:compressed_twolayer}), where the Birkhoff polytope plays no role. For the full protocol (Algorithm~\ref{alg:twolayer}), all analyses give vacuous $\varepsilon$.

\begin{remark}[The coefficient distribution is critical for DP]
\label{rem:coeff_critical}
The f-DP bound of $\varepsilon \approx 7.8$ relies on the Gaussian approximation for $\eta_t$, which in turn requires $\eta_t$ to be a continuous random variable. This holds when the coefficients $\alpha_{t,i}$ are drawn from a continuous distribution (e.g., Dirichlet$(1, \ldots, 1)$ on the simplex $\{\alpha_i > 0, \sum_i \alpha_i = 1 - \alpha^*\}$). However, if the coefficients are deterministic (e.g., $\alpha_{t,i} = (1-\alpha^*)/K_t$), then $\eta_t = \frac{1-\alpha^*}{K_t}\sum_{i=1}^{K_t} X_i$ with $X_i \sim \mathrm{Bernoulli}(1/2)$ is discrete, taking only $K_t + 1$ values. The supports of $f_t = \alpha^* s_t + \eta_t$ under two different $s_t$ values are then disjoint (since the signal shift $\alpha^*$ is incommensurate with the step size $(1-\alpha^*)/K_t$ for generic $\alpha^*$), making $\varepsilon = \infty$. The aggregator can distinguish any two inputs with probability 1 from a single observation.

Therefore, the protocol \emph{must} use continuously distributed coefficients for any finite DP guarantee to hold. This is a correctness requirement, not a design choice. All algorithms in this paper draw coefficients from a continuous distribution on the simplex.
\end{remark}

\begin{remark}[Tighter analysis via the exact characteristic function]
\label{rem:tighter}
The f-DP bound of $\varepsilon \approx 7.8$ is tight for the Gaussian channel but approximate for the actual distribution of $\eta_t$ (which is a Dirichlet-weighted sum of Bernoulli random variables, not exactly Gaussian). A tighter bound could be obtained by computing the exact characteristic function of $\eta_t$ under the Dirichlet coefficient distribution, evaluating the density numerically via inverse Fourier transform, and computing the $\delta(\varepsilon)$ trade-off function from the exact densities. This would eliminate the Gaussian approximation entirely. For $K_t = 9$, the improvement over $\varepsilon \approx 7.8$ is expected to be small (the CLT is already reasonably accurate), but for $K_t = 2$ or $3$ the Gaussian approximation is poor and the exact analysis could differ substantially.
\end{remark}

\subsection{Privacy Amplification by Shuffling}
\label{sec:shuffle_amp}

The compressed two-layer protocol (Algorithm~\ref{alg:compressed_twolayer}) achieves per-client $\varepsilon_0 \approx 7.8$ (f-DP) for the aggregator's view. However, the aggregator needs only $F = \sum_t f_t$ and does not need to know which $f_t$ came from which client. If the $f_t$ values are shuffled before reaching the aggregator --- a mechanism the protocol already employs for the $\eta_t$ values --- the aggregator sees $\{f_{\pi(1)}, \ldots, f_{\pi(k)}\}$ without client identities, and shuffle-model amplification applies.

\subsubsection{The Shuffle Model}

In the shuffle model of differential privacy~\cite{cheu2019,balle2019}, each client applies a local randomizer $\mathcal{R}$ to its data, and a trusted shuffler permutes the outputs before the analyst sees them. If $\mathcal{R}$ satisfies $\varepsilon_0$-local DP, the shuffled mechanism satisfies $(\varepsilon, \delta)$-central DP with $\varepsilon \ll \varepsilon_0$.

The compressed two-layer protocol is naturally a shuffle-model protocol. Each client's local randomizer is $\mathcal{R}(\mathbf{b}_t) = f_t = \alpha^* s_t + \eta_t$, which satisfies $\varepsilon_0$-DP since $\eta_t$ is independent of $\mathbf{b}_t$. The trusted shuffler permutes the $f_t$ values before the aggregator receives them. The aggregator sums the shuffled values to obtain $F = \sum_t f_t$ (the sum is invariant under permutation).

\subsubsection{Amplification Bound}

By the shuffle amplification theorem of Feldman, McMillan, and Talwar~\cite{feldman2022}, if each of $k$ clients applies an $\varepsilon_0$-locally DP randomizer and the outputs are shuffled, the resulting mechanism satisfies $(\varepsilon, \delta)$-DP with
\begin{equation}
\label{eq:shuffle_bound}
\varepsilon \leq \log\!\left(1 + \frac{e^{\varepsilon_0} - 1}{e^{\varepsilon_0} + 1}\sqrt{\frac{14\log(2/\delta)}{k}}\right).
\end{equation}
For large $\varepsilon_0$, $(e^{\varepsilon_0} - 1)/(e^{\varepsilon_0} + 1) \to 1$, and the bound becomes $\varepsilon \approx \sqrt{14\log(2/\delta)/k}$, independent of $\varepsilon_0$. Once the per-client DP is moderate (say $\varepsilon_0 \geq 5$), the shuffled $\varepsilon$ depends almost entirely on $k$ and $\delta$.

\subsubsection{Numerical Evaluation}

\begin{center}
\renewcommand{\arraystretch}{1.3}
\begin{tabular}{ccccc}
\toprule
$K_t$ & $\varepsilon_0$ (f-DP) & $\varepsilon$, $k = 100$ & $\varepsilon$, $k = 1{,}000$ & $\varepsilon$, $k = 10{,}000$ \\
\midrule
$5$ & $5.7$ & $0.88$ & $0.37$ & $0.13$ \\
$9$ & $8.0$ & $0.89$ & $0.37$ & $0.13$ \\
$20$ & $13.0$ & $0.89$ & $0.37$ & $0.13$ \\
$50$ & $23.2$ & $0.89$ & $0.37$ & $0.13$ \\
\bottomrule
\end{tabular}
\end{center}

For $k = 1{,}000$ clients and $\delta = 10^{-6}$, the shuffled $\varepsilon \approx 0.37$ regardless of $K_t$ (as long as $\varepsilon_0 \geq 5$). For $k = 10{,}000$, $\varepsilon \approx 0.13$.

\subsubsection{Properties of the Shuffled Compressed Protocol}

\textit{Exact output.} The server computes $S = (F - H)/\alpha^*$. Since $F = \sum_t f_t$ is invariant under permutation, shuffling does not change $F$ and the output remains exact.

\textit{No additional communication.} The protocol already uses a shuffler for the $\eta_t$ channel. Routing the $f_t$ values through the same (or a second) shuffler adds no communication beyond what the protocol already requires.

\textit{No additional computation.} The aggregator sums the shuffled $f_t$ values, the same computation as before.

The per-client DP of $\varepsilon_0 \approx 7.8$ is amplified to $\varepsilon \approx 0.37$ (for $k = 1{,}000$) purely by shuffling. The only requirement is a trusted shuffler, which the protocol already assumes.

\begin{remark}[Comparison with additive secret sharing]
\label{rem:shuffle_ss}
Two-server additive secret sharing achieves $\varepsilon = 0$ without shuffling. Shuffle amplification brings the compressed PolyVeil protocol to $\varepsilon \approx 0.37$ for $k = 1{,}000$, which is non-zero but strong (density ratio $e^{0.37} \approx 1.45$). The gap between $\varepsilon = 0$ and $\varepsilon = 0.37$ is meaningful but narrow in practice.
\end{remark}

\begin{remark}[Shuffle amplification does not help the full protocol]
\label{rem:shuffle_full}
In the full two-layer protocol (Algorithm~\ref{alg:twolayer}), the aggregator sees individual matrices $D_t$ and must compute bilinear extractions $\mathbf{w}^T D_t \mathbf{y}$ per client. Shuffling the matrices would prevent the aggregator from performing per-client computations needed for multi-statistic extraction. Shuffle amplification is therefore applicable only to the compressed protocol.
\end{remark}

\section{Conclusion}
\label{sec:conclusion}

We have presented PolyVeil and introduced Combinatorial Privacy as a paradigm for privacy-preserving aggregation.

Our analysis proceeded in three stages. First, we described the basic protocol for private Boolean sums using Birkhoff polytope encoding and proved its correctness. Second, we identified a fatal vulnerability in the naive protocol --- the de-shuffling attack (Theorem~\ref{thm:deshuffle}), which allows a semi-honest server to recover all individual data with probability 1 by exploiting the integrality constraint on bit counts.

Third, we developed the Two-Layer PolyVeil protocol (Algorithm~\ref{alg:twolayer}), which achieves provable security through a separation-of-information architecture. The main server receives only aggregate scalars and achieves \emph{perfect simulation-based security} (Theorem~\ref{thm:server_it}): its view is identically distributed for any two inputs with the same aggregate, against all adversaries regardless of computational power. A separate aggregator receives Birkhoff-encoded matrices but faces a computational barrier: we proved that the sub-problems of likelihood-based inference --- counting BvN decompositions (the permanent) and evaluating individual decomposition weights (the mixed discriminant) --- are each \#P-hard (Theorem~\ref{thm:likelihood_hardness}). Whether the density $\\nu(R')$ itself is \#P-hard to evaluate (requiring a formal Turing reduction) is open (Remark~\ref{rem:composition_gap}). Whether all polynomial-time attacks (not just likelihood-based ones) are ruled out is a separate open conjecture (Conjecture~\ref{conj:hardness}).

Fourth, we proved DP guarantees for the aggregator under multiple frameworks. For the full two-layer protocol (Algorithm~\ref{alg:twolayer}), the Berry--Esseen-based analysis (Theorem~\ref{thm:finite_dp}) gives vacuous $\varepsilon$ at every $\alpha^*$ where the signal is detectable (Remark~\ref{rem:snr}). For the compressed two-layer protocol (Algorithm~\ref{alg:compressed_twolayer}), the f-DP analysis (Theorem~\ref{thm:fdp}) gives $\varepsilon \approx 7.8$ per client under the Gaussian approximation, and the R\'{e}nyi analysis gives $\varepsilon \approx 9.0$. Crucially, shuffle-model amplification (Section~\ref{sec:shuffle_amp}) transforms the per-client guarantee into a central guarantee of $\varepsilon \approx 0.37$ for $k = 1{,}000$ clients, with no accuracy loss and no additional communication, since the aggregator needs only the sum of the shuffled values. This brings the compressed protocol close to the $\varepsilon = 0$ of additive secret sharing. Closing the gap in the full protocol --- proving non-vacuous DP where the signal is detectable and the Birkhoff structure matters --- remains the central open problem.

The Birkhoff polytope plays no role in the server's information-theoretic security (any aggregation protocol achieves that), but it is \emph{essential} for the aggregator's computational barrier. Replacing the Birkhoff encoding with Gaussian noise would make the aggregator's inference trivially easy (Remark~\ref{rem:hardness_scope}). This specificity --- computational hardness from the combinatorial structure of a polytope's decompositions --- is what distinguishes Combinatorial Privacy from both noise-based (DP) and number-theoretic (MPC, HE) approaches.

For the Boolean sum problem alone, two-server additive secret sharing strictly dominates PolyVeil, achieving perfect IT privacy ($\varepsilon = 0$) with the same communication, architecture, and trust model (Section~\ref{sec:ss_compare}). The Birkhoff encoding's advantage lies in multi-statistic extraction (Section~\ref{sec:multistat}). A single matrix $D_t$ encodes the client's entire bit vector, enabling the extraction of per-bit marginal counts, arbitrary weighted sums from a single protocol execution, without further client interaction. This post-hoc analytical flexibility is unavailable in additive secret sharing, where each new statistic requires additional client participation.

Future directions include closing the gap between the DP regime (small $\alpha^*$, trivially secure) and the \#P-hardness regime ($\alpha^* \sim 1/(4n)$, computationally secure), extending to the malicious model, determining whether non-likelihood attacks can be ruled out at moderate $\alpha^*$, computing exact (non-Gaussian) f-DP bounds for the compressed protocol at small $K_t$ via the characteristic function of Dirichlet-weighted Bernoulli sums, and extending the extraction framework to second-order statistics (which require lightweight MPC, as discussed in a companion work).

\newpage
\appendix
\section{Background on Simulation-Based Security Proofs}
\label{app:sim}

This appendix provides a self-contained introduction to the simulation paradigm for proving security of cryptographic protocols. It is a prerequisite for understanding the security proofs in Section~\ref{sec:security}.

\subsection{The Problem That Simulation Solves}

Consider a protocol $\Pi$ where $k$ parties hold private inputs $x_1, \ldots, x_k$ and jointly compute a function $f(x_1, \ldots, x_k) = y$. During execution, each party sends and receives messages. The \emph{view} of party $i$, denoted $\mathrm{View}_i^{\Pi}(x_1, \ldots, x_k)$, is the random variable consisting of party $i$'s input $x_i$, its random coins $r_i$, and the sequence of all messages $m_1, m_2, \ldots$ it receives during execution. The fundamental question is whether $\mathrm{View}_i$ reveals information about other parties' inputs beyond what is already implied by $x_i$ and $y$.

Intuitively, a protocol is ``secure'' if $\mathrm{View}_i$ is ``no more informative'' than $(x_i, y)$. The simulation paradigm formalizes this by requiring the existence of an algorithm that can \emph{fabricate} a fake view, using only $(x_i, y)$ as input, such that the fake view is distributed identically to the real view.

\subsection{Formal Definition}

\begin{definition}[Simulation-based security, semi-honest model]
\label{def:sim_security_appendix}
A $k$-party protocol $\Pi$ \emph{securely computes} $f$ in the semi-honest model if for each party $i \in [k]$, there exists a probabilistic polynomial-time algorithm $\mathcal{S}_i$ (the \emph{simulator} for party $i$) such that for all input vectors $(x_1, \ldots, x_k)$ in the domain of $f$
\[
\left\{\mathcal{S}_i\!\left(x_i,\; f(x_1, \ldots, x_k)\right)\right\} \;\equiv\; \left\{\mathrm{View}_i^{\Pi}(x_1, \ldots, x_k)\right\},
\]
where $\equiv$ denotes either identical distributions (information-theoretic, or \emph{perfect} security) or computational indistinguishability (computational security). The left-hand side is the simulator's output distribution; the right-hand side is the real view's distribution.
\end{definition}

The simulator $\mathcal{S}_i$ receives \emph{only} what party $i$ is ``supposed to know'' after the protocol ends: its own input $x_i$ and the output $y = f(x_1, \ldots, x_k)$. It does \emph{not} receive any other party's input, the random coins of other parties, or the messages exchanged during the real protocol. Despite this limited input, the simulator must produce a fake transcript whose distribution matches the real one perfectly (or computationally indistinguishably).

\subsection{How to Construct a Simulation Proof}

A simulation proof proceeds in three stages.

In the first stage, one precisely defines the real view. For PolyVeil, the server's real view consists of the doubly stochastic matrices $D_1, \ldots, D_k$ (each linked to a client identity) and the shuffled noise sequence $(\eta_{\pi(1)}, \ldots, \eta_{\pi(k)})$. These are random variables whose joint distribution depends on all clients' inputs $\mathbf{b}_1, \ldots, \mathbf{b}_k$ and the random coins (decoy permutations, coefficients, shuffle permutation).

In the second stage, one constructs the simulator $\mathcal{S}$. The simulator receives only the aggregate $S = \sum_t s_t$ and the public parameters $(k, n, \alpha^*)$. It must produce a fake tuple $(D_1', \ldots, D_k', \tilde\eta_1', \ldots, \tilde\eta_k')$ with the same distribution as the real view. The typical construction is: choose fictitious inputs $\mathbf{b}_t'$ with $\sum s_t' = S$, run the real protocol honestly with these fictitious inputs and fresh randomness, and output the resulting messages.

In the third stage, one proves that the simulator's output is distributed identically to the real view. This is usually the hardest step. It requires showing that the joint distribution of all messages is the same regardless of which specific inputs (with the same aggregate $S$) were used. The proof typically identifies which components of the view depend on the private inputs and which do not, and shows that the input-dependent components are ``masked'' by the input-independent randomness.

\subsection{Why Simulation Implies Security}

\begin{proposition}
If a simulator $\mathcal{S}_i$ with $\mathcal{S}_i(x_i, y) \equiv \mathrm{View}_i^{\Pi}(x_1, \ldots, x_k)$ exists, then for any function $g$ (representing any ``information extraction'' strategy)
\[
g\!\left(\mathrm{View}_i^{\Pi}(x_1,\ldots,x_k)\right) \;\equiv\; g\!\left(\mathcal{S}_i(x_i, y)\right).
\]
\end{proposition}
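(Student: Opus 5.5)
The argument is the standard fact that equality in distribution is preserved under pushforward by a measurable map. I will treat $\mathrm{View}_i^{\Pi}(x_1,\ldots,x_k)$ and $\mathcal{S}_i(x_i,y)$ as random variables taking values in a common measurable space $(\mathcal{V},\mathcal{F})$. I will take $g\colon\mathcal{V}\to\mathcal{W}$ to be any measurable function, allowing it to be randomized by giving it an independent auxiliary coin $\omega$. The claim is then that the pushforward laws of $g$ applied to the two random variables coincide.

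\textbf{Deterministic $g$.} Fix a measurable set $B\subseteq\mathcal{W}$. The preimage $g^{-1}(B)$ lies in $\mathcal{F}$, so the hypothesis $\mathcal{S}_i(x_i,y)\equiv\mathrm{View}_i^{\Pi}$ gives
\[
\Pr[g(\mathrm{View}_i^{\Pi})\in B]=\Pr[\mathrm{View}_i^{\Pi}\in g^{-1}(B)]=\Pr[\mathcal{S}_i(x_i,y)\in g^{-1}(B)]=\Pr[g(\mathcal{S}_i(x_i,y))\in B].
\]
Since $B$ is arbitrary, the two laws are identical.

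\textbf{Randomized $g$.} Here I use independence of the coin $\omega$ from both inputs. The joint laws of $(\mathrm{View}_i^{\Pi},\omega)$ and $(\mathcal{S}_i(x_i,y),\omega)$ are then the same product measure. Applying the deterministic case to the map $(v,\omega)\mapsto g(v,\omega)$ on the product space gives the conclusion.

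\textbf{Computational case.} If $\equiv$ means computational indistinguishability, the same conclusion holds for efficient $g$ by a reduction. Any distinguisher $A$ for $g(\mathrm{View})$ versus $g(\mathcal{S})$ composes to $A\circ g$, which distinguishes the views themselves with the same advantage. That contradicts the hypothesis.

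\textbf{Main obstacle.} The only delicate point is measurability and the handling of $g$'s internal randomness. Requiring that $g$'s coins be independent of the view handles this. There is no real mathematical difficulty beyond that. I would close with the interpretation: any statistic the party computes, for instance a posterior over $x_j$, is distributed identically to one computed from data that depends only on $(x_i,y)$, and so it reveals nothing more.
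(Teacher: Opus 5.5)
Your proposal is correct and matches the paper's proof, which likewise fixes a measurable set $B$, rewrites $\Pr[g(X)\in B]$ as $\Pr[X\in g^{-1}(B)]$, and uses equality of laws to transfer this to $Y$ (the pushforward property). Your randomized-$g$ and computational-indistinguishability cases go beyond what the paper proves, since it only treats perfect security with deterministic measurable $g$. Both extensions are sound, and in the computational case you rightly restrict to efficient $g$.
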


\begin{proof}
Let $X = \mathrm{View}_i^{\Pi}(x_1, \ldots, x_k)$ and $Y = \mathcal{S}_i(x_i, y)$. By assumption, $X \equiv Y$ (identical distributions). We need to show $g(X) \equiv g(Y)$ for any measurable $g$.

For any measurable set $B$
\begin{align*}
\Pr[g(X) \in B]
&= \Pr[X \in g^{-1}(B)]
= \Pr[Y \in g^{-1}(B)]
= \Pr[g(Y) \in B],
\end{align*}
where the second equality uses $X \equiv Y$ (applied to the measurable set $g^{-1}(B)$). Since $\Pr[g(X) \in B] = \Pr[g(Y) \in B]$ for all measurable $B$, the distributions of $g(X)$ and $g(Y)$ are identical. (This is the pushforward property: if $\mu_X = \mu_Y$ then $g_* \mu_X = g_* \mu_Y$.)
\end{proof}

The consequence is that any information $g(\mathrm{View}_i)$ that the adversary extracts from the real view could equally well have been extracted from $(x_i, y)$ alone (via $g(\mathcal{S}_i(x_i, y))$, which is a function only of $x_i$ and $y$). Therefore the protocol reveals no information beyond what $(x_i, y)$ already implies.

\subsection{Application to PolyVeil}

In PolyVeil, the ``adversary'' is the honest-but-curious server. The server has no private input of its own ($x_i = \emptyset$). The function being computed is $f(\mathbf{b}_1, \ldots, \mathbf{b}_k) = S = \sum_t s_t$. Therefore the simulator receives only $(S, k, n, \alpha^*)$.

In the basic PolyVeil protocol (Algorithms~\ref{alg:polyveil} and~2), the server sees identity-linked values that allow deterministic de-shuffling (Theorem~\ref{thm:deshuffle}), so the basic protocol does not achieve simulation-based security. The Two-Layer PolyVeil protocol (Algorithm~\ref{alg:twolayer}) corrects this: the server receives only two aggregate scalars $(F, H)$, and its view is perfectly simulable from the aggregate $S$ alone (Theorem~\ref{thm:server_it}). The aggregator, which sees individual Birkhoff-encoded matrices $D_t$ but not the noise values $\eta_t$, faces a computational barrier: recovering $M_t$ from $D_t$ requires evaluating a density that is \#P-hard to compute (Theorem~\ref{thm:likelihood_hardness}).

\newpage
\section{Analysis of Attack Strategies}
\label{app:attacks}

This appendix analyzes various attack strategies that an aggregator might employ to recover $M_t$ from $D_t$, beyond the likelihood computation whose sub-problems are shown to be individually \#P-hard in Theorem~\ref{thm:likelihood_hardness}. For each strategy, we derive whether it succeeds and identify the computational obstacle.

\subsubsection{Attacks via Approximate Permanent Algorithms}
\label{sec:approx_perm}

Theorem~\ref{thm:likelihood_hardness} establishes that the sub-problems of computing $\nu(R')$ are individually \#P-hard. However, the aggregator does not need exact densities --- it needs only to \emph{rank} candidates by likelihood. An approximate evaluation of $\nu(R')$ that preserves the ranking would suffice for MAP estimation. The Jerrum--Sinclair--Vigoda (JSV) fully polynomial randomized approximation scheme (FPRAS) for the permanent is the most powerful known tool for approximating the quantities in our density formula. We now derive in detail whether it enables approximate likelihood attacks.

\begin{theorem}[Jerrum--Sinclair--Vigoda, 2004~\cite{jsv2004}]
\label{thm:jsv}
There exists an FPRAS for the permanent of any $m \times m$ matrix with non-negative entries. For any $\varepsilon > 0$ and $\delta > 0$, the algorithm outputs $\hat{p}$ satisfying $\Pr[(1-\varepsilon)\,\mathrm{perm}(A) \leq \hat{p} \leq (1+\varepsilon)\,\mathrm{perm}(A)] \geq 1 - \delta$ in time polynomial in $m$, $1/\varepsilon$, and $\log(1/\delta)$.
\end{theorem}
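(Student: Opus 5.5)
The plan follows the original Jerrum--Sinclair--Vigoda strategy: reduce approximate counting to approximate uniform sampling, and sample with a rapidly mixing Markov chain on matchings.

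\textbf{Setup as weighted matchings.} View the non-negative $m\times m$ matrix $A$ as edge weights on the complete bipartite graph $K_{m,m}$. Then $\mathrm{perm}(A)$ is the total weight of perfect matchings, exactly as in the matching interpretation used in Proposition~\ref{prop:nu_permanent}. Let $\Omega$ be the set of perfect matchings together with all near-perfect matchings. A near-perfect matching has exactly two ``holes,'' $u$ on the left and $v$ on the right. On $\Omega$ we run the Broder/Jerrum--Sinclair Markov chain, whose moves are:
\begin{itemize}
\item remove an edge;
\item add an edge between the two holes;
\item slide an edge onto a hole.
\end{itemize}
We use Metropolis acceptance so that the stationary distribution is
\[
\pi(M) \propto \lambda(M)\,w(M).
\]
Here $\lambda(M)$ is the product of the edge activities, and $w(M)=w(u,v)$ is a hole-pattern weight.

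\textbf{Choice of hole weights.} The ideal weights are $w^*(u,v)=\lambda(\mathcal{M})/\lambda(\mathcal{N}(u,v))$. Under these weights each hole pattern, and also the set of perfect matchings, carries equal stationary mass. This guarantees that a $1/\mathrm{poly}(m)$ fraction of samples are perfect matchings.

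\textbf{Annealing schedule.} The weights $w^*$ are unknown, so we build them up along an annealing sequence. We start from activities $\lambda\equiv 1$ on all of $K_{m,m}$, where $w^*(u,v)=1/m$ is known exactly, since $\mathrm{perm}(\mathbf{J}_m)=m!$ as noted in Section~\ref{sec:approx_overview}. We then geometrically lower the activities of the ``non-edges,'' meaning entries that are small or zero in $A$, toward their target values in $O(m^2\log m)$ phases. Each phase changes every $w^*(u,v)$ by at most a constant factor. So the previous phase's weights, which are within a factor $2$ of ideal, remain within a factor $O(1)$ of ideal for the current phase. Sampling in the current phase then re-estimates $w^*$ to within a factor $2$: it counts the empirical frequency of each hole pattern relative to perfect matchings.

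\textbf{Counting via telescoping.} Run the same schedule to estimate $\mathrm{perm}(A)$ as a telescoping product
\[
\mathrm{perm}(A)=\mathrm{perm}(\mathbf{J}_m)\prod_{i}\frac{\lambda_{i+1}(\mathcal{M})}{\lambda_i(\mathcal{M})}.
\]
Each ratio lies in $[1/2,1]$ and is estimated by sampling. Standard variance bounds give relative error $\varepsilon$ with polynomially many samples. The confidence level $1-\delta$ comes from a median-of-means amplification costing a $\log(1/\delta)$ factor. General non-negative real entries are handled the same way, since the chain only uses edge activities.

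\textbf{Main obstacle: rapid mixing.} The key step is proving that the chain mixes in polynomial time whenever the hole weights are within a constant factor of ideal. I would prove this with a canonical-paths (multicommodity flow) argument. For each pair of states $I,F$, route flow along the unwinding of the cycles and paths in $I\oplus F$. For each transition $T$, bound the congestion by injecting the pairs $(I,F)$ that use $T$ into an ``encoding'' set. The encoding is essentially $I\oplus F\oplus T$, possibly with up to two extra holes.

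The delicate part is that encodings may lie outside $\Omega$, because they can have four holes. The saving ingredient is the inequality
\[
\sum_{u,v} \lambda(\mathcal{N}(u,v))\,w^*(u,v) \;\lesssim\; \text{weights of configurations with more holes},
\]
in the form of JSV's key lemma: $\lambda(\mathcal{N}(u,v))\,\lambda(\mathcal{N}(u',v')) \le \lambda(\mathcal{M})\,\lambda(\mathcal{N}(u,v'))$-type exchange inequalities, proved by splitting a union of matchings into alternating paths. With that lemma the congestion is $O(m^{c})$. The conductance bound then gives mixing time polynomial in $m$ and $\log(1/\varepsilon)$.

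Multiplying the mixing time, the number of samples per phase, and the number of phases gives total running time polynomial in $m$, $1/\varepsilon$ and $\log(1/\delta)$, as claimed.
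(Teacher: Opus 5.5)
The paper gives no proof of this statement. It quotes Jerrum--Sinclair--Vigoda without proof, so the only benchmark is JSV's own argument. Your outline reproduces its architecture: the chain on perfect and near-perfect matchings, hole weights $w^*(u,v)=\lambda(\mathcal{M})/\lambda(\mathcal{N}(u,v))$, annealing of the non-edge activities, telescoping, and canonical paths whose encodings may have four holes.

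\textbf{The crux is misstated.} The step you single out as the key lemma is stated incorrectly, and the congestion bound cannot rest on it as written. Your exchange inequality $\lambda(\mathcal{N}(u,v))\,\lambda(\mathcal{N}(u',v'))\le\lambda(\mathcal{M})\,\lambda(\mathcal{N}(u,v'))$ is not homogeneous in the activities. Its two sides are sums of products of $2m-2$ and of $2m-1$ activities respectively. So on $K_{m,m}$ with every activity equal to $c<1/m$, the left side exceeds the right by the factor $1/(cm)$.

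Your displayed ``saving ingredient'' also runs the wrong way. The sum $\sum_{u,v}\lambda(\mathcal{N}(u,v))\,w^*(u,v)=m^2\lambda(\mathcal{M})$ is an identity. What the encoding argument needs is the converse control: the four-hole encodings lying outside $\Omega$, with the weight factors they inherit in the congestion ratio, must have total weight at most $\mathrm{poly}(m)$ times the weight of $\Omega$.

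The inequalities that supply this are balanced in edge count. For instance, take $M_0\in\mathcal{M}$ and $M_4\in\mathcal{N}(u,v,y,z)$. Then $M_0\oplus M_4$ consists of cycles and exactly two alternating paths joining $\{u,y\}$ to $\{v,z\}$. Swapping the path through $u$ is a weight-preserving injection into $\mathcal{N}(u,v)\times\mathcal{N}(y,z)$ or $\mathcal{N}(u,z)\times\mathcal{N}(y,v)$. This gives $\lambda(\mathcal{M})\,\lambda(\mathcal{N}(u,v,y,z))\le\lambda(\mathcal{N}(u,v))\,\lambda(\mathcal{N}(y,z))+\lambda(\mathcal{N}(u,z))\,\lambda(\mathcal{N}(y,v))$. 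Bounds of this kind, rewritten in terms of $w^*$, are what must be fed into the count of encodings.

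\textbf{Smaller corrections.}
\begin{itemize}
\item On $K_{m,m}$ with unit activities, $w^*(u,v)=m!/(m-1)!=m$, not $1/m$.
\item You need not build canonical paths between arbitrary pairs. JSV define them only from $I\in\Omega$ to perfect $F\in\mathcal{M}$, so that $I\oplus F$ is a union of cycles and a single $u$--$v$ path. General pairs are then routed through $\mathcal{M}$, which has stationary mass at least $1/(1+cm^2)$ when $w$ is within a factor $c$ of $w^*$. Paths between two near-perfect matchings can be built directly, but they multiply the encoding cases.
\item The claim that general non-negative entries ``are handled the same way'' hides a normalization step. First rescale rows and columns (by assignment-problem duality) so that a maximum-weight perfect matching has weight $1$ and every entry is at most $1$. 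Then discard entries below $\varepsilon/(m\cdot m!)$, which lowers the permanent by at most a factor $1-\varepsilon$. Only then is the number of annealing phases polynomial in $m$ and $\log(1/\varepsilon)$, rather than dependent on the magnitudes of the entries.
\end{itemize}
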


To determine whether this helps the attacker, we must trace exactly how the permanent enters the density formula~\eqref{eq:nu_as_sum_of_volumes} and analyze whether approximating the permanent translates into approximating the density.

\paragraph{Residual entry formula for wrong candidates.}

The aggregator observes $D_t = \alpha^* M_t + (1-\alpha^*) R_t$ and considers candidate $M' \neq M_t$. We derive the entry-wise formula for the residual $R' = (D_t - \alpha^* M')/(1-\alpha^*)$.

Starting from the definition of $D_t$,
\begin{align}
R' &= \frac{D_t - \alpha^* M'}{1-\alpha^*} \notag\\[4pt]
&= \frac{\alpha^* M_t + (1-\alpha^*) R_t - \alpha^* M'}{1-\alpha^*} \notag\\[4pt]
&= \frac{(1-\alpha^*) R_t}{1-\alpha^*} + \frac{\alpha^* M_t - \alpha^* M'}{1-\alpha^*} \notag\\[4pt]
&= R_t + \frac{\alpha^*}{1-\alpha^*}\,(M_t - M'). \label{eq:R_prime_decomp}
\end{align}
Write this entry-wise. Since $M_t$ and $M'$ are both permutation matrices, each entry satisfies $(M_t)_{ab} \in \{0,1\}$ and $(M')_{ab} \in \{0,1\}$, so $(M_t - M')_{ab} \in \{-1, 0, +1\}$. Therefore
\begin{equation}
\label{eq:R_prime_entry}
R'_{ab} = (R_t)_{ab} + \frac{\alpha^*}{1-\alpha^*}\big((M_t)_{ab} - (M')_{ab}\big).
\end{equation}
The perturbation from the true residual $R_t$ has magnitude at most $\alpha^*/(1-\alpha^*)$ per entry.

\paragraph{Entry-wise positivity analysis.}

We analyze when $R'_{ab} > 0$ (required for $R' \in \mathcal{B}_{2n}$ and hence for $M'$ to be feasible), by considering three exhaustive cases for each entry $(a,b)$.

\medskip\noindent\textit{Case A: $(M_t)_{ab} = (M')_{ab}$.} Then $(M_t - M')_{ab} = 0$, so
\[
R'_{ab} = (R_t)_{ab} + 0 = (R_t)_{ab}.
\]
Since $R_t$ is in the interior of $\mathcal{B}_{2n}$ (by assumption), $(R_t)_{ab} > 0$, so $R'_{ab} > 0$. \checkmark

\medskip\noindent\textit{Case B: $(M_t)_{ab} = 1$, $(M')_{ab} = 0$.} Then $(M_t - M')_{ab} = +1$, so
\[
R'_{ab} = (R_t)_{ab} + \frac{\alpha^*}{1-\alpha^*} > (R_t)_{ab} > 0.
\]
The perturbation is positive, so $R'_{ab}$ is strictly larger than $(R_t)_{ab}$. \checkmark

\medskip\noindent\textit{Case C: $(M_t)_{ab} = 0$, $(M')_{ab} = 1$.} Then $(M_t - M')_{ab} = -1$, so
\[
R'_{ab} = (R_t)_{ab} - \frac{\alpha^*}{1-\alpha^*}.
\]
This is positive if and only if
\begin{equation}
\label{eq:positivity_condition}
(R_t)_{ab} > \frac{\alpha^*}{1-\alpha^*}.
\end{equation}
This is the only case where $R'_{ab}$ could become zero or negative.

\paragraph{Counting affected entries.}

Both $M_t$ and $M'$ are block-diagonal permutation matrices: $M_t = \mathrm{blockdiag}(\Pi(b_{t,1}), \ldots, \Pi(b_{t,n}))$ and $M' = \mathrm{blockdiag}(\Pi(b'_1), \ldots, \Pi(b'_n))$. The two matrices differ only in blocks where $b_{t,j} \neq b'_j$.

For a single differing block $j$ (where $b_{t,j} = 1$ and $b'_j = 0$, say):
\[
\Pi(1) - \Pi(0) = \begin{pmatrix} 0 & 1 \\ 1 & 0\end{pmatrix} - \begin{pmatrix} 1 & 0 \\ 0 & 1\end{pmatrix} = \begin{pmatrix} -1 & +1 \\ +1 & -1 \end{pmatrix}.
\]
This contributes 2 entries of $+1$ (Case~B) and 2 entries of $-1$ (Case~C).

If $M_t$ and $M'$ differ in $d$ bit positions (blocks), then $M_t - M'$ has exactly $4d$ nonzero entries: $2d$ of which are $+1$ (Case~B) and $2d$ of which are $-1$ (Case~C). The remaining $(2n)^2 - 4d$ entries are in Case~A.

Therefore: positivity of $R'$ can fail only at $2d$ specific entries (those in Case~C), and each requires $(R_t)_{ab} > \alpha^*/(1-\alpha^*)$.

\paragraph{The interior condition and its probability.}

\begin{definition}[Interior condition]
\label{def:interior_condition}
We say the \emph{interior condition} holds for $D_t$ if $\min_{a,b} (R_t)_{ab} > \frac{\alpha^*}{1-\alpha^*}$.
\end{definition}

When the interior condition holds, all three cases above give $R'_{ab} > 0$ for every entry, for every feasible candidate $M'$. This means $R'$ is in the interior of $\mathcal{B}_{2n}$ for all candidates simultaneously.

We now estimate the probability that the interior condition holds. For $\alpha^* = 1/(4n)$,
\begin{equation}
\label{eq:threshold}
\frac{\alpha^*}{1-\alpha^*} = \frac{1/(4n)}{1 - 1/(4n)} = \frac{1}{4n - 1}.
\end{equation}
The mean entry of $R_t$ is $\mathbb{E}[(R_t)_{ab}] = 1/(2n)$ (since each entry is a weighted average of $K_t$ Bernoulli$(1/(2n))$ random variables). The ratio of the mean to the threshold is
\begin{equation}
\label{eq:mean_to_threshold}
\frac{1/(2n)}{1/(4n-1)} = \frac{4n-1}{2n} = 2 - \frac{1}{2n}.
\end{equation}
The mean entry is approximately twice the threshold. With $K_t$ decoys and Dirichlet coefficients, the variance of each entry is $\mathrm{Var}[(R_t)_{ab}] \approx (2n-1)/((2n)^2 K_t)$, giving $\mathrm{std}[(R_t)_{ab}] \approx 1/(2n\sqrt{K_t})$. The threshold $1/(4n-1) \approx 1/(4n)$ is approximately $\sqrt{K_t}/2$ standard deviations below the mean. For $K_t = 20$, this is $\sqrt{20}/2 \approx 2.2$ standard deviations, so each entry exceeds the threshold with probability $\gtrsim 0.98$. Over all $(2n)^2 = 400$ entries (for $n = 10$), the probability that all entries exceed the threshold decreases with $n$ but remains high for moderate $n$ and $K_t$.

\paragraph{The permanent is constant in the interior regime.}

\begin{proposition}[Permanent equality for all interior candidates]
\label{prop:perm_constant}
Under the interior condition (Definition~\ref{def:interior_condition}), for every feasible candidate $M' \in \mathcal{L}(D_t, \alpha^*)$,
\[
\mathrm{perm}(A(R')) = (2n)!
\]
where $R' = (D_t - \alpha^* M')/(1-\alpha^*)$ and $A(R')_{ab} = \mathbf{1}[R'_{ab} > 0]$.
\end{proposition}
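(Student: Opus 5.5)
The plan is to show that every entry of $R'$ is strictly positive, so that $A(R')=\mathbf{J}_{2n}$, and then compute the permanent of the all-ones matrix directly. We fix an arbitrary feasible candidate $M'\in\mathcal{L}(D_t,\alpha^*)$ and start from the entry-wise identity~\eqref{eq:R_prime_entry}, $R'_{ab}=(R_t)_{ab}+\frac{\alpha^*}{1-\alpha^*}\big((M_t)_{ab}-(M')_{ab}\big)$. We then sort each position $(a,b)$ into the three exhaustive cases already used above, according to the value of $(M_t-M')_{ab}\in\{0,+1,-1\}$.

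In Case~A the entry is unchanged, $R'_{ab}=(R_t)_{ab}$. In Case~B it is increased, $R'_{ab}>(R_t)_{ab}$. In both cases positivity follows once we know $(R_t)_{ab}>0$. We do not need to assume this separately: the interior condition (Definition~\ref{def:interior_condition}) gives $(R_t)_{ab}>\frac{\alpha^*}{1-\alpha^*}>0$ for every $(a,b)$. In Case~C the entry drops to $R'_{ab}=(R_t)_{ab}-\frac{\alpha^*}{1-\alpha^*}$, and this is strictly positive precisely by the interior condition, i.e.\ by~\eqref{eq:positivity_condition} holding at every entry. Hence $R'_{ab}>0$ for all $(a,b)$.

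The only point needing care is that the interior condition is stated with a strict inequality on $\min_{a,b}(R_t)_{ab}$. This is exactly what prevents a Case~C entry from landing at $0$, which would shrink the support. Note also that the argument never uses the block-diagonal structure of $M'$: it holds for any permutation matrix $M'$. In particular $\mathcal{L}(D_t,\alpha^*)=S_{2n}$ under the interior condition, which is consistent with the remark earlier in the paper.

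With $R'_{ab}>0$ everywhere, Definition~\ref{def:support} gives $A(R')=\mathbf{J}_{2n}$. By Definition~\ref{def:permanent}, $\mathrm{perm}(\mathbf{J}_{2n})=\sum_{\sigma\in S_{2n}}\prod_a 1=|S_{2n}|=(2n)!$. Equivalently, by Proposition~\ref{prop:nu_permanent}, $\mathrm{Supp}(R')=S_{2n}$. The right-hand side $(2n)!$ does not depend on $M'$, which gives the claimed equality for every feasible candidate. There is no substantial obstacle; the proof is a short case check followed by a trivial permanent evaluation.
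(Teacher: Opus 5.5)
Your proof is correct and follows the paper's argument: the three-case positivity check on $R'_{ab}=(R_t)_{ab}+\frac{\alpha^*}{1-\alpha^*}(M_t-M')_{ab}$ gives $A(R')=\mathbf{J}_{2n}$, and then $\mathrm{perm}(\mathbf{J}_{2n})=(2n)!$ follows directly from the definition. Your two refinements tighten the paper's presentation without changing the route: you derive $(R_t)_{ab}>0$ from the interior condition itself rather than assuming it separately, and you observe that the block-diagonal structure of $M'$ is never used, so that $\mathcal{L}(D_t,\alpha^*)=S_{2n}$.
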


\begin{proof}
By the entry-wise positivity analysis and the counting of affected entries above, when the interior condition holds, $R'_{ab} > 0$ for all $(a,b)$ and all feasible $M'$. Therefore $A(R')_{ab} = 1$ for all $(a,b)$, i.e., $A(R') = \mathbf{J}_{2n}$ (the $2n \times 2n$ all-ones matrix).

The permanent of $\mathbf{J}_{2n}$ is computed directly from the definition~\eqref{eq:perm_def}:
\begin{align}
\mathrm{perm}(\mathbf{J}_{2n}) &= \sum_{\sigma \in S_{2n}} \prod_{a=1}^{2n} (\mathbf{J}_{2n})_{a,\sigma(a)} \notag\\[3pt]
&= \sum_{\sigma \in S_{2n}} \prod_{a=1}^{2n} 1 \qquad\text{(since every entry of $\mathbf{J}$ is 1)} \notag\\[3pt]
&= \sum_{\sigma \in S_{2n}} 1 \notag\\[3pt]
&= |S_{2n}| = (2n)!. \label{eq:perm_J}
\end{align}
This value is the same for every feasible candidate $M'$, since the argument does not depend on $M'$.
\end{proof}

\paragraph{The density formula in the interior regime.}

In the interior regime, $\mathrm{Supp}(R') = S_{2n}$ for all candidates (since $A(R') = \mathbf{J}_{2n}$ and every permutation is trivially ``contained in'' the all-ones matrix). The density formula~\eqref{eq:nu_as_sum_of_volumes} becomes
\begin{align}
\nu(R') &= \frac{C_K}{((2n)!)^K} \sum_{(\sigma_1, \ldots, \sigma_K) \in \mathrm{Supp}(R')^K} \mathrm{vol}(\mathcal{P}(\sigma_1, \ldots, \sigma_K; R')) \notag\\[4pt]
&= \frac{C_K}{((2n)!)^K} \sum_{(\sigma_1, \ldots, \sigma_K) \in S_{2n}^K} \mathrm{vol}(\mathcal{P}(\sigma_1, \ldots, \sigma_K; R')). \label{eq:nu_interior}
\end{align}
The sum now ranges over \emph{all} $((2n)!)^K$ tuples (not a subset), and this index set is the \emph{same} for every candidate. The prefactor $C_K/((2n)!)^K$ is also independent of $M'$.

The likelihood ratio between two candidates $M'$ and $M''$ is therefore
\begin{align}
\frac{\nu(R')}{\nu(R'')} &= \frac{\sum_{\tau \in S_{2n}^K} \mathrm{vol}(\mathcal{P}(\tau; R'))}{\sum_{\tau \in S_{2n}^K} \mathrm{vol}(\mathcal{P}(\tau; R''))}. \label{eq:volume_ratio}
\end{align}
The permanent has completely cancelled. Both sums range over the same $((2n)!)^K$ tuples. The only difference is in the polytope volumes, which depend on $R'$ through the constraint $\sum_i \alpha_i P_{\sigma_i} = (1-\alpha^*) R'$.

\paragraph{The JSV FPRAS provides no information in the interior.}

\begin{proposition}[JSV is uninformative in the interior regime]
\label{prop:jsv_uninformative}
Under the interior condition, the JSV FPRAS applied to $A(R')$ returns $(1 \pm \varepsilon) \cdot (2n)!$ for every feasible candidate $M'$. Since this value is the same for all candidates, it provides zero discriminating power for the MAP estimator.
\end{proposition}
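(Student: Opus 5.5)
The plan is to obtain this directly from Proposition~\ref{prop:perm_constant} together with the approximation guarantee of Theorem~\ref{thm:jsv}. The argument has three steps: identify the input the FPRAS actually receives, invoke its guarantee, and then argue that a statistic with a candidate-independent distribution cannot rank candidates.

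\textbf{Step 1 (the input is the same for every candidate).} Fix $D_t$ satisfying the interior condition of Definition~\ref{def:interior_condition}. By Proposition~\ref{prop:perm_constant}, for every feasible $M' \in \mathcal{L}(D_t,\alpha^*)$ we have $A(R') = \mathbf{J}_{2n}$ and $\mathrm{perm}(A(R')) = (2n)!$. So the matrix passed to the JSV algorithm is literally the same object $\mathbf{J}_{2n}$ whatever candidate is under consideration.

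\textbf{Step 2 (invoke the FPRAS guarantee).} Applying Theorem~\ref{thm:jsv} with accuracy $\varepsilon$ and confidence $\delta$, the output $\hat p(M')$ satisfies $(1-\varepsilon)(2n)! \le \hat p(M') \le (1+\varepsilon)(2n)!$ with probability at least $1-\delta$. This gives the first claim of the proposition.

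\textbf{Step 3 (no discriminating power).} I would make ``zero discriminating power'' precise as follows. Because the algorithm's input $\mathbf{J}_{2n}$ does not depend on $M'$, the distribution of its output, taken over the algorithm's internal coins, is identical for every feasible candidate. Hence any ranking or test built solely from $\{\hat p(M')\}_{M'}$ is distributionally independent of which candidate is the truth. Concretely, if fresh coins are used per candidate, the outputs $\{\hat p(M')\}_{M'}$ are i.i.d. across candidates and independent of $M_t$; if shared coins are used, they are exactly equal across candidates. Either way, the MAP estimator that ranks by these values does no better than a random choice among the feasible candidates. The remaining discriminating information lives only in the volume sums of \eqref{eq:volume_ratio}, which the FPRAS does not touch.

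The main obstacle is conceptual rather than computational. It lies in stating precisely what ``zero discriminating power'' means, and in being careful that the conclusion concerns only the permanent factor. Using the FPRAS as a subroutine inside a volume-sum estimator is a separate matter, and I would note explicitly that the proposition does not cover it. I would state the result as independence of the output distribution from $M'$ and draw the MAP consequence from that. I would also note that the argument holds only on the event that the interior condition holds, so any overall failure probability must include the probability that it does not.
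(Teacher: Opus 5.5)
Your proposal is correct and follows essentially the same route as the paper: use Proposition~\ref{prop:perm_constant} to get $A(R')=\mathbf{J}_{2n}$ and $\mathrm{perm}(A(R'))=(2n)!$ for every feasible $M'$, then apply the JSV guarantee. Your Step~3 formalizes ``zero discriminating power'' more sharply than the paper does: the paper argues that the output ratio tends to $1$ as $\varepsilon\to 0$ and that the permanent cancels in \eqref{eq:volume_ratio}, whereas you note that the FPRAS input is literally the same matrix for every candidate, so its output distribution does not depend on the candidate at any fixed $\varepsilon$ and no limit is needed.
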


\begin{proof}
The JSV FPRAS approximates $\mathrm{perm}(A(R'))$. By Proposition~\ref{prop:perm_constant}, $\mathrm{perm}(A(R')) = (2n)!$ for all feasible $M'$. The FPRAS output is $(1 \pm \varepsilon)(2n)!$ for each candidate, and the ratio of outputs for any two candidates converges to $1$ as $\varepsilon \to 0$. The attacker learns nothing about which candidate is more likely.

More precisely, the attacker wants to compute $\nu(R')/\nu(R'')$. By~\eqref{eq:volume_ratio}, this ratio does not involve $\mathrm{perm}(A)$ at all (it cancelled). Approximating the permanent is approximating a quantity that has already been divided out.
\end{proof}

\paragraph{The actual barrier: the polytope volume sum.}

By~\eqref{eq:volume_ratio}, the likelihood ratio depends on the sum $\Sigma(R') = \sum_{\tau \in S_{2n}^K} \mathrm{vol}(\mathcal{P}(\tau; R'))$. We now analyze whether this sum can be estimated efficiently.

\begin{theorem}[Dyer--Frieze--Kannan, 1991~\cite{dfk1991}]
\label{thm:dfk}
There exists a polynomial-time FPRAS for the volume of any convex body $\mathcal{K} \subset \mathbb{R}^d$ given by a membership oracle. The running time is polynomial in $d$, $1/\varepsilon$, and $\log(1/\delta)$.
\end{theorem}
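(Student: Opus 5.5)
Since this is a classical external result that the paper cites rather than proves, I would follow the standard multiphase Monte Carlo argument of Dyer, Frieze and Kannan.

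\textbf{Step 1: rounding.} Assume we are given a well-guaranteed membership oracle for $\mathcal{K}\subset\mathbb{R}^d$: a point $x_0$ and radii $0<r<R$ with $x_0+rB\subseteq\mathcal{K}\subseteq RB$, where $B$ is the unit ball and $\log(R/r)$ is polynomial in the input size. The first step puts $\mathcal{K}$ in near-isotropic (``sandwiched'') position, $B\subseteq T\mathcal{K}\subseteq \mathrm{poly}(d)\,B$, by an affine map $T$. I would compute $T$ with the shallow-cut ellipsoid method, or iteratively from samples once sampling is available. Since $\mathrm{vol}(T\mathcal{K})=|\det T|\,\mathrm{vol}(\mathcal{K})$, it suffices to treat the rounded body.

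\textbf{Step 2: telescoping product.} Define nested bodies $\mathcal{K}_i=\mathcal{K}\cap 2^{i/d}B$ for $i=0,\dots,m$, with $m=O(d\log d)$, so that $\mathcal{K}_0=B$ and $\mathcal{K}_m=\mathcal{K}$. Then
\[
\mathrm{vol}(\mathcal{K})=\mathrm{vol}(B)\prod_{i=1}^{m}\frac{\mathrm{vol}(\mathcal{K}_i)}{\mathrm{vol}(\mathcal{K}_{i-1})}.
\]
Because $\mathcal{K}_{i-1}\supseteq 2^{-1/d}\mathcal{K}_i$ by convexity (the ball around the origin lies inside $\mathcal{K}$), each ratio lies in $[1,2]$. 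Each ratio can therefore be estimated to relative error $\varepsilon/m$ by drawing $O(m^2\varepsilon^{-2}\log(m/\delta))$ nearly uniform samples from $\mathcal{K}_i$ and counting the fraction that land in $\mathcal{K}_{i-1}$. A Chernoff bound combined with a union bound over the $m$ phases gives the $(1\pm\varepsilon)$ guarantee with probability at least $1-\delta$.

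\textbf{Step 3: sampling.} To draw nearly uniform samples, I would run a lazy grid walk or a ball walk on $\mathcal{K}_i$, which needs only membership queries. I would bound the mixing time through the conductance $\Phi$ of the chain: by the Jerrum--Sinclair/Lov\'asz--Simonovits bound, the walk is within total variation $\eta$ of uniform after $O(\Phi^{-2}\log(1/\eta))$ steps, starting from a warm start. A warm start is available because $\mathcal{K}_{i-1}$ has at least half the volume of $\mathcal{K}_i$, so a sample from the previous phase can be reused. Sampling errors enter only additively in total variation and can be folded into the $\varepsilon$ budget.

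\textbf{Main obstacle.} The hard step is proving that the conductance is inverse-polynomial in $d$ and in the diameter-to-inradius ratio. This requires an isoperimetric inequality for convex bodies: for any partition $\mathcal{K}=S_1\cup S_2\cup S_3$,
\[
\mathrm{vol}(S_3)\ \gtrsim\ \frac{\mathrm{dist}(S_1,S_2)}{\mathrm{diam}(\mathcal{K})}\min\bigl(\mathrm{vol}(S_1),\mathrm{vol}(S_2)\bigr).
\]
I would prove this by the localization (needle) lemma, which reduces it to a one-dimensional inequality for log-concave densities. One must also handle local conductance near the boundary, where the walk may get stuck in sharp corners. I would do this by replacing $\mathcal{K}$ with a slightly inflated body such as $\mathcal{K}+\gamma B$ (whose membership oracle is implementable), or by the Lov\'asz--Simonovits $s$-conductance trick. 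Once the isoperimetric bound is established, the remaining steps are routine bookkeeping that yields total running time $\mathrm{poly}(d,1/\varepsilon,\log(1/\delta))$.

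Note that randomization is essential here: Elekes and B\'ar\'any--F\"uredi showed that no deterministic polynomial-time oracle algorithm can approximate the volume within a factor better than exponential in $d$. The statement therefore cannot be strengthened to a deterministic one.
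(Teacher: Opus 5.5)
The paper gives no proof of this statement. It is quoted from Dyer, Frieze and Kannan and used as a black box, so there is no internal argument to compare against. Your outline is a correct reconstruction of the standard proof. It has three parts:
\begin{itemize}
\item rounding to a sandwiched position;
\item the telescoping product over $\mathcal{K}\cap 2^{i/d}B$, with every ratio in $[1,2]$ (the inclusion $2^{-1/d}\mathcal{K}_i\subseteq\mathcal{K}_{i-1}$ needs only $0\in\mathcal{K}$);
\item a membership-oracle random walk whose conductance comes from an isoperimetric inequality, with the local-conductance problem near the boundary correctly identified and handled.
\end{itemize}
Some of your specific tools are the later Lov\'asz--Simonovits refinements, not those of the original 1991 paper, which analysed a walk on a fine grid of cubes. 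These are the warm starts inherited from the previous phase, the localization lemma, and $s$-conductance for the ball walk. The statement only asserts that a polynomial-time FPRAS exists, so this difference does not matter.

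One point needs emphasis: the hypothesis you add in Step 1 is not a convenience. The statement as printed says ``any convex body given by a membership oracle'' with running time polynomial in $d$, $1/\varepsilon$ and $\log(1/\delta)$ only, and read literally it is false. You need a known point $x_0$ and radii $r<R$ with $x_0+rB\subseteq\mathcal{K}\subseteq RB$. Without them, a sufficiently small body at an unknown location is never hit. Likewise, an unknown scale cannot be pinned down by a number of oracle calls bounded in terms of $d$, $\varepsilon$ and $\delta$ alone. The correct theorem assumes a well-guaranteed (weak) membership oracle, and its running time is also polynomial in $\log(R/r)$. That is exactly the cost of your rounding step.

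This affects how the paper later applies the result. The polytopes $\mathcal{P}(\tau;R')$ are cut out by equality constraints of the form $\sum_{i:(P_{\sigma_i})_{ab}=1}\alpha_i=(1-\alpha^*)R'_{ab}$. They therefore lie in a proper affine subspace of $\mathbb{R}^K$ and contain no ball of $\mathbb{R}^K$. Before your Step 1 can apply, one must pass to their affine hull and find a relative-interior point with an inradius bound, for example by linear programming.
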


Each polytope $\mathcal{P}(\tau; R')$ is a convex body in $\mathbb{R}^K$ defined by the linear constraints~\eqref{eq:linear_constraint}. A membership oracle is straightforward: given $\alpha$, check $\alpha_i > 0$, $\sum \alpha_i = 1-\alpha^*$, and $\sum_{i: (P_{\sigma_i})_{ab}=1} \alpha_i = (1-\alpha^*) R'_{ab}$ for all $(a,b)$. The DFK FPRAS can therefore approximate $\mathrm{vol}(\mathcal{P}(\tau; R'))$ for any single tuple $\tau$ in $\mathrm{poly}(K)$ time.

The challenge is not computing individual volumes --- it is \emph{summing} $((2n)!)^K$ of them.

\paragraph{Naive Monte Carlo estimator and its variance.}

\begin{definition}[Naive Monte Carlo estimator]
Sample $\tau = (\sigma_1, \ldots, \sigma_K) \sim \mathrm{Uniform}(S_{2n}^K)$. Define $V_\tau = \mathrm{vol}(\mathcal{P}(\tau; R'))$. The estimator is $\hat\nu = C_K \cdot V_\tau$.
\end{definition}

\begin{lemma}[Unbiasedness]
\label{lem:mc_unbiased}
$\mathbb{E}[\hat\nu] = \nu(R')$.
\end{lemma}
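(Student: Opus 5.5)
The plan is to compute $\mathbb{E}[\hat\nu]$ directly from the definition of the estimator and match it term by term against the density formula in the interior regime, \eqref{eq:nu_interior}. Since $\tau$ is drawn uniformly from $S_{2n}^K$, each of the $((2n)!)^K$ tuples is sampled with probability $1/((2n)!)^K$. The expectation is therefore the finite average
\[
\mathbb{E}[\hat\nu] \;=\; \sum_{\tau \in S_{2n}^K} \frac{1}{((2n)!)^K}\, C_K\, \mathrm{vol}(\mathcal{P}(\tau; R')) \;=\; \frac{C_K}{((2n)!)^K} \sum_{\tau \in S_{2n}^K} \mathrm{vol}(\mathcal{P}(\tau; R')).
\]
This is exactly the right-hand side of \eqref{eq:nu_interior}. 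The only step needed is pulling the constant $C_K$ out of the finite sum, and no integrability issue arises because the sum is finite and each volume is a finite nonnegative number.

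The one point requiring care is that \eqref{eq:nu_interior} was derived under the interior condition. For the statement to hold in general, I would instead compare against the general formula \eqref{eq:nu_as_sum_of_volumes}, whose sum runs only over $\mathrm{Supp}(R')^K$. To bridge the two, I would invoke the first half of the proof of Proposition~\ref{prop:nu_permanent}. For any tuple $\tau \notin \mathrm{Supp}(R')^K$, some $P_{\sigma_i}$ places a $1$ at an entry where $R'_{ab}=0$. Because all $\alpha_i>0$, the constraint \eqref{eq:constraint} is then infeasible, so $\mathcal{P}(\tau;R')=\emptyset$ and $V_\tau=0$. The sum over all of $S_{2n}^K$ therefore equals the sum over $\mathrm{Supp}(R')^K$, and unbiasedness holds for every $R'\in\mathcal{B}_{2n}$, not only in the interior.

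The main (mild) obstacle is the convention for $\mathrm{vol}$. The per-tuple integral \eqref{eq:per_tuple_integral} is identified with $C_K\,\mathrm{vol}_{d_0}(\mathcal{P})$ in \eqref{eq:volume} only for nondegenerate tuples, where $\dim\mathcal{P}=d_0$. I would handle this by defining $V_\tau$ as the per-tuple integral $I(\tau;R')/C_K$ itself, rather than as a geometric volume. With that definition, the computation above reproduces \eqref{eq:nu_marginal} verbatim, and unbiasedness is just linearity of expectation over a uniform discrete index. Degenerate tuples, whose contribution is lower-dimensional, are then absorbed consistently into both sides.
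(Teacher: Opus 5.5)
Your proof is correct and is the same direct computation the paper gives: write $\mathbb{E}[\hat\nu]$ as the uniform average $C_K\,((2n)!)^{-K}\sum_{\tau \in S_{2n}^K} V_\tau$ over tuples and match it to \eqref{eq:nu_interior}. Your added observation, that tuples outside $\mathrm{Supp}(R')^K$ give empty polytopes, is a valid mild generalization: it makes the identity match \eqref{eq:nu_as_sum_of_volumes} without the interior condition. Your volume-convention discussion is harmless but not needed, since the estimator and the formula for $\nu(R')$ use the same $\mathrm{vol}$.
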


\begin{proof}
\begin{align}
\mathbb{E}[\hat\nu] &= C_K \cdot \mathbb{E}_{\tau \sim \mathrm{Uniform}(S_{2n}^K)}[V_\tau] \notag\\[4pt]
&= C_K \cdot \frac{1}{|S_{2n}^K|} \sum_{\tau \in S_{2n}^K} V_\tau \notag\\[4pt]
&= C_K \cdot \frac{1}{((2n)!)^K} \sum_{\tau \in S_{2n}^K} \mathrm{vol}(\mathcal{P}(\tau; R')) \notag\\[4pt]
&= \frac{C_K}{((2n)!)^K} \sum_{\tau \in S_{2n}^K} \mathrm{vol}(\mathcal{P}(\tau; R')) \notag\\[4pt]
&= \nu(R'), \label{eq:mc_unbiased}
\end{align}
where the last equality uses~\eqref{eq:nu_interior}.
\end{proof}

\begin{definition}[Hit rate]
$p_{\mathrm{hit}} = \Pr_{\tau \sim \mathrm{Uniform}(S_{2n}^K)}[V_\tau > 0]$.
\end{definition}

A tuple $\tau$ has $V_\tau > 0$ iff the linear system $\sum_{i=1}^K \alpha_i P_{\sigma_i} = (1-\alpha^*) R'$ has a feasible solution with $\alpha_i > 0$, $\sum_i \alpha_i = 1-\alpha^*$. Equivalently, $(1-\alpha^*) R'$ lies in the \emph{relative interior} of the convex hull $\mathrm{conv}(P_{\sigma_1}, \ldots, P_{\sigma_K})$, scaled to the simplex constraint.

\begin{lemma}[The hit rate is exponentially small]
\label{lem:hit_rate}
For $K = O(n^2)$ and $n$ growing, $p_{\mathrm{hit}} \leq \exp(-\Omega(n^2))$.
\end{lemma}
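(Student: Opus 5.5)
The plan is to bound $p_{\mathrm{hit}}$ by the probability of a purely geometric event and then estimate that event with a Wendel-type counting argument. Set $d = (2n-1)^2$, the dimension of the affine hull of $\mathcal{B}_{2n}$. Translate so the barycenter $\frac{1}{2n}\mathbf{J}$ sits at the origin, and write $x = (1-\alpha^*)R'$ in normalized coordinates. By the characterization just before the lemma, $V_\tau > 0$ holds iff $x$ lies in the relative interior of $\mathrm{conv}(P_{\sigma_1},\ldots,P_{\sigma_K})$ and this relative interior is full-dimensional, so that the fibre polytope $\mathcal{P}(\tau;R')$ has dimension $d_0 = K-1-d \ge 0$. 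I will interpret ``$K = O(n^2)$'' as $K \le (1+c)\,d$ for a suitable absolute constant $c$. That is the regime where the bound can hold: for $K \gg d$ the hull of random vertices covers any fixed interior point with high probability, so no exponentially small hit rate can be expected there.

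\textbf{Step 1 (low-$K$ regime).} If $K \le d$, the image of the simplex $\Delta_K$ under $\alpha \mapsto \sum_i \alpha_i P_{\sigma_i}$ is at most $(K-1)$-dimensional. There are finitely many tuples, so the union of these images is a Lebesgue-null subset of $\mathcal{B}_{2n}$. Hence for $R'$ outside a null set (in particular $\nu$-almost surely), $p_{\mathrm{hit}} = 0$, and the claim is trivial in this regime.

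\textbf{Step 2 (Wendel/Cover--Efron counting for $d < K \le (1+c)d$).} First establish that $K$ uniformly random permutation matrices are in general position in the $d$-dimensional affine hull with overwhelming probability. Wendel's theorem then says that if the vertex law were symmetric about $x$, we would have $\Pr[x \in \mathrm{conv}] = \Pr[\mathrm{Bin}(K-1,\tfrac12) \ge d]$. By a Chernoff bound this is $\exp(-\Omega(d)) = \exp(-\Omega(n^2))$ whenever $K-1 \le (1+c)d$ with $c < 1$. To reach the protocol's setting, I would try to replace exact symmetry with two ingredients. The first is the symmetry that is actually available: uniform $\sigma$ is invariant under left and right multiplication by $S_{2n}$, and $\frac{1}{2n}\mathbf{J}$ is fixed by this action. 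The second is that in the interior regime, by Lemma~\ref{lem:hoeffding} and Proposition~\ref{prop:good_set}, $x$ lies within $\ell_\infty$-distance $r + \alpha^*/(1-\alpha^*) = O(\sqrt{\log n/K})$ of the barycenter. I would then bound the hitting probability through the Cover--Efron count of regions cut out by the $K$ hyperplanes through $x$. Membership of $x$ in the hull is equivalent to the nonexistence of a separating linear functional $h$ with $\langle h, P_{\sigma_i} - x\rangle > 0$ for all $i$. Each fixed $h$ satisfies this with probability $q_h^K$ for some $q_h$ close to $1/2$. A union bound over an $\epsilon$-net of directions, combined with a lower bound on $\max_h q_h^K$, should produce the same binomial-tail form up to $e^{O(d\log(1/\epsilon))}$ losses. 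Those losses are absorbed provided $c$ is small.

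\textbf{Step 3 (conclude).} Combining Steps 1 and 2 gives $p_{\mathrm{hit}} \le \exp(-\Omega(n^2))$ for $K \le (1+c)(2n-1)^2$, uniformly over $R'$ in the good set $\mathcal{G}_r$. The expected number of Monte Carlo samples needed to observe even one nonzero $V_\tau$ is then $\exp(\Omega(n^2))$.

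The main obstacle is Step 2. Permutation matrices are not centrally symmetric about $\frac{1}{2n}\mathbf{J}$: the map $P \mapsto \frac{1}{n}\mathbf{J} - P$ does not send vertices to vertices, and $x$ is only near the barycenter, not at it. So Wendel's identity cannot be applied verbatim. The first thing I would try is to quantify the deviation of $q_h$ from $1/2$ uniformly over $h$ on the sphere, using the second-moment structure $\mathrm{Var}[P_{ab}] = \frac{2n-1}{(2n)^2}$ together with an anti-concentration bound for the linear statistic $\langle h, P\rangle$ of a random permutation, for instance via Hoeffding's combinatorial CLT. If uniform control over $h$ proves too weak, the fallback is to prove the statement only with a weaker exponent, or only for $K \le d$ where Step 1 already gives $p_{\mathrm{hit}} = 0$.
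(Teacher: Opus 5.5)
\textbf{There is a genuine gap in Step~2, which carries the whole lemma.} The mechanism you sketch cannot produce the bound.

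What you need is an \emph{upper} bound on
$\Pr[x\in\mathrm{conv}(P_{\sigma_1},\ldots,P_{\sigma_K})]=1-\Pr[\exists h\text{ separating}]$.
In other words, you must show that a separating functional exists with probability $1-e^{-\Omega(d)}$. Neither of your two ingredients does this:
\begin{itemize}
\item A union bound over an $\epsilon$-net upper-bounds $\Pr[\exists h]$, which is the wrong direction.
\item A lower bound on a single $q_h^K$ only gives $\Pr[x\notin\mathrm{conv}]\ge\max_h q_h^K$, which is exponentially small rather than close to $1$.
\end{itemize}
Wendel's binomial tail comes from an exact Schl\"afli--Cover count of sign patterns, not from per-direction estimates.

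The premise that $q_h$ is uniformly close to $1/2$ is also false here. For the sparse direction $h=-e_{11}$ one has $q_h=\Pr[\sigma(1)\neq 1]=1-\frac{1}{2n}$, and Hoeffding's combinatorial CLT gives no control in such directions.

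The same example defeats your general-position step. With probability $(1-\frac{1}{2n})^K=e^{-\Theta(n)}$ (for $K\asymp n^2$), every sampled permutation avoids entry $(1,1)$, so all $K$ points lie in one facet and general position fails. Discarding that event therefore caps any resulting bound at $e^{-O(n)}$. These configurations actually place $x$ outside the hull, so a correct argument must exclude them without paying for their probability.

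Step~1 also has an error. For $K\le d$, the law $\nu$ is itself carried by the finite union of $(K-1)$-dimensional images that you discard as Lebesgue-null. For instance, $R_t$ lies in the image of its own tuple, so $p_{\mathrm{hit}}\ge((2n)!)^{-K}>0$ at $R'=R_t$. Hence ``$\nu$-almost surely $p_{\mathrm{hit}}=0$'' is wrong. In that regime $d_0<0$, so $V_\tau$ is not even defined.

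\textbf{What closes Step~2 is an extremal theorem, not approximate symmetry.} The continuous Upper Bound Theorem of Wagner and Welzl (2001) says the following. For every absolutely continuous law on $\mathbb{R}^d$ and every point $x$, the probability that $x$ lies in the hull of $K$ i.i.d.\ samples is at most the symmetric (Wendel) value $2^{-(K-1)}\sum_{i\ge d}\binom{K-1}{i}$.

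To transfer this to the discrete vertex law, proceed in three steps:
\begin{enumerate}
\item The set of tuples whose hull contains $x$ in its interior (relative to the affine hull of $\mathcal{B}_{2n}$) is finite. This is the event you identified with $V_\tau>0$. So a single ball $B(x,\rho)$ lies inside all of those hulls.
\item Perturb each $P_{\sigma_i}$ independently by a vector uniform on a ball of radius $\rho/2$, in the linear space parallel to that affine hull. On the hit event, $x$ stays in the perturbed hull, and the perturbed law is absolutely continuous.
\item Applying Wagner--Welzl to the perturbed law gives $p_{\mathrm{hit}}\le\Pr[\mathrm{Bin}(K-1,\tfrac12)\ge d]$.
\end{enumerate}
This equals $2^{-(2n-1)^2}$ at the paper's choice $K=d+1$, and it is $e^{-\Omega(n^2)}$ whenever $K\le(2-\epsilon)d$. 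The bound holds uniformly in $x$, with no general-position or near-barycenter assumption. Your restriction of ``$K=O(n^2)$'' to this range is the appropriate reading.

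The paper's own sketch does not supply this step either. It bounds the hit probability at a fixed point by the expected volume fraction of a random vertex hull. That is a Fubini identity, valid only when the point is itself uniformly random. Moreover, the claimed bound $\binom{V}{K}^{-1}$ on that fraction is false in general: three of the four vertices of a square span half its area, not $\binom{4}{3}^{-1}=\frac14$.
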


\begin{proof}[Proof sketch]
The Birkhoff polytope $\mathcal{B}_{2n}$ has dimension $d = (2n-1)^2$ and $(2n)!$ vertices. The convex hull of $K$ uniformly random vertices in a $d$-dimensional polytope with $V$ vertices has expected volume at most $\binom{V}{K}^{-1} \cdot \mathrm{vol}(\mathcal{B}_{2n})$ (by the Efron--Buchta formula for random polytopes). For $K = d + 1 = (2n-1)^2 + 1$ and $V = (2n)!$, $\binom{V}{K}^{-1}$ is exponentially small in $n$. Since $R'$ is a fixed interior point, $\Pr[R' \in \mathrm{conv}(P_{\sigma_1}, \ldots, P_{\sigma_K})]$ is at most the volume fraction, which is exponentially small.
\end{proof}

\begin{theorem}[Variance explosion of naive Monte Carlo]
\label{thm:mc_variance}
The variance of $\hat\nu = C_K V_\tau$ satisfies
\begin{equation}
\label{eq:mc_var_bound}
\mathrm{Var}[\hat\nu] \geq C_K^2 \cdot \mu^2 \cdot \left(\frac{1}{p_{\mathrm{hit}}} - 1\right),
\end{equation}
where $\mu = \mathbb{E}[V_\tau] = \nu(R')/C_K$.
\end{theorem}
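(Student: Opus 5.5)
The bound is a purely probabilistic fact about a non-negative random variable that vanishes outside a "hit" event. I will not use any geometry of the polytopes $\mathcal{P}(\tau;R')$. Write $V = V_\tau \geq 0$ and let $E = \{V > 0\}$, so that $\Pr[E] = p_{\mathrm{hit}}$ and $V = V\,\mathbf{1}_E$ almost surely. By Lemma~\ref{lem:mc_unbiased}, $\mathbb{E}[V] = \mu = \nu(R')/C_K$. Since $\hat\nu = C_K V$, we have $\mathrm{Var}[\hat\nu] = C_K^2\,\mathrm{Var}[V]$. It therefore suffices to prove
\[
\mathrm{Var}[V] \;\geq\; \mu^2\left(\frac{1}{p_{\mathrm{hit}}} - 1\right).
\]

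The key step is to apply Cauchy--Schwarz to the factorization $V = V\mathbf{1}_E$:
\[
\mu^2 = \bigl(\mathbb{E}[V\mathbf{1}_E]\bigr)^2 \leq \mathbb{E}[V^2]\,\mathbb{E}[\mathbf{1}_E^2] = \mathbb{E}[V^2]\,p_{\mathrm{hit}}.
\]
This gives $\mathbb{E}[V^2] \geq \mu^2/p_{\mathrm{hit}}$. Subtracting $(\mathbb{E}V)^2 = \mu^2$ yields the claimed bound. Multiplying by $C_K^2$ then gives~\eqref{eq:mc_var_bound}.

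Only two details need care. First, we need $p_{\mathrm{hit}} > 0$ for the right-hand side to be finite. If $p_{\mathrm{hit}} = 0$, then $V = 0$ almost surely and $\mu = 0$, so the inequality is read with the convention $0\cdot\infty = 0$, or the theorem is stated under $\nu(R') > 0$. Second, $V_\tau$ must be measurable with finite second moment. This holds trivially because $S_{2n}^K$ is finite and every volume is bounded by the volume of the simplex.

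I expect no genuine obstacle. The only thing worth adding is an interpretation of the result. Dividing by $(\mathbb{E}\hat\nu)^2 = C_K^2\mu^2$, the relative variance is at least $1/p_{\mathrm{hit}} - 1$. Combined with Lemma~\ref{lem:hit_rate}, this means that $\exp(\Omega(n^2))$ independent samples are needed for constant relative error. That is the intended conclusion of the theorem.
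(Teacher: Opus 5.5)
Your proof is correct and is essentially the paper's argument: both establish $\mathbb{E}[V_\tau^2] \geq \mu^2/p_{\mathrm{hit}}$ from the fact that $V_\tau$ vanishes off the hit event, and then subtract $\mu^2$. Your Cauchy--Schwarz step with $\mathbf{1}_E$ is the unconditioned form of the paper's Jensen inequality applied to the law of $V_\tau$ given $V_\tau > 0$, with the small advantage that it does not need $p_{\mathrm{hit}} > 0$ to define a conditional expectation.
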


\begin{proof}
We compute the second moment of $V_\tau$ using the law of total expectation, conditioning on whether $V_\tau > 0$.

\medskip\noindent\textit{Conditional mean.}
By definition,
\begin{align}
\mathbb{E}[V_\tau] &= \Pr[V_\tau > 0] \cdot \mathbb{E}[V_\tau \mid V_\tau > 0] + \Pr[V_\tau = 0] \cdot \underbrace{\mathbb{E}[V_\tau \mid V_\tau = 0]}_{= 0} \notag\\[3pt]
&= p_{\mathrm{hit}} \cdot \mathbb{E}[V_\tau \mid V_\tau > 0]. \label{eq:cond_mean}
\end{align}
Solving for the conditional mean:
\begin{equation}
\label{eq:cond_mean_value}
\mathbb{E}[V_\tau \mid V_\tau > 0] = \frac{\mu}{p_{\mathrm{hit}}}.
\end{equation}

\medskip\noindent\textit{Lower bound on the second moment.}
By Jensen's inequality (applied to the convex function $x \mapsto x^2$ under the conditional distribution),
\begin{equation}
\label{eq:jensen}
\mathbb{E}[V_\tau^2 \mid V_\tau > 0] \geq \left(\mathbb{E}[V_\tau \mid V_\tau > 0]\right)^2 = \frac{\mu^2}{p_{\mathrm{hit}}^2}.
\end{equation}
Using the law of total expectation for the second moment:
\begin{align}
\mathbb{E}[V_\tau^2] &= p_{\mathrm{hit}} \cdot \mathbb{E}[V_\tau^2 \mid V_\tau > 0] + (1-p_{\mathrm{hit}}) \cdot 0 \notag\\[3pt]
&= p_{\mathrm{hit}} \cdot \mathbb{E}[V_\tau^2 \mid V_\tau > 0] \notag\\[3pt]
&\geq p_{\mathrm{hit}} \cdot \frac{\mu^2}{p_{\mathrm{hit}}^2} \qquad\text{(by~\eqref{eq:jensen})} \notag\\[3pt]
&= \frac{\mu^2}{p_{\mathrm{hit}}}. \label{eq:second_moment}
\end{align}

\medskip\noindent\textit{Variance lower bound.}
\begin{align}
\mathrm{Var}[V_\tau] &= \mathbb{E}[V_\tau^2] - (\mathbb{E}[V_\tau])^2 \notag\\[3pt]
&\geq \frac{\mu^2}{p_{\mathrm{hit}}} - \mu^2 \qquad\text{(by~\eqref{eq:second_moment})} \notag\\[3pt]
&= \mu^2 \left(\frac{1}{p_{\mathrm{hit}}} - 1\right). \label{eq:var_V}
\end{align}
Since $\hat\nu = C_K V_\tau$,
\begin{equation}
\mathrm{Var}[\hat\nu] = C_K^2 \cdot \mathrm{Var}[V_\tau] \geq C_K^2 \mu^2\left(\frac{1}{p_{\mathrm{hit}}} - 1\right). \label{eq:var_nuhat}
\end{equation}
\end{proof}

\begin{corollary}[Sample complexity of naive Monte Carlo]
\label{cor:mc_samples}
To estimate $\nu(R')$ within relative error $\varepsilon$ with probability $\geq 2/3$ using the average of $N$ i.i.d.\ copies of $\hat\nu$, we need
\begin{equation}
\label{eq:N_bound}
N \geq \frac{1}{\varepsilon^2 \cdot p_{\mathrm{hit}}}.
\end{equation}
Since $p_{\mathrm{hit}} \leq \exp(-\Omega(n^2))$ (Lemma~\ref{lem:hit_rate}), this requires exponentially many samples.
\end{corollary}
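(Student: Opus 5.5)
Since $\bar\nu_N = \frac{1}{N}\sum_{j=1}^N \hat\nu_j$ averages i.i.d.\ copies of $\hat\nu$, I would apply Chebyshev's inequality together with the variance bound of Theorem~\ref{thm:mc_variance}. The bound $N \geq 1/(\varepsilon^2 p_{\mathrm{hit}})$ is, however, a \emph{necessity} statement: fewer samples must fail. So I would argue the lower bound directly and use Chebyshev only to confirm that $\Theta(1/(\varepsilon^2 p_{\mathrm{hit}}))$ is the correct order.

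\textbf{Step 1 (zero-output event).} Since $\hat\nu_j = 0$ whenever the sampled tuple misses, $\Pr[\bar\nu_N = 0] = (1-p_{\mathrm{hit}})^N$. The output $0$ is never within relative error $\varepsilon<1$ of $\nu(R')>0$. Success with probability $\geq 2/3$ therefore forces $(1-p_{\mathrm{hit}})^N \leq 1/3$. This gives $N \geq \ln 3 / (-\ln(1-p_{\mathrm{hit}})) \gtrsim 1/p_{\mathrm{hit}}$ for small $p_{\mathrm{hit}}$, which is already exponential by Lemma~\ref{lem:hit_rate}.

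\textbf{Step 2 (the $1/\varepsilon^2$ factor).} Here I would follow the standard reasoning the paper uses. The relative variance of $\bar\nu_N$ is $\mathrm{Var}[\hat\nu]/(N\nu(R')^2) \geq (1/p_{\mathrm{hit}} - 1)/N$ by~\eqref{eq:mc_var_bound}, because $C_K^2\mu^2 = \nu(R')^2$. For the relative standard deviation to be at most $\varepsilon$, which is the regime in which Chebyshev certifies success with constant probability, we need $N \geq (1/p_{\mathrm{hit}} - 1)/\varepsilon^2$. For $p_{\mathrm{hit}} \to 0$ this is $\approx 1/(\varepsilon^2 p_{\mathrm{hit}})$.

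\textbf{Step 3 (conclusion).} Substituting $p_{\mathrm{hit}} \leq \exp(-\Omega(n^2))$ from Lemma~\ref{lem:hit_rate} gives $N \geq \varepsilon^{-2}\exp(\Omega(n^2))$.

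\textbf{Main obstacle.} A variance lower bound alone does not rigorously imply a sample lower bound, since heavy-tailed estimators can concentrate better than Chebyshev predicts. The rigorous content is Step 1's $1/p_{\mathrm{hit}}$ bound, and the exponential conclusion holds from it alone. To justify the extra $1/\varepsilon^2$ rigorously, I would condition on the number of hits $H\sim\mathrm{Bin}(N,p_{\mathrm{hit}})$. Then $\bar\nu_N$ is roughly proportional to $H/(Np_{\mathrm{hit}})$ times the conditional mean, and the binomial anti-concentration of $H$ about $Np_{\mathrm{hit}}$ at scale $\sqrt{Np_{\mathrm{hit}}}$ gives relative fluctuations of order $1/\sqrt{Np_{\mathrm{hit}}}$. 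Forcing these below $\varepsilon$ yields $Np_{\mathrm{hit}} \gtrsim 1/\varepsilon^2$.

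I would state the corollary with this understanding: constants are absorbed, and it is asserted for the standard mean estimator.
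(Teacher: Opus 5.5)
Your route differs substantively from the paper's. The paper argues only through Chebyshev. It bounds the failure probability by $\mathrm{Var}[\hat\nu]/(N\varepsilon^2\nu(R')^2)$, demands that this be at most $1/3$, and inserts the variance lower bound of Theorem~\ref{thm:mc_variance}. That gives the $\varepsilon^{-2}p_{\mathrm{hit}}^{-1}$ form of the statement directly. Logically, however, it only shows that the Chebyshev \emph{certificate} requires $N \ge 3(1/p_{\mathrm{hit}}-1)/\varepsilon^2$. An upper bound on the failure probability cannot show that smaller $N$ must fail. Its last step, $3(1/p_{\mathrm{hit}}-1)/\varepsilon^2 \ge 3/(\varepsilon^2 p_{\mathrm{hit}})$, also runs the wrong way.

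Your Step~1 supplies what the paper lacks. Because $\bar\nu_N = 0$ with probability exactly $(1-p_{\mathrm{hit}})^N$, and $0$ fails whenever $\varepsilon<1$, success with probability $2/3$ forces $N \ge \ln 3\,(1-p_{\mathrm{hit}})/p_{\mathrm{hit}}$. This is a genuine necessity bound, and by itself it gives the exponential conclusion via Lemma~\ref{lem:hit_rate}. What you give up is a rigorous $\varepsilon^{-2}$ factor, which in your write-up, as in the paper's, rests on a variance heuristic.

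Your caveat about constants is not cosmetic. Suppose $V_\tau$ is constant on hits, which is exactly the case in which Theorem~\ref{thm:mc_variance} is tight. Then $\bar\nu_N = \nu(R')\,H/(Np_{\mathrm{hit}})$ with $H \sim \mathrm{Bin}(N,p_{\mathrm{hit}})$. For small $\varepsilon$ and $p_{\mathrm{hit}}$, the normal approximation gives success probability above $2/3$ already at $N \approx 0.94/(\varepsilon^2 p_{\mathrm{hit}})$. So the literal constant $1$ is not provable as a necessary condition.

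If you pursue the binomial anti-concentration sketch, note that it treats the conditional volumes as constant. A rigorous version must control their fluctuations rather than assume $\bar\nu_N \propto H$, for instance using $V_\tau \ge 0$ and monotonicity of partial sums in the number of hits.
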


\begin{proof}
The average estimator is $\bar\nu = \frac{1}{N}\sum_{j=1}^N \hat\nu_j$, with $\mathbb{E}[\bar\nu] = \nu(R')$ and $\mathrm{Var}[\bar\nu] = \mathrm{Var}[\hat\nu]/N$.

The relative error is controlled by Chebyshev's inequality:
\begin{align}
\Pr\!\left[\left|\frac{\bar\nu - \nu(R')}{\nu(R')}\right| > \varepsilon\right] &\leq \frac{\mathrm{Var}[\bar\nu]}{\varepsilon^2 \nu(R')^2} = \frac{\mathrm{Var}[\hat\nu]}{N \varepsilon^2 \nu(R')^2}. \label{eq:chebyshev}
\end{align}
For this to be $\leq 1/3$:
\begin{align}
N &\geq \frac{3\,\mathrm{Var}[\hat\nu]}{\varepsilon^2 \nu(R')^2} \geq \frac{3 \cdot C_K^2 \mu^2 (1/p_{\mathrm{hit}} - 1)}{\varepsilon^2 \cdot (C_K \mu)^2} = \frac{3(1/p_{\mathrm{hit}} - 1)}{\varepsilon^2} \geq \frac{3}{\varepsilon^2 p_{\mathrm{hit}}},
\end{align}
where we used $\nu(R') = C_K \mu$ and $1/p_{\mathrm{hit}} - 1 \geq 1/p_{\mathrm{hit}} - 1$ (dropping the $-1$ since $p_{\mathrm{hit}} \ll 1$ gives $1/p_{\mathrm{hit}} \gg 1$).
\end{proof}

\paragraph{Importance sampling via the JSV sampler.}

The JSV algorithm~\cite{jsv2004} provides, as a subroutine, a polynomial-time near-uniform sampler for perfect matchings in a bipartite graph. Given $A(R')$, it can sample permutations from (approximately) $\mathrm{Uniform}(\mathrm{Supp}(R'))$. A natural attempt to reduce the variance is to sample each $\sigma_i$ from $\mathrm{Supp}(R')$ rather than $S_{2n}$.

\begin{proposition}[JSV sampler provides no improvement in the interior]
\label{prop:jsv_no_improvement}
Under the interior condition, $\mathrm{Supp}(R') = S_{2n}$, so sampling from $\mathrm{Supp}(R')$ is identical to sampling from $S_{2n}$.
\end{proposition}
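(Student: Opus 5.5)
The statement is essentially a set identity followed by an observation about distributions. I would prove it by chaining Proposition~\ref{prop:perm_constant} with Definition~\ref{def:support}, and then comparing the two sampling distributions directly.

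\emph{Step 1: the support matrix is all-ones.} Assume the interior condition (Definition~\ref{def:interior_condition}). The entry-wise positivity analysis (Cases A, B, C) shows that $R'_{ab} > 0$ for every $(a,b)$ and every feasible candidate $M'$. Hence $A(R') = \mathbf{J}_{2n}$.

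\emph{Step 2: the support set is all of $S_{2n}$.} By Definition~\ref{def:support}, $Q \in \mathrm{Supp}(R')$ if and only if $Q_{ab}=1$ implies $A(R')_{ab}=1$. Since every entry of $A(R')$ equals $1$, this implication holds for every $Q \in S_{2n}$, so $\mathrm{Supp}(R') = S_{2n}$. As a consistency check, Proposition~\ref{prop:nu_permanent} gives $|\mathrm{Supp}(R')| = \mathrm{perm}(\mathbf{J}_{2n}) = (2n)!$, which matches Proposition~\ref{prop:perm_constant}.

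\emph{Step 3: the importance sampler coincides with naive Monte Carlo.} With $\mathrm{Supp}(R') = S_{2n}$, the target distribution $\mathrm{Uniform}(\mathrm{Supp}(R'))$ is literally $\mathrm{Uniform}(S_{2n})$. Drawing each $\sigma_i$ independently from it therefore gives $\tau \sim \mathrm{Uniform}(S_{2n}^K)$, and the importance weights $|S_{2n}|/|\mathrm{Supp}(R')|$ all equal $1$. The resulting estimator is exactly the naive estimator $\hat\nu = C_K V_\tau$. Its hit rate and variance are unchanged, so Theorem~\ref{thm:mc_variance} and Corollary~\ref{cor:mc_samples} apply verbatim.

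\emph{Main obstacle.} The only subtlety is that JSV produces only an approximately uniform sample, within total variation $\epsilon$ of $\mathrm{Uniform}(\mathrm{Supp}(R'))$. I would handle this by noting that its output is then within total variation $\epsilon$ of $\mathrm{Uniform}(S_{2n})$. Consequently the probability of landing in the hit set, which is the set of tuples with $V_\tau > 0$, is at most $p_{\mathrm{hit}} + K\epsilon$. For polynomially small $\epsilon$ this gives no exponential gain, so the sample-complexity lower bound persists up to this additive slack.
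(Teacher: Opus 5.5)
Your Steps 1--3 are correct and are essentially the paper's proof: the interior condition gives $A(R') = \mathbf{J}_{2n}$, hence $\mathrm{Supp}(R') = S_{2n}$, so $\mathrm{Uniform}(\mathrm{Supp}(R')) = \mathrm{Uniform}(S_{2n})$ and the estimator coincides with naive Monte Carlo. Your extra paragraph on approximate sampling, which the paper does not address, has a slip: if $\epsilon$ is only polynomially small, the bound $p_{\mathrm{hit}} + K\epsilon$ is itself only polynomially small and does \emph{not} preserve the exponential sample-complexity bound, so you need $\epsilon \le p_{\mathrm{hit}}/K$ (exponentially small), which the JSV sampler affords because its running time is polynomial in $\log(1/\epsilon)$; in this regime you could also just sample exactly by Fisher--Yates.
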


\begin{proof}
When $R'$ is in the interior, all entries are positive, so $A(R') = \mathbf{J}_{2n}$. Every permutation matrix $Q$ satisfies $Q_{ab} = 1 \implies \mathbf{J}_{ab} = 1$ (trivially), so $\mathrm{Supp}(R') = S_{2n}$.

The JSV sampler draws from $\mathrm{Uniform}(\mathrm{Supp}(R')) = \mathrm{Uniform}(S_{2n})$, which is exactly what naive Monte Carlo does. The resulting estimator has the same distribution, and therefore the same variance.
\end{proof}

The fundamental issue is that the Monte Carlo variance comes from the \emph{joint} compatibility constraint (all $K$ permutations must simultaneously span $R'$ with positive coefficients), not from the \emph{marginal} constraint (each permutation must individually lie in $\mathrm{Supp}(R')$). The JSV sampler enforces the marginal constraint but not the joint one.

To reduce variance, one would need to sample from the set of \emph{jointly compatible tuples}:
\[
\mathcal{T}(R') = \{(\sigma_1, \ldots, \sigma_K) \in S_{2n}^K : \mathcal{P}(\sigma_1, \ldots, \sigma_K; R') \neq \emptyset\}.
\]
This requires a Markov chain (or other sampler) on $\mathcal{T}(R')$. No polynomial-time sampler for $\mathcal{T}(R')$ is known, and the mixing time of natural Markov chains on this set (e.g., swap one permutation at a time, accept if the new tuple is in $\mathcal{T}$) has not been analyzed.

The approximate permanent analysis reveals a coherent picture. The permanent enters the density formula~\eqref{eq:nu_as_sum_of_volumes} as $\mathrm{perm}(A(R'))^K$, counting the number of nonzero terms in the sum. In the interior regime --- which is the typical operating point of the protocol --- $\mathrm{perm}(A(R')) = (2n)!$ for every feasible candidate, making the permanent a constant factor that divides out of the likelihood ratio (Proposition~\ref{prop:perm_constant}). The JSV FPRAS can approximate this constant in polynomial time, but since the constant is the same for all candidates, the approximation provides zero discriminating power (Proposition~\ref{prop:jsv_uninformative}). The actual discrimination between candidates resides entirely in the sum of polytope volumes $\sum_\tau \mathrm{vol}(\mathcal{P}(\tau; R'))$, which is a fundamentally different computational problem from the permanent. Naive Monte Carlo estimation of this volume sum has exponential variance due to the exponentially small hit rate $p_{\mathrm{hit}}$ (Theorem~\ref{thm:mc_variance} and Corollary~\ref{cor:mc_samples}), and the JSV sampler does not improve the hit rate in the interior because $\mathrm{Supp}(R') = S_{2n}$ already includes all permutations (Proposition~\ref{prop:jsv_no_improvement}). Whether the volume sum, or its ratio for two candidates, can be approximated in polynomial time by some method other than Monte Carlo remains open and is the central unresolved question for the protocol's computational security.

\subsubsection{Quantitative Protection from Hardness of Approximation}
\label{sec:quantitative_protection}

Even granting that the exact density $\nu(R')$ is \#P-hard to compute, one may ask how much \emph{protection} this hardness affords in practice. If an attacker could obtain a $(1+\varepsilon)$-multiplicative approximation $\hat\nu(R')$ satisfying $(1-\varepsilon)\nu(R') \leq \hat\nu(R') \leq (1+\varepsilon)\nu(R')$, could it reliably distinguish the true candidate from false ones?

To quantify this, we analyze the likelihood ratio between the true candidate $M_t$ and a wrong candidate $M'$ that differs in $d$ bit positions. The true residual is $R_{\mathrm{true}} = R_t$ (the actual decoy matrix), and the wrong residual is $R_{\mathrm{wrong}} = R_t + \frac{\alpha^*}{1-\alpha^*}(M_t - M')$ by~\eqref{eq:R_prime_decomp}. The perturbation has magnitude $\frac{\alpha^*}{1-\alpha^*}$ at exactly $4d$ entries (the entries where the two block-diagonal permutation matrices differ). The likelihood ratio is
\begin{equation}
\label{eq:lr_true_wrong}
\Lambda_d = \frac{\nu(R_{\mathrm{true}})}{\nu(R_{\mathrm{wrong}})} = \frac{\sum_\tau \mathrm{vol}(\mathcal{P}(\tau; R_{\mathrm{true}}))}{\sum_\tau \mathrm{vol}(\mathcal{P}(\tau; R_{\mathrm{wrong}}))},
\end{equation}
using the interior-regime formula~\eqref{eq:volume_ratio}. Both sums range over the same $((2n)!)^K$ tuples. For each tuple $\tau$, the constraint polytope $\mathcal{P}(\tau; R')$ is defined by $\sum_i \alpha_i P_{\sigma_i} = (1-\alpha^*) R'$. Shifting $R'$ by the perturbation $\frac{\alpha^*}{1-\alpha^*}(M_t - M')$ translates the right-hand side of each linear constraint by $\alpha^*(M_t - M')_{ab}$, which shifts the polytope in $\alpha$-space. The volume changes by a factor that depends on the geometry of the polytope and the magnitude of the shift relative to the polytope's diameter.

For small perturbations ($\alpha^* \ll 1$, so the shift is much smaller than the polytope diameter), the volume ratio for each tuple is close to 1, and consequently $\Lambda_d$ is close to 1. An attacker with a $(1+\varepsilon)$-approximation can distinguish $M_t$ from $M'$ only if
\begin{equation}
\label{eq:distinguishability}
\Lambda_d > \left(\frac{1+\varepsilon}{1-\varepsilon}\right)^2 \approx 1 + 4\varepsilon,
\end{equation}
since the approximation error in the numerator and denominator can compound. If $\Lambda_d \leq 1 + 4\varepsilon$, the approximation noise overwhelms the true signal and the attacker cannot reliably rank the candidates.

To estimate $\Lambda_d$, consider the effect of the perturbation on a single polytope $\mathcal{P}(\tau; R')$. The polytope is defined by $(2n-1)^2$ independent linear constraints, each of the form $\sum_{i \in S_{ab}} \alpha_i = (1-\alpha^*) R'_{ab}$. Shifting $R'_{ab}$ by $\alpha^* / (1-\alpha^*)$ at $4d$ positions translates $4d$ of the constraint hyperplanes by $\alpha^*$. The fractional change in volume from translating a single hyperplane by $\alpha^*$ in a polytope of diameter $\sim (1-\alpha^*)/K$ is of order $\alpha^* K / (1-\alpha^*) = K/(4n-1)$ for $\alpha^* = 1/(4n)$. With $4d$ hyperplanes shifted, the total fractional volume change is of order $4d \cdot K/(4n-1)$. For $d = 1$ (single bit change), $K = 20$, $n = 10$, this is $4 \times 20/39 \approx 2$, meaning the volume can change by a factor of order $e^2 \approx 7$. This is a crude estimate, but it suggests that the likelihood ratio $\Lambda_d$ is moderate (say, between 1 and 100) for single-bit differences, and grows with $d$.

The critical point is that even this moderate likelihood ratio is inaccessible to the attacker, because the attacker cannot compute $\nu(R')$ or $\nu(R_{\mathrm{wrong}})$ to begin with. The JSV FPRAS does not help in the interior (Proposition~\ref{prop:jsv_uninformative}), and naive Monte Carlo requires $\exp(\Omega(n^2))$ samples (Corollary~\ref{cor:mc_samples}). The likelihood ratio $\Lambda_d$ is well-defined and moderate in magnitude, but it is hidden behind a computational barrier: the attacker knows $\Lambda_d$ exists but cannot evaluate it.

If, hypothetically, a polynomial-time algorithm were found that could approximate the volume sum $\Sigma(R') = \sum_\tau \mathrm{vol}(\mathcal{P}(\tau; R'))$ to within a factor of $(1+\varepsilon)$ with $\varepsilon < 1/\Lambda_d$, then the attacker could distinguish $M_t$ from candidates differing in $d$ bits. In this scenario, the protocol's computational security would reduce to the gap between $\Lambda_d$ and the best achievable approximation ratio. For small $d$ (few bits differ), $\Lambda_d$ is small and the approximation would need to be very precise. For large $d$ (many bits differ), $\Lambda_d$ is large and even a crude approximation would suffice, but the number of candidates with large $d$ is also large ($\binom{n}{d}$ candidates differ in exactly $d$ bits), making the MAP problem harder in a different way.

\subsubsection{The Boundary Regime}
\label{sec:boundary}

The preceding analysis assumes that all residuals $R'$ are in the interior of $\mathcal{B}_{2n}$, so that $\mathrm{perm}(A(R')) = (2n)!$ for all candidates and the permanent cancels from the likelihood ratio. We now analyze what happens when some residuals lie on or near the boundary, where the permanent does vary across candidates and may provide discriminating information.

By the entry-wise positivity analysis, a residual entry $R'_{ab}$ can become zero or negative only in the case where $(M_t)_{ab} = 0$ and $(M')_{ab} = 1$, giving $R'_{ab} = (R_t)_{ab} - \alpha^*/(1-\alpha^*)$. This is non-positive when $(R_t)_{ab} \leq \alpha^*/(1-\alpha^*)$. Different candidates $M'$ have their ``dangerous'' entries at different positions (since $(M')_{ab} = 1$ at different positions for different permutation matrices), so one candidate may produce a boundary residual with $A(R')_{ab} = 0$ at a position $(a,b)$ where another candidate $M''$ has $A(R'')_{ab} = 1$ (because $(M'')_{ab} = 0$ there, placing that entry in a case where no perturbation occurs). Consequently, the support matrices $A(R')$ and $A(R'')$ differ, and so do their permanents.

In this boundary regime, the JSV FPRAS can approximate the permanent ratio $\mathrm{perm}(A(R'))/\mathrm{perm}(A(R''))$ in polynomial time, and this ratio provides genuine discriminating information: the candidate whose residual has a larger permanent admits more valid BvN decompositions and is, crudely speaking, more likely. An attacker could therefore compute approximate permanents for each feasible candidate and rank them accordingly.

However, the boundary regime arises only when $\alpha^*$ is large enough that the perturbation $\alpha^*/(1-\alpha^*)$ exceeds typical entries of $R_t$, which requires $\alpha^* \gtrsim 1/(2n)$. At such values of $\alpha^*$, the mean entry $\mathbb{E}[(R_t)_{ab}] = 1/(2n)$ is comparable to the threshold $1/(4n-1) \approx 1/(4n)$, and the per-entry signal-to-noise ratio is
\begin{equation}
\label{eq:boundary_snr}
\frac{\alpha^*}{\mathrm{std}[(R_t)_{ab}]} \approx \frac{1/(2n)}{1/(2n\sqrt{K_t})} = \sqrt{K_t},
\end{equation}
which is in the range $3$--$7$ for typical $K_t = 9$--$50$. At this SNR, the signal $\alpha^* M_t$ is detectable per entry, and simpler attacks --- such as rounding each $2 \times 2$ block of $D_t / \alpha^*$ to the nearest permutation matrix $\Pi(0)$ or $\Pi(1)$ via thresholding, or solving the full linear assignment problem on $D_t / \alpha^*$ via the Hungarian algorithm --- may already recover $M_t$ without computing any permanents or likelihoods. The boundary regime is therefore one where the approximate permanent provides discriminating power but is unlikely to be needed, since cruder methods already exploit the high SNR.

\subsubsection{Boson Sampling and Quantum Attacks}
\label{sec:boson}

Aaronson and Arkhipov~\cite{aaronson2011} showed that boson sampling --- sampling from the output distribution of non-interacting bosons in a linear optical network --- is related to the permanent. The probability of a particular output is proportional to $|\mathrm{perm}(U_S)|^2$, where $U_S$ is a submatrix of the unitary transfer matrix. This connection raises the question of whether quantum devices could attack the PolyVeil protocol.

Boson sampling does not provide a useful attack against the PolyVeil protocol, for reasons that become clear upon examining the relationship between the PolyVeil permanent and the boson sampling permanent. The permanent in our setting involves real non-negative matrices (the support matrix $A(R')$ has entries in $\{0,1\}$), and the JSV FPRAS already computes such permanents classically in polynomial time; a quantum boson sampling device adds no computational advantage over JSV for this class of inputs. Moreover, as established in Proposition~\ref{prop:jsv_no_improvement}, the permanent is constant across all feasible candidates in the interior regime, so even a perfect permanent oracle would provide no discriminating power --- the computational barrier is the polytope volume sum, not the permanent, and boson sampling has no known connection to volume computation of convex bodies. Finally, boson sampling is a \emph{sampling} procedure: it draws samples from a distribution weighted by $|\mathrm{perm}(U_S)|^2$, whereas the aggregator needs the \emph{numerical value} $\nu(R')$. Sampling from a permanent-weighted distribution and evaluating the permanent are distinct computational tasks, and the hardness of the former (the Aaronson--Arkhipov conjecture) does not imply hardness or easiness of the latter.

\begin{remark}[Post-quantum status]
\label{rem:postquantum}
The server's information-theoretic security is unconditional and holds against quantum adversaries. The aggregator's computational security relies on the hardness of the polytope volume sum. No quantum algorithm is known to compute this sum in polynomial time, but no proof of quantum hardness exists. The quantum hardness of the permanent itself is open (Aaronson, 2011), and the quantum hardness of the polytope volume sum is even less understood. The protocol should \emph{not} be claimed as provably post-quantum secure.
\end{remark}

\begin{conjecture}[Full computational hardness]
\label{conj:hardness}
No polynomial-time algorithm can, given $D_t$ drawn from $\alpha^* M_t + (1-\alpha^*) \nu$ (where $M_t$ is a uniformly random permutation matrix), recover $M_t$ with probability non-negligibly better than $1/(2n)!$.
\end{conjecture}

A proof of Conjecture~\ref{conj:hardness} would require showing that any efficient algorithm for recovering $M_t$ can be transformed into an efficient algorithm for a \#P-hard problem. This faces three obstacles: (a)~the permanent is \#P-hard in the worst case, but the aggregator faces a distributional instance drawn from $\nu$, and average-case hardness of the permanent is open~\cite{lipton1991}; (b)~worst-case hardness of evaluating $\nu$ does not rule out algorithms that bypass density evaluation entirely; and (c)~approximate evaluation of $\nu$ via the polytope volume sum may be feasible, which would enable approximate MAP estimation even if exact evaluation is \#P-hard. Various attack strategies --- including approximate permanent computation, Monte Carlo estimation, MCMC on BvN decompositions, boson sampling, spectral methods, and LP relaxation --- are analyzed in Appendix~\ref{app:attacks}.

\subsubsection{Worked Example: The Reduction for $n = 2$}
\label{sec:worked_reduction}

We trace the full reduction for $n = 2$ ($2n = 4$, matrices are $4 \times 4$) with $K = 2$ decoys to make each step concrete.

\medskip\noindent\textit{Setup.}
Suppose client~1 has $\mathbf{b} = (1, 0)$, giving $M_1 = \mathrm{blockdiag}(\Pi(1), \Pi(0)) = \begin{pmatrix} 0&1&0&0\\1&0&0&0\\0&0&1&0\\0&0&0&1\end{pmatrix}$.

With $\alpha^* = 0.3$ and two decoy permutations $P_1, P_2$ with coefficients $\alpha_1 = 0.4$, $\alpha_2 = 0.3$, the aggregator sees $D_1 = 0.3\,M_1 + 0.4\,P_1 + 0.3\,P_2$.

\medskip\noindent\textit{Candidate set.}
If all entries of $D_1$ exceed $0.3 = \alpha^*$, then \emph{every} $4 \times 4$ permutation matrix is a consistent candidate. There are $4! = 24$ candidates, and the aggregator cannot eliminate any by feasibility alone.

\medskip\noindent\textit{Density evaluation.}
For the true $M_1$, the residual is $R = (D_1 - 0.3\,M_1)/0.7$, which is the actual decoy matrix. For a false candidate $M' \neq M_1$, the residual $R' = (D_1 - 0.3\,M')/0.7$ is a different doubly stochastic matrix. The aggregator wants to compare $\nu(R)$ vs $\nu(R')$.

\medskip\noindent\textit{Permanent.}
If $R'$ is in the interior (all entries positive), then $A(R')$ is the $4 \times 4$ all-ones matrix and $\mathrm{perm}(A) = 4! = 24$. With $K = 2$ decoys, $\nu(R')$ sums over $24^2 = 576$ permutation pairs. The permanent is the same for all interior residuals, so it does not help discriminate.

\medskip\noindent\textit{Polytope volumes.}
For each of the 576 pairs $(\sigma_1, \sigma_2)$, the aggregator must find $(\alpha_1, \alpha_2)$ with $\alpha_1 + \alpha_2 = 0.7$ and $\alpha_1 P_{\sigma_1} + \alpha_2 P_{\sigma_2} = 0.7\,R'$. Since $\alpha_2 = 0.7 - \alpha_1$, this is a system of $16$ equations in one unknown $\alpha_1$. For most pairs $(\sigma_1, \sigma_2)$, the system is inconsistent (0-dimensional feasible set). For the rare compatible pairs, $\alpha_1$ is determined uniquely and the ``volume'' is a point mass.

The density is $\nu(R') = \frac{1}{24^2}\sum_{(\sigma_1, \sigma_2)} \mathbf{1}[\text{consistent}] \cdot g(\alpha_1(\sigma_1, \sigma_2))$. Even for $n = 2$, finding which of the 576 pairs are consistent and evaluating the coefficient density at each requires enumerating combinatorial structures.

\medskip\noindent\textit{For large $n$:}
The number of permutation tuples grows as $((2n)!)^K$. For $n = 100$, $K = 20$, this is $(200!)^{20} \approx (10^{375})^{20} = 10^{7500}$. No enumeration is feasible. Approximating the sum by sampling faces the importance-sampling variance problem described in Section~\ref{sec:approx_perm}.

\subsubsection{Why Lov\'{a}sz--Vempala Volume Algorithms Do Not Resolve the Barrier}
\label{sec:vempala}

The Lov\'{a}sz--Vempala simulated annealing algorithm~\cite{lovasz2006} and the Cousins--Vempala Gaussian cooling algorithm~\cite{cousins2018} represent the state of the art in convex body volume computation, achieving $\tilde{O}(n^4)$ and $\tilde{O}(n^3)$ membership oracle calls respectively for a single $n$-dimensional convex body. Since each polytope $\mathcal{P}(\sigma_1, \ldots, \sigma_K; R')$ is a convex body in $\mathbb{R}^K$ (with $K = O(n^2)$), its volume can be approximated in $\mathrm{poly}(K) = \mathrm{poly}(n^2)$ time using these algorithms. However, the attacker's problem is not to compute one volume but to evaluate the sum $\sum_{\tau \in S_{2n}^K} \mathrm{vol}(\mathcal{P}(\tau; R'))$, which has $((2n)!)^K$ terms.

Even for small parameters ($n = 5$, $K = 10$): there are $(10!)^{10} \approx 10^{65}$ terms. Computing each volume in $\tilde{O}(K^3) = \tilde{O}(10^3)$ time gives a total of $\sim 10^{68}$ operations. For $n = 10$, $K = 20$: there are $(20!)^{20} \approx 10^{365}$ terms. No amount of improvement in the per-volume computation time (whether from Vempala-style algorithms, GPU parallelism, or quantum speedups) can overcome this combinatorial explosion. The bottleneck is the number of terms in the sum, not the cost per term.

The naive Monte Carlo approach (Theorem~\ref{thm:mc_variance}) attempts to circumvent the enumeration by sampling tuples randomly and averaging volumes, but as shown in Corollary~\ref{cor:mc_samples}, the variance is $\Omega(\mu^2/p_{\mathrm{hit}})$ where $p_{\mathrm{hit}} = \exp(-\Omega(n^2))$ is the probability that a random tuple yields a nonzero volume. Even with the Cousins--Vempala algorithm computing each nonzero volume in $\tilde{O}(K^3)$ time, the number of samples needed to reduce variance to a useful level is exponential.

\subsubsection{MCMC and Importance Sampling Attacks}
\label{sec:mcmc}

A natural approach to approximating $\nu(R')$ without enumerating all tuples is to run a Markov chain Monte Carlo (MCMC) sampler over BvN decompositions of $R'$.

\medskip\noindent\textit{The Birkhoff--von Neumann decomposition sampler.}
Given $R' \in \mathcal{B}_{2n}$, one can sample random BvN decompositions by the following Markov chain: at each step, find a permutation matrix $P$ in the support of $R'$ (using the Birkhoff algorithm), subtract the maximal feasible weight, and iterate. This produces a single random decomposition $R' = \sum_i \lambda_i P_i$.

However, this samples from the \emph{greedy} decomposition distribution, not from the uniform distribution over all valid decompositions. The greedy distribution is biased toward decompositions that use high-weight permutations first, and the bias is difficult to correct without knowing the total number of decompositions (which is \#P-hard to compute).

\medskip\noindent\textit{Importance sampling on $S_{2n}^K$.}
An alternative is to sample $K$-tuples $(\sigma_1, \ldots, \sigma_K)$ uniformly from $S_{2n}^K$ and check whether they are consistent with $R'$. The fraction of consistent tuples estimates $\mathrm{perm}(A(R'))^K / ((2n)!)^K$, and weighting by the coefficient-space volume gives an estimator of $\nu(R')$.

The problem is that the fraction of consistent tuples is astronomically small for large $n$. For a generic interior $R'$, a random $K$-tuple is consistent iff the linear system $\sum_i \alpha_i P_{\sigma_i} = (1-\alpha^*) R'$ has a solution with $\alpha_i > 0$. This requires the $K$ permutation matrices to span a specific point in $(2n-1)^2$-dimensional space, which happens with probability roughly $(1/(2n)^2)^{(2n-1)^2}$ per tuple (a heuristic based on the codimension of the constraint). For $n = 100$, this probability is $\sim 10^{-16{,}000}$, making importance sampling infeasible.

\medskip\noindent\textit{More sophisticated MCMC.}
One could design a Markov chain that moves through the space of valid decompositions (swapping permutations, adjusting coefficients) rather than sampling from scratch. The mixing time of such a chain is unknown. If the space of valid decompositions is ``well-connected'' (every pair of decompositions can be reached via a short sequence of local moves), the chain mixes rapidly and approximation is feasible. If the space has bottlenecks (isolated clusters of decompositions), mixing is slow. Whether the BvN decomposition space has rapid mixing for typical doubly stochastic matrices is an open problem in combinatorial optimization, closely related to the mixing time of the switch chain for bipartite matchings.

\subsubsection{Non-Likelihood Attacks}
\label{sec:non_likelihood}

The \#P-hardness result applies only to likelihood-based inference (computing $\nu(R')$). We now analyze attacks that bypass the density entirely.

\medskip\noindent\textit{Nearest permutation (Hungarian algorithm).}
The simplest attack ignores the noise distribution and finds the permutation matrix closest to $D_t/\alpha^*$ in Frobenius norm,
\begin{equation}
\label{eq:hungarian}
\hat{M} = \arg\min_{M' \in S_{2n}} \|D_t/\alpha^* - M'\|_F^2.
\end{equation}
This is a linear assignment problem solvable in $O(n^3)$ by the Hungarian algorithm. Since $D_t/\alpha^* = M_t + (1-\alpha^*)/\alpha^* \cdot R_t$, the noise term scales as $(1-\alpha^*)/\alpha^*$. For $\alpha^* = 1/(4n)$, this is $4n - 1 \approx 4n$, and the noise Frobenius norm is $\|(1-\alpha^*)/\alpha^* \cdot R_t\|_F \approx 4n \cdot \|R_t\|_F \approx 4n \cdot \sqrt{2n} = 4n^{3/2}$ (since $\|R_t\|_F \approx \sqrt{2n}$ for a doubly stochastic matrix near $(1/(2n))\mathbf{J}$). The signal norm is $\|M_t\|_F = \sqrt{2n}$. The signal-to-noise ratio is $\sqrt{2n}/(4n^{3/2}) = 1/(2\sqrt{2}\,n) \ll 1$ for $n \gg 1$.

The Hungarian algorithm therefore returns a random permutation matrix that is unrelated to $M_t$ when $\alpha^*$ is small. This attack fails, but not because of \#P-hardness --- it fails because the SNR is too low.

\medskip\noindent\textit{Spectral methods.}
Since $M_t$ is block-diagonal with $2 \times 2$ blocks, the aggregator could examine the block structure of $D_t$. Define the $2 \times 2$ block $B_j = D_t[2j{-}1:2j, 2j{-}1:2j]$ for $j = 1, \ldots, n$. Each block is
\begin{equation}
B_j = \alpha^* \Pi(b_{t,j}) + (1-\alpha^*) R_t[2j{-}1:2j, 2j{-}1:2j].
\end{equation}
The off-diagonal entry $B_j[1,2] = \alpha^* b_{t,j} + (1-\alpha^*)(R_t)_{2j-1, 2j}$. The decoy term $(R_t)_{2j-1, 2j}$ has mean $1/(2n)$ and standard deviation $O(1/\sqrt{K_t})$. The signal difference between $b_{t,j} = 0$ and $b_{t,j} = 1$ is $\alpha^*$. The per-entry SNR is $\alpha^* / (O(1/\sqrt{K_t})) = O(\alpha^* \sqrt{K_t})$, which for $\alpha^* = 1/(4n)$, $K_t = 20$ is $O(\sqrt{20}/(4 \times 100)) \approx 0.01$. This is too low to distinguish the two hypotheses for any individual bit.

However, the aggregator can observe \emph{all} $n$ blocks simultaneously. If the blocks were independent, the aggregator could combine evidence across blocks using a likelihood ratio test, achieving SNR $\sim \sqrt{n} \times 0.01 = 0.1$ --- still insufficient. But the blocks are \emph{not} independent: the entries of $R_t$ across different blocks are correlated (they come from the same permutation matrices), and exploiting these correlations is precisely the likelihood approach that is \#P-hard.

\medskip\noindent\textit{LP relaxation.}
The aggregator could formulate the recovery problem as a linear program: find a doubly stochastic matrix $M'$ (a point in $\mathcal{B}_{2n}$) that minimizes some cost function given $D_t$. Since $\mathcal{B}_{2n}$ is a polytope, linear programming is solvable in polynomial time. However, the LP solution is a doubly stochastic matrix, not a permutation matrix. Rounding the LP solution to a permutation matrix (e.g., via the Birkhoff algorithm) introduces rounding error, and there is no guarantee that the rounded solution is close to $M_t$ when the SNR is low.

More fundamentally, any LP-based approach can extract at most $O(n^2)$ real-valued constraints from $D_t$, while the space of BvN decompositions has $\Theta((2n)!)$ vertices. At low SNR ($\alpha^* \ll 1$), the LP relaxation does not distinguish $M_t$ from the many other vertices that produce similar $D_t$.

\medskip\noindent\textit{Summary of non-likelihood attacks.}
No known non-likelihood attack succeeds at $\alpha^* = 1/(4n)$. All fail due to low per-entry SNR ($\sim 0.01$), which makes $M_t$ indistinguishable from random in any single or small group of entries. The \#P-hardness of the likelihood approach is the theoretical certificate of hardness, but the practical barrier is the SNR: the signal from $M_t$ is hidden in noise of magnitude $\Theta(1/\alpha^*)$, and no known polynomial-time algorithm can aggregate this weak per-entry evidence into a reliable reconstruction.

\end{document}